\documentclass[11pt]{article}
\usepackage{color}
\usepackage[pdftex, 
letterpaper=true, colorlinks=true,
pdfpagemode=none, urlcolor=blue, linkcolor=blue, citecolor=blue,
pdfstartview=FitH] {hyperref}
\newcommand{\rg}[1]{\noindent{\textcolor{blue}{\textbf{\#\#\# RG:} \textsf{#1} \#\#\#}}}

\newcommand{\as}[1]{\noindent{\textcolor{blue}{\textbf{\#\#\# AS:} \textsf{#1} \#\#\#}}}
\usepackage{bbm}
\usepackage{amsmath,amsbsy,amsfonts,amssymb,amsthm,dsfont,fullpage,units}
\usepackage{cleveref}
\usepackage{paralist}
\DeclareMathOperator{\diam}{diam}

\DeclareMathOperator{\tr}{Tr}

\let\es=\emptyset

\newcommand{\lasserreii}[2]{\mathrm{Lasserre}_{#1}(#2)}

\newcommand\E{\mathbb{E}}
\newcommand\R{\mathbb{R}}

\newcommand\norm[1]{\left\|#1\right\|_2}

\newcommand\sse{\mbox{SSE}}
\newcommand\sdp{\mbox{SDP}}
\newcommand\opt{\mbox{OPT}}

\newcommand\xmat{X}

\newcommand\xvec{x}

\newcommand\eps{\epsilon}

\newtheorem{theorem}{Theorem}[section]
\newtheorem*{namedtheorem}{\theoremname}
\newcommand{\theoremname}{testing}

\newtheorem{lemma}[theorem]{Lemma}

\newtheorem{clm}[theorem]{Claim}
\Crefname{clm}{Claim}{Claims}
\newtheorem{proposition}[theorem]{Proposition}

\newtheorem{corollary}[theorem]{Corollary}

\theoremstyle{definition}
\newtheorem{definition}[theorem]{Definition}

\title{Towards a better approximation for {\sc sparsest cut}?}
 
\author{Sanjeev Arora\footnote{Princeton University, Computer Science
    Department and Center for Computational Intractability. Email: {\tt
      arora@cs.princeton.edu}}
  \and 
  Rong Ge\footnote{Princeton University, Computer Science Department
    and Center for Computational Intractability. Email: {\tt
      rongge@cs.princeton.edu}}
  \and
  Ali Kemal Sinop\footnote{Princeton University, Computer Science
    Department and Center for Computational Intractability. Email:
    {\tt asinop@cs.cmu.edu}} }
\date{\today}

\begin{document}
\maketitle

\begin{abstract}
  We give a new $(1+\epsilon)$-approximation for {\sc sparsest cut}
  problem on graphs where small sets expand significantly more than
  the sparsest cut (sets of size $n/r$ expand by a factor $\sqrt{\log
    n\log r}$ bigger, for some small $r$; this condition holds for
  many natural graph families). We give two different algorithms. One
  involves Guruswami-Sinop rounding on the level-$r$ Lasserre
  relaxation. The other is combinatorial and involves a new notion
  called {\em Small Set Expander Flows} (inspired by the {\em expander
    flows} of \cite{ARV})
  which we show exists in the input graph. Both algorithms run in time
  $2^{O(r)} \mathrm{poly}(n)$.

  We also show similar approximation algorithms in graphs with genus
  $g$ with an analogous local expansion condition.

  This is the first algorithm we know of that achieves
  $(1+\epsilon)$-approximation on such general family of graphs.
\end{abstract}
\section{Introduction}

This paper concerns a new and promising analysis of
Lasserre~\cite{Las02}/Parrilo~\cite{parrilo03} SDP relaxations for the
(uniform) {\sc sparsest cut} problem, which gives
$(1+\epsilon)$-approximation on several natural families of graphs.
Note that Lasserre/Parillo relaxations subsume all relaxations for the
problem that were previously analysed: the spectral technique of
Alon-Cheeger~\cite{am85}, the LP relaxation of
Leighton-Rao~\cite{lr99} with approximation ratio $O(\log n)$, and the
SDP with triangle inequality of Arora, Rao, Vazirani~\cite{ARV} with
approximation ratio $O(\sqrt{\log n})$.
The approximation ratio of $O(\sqrt{\log n})$ has proven resistant to improvement in
almost a decade  (and there is some evidence the ratio may be tight for
the ARV relaxation; see Lee-Sidiropoulos\cite{ls11}). For a few families of graphs such as
graphs of constant genus, an $O(1)$-approximation is known. 

Recently, there has been increasing optimism among experts that
Lasserre~\cite{Las02}/Parrilo~\cite{parrilo03} relaxations ---which
are actually a hierarchy of increasingly tighter relaxations whose
$r$th level can be solved in $n^{O(r)}$ time--- may provide better
approximation algorithms for {\sc sparsest cut} as well as other
problems such as {\sc max cut} and {\sc unique games}, and possibly
even refute Khot's unique games conjecture. For instance Barak,
Raghavendra, and Steurer~\cite{brs11}, relying on the earlier
subexponential algorithm of Arora, Barak, Steurer~\cite{ABS}, showed
that Lasserre relaxations can be used to design subexponential
algorithms for the {\sc unique games} problem. Independently,
Guruswami and Sinop~\cite{gs11-qip} gave another rounding that looks
quite different but yielded very similar results.  Subsequently, Barak
et al.~\cite{bhksz12} showed that Lasserre relaxations can easily
dispose of families of {\sc unique games} instances that seemed
``difficult'' for simpler SDP relaxations: many families of instances
can be solved near-optimally in 4-8 rounds! This result was
subsequently extended by O'Donnell~and~Zhou~\cite{oz13} to
``difficult'' families of graphs from~\cite{dksv06} which are
integrality gaps for uniform sparsest cut and balanced separator. Of
course, it is unclear whether this demonstrates the power of these
relaxations, or merely the limitations of our current lowerbound
approaches. Nevertheless, the rise in researchers' hopes for better
algorithms is palpable.

But the stumbling blocks in this quest are also quite clear.  First,
known ideas for analysing Lasserre relaxations generally require some
condition on the $r^{th}$ eigenvalue of the Laplacian for some small
$r$, whereupon some $f(r,\epsilon)$ levels of Lasserre are shown to
suffice for $(1+\epsilon)$-approximation.
Unfortunately, many real-life graphs (eg, even the 2D-grid) do not
satisfy this eigenvalue condition so new ideas seem needed.

Another stumbling block has been the inability to relate these new
rounding algorithms for Lasserre relaxations to existing SDP rounding
algorithms such as Goemans-Williamson and ARV. Since Lasserre
relaxations greatly generalize normal SDP relaxations, one would like
general purpose rounding algorithms which for small $r$ reduce to
earlier rounding algorithms. A concrete question is: does the
Guruswami-Sinop rounding algorithm always give an approximation ratio
as good as the ARV $\sqrt{\log n}$ for {\sc sparsest cut} once $r$ is
sufficiently large? This is still unclear.

The current paper makes some progress on these stumbling blocks.  We
show that the GS rounding algorithm achieves
$(1+\epsilon)$-approximation for {\sc sparsest cut} on an interesting
family graphs that are {\em not} small set expanders and may not have
large $r$th eigenvalue.
If $\phi_{local}$ denotes the minimum sparsity of sets of size $n/r$,
and $\phi_{sparsest}$ the minimum sparsity among {\em all} sets, then
we require $\phi_{local}/\phi_{sparsest} \gg \sqrt{\log n \log r}$.
Note that $\phi_{local}$ is often larger than $\phi_{sparsest}$ in
natural families of graphs. For example, in normalized $d$-dimensional
$n^{1/d}\times \ldots \times n^{1/d}$-grid graphs, $\phi_{sparsest}
\le \frac{1}{d n^{1/d}}$, $\phi_{local} \gg \frac{1}{d}
\left(\frac{r}{n}\right)^{1/d}$ whereas $\lambda_r \ll \frac{1}{d}
\left(\frac{r}{n}\right)^{2/d}$.
%
Note that when the condition is not met, a simple modification of our
algorithm returns a subset of size $n/r$ that has sparsity
$\sqrt{\log n \log r}$ times $\phi_{sparsest}$. Thus setting $r=O(1)$
one recovers the ARV bound ---though the analysis of this case also
uses ARV\footnote{We also know how to achieve qualitatively similar
  results as our main result using BRS rounding + ARV ideas applied to
  Lasserre solutions at the expense of stricter requirements on
  small set expansion. However, that method seems unable to give
  better than $O(1)$-approximation, whereas GS rounding is able to
  give $(1+\epsilon)$.}.
\paragraph{Comparison with existing work.}
As mentioned, earlier analyses of Lasserre relaxations require a
lowerbound on the $r^{th}$ eigenvalue of the graph: the tightest such
result from~\cite{gs11-exp} requires $\lambda_r > \phi_{sparsest}$.
Efforts to get around such limitations have focused on understanding
structure of graphs which {\em do not} satisfy the eigenvalue
condition: an example is the so-called {\em high order Cheeger
  inequality} of~\cite{ABS} (improved by Louis et al~\cite{lrtb12} and
Lee et al.~\cite{lgt12}) according to which --roughly speaking---a
graph with many eigenvalues close to $o(1)$ have a small nonexpanding
set. In other words, the graphs are not {\em Small-Set
  Expanders}\footnote{In fact, the unexpected appearance of Small Set
  Expansion (SSE) in this setting is believed to not be a fluke. It
  appears in the SSE conjecture of Raghavendra and
  Steurer~\cite{rs10-sse} (known to imply the UGC), their ``Unique
  games with SSE'' conjecture, as well as in the known subexponential
  algorithms for {\sc unique game}.  Furthermore, attempts to
  construct difficult examples for known SDP-based algorithms also end
  up using graphs (such as the noisy hypercube) which are small set
  expanders.}. However, there is an inherent Cheeger-like gap ($\phi$
vs $\sqrt{\phi}$) between eigenvalues and expansion that seems to
limit the possible improvements. Our algorithms work even without a
bound on the $r^{th}$ eigenvalue; they only need bounds on expansion
(The $d$-dimensional grids are good examples.)  Furthermore, they
yield $(1+\epsilon)$-approximation, which in context of {\sc sparsest
  cut} seems quite surprising.

Subsequent to our work and inspired by it,
Gharan~and~Trevisan~\cite{gt13} have shown how to obtain factor
$O(\sqrt{\log k})$ approximation from the basic ARV~relaxation for the
sparsest cut problem under local expansion or spectral conditions.
\paragraph{Better algorithms for bounded genus graphs.}
Recall that for genus $g$ graphs there are known $O(\log
g)$-approximation algorithms for {\sc sparsest cut}~\cite{ls10}. We can
show that GS'13 rounding gives a $(1+\epsilon)$-approximation if
$\phi_{local} \ge \Omega(\frac{\log g}{\epsilon^2}) \phi_{sparsest}.$
Thus for the $2D$-grid, it implies that $O(1/\epsilon^4)$ rounds of
Lasserre yield a $(1+\epsilon)$-approximation. Again, when the local
expansion condition is not satisfied our algorithm finds a witnessing
small set, allowing us to recover the existing $O(\log g)$
approximation for the general case.

%
\paragraph{Combinatorial algorithm.} In addition to the above
Lasserre-based algorithm, we also give a new combinatorial algorithm
with similar (but somewhat weaker) guarantees. This algorithm is
inspired by the primal-dual algorithms for {\sc sparsest cut} stemming
from the {\em expander flows} notion of ARV
(see~\cite{AHK1,AK,Sherman}). We introduce a new notion called {\em
  small set expander flows}: a multicommodity flow whose demand graph
is an expander on small sets. Let a {\em $(r, d, \beta)$-flow} be an
undirected multicommodity flow in which $d$ units of flow is incident
to each node, and the demand graph has expansion $\beta$ on sets of
size at most $n/r$ (in other words, the amount of flow leaving the set
$S$ is $d\beta |S|$). We show that in every graph there is an SSE flow
with $d =\Omega(\phi_{local}\sqrt{\log r}/\sqrt{\log n})$, $\beta =
\Omega((\log r)^{-2})$, and this flow ---or something close to
it---can be found in polynomial time. Using such flows one can ---with
some more work---compute a $(1+\epsilon)$-approximation to {\sc
  sparsest cut} as above.

Note that the expander flow idea of ARV was motivated by the
observation that expander flows consist of a family of dual solutions
to the SDP. We suspect that something analogous holds for SSE flows
and the Lasserre relaxation but are unable to prove this formally.
However, we can informally show a connection as follows: if a graph
has a $(r, d, \beta)$-flow where
$$d\beta^2/\log r\gg \mbox{value of $O(r)$-rounds
  of Lasserre relaxation}$$ then the integrality gap of the Lasserre
relaxation is at most $(1+o(1))$.  Thus the existence of expander
flows is another reason ---besides the more direct rounding approach
mentioned earlier--- why Lasserre relaxations are near-optimal when
$\phi_{local}/\phi_{global} \gg \sqrt{\log n \log r}$.

\paragraph{Applications to semirandom models}
%
Recently there has been interest in solving sparsest cut on semirandom
models of graphs~\cite{mmv}. In these graphs one starts with a planted
sparse cut in a random graph or expander, and then an adversary is
allowed to change some edges.  Our work provides a new algorithm for
one such model, planted combinatorial expander on regular
graphs. However our results for this model are not directly comparable
to the ones in~\cite{mmv}. Our result is presented
in~\Cref{sec:semirandom}.
%
%


\section{Preliminaries and Background}
\label{sec:prelims}


\subsection{Expansion and Graph Laplacian}

Let $G = (V,E)$ be an undirected graph with edge capacities $c_e \ge
0$ for all $e\in E$. For simplicity 
we 
assume that the input graph is regular with (normalized) degree 1,
that is, for all vertices $i\in V$ $\sum_{j} c_{(i,j)} = 1$ (our
results in \Cref{sec:orthogonal,sec:planar-graphs} can also
be applied to irregular graphs). We always use $n$ to denote the
number of vertices in $G$.

The {\em expansion} of a set is defined as $\Phi(S) =
\frac{E(S,V\setminus S)}{\min\{|S|,n-|S| \} }$, where $E(A,B) =
\sum_{i\in A,j\in B} c_{(i,j)}$. The sparsity of a set $\phi(S)$ is
defined as $\frac{n\cdot E(S,V\setminus S)}{|S|\cdot (n-|S|) }$. There
are several problems related to sparsity of cuts:
\begin{itemize}
\item The {\em sparsest cut} of the graph is a set $S$ that minimizes
  the sparsity $\Phi(S) = \frac{E(S,V\backslash S)}{|S|\cdot (n-|S|)
    \}}$. We use $\Phi_{sparsest}$ to denote its expansion and
  $\phi_{sparsest}$ to denote its sparsity.

\item The {\em edge expansion} of a graph is a set $S$ that minimizes
  the expansion $\alpha(S)$. We use $\Phi_{global}$ to denote its
  expansion. Notice that since we are working with regular graphs,
  this is also equivalent to the {\em graph conductance} problem.

\item The {\em $c$-balanced separator} of a graph is a set $S$ that
  minimizes the expansion $\Phi(S)$ among all sets of size at least
  $cn$. We use $\Phi_{c\mbox{-balanced}}$ to denote its expansion.
\end{itemize}

While all these problems are closely related (for example, sparsest
cut and edge expansion are equivalent up to a factor of 2), we
carefully differentiate between them in this paper because we are
looking for $1+\epsilon$ approximation algorithms.

We are also interested in the expansion of small sets: let $\Phi_r(G)$
be the smallest expansion of a set of size at most $n/r$ and
$\phi_r(G)$ be the smallest sparsity of a set of size at most
$n/r$. Sometimes when $r$ is fixed (or understood) we drop $r$ and use
$\Phi_{local}$ and $\phi_{local}$ instead\footnote{$\Phi_{local}$ and
  $\phi_{local}$ usually denote the optimal expansion and sparsity of
  sets of size at most $O(n/r)$}.

Notice that the requirement of our algorithms will have the form
$\phi_{local}/\phi_{global} \gg f(n,r)$\footnote{$f \gg g$ means $f \ge C g$ for some large universal constant $C$}. Since sparsity $\phi$ and
expansion $\Phi$ are within a factor of 2 ($\Phi(S) \le \phi(S)\le
2\Phi(S)$), in such requirements the ratios
$\phi_{local}/\phi_{global}$ and $\Phi_{local}/\Phi_{global}$ can be
interchanged

The adjacency matrix $A$ of the graph $G$ is a matrix whose $(i,j)$-th
entry is equal to $c_{(i,j)}$. If $d_i = \sum_{(i,j)\in E} c_{(i,j)}$
denotes the degree of $i$-th vertex with $D$ being the diagonal matrix
of degrees, then the Laplacian of the graph $G$ is defined as $L =
D - A$ (for regular graph this is just $I-A$). The
normalized Laplacian of the graph is defined as $\mathcal{L} =
D^{-1/2} L D^{-1/2}$.

Graph Laplacians are closely related to the expansion of sets. In
particular, the Rayleigh Quotient of a vector $x$, $R(x) =
\frac{x^TLx}{x^Tx}$ is exactly equal to the sparsity of a set $S$
when $x$ is the indicator vector of $S$ (and $S$ has size at most $1/2$).

We will denote by $\phi_{SDP}$ the optimum value of the Lasserre
relaxation for {\sc sparsest cut}. The number of levels in the
Lasserre hierarchy will be implicit in the context.
\subsection{Lasserre Relaxation and GS Rounding}
\label{subsec:lasserreprelim}
We will show sufficient conditions under which $r$ rounds of Lasserre
Hierarchy relaxation can be rounded to $(1+\epsilon)$-approximation
for sparsest cut and related problems.  In particular, we will show
that the particular rounding algorithm from~\cite{gs11-exp} outputs
such an approximation. (See~\Cref{sec:overview-sdp} for details on the
Lasserre relaxation and the GS rounding algorithm.)
%
%

In general working with Lasserre relaxations involves tedious notation
involving subsets of variables and assignments to them. Luckily all
that has been handled in~\cite{gs11-exp}, leaving us to work with the
relatively clean (standard) SDP notation.

For the sake of simplicity, we will focus on the uniform sparsest cut
problem on regular graphs. Other variants, such as edge expansion, can
easily be handled by changing the objective function. Let
$[\xvec_u]_{u\in V}$ be the vectors corresponding to each node in $G$
obtained as a solution for $r$-rounds of Lasserre Hierarchy
relaxation. In particular, $\xvec_u$'s minimize the following ratio:
\[
\phi_{SDP} \triangleq
\frac{\sum_{u<v} C_{uv} \|\xvec_u - \xvec_v\|^2}{\frac{1}{n}
  \sum_{u<v} \|\xvec_u-\xvec_v\|^2} \le \phi_{sparsest}.
\]
The denominator, whose value we will denote by $\nu$, can also be
written as:
\[
\underbrace{\frac{1}{n} \sum_{u<v} \|\xvec_u-\xvec_v\|^2}_{\triangleq
  \nu} = \sum_u \Big\| \xvec_u - \frac{1}{n}\sum_v \xvec_v\Big\|^2.
\]
We will shift each vector $\xvec_u$ by the mean:
\begin{equation}
  \label{eq:sdp-demand-vectors}
  \xmat_u \triangleq \xvec_u - \frac{1}{n}\sum_v \xvec_v, 
\end{equation}
so that $\sum_u \xmat_u = 0$.
Note:
\[
  \|\xmat_u\|^2 \le 1.
\]
We use $\xmat=[\xmat_u]$ to denote the matrix whose columns correspond
to the vectors $\xmat_u$.
Since $\xmat_u - \xmat_v = \xvec_u - \xvec_v$, $\xmat
\in \ell^2_2$ (i.e. columns of matrix $\xmat$ satisfy the triangle
inequality) and:
\[
\sum_{uv} C_{uv} \|\xmat_u - \xmat_v\|^2
= \sum_{uv} C_{uv}\|\xvec_u- \xvec_v\|^2 \le \phi_{SDP}
\frac{1}{n} \sum_{u<v} \|\xvec_u - \xvec_v\|^2
= \phi_{SDP} \|\xmat\|^2_F
\] where last identity follows from the fact that $\sum_u \xmat_u = 0$.
Using $\xmat$, we can re-state Theorem 3.1 from~\cite{gs11-exp} in the 
following way:
 \begin{theorem}[Theorem 3.1 from~\cite{gs11-exp}]\label{thm:GS13}
   \label{thm:gs-round-or-fail}
   If there exists a subset $S\in \binom{V}{r}$ with
   \begin{equation}
     \label{eq:sc-large-proj-dist}
      \|\xmat_S^\perp \xmat\|^2_F=
     \sum_{u} \|\xmat_S^\perp \xmat_u\|^2 \le \gamma \|\xmat\|^2_F,    
   \end{equation}
   then the rounding algorithm from~\cite{gs11-exp} outputs a set $T$ such that:
   \[
     \phi_G(T) \le \frac{\phi_{SDP}}{1-\gamma} .
   \]
   Here $\xmat_S$ is the projection matrix onto the span of the
   submatrix indexed by $S$ and $\xmat_S^\perp$ is the projection matrix onto the orthogonal
   complement of $\xmat_S$'s column span. 

   Furthermore, the SDP solver and rounding procedure can be
   implemented in time $2^{O(r)} \mathrm{poly}(n)$
   using~\cite{gs12-fast}.
\end{theorem}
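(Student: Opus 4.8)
The plan is to deduce the statement from the seeded (``propagation'') rounding analysis of Guruswami--Sinop, after matching the matrix notation used here with the pseudo-distribution language in which that analysis is naturally phrased; indeed \eqref{eq:sc-large-proj-dist} is a verbatim re-parametrization of the hypothesis of~\cite[Theorem~3.1]{gs11-exp}. First I would recall that an $r$-round Lasserre solution for sparsest cut supplies, for every vertex set $W$ with $|W|\le r$, a locally consistent distribution over $\labs^{W}$ whose pairwise moments are the Gram entries of the solution vectors; in particular the single-vertex vectors satisfy $\inner{\xvec_u,\xvec_v}=\tl\prob[x_u=x_v=1]$, hence $\|\xvec_u-\xvec_v\|^2=\tl\prob[x_u\neq x_v]$, and after the mean shift \eqref{eq:sdp-demand-vectors} one gets $\|\xmat\|_F^2=\nu$ exactly as computed in the excerpt, with the SDP objective equal to the pseudo-expected number of cut edges. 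Under this dictionary, \eqref{eq:sc-large-proj-dist} says precisely that conditioning the solution on the $r$ coordinates in $S$ leaves at most a $\gamma$-fraction of the ``fractional mass'' $\nu$ unresolved.

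Next I would carry out the conditioning step. Enumerating the (at most $2^{r}$) assignments $f\colon S\to\labs$ in the support of the pseudo-distribution on $S$ and conditioning on each, one obtains honest numbers $y^{f}_u\triangleq\tl\prob[x_u=1\mid x_S=f]$ together with conditional separation probabilities that still form an $\ell_2^2$ metric whose weighted sum against $C$ is bounded by $\phi_{SDP}$ times the conditional value of $\nu$. The quantitative heart of the argument is a variance-reduction estimate: the component of each $\xmat_u$ inside $\mathrm{span}(\xmat_S)$ is exactly the part pinned down by revealing $x_S$, so that the averaged residual variance $\expct_f\big[\sum_u y^f_u(1-y^f_u)\big]$ is at most $\|\xmat_S^\perp\xmat\|_F^2\le\gamma\,\nu$; this is where one uses that the Gram matrix of the conditioned solution is a Schur complement of the original. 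Choosing the value $f$ for which the conditional ratio is smallest keeps the numerator at $\le\phi_{SDP}\cdot(\text{conditional }\nu)$ while the residual variance remains $\le\gamma\cdot(\text{conditional }\nu)$.

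I would finish with a threshold cut. Picking $\theta\in[0,1]$ uniformly and setting $T_\theta=\{u:y^f_u\ge\theta\}$, the expected number of cut edges is $\sum_{uv}C_{uv}|y^f_u-y^f_v|$, which since $|y^f_u-y^f_v|\le\tl\prob[x_u\neq x_v\mid x_S=f]$ is at most $\phi_{SDP}\cdot(\text{conditional }\nu)$, whereas $\expct_\theta[|T_\theta|(n-|T_\theta|)/n]=\frac1n\sum_{u<v}|y^f_u-y^f_v|$ differs from the conditional $\nu$ by a rounding slack controlled by $\sum_u y^f_u(1-y^f_u)\le\gamma\cdot(\text{conditional }\nu)$, hence is at least $(1-\gamma)$ times the conditional $\nu$. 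Dividing the two bounds shows some $\theta$ yields $\phi_G(T_\theta)\le\phi_{SDP}/(1-\gamma)$, which can be found deterministically by trying all $n$ thresholds. The running-time claim follows by brute-force search over the $\binom{n}{r}2^{r}$ pairs $(S,f)$, combined with the fast approximate Lasserre solver of~\cite{gs12-fast}.

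The step I expect to be the main obstacle is the variance-reduction estimate together with the bookkeeping that conditioning a Lasserre solution on $|S|=r$ coordinates is legitimate and produces an object to which the $\ell_2^2$ and Schur-complement bounds apply; everything else is averaging, convexity of the conditional objective, and the standard threshold-rounding calculation. Since the statement is, after the dictionary of the first paragraph, exactly~\cite[Theorem~3.1]{gs11-exp}, the cleanest route in the paper is simply to set up that translation and invoke the cited theorem (and~\cite{gs12-fast} for the implementation), which is presumably what the authors do.
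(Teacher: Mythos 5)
Your proposal matches the paper's treatment: the paper proves nothing here, it simply performs the mean-shift translation of \cref{eq:sdp-demand-vectors} so that the hypothesis becomes \cref{eq:sc-large-proj-dist}, restates Theorem~3.1 of~\cite{gs11-exp} in that notation, and cites~\cite{gs12-fast} for the running time (the rounding algorithm itself is only recalled in \Cref{sec:overview-sdp}). Your additional sketch of the conditioning/variance-reduction/threshold analysis is a reasonable outline of the cited Guruswami--Sinop proof, but it is not something the paper itself carries out.
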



\section{Proof via orthogonal separators}
%
%
\label{sec:orthogonal}
\Cref{thm:GS13} implies that for $(1+\epsilon)$-approximation it
suffices to show the {\em existence} of a small subset $S$ of vertices
such that the relative distance of all other vertices to the span of
$\xmat_S$ is smaller than any small constant.

\begin{theorem}[Main] \label{thm:main1} For every $\epsilon >0$ there
  is a constant $C =C(\epsilon)$ such that the following is true.
  When all subsets of at most $2n/r$ vertices have sparsity
  $\phi_{local} \ge C \phi_{SDP} \sqrt{\log n \log r}$ in the graph,
  there exists a set $S$ of $r$ vertices such that $\|X_S^\perp
  X\|_F^2 \le \epsilon \|X\|_F^2$.  (Here $\phi_{SDP} \le
  \phi_{sparsest}$ is the value of the Lasserre relaxation for $r+3$
  rounds and $\xmat$'s are the corresponding vectors
  from~\cref{eq:sdp-demand-vectors}.)
\end{theorem}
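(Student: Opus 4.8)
The plan is to exhibit the set $S$ of $r$ vertices by a random sampling argument, using the SDP vectors $\xmat_u$ together with the local expansion hypothesis to control the ``mass'' that fails to be captured by $\mathrm{span}(\xmat_S)$. The quantity $\|\xmat_S^\perp \xmat_u\|^2$ measures how far $\xmat_u$ is from the span of the chosen vectors; if $S$ ``covers'' $u$ well then this is small. So I want to choose $S$ so that, on average over $u$ (weighted by $\|\xmat_u\|^2$, since $\|\xmat\|_F^2 = \sum_u \|\xmat_u\|^2$), the residual is at most $\epsilon$ times the total. The natural tool is an \emph{orthogonal separator} (as hinted by the section title): from the $\ell_2^2$ geometry of the vectors $\xmat_u$ (they satisfy triangle inequality, have $\|\xmat_u\|^2\le 1$, sum to zero, and have small ``SDP energy'' $\sum_{uv} C_{uv}\|\xmat_u-\xmat_v\|^2 \le \phi_{SDP}\|\xmat\|_F^2$), one produces a distribution over subsets $S$ of $V$ that are ``spread out'' in the metric, with the property that nearby points tend to be separated together and each point is selected with a controlled probability.

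The key steps, in order, would be: (1) Run the ARV-style $\ell_2^2$ structure theorem / orthogonal separator machinery on the vectors $\{\xmat_u\}$ to get, at scale roughly $\nu/n = \|\xmat\|_F^2/n$, a random set $A\subseteq V$ such that each $u$ lies in $A$ with probability proportional to $\|\xmat_u\|^2$ and such that points far apart (distance $\gtrsim$ the scale divided by $\sqrt{\log n}$) are rarely both in $A$; crucially, conditioned on $u\in A$, every other $v\in A$ is close to $u$, so $\{\xmat_v : v\in A\}$ spans a subspace within which $\xmat_u$ has only a small residual. (2) Show that a typical such $A$ is small --- this is exactly where the local expansion hypothesis enters: if $A$ were large (size $\gg n/r$) it would, after a Cheeger/threshold-rounding step, produce a cut of size $\le 2n/r$ with sparsity $\lesssim \phi_{SDP}\sqrt{\log n\log r}$, contradicting $\phi_{local} \ge C\phi_{SDP}\sqrt{\log n\log r}$ once $C$ is large. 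Hence one may condition on $|A|\le r$ while still retaining a constant fraction of the selection probability. (3) Set $S = A$ (pad arbitrarily up to $r$ vertices if $|A|<r$), and bound $\E\big[\sum_u \|\xmat_S^\perp\xmat_u\|^2\big]$: split into $u$ with $\xmat_u$ close to $\mathrm{span}(\xmat_A)$ (small residual by step 1) and the rest, whose total $\|\xmat_u\|^2$-mass is small because each such $u$ is unlikely to be ``uncovered''. (4) Choose the scale parameter and $C=C(\epsilon)$ so the expected residual is $\le \epsilon\|\xmat\|_F^2$, then fix a good $S$. The extra $3$ rounds of Lasserre (so $r+3$) are the slack needed to run the rounding/Cheeger step inside the argument and to have enough moments for the orthogonal-separator guarantees.

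The main obstacle I expect is step (1)/(3): making precise the statement ``if $u,v\in A$ then $\xmat_v$ is close enough to $\xmat_u$ that $\mathrm{span}\{\xmat_v:v\in A\}$ captures all but an $\epsilon$-fraction of $\|\xmat_u\|^2$.'' Orthogonal separators naturally give a \emph{diameter} bound on the selected set in $\ell_2^2$ distance, but we need this diameter bound to translate into a bound on the \emph{projection residual} $\|\xmat_A^\perp \xmat_u\|^2$, and the two are not literally the same --- one has to argue that a cluster of small $\ell_2^2$-diameter around $\xmat_u$ (with $\|\xmat_u\|\le 1$) forces $\xmat_u$ to be near the span of the cluster. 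This likely requires either a direct geometric lemma (a low-diameter set of unit-ish vectors lies near the span of any representative subset) or a more careful variant of the separator that outputs the subspace directly. The second delicate point is calibrating the two competing demands on the scale: it must be large enough that ``most mass is covered'' yet small enough that the resulting set has size $\le n/r$ (invoking local expansion) --- getting the $\sqrt{\log n\log r}$ factor exactly right is where the $\sqrt{\log n}$ from ARV orthogonal separators and the $\sqrt{\log r}$ from the size-$n/r$ threshold multiply together.
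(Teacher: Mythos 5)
You have chosen the right tool (orthogonal separators) and you correctly isolate the key geometric fact (a point's residual after projecting onto $\mathrm{span}(X_S)$ is at most its $\ell_2^2$ distance to any member of $S$), but the architecture of your argument is inverted in a way that breaks it. In the paper's proof the local expansion hypothesis is used to show the separator sets are almost always \emph{large}: property 3 of the separator bounds the expected cut capacity by $\alpha D\sum_{uv}C_{uv}\|X_u-X_v\|^2\le \alpha D\,\phi_{SDP}\sum_u\|X_u\|^2$, while any realized set of size at most $2n/r$ must pay at least $\phi_{local}|S|/2$, so $\E[|S|\cdot I_{|S|\le 2n/r}]\le \delta\,\E[|S|]$ once $\phi_{local}\gg \phi_{SDP}D/\delta$ (\Cref{lem:smallsets}). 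Your step (2) asserts the opposite --- that a typical separator set is small, indeed of size at most $r$, and that a \emph{large} $A$ would yield a small sparse cut --- which is logically backwards (it is the small separator sets that would constitute cuts of size $\le 2n/r$ and sparsity $O(\phi_{SDP}D)$, contradicting $\phi_{local}$) and also quantitatively hopeless: conditioning the separator on $|A|\le r$ retains essentially none of its probability mass, and none of its guarantees survive such conditioning.

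The second gap is in steps (3)--(4): taking $S=A$ for a single low-diameter separator set cannot give $\|X_S^\perp X\|_F^2\le\epsilon\|X\|_F^2$. One draw covers each $u$ only with probability $\alpha\|X_u\|^2$ (with $\alpha$ a small probability scale), and a cluster of at most $r$ mutually close vectors spans only the neighborhood of one cluster, so points elsewhere keep residual comparable to $\|X_u\|^2$; "the uncovered mass is small" is simply false for a single draw. What the paper does instead, and what is missing from your plan, is a greedy iteration: repeatedly take the largest-norm unmarked point $X_i$, use the separator conditioned on containing $X_i$ and on being large (\Cref{cor:S'}) to find a set $S_i$ of size $\ge 2n/r$ with at most $2n/m$ members farther than $\beta\|X_i\|^2$ from $X_i$, put only the \emph{center} $X_i$ into $S$, and mark everything in $S_i$ together with everything within $2\beta\|X_i\|^2$ of $X_i$. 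With $m=10r^2/\delta$, distinct $S_i$'s overlap in at most $4n/m$ points (their core balls are disjoint because each new center was still unmarked), so every iteration claims at least $n/r$ fresh vertices and the loop stops after at most $r$ rounds; the residual bound then follows from exactly the geometric fact you flagged, since each marked point lies within $2\beta\|X_i\|^2$ of a center of no smaller norm that belongs to $S$, and the exceptional $2n/m$ points per cluster plus the volume outside $X'$ are absorbed into the $O(\delta)$ term. Without this iterate-and-count mechanism your proposal produces no set of $r$ vertices whose span captures all clusters, so the bound claimed in step (3) does not follow.
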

%
%
This existence result will be proven using the {\em orthogonal
  separators}~\cite{cmm06} but with the modifications of Bansal et
al.\cite{bansalsse11}, which, not surprisingly, were also developed in
context of algorithms for small set expansion.  (We know how to give a
more direct proof without using orthogonal separators but it brings in
an additional factor of $\log r$ in the local expansion condition.)

\begin{definition}[Orthogonal Separator] Let $X$ be an $\ell_2^2$
  space. A distribution over subsets of $X$ is called an
  $m$-orthogonal separator with {\em distortion} $D$, {\em probability
    scale} $\alpha > 0$ and {\em separation threshold} $\beta < 1$ if
  the following conditions hold for $S\subset X$ chosen according to
  this distribution.

  \begin{enumerate}
  \item For all $X_u\in X$, $\Pr[X_u\in S] = \alpha \|X_u\|^2$.
  \item For all $X_u,X_v\in X$ with $\|X_u-X_v\|^2 \ge \beta
    \min\{\|X_u\|^2, \|X_v\|^2\}$,
    $$
    \Pr[X_u\in S\mbox{ and }X_v\in S] \le \frac{\min\{\Pr[X_u\in S],
      \Pr[X_v\in S]\}}{m}.
    $$
  \item For all $X_u,X_v\in X$, $\Pr[I_S(X_u)\ne I_S(X_v)] \le \alpha
    D\cdot \|X_u - X_v\|^2$, where $I_S$ is the indicator function of
    $S$.
  \end{enumerate}
\end{definition}
Bansal et al.~\cite{bansalsse11} showed the existence of such
separators (in the process also giving an efficient algorithm to construct
them).
\begin{lemma}[\cite{bansalsse11}]
  For all $\beta < 1$ there exists an $m$-orthogonal separator with
  distortion $D = O\left(\sqrt{\frac{\log |X|\log m}{\beta}}\right)$.
\end{lemma}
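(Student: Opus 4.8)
The existence proof is essentially that of the orthogonal separators of Chlamtac, Makarychev and Makarychev~\cite{cmm06}, together with the $\beta$-separation refinement of~\cite{bansalsse11}. The distribution over $S$ is a composition of two \emph{independent} random ingredients. The first is a \emph{length component}: after rescaling so that $\max_u\|X_u\|^2=1$, draw a uniform $\rho\in[0,1]$ and discard every vertex with $\|X_u\|^2<\rho$; this is the only part of the construction responsible for making the inclusion probability exactly proportional to $\|X_u\|^2$ (condition~1). The second is a \emph{Gaussian-word component}: draw $\ell$ i.i.d.\ standard Gaussian vectors $g_1,\dots,g_\ell$ with $\ell=\Theta(\log m)$ (up to a factor depending on $\beta$), a threshold $s$, and a uniform target word $w^\star\in\{0,1\}^\ell$; form for each surviving vertex the word $w_i(u)=\mathbbm{1}[\langle g_i,\widehat X_u\rangle>s]$, where $\widehat X_u$ is a unit-norm representative of $X_u$; and keep only the vertices with $w(u)=w^\star$. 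The output is the set $S$ of vertices surviving both. The threshold $s$ and the number of coordinates $\ell$ are tuned so that the word event has probability $\Theta(1/m)$ per vertex.

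\paragraph{Verifying the three properties.}
Condition~1 is bookkeeping: by rotational invariance every letter $w_i(u)$ has the same Bernoulli law for all $u$, so with a uniform target $\Pr[w(u)=w^\star]=2^{-\ell}$ uniformly in $u$, and multiplying by the independent length event gives $\Pr[u\in S]=2^{-\ell}\|X_u\|^2$, i.e.\ $\alpha=\Theta(1/m)$. For condition~2 one shows that the hypothesis $\|X_u-X_v\|^2\ge\beta\min\{\|X_u\|^2,\|X_v\|^2\}$ forces $u$ and $v$ apart enough that a single Gaussian threshold agrees on both with probability at most $1-\delta(\beta)$; hence $\Pr[w(u)=w(v)]\le(1-\delta(\beta))^{\ell}$, and the choice of $\ell$ makes this at most $1/m$, which together with the length event (contributing at most $\min\{\|X_u\|^2,\|X_v\|^2\}$ to the probability that both survive) gives $\Pr[u\in S,\,v\in S]\le\tfrac1m\min\{\Pr[u\in S],\Pr[v\in S]\}$. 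For condition~3 one bounds $\Pr[I_S(u)\ne I_S(v)]$ by the length contribution $\big|\,\|X_u\|^2-\|X_v\|^2\big|=O(\|X_u-X_v\|^2)$ plus a word contribution; the latter is \emph{not} handled by a naive union bound over the $\ell$ letters (too lossy for nearby vectors), but by an ARV-type estimate exploiting the $\ell_2^2$ structure of the $\ell$-dimensional Gaussian projection, which bounds $\Pr[w(u)\ne w(v)]$ by $O\!\big(\sqrt{\ell\log|X|}\,\big)$ times $\|X_u-X_v\|^2$ (appropriately normalized). Collecting the pieces, with $\ell=\Theta(\log m)$ up to the $\beta$-factor, yields $D=O\!\big(\sqrt{\log|X|\,\log m/\beta}\big)$. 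Since one only samples $O(\log m)$ Gaussians, a word, and one uniform real, the separator is also constructible in polynomial time.

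\paragraph{Main obstacle.}
The crux is to make conditions~2 and~3 hold \emph{simultaneously} with these parameters. Orthogonality (condition~2) wants many almost-independent Gaussian coordinates, and a plain union bound would then cost a factor $\ell\approx\log m$ in the distortion; improving $\ell$ to $\sqrt{\ell\log|X|}$ is exactly what the ARV structure-theorem argument provides and is the technical heart of the distortion estimate. The subtler point is that condition~2 must hold for \emph{all} pairs, including vectors whose norms differ greatly or are both tiny: working with the raw directions $X_u/\|X_u\|$ fails for nearly-parallel vectors of very different norm (these can satisfy the separation hypothesis yet have identical direction), while working with the unit-sphere lift $(X_u,\sqrt{1-\|X_u\|^2})$ fails when both norms are very small — so the ``right'' unit-norm representative, or a mixture over length scales, is forced on the construction, and this is also why $\beta$ cannot be taken arbitrarily small relative to $1/(\log|X|\log m)$, as combining conditions~1--3 quickly shows.
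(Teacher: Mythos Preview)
Your sketch is essentially correct and describes the same underlying construction (CMM separators with the Bansal~et~al.\ $\beta$-refinement) that the paper cites. The paper, however, does not reprove the construction at all: it treats the lemma as a black box from~\cite{bansalsse11} and only supplies a two-line parameter chase for the $\beta$-dependence. Concretely, the paper observes that in the Bansal~et~al.\ setup $\gamma=\sqrt{\beta}/8$, so the exponent appearing in Lemma~4.9 of~\cite{cmm06} is $1/(1-\gamma^2)-1=O(\beta)$; one then needs $(\log m'/m')^{O(\beta)}\le 1/m$, which holds once $m'=m^{O(1/\beta)}$, and plugging this into the known $O(\sqrt{\log|X|\log m'})$ distortion gives the claimed $O(\sqrt{\log|X|\log m/\beta})$.

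So the substantive difference is presentational: you unfold the two-stage (length threshold plus Gaussian word) construction and argue each of the three properties directly, whereas the paper simply invokes the prior result with $m$ replaced by $m'=m^{O(1/\beta)}$. Your route is more self-contained but leaves the same places vague that you yourself flag --- the precise choice of the unit-norm representative $\widehat X_u$ and the exact $\delta(\beta)$ in the per-coordinate disagreement bound --- which is exactly what the paper sidesteps by citing~\cite{cmm06,bansalsse11} and doing only the $\beta$ bookkeeping. If you want your write-up to match the paper's level, you can replace the whole construction paragraph by that single substitution $m'\!=m^{O(1/\beta)}$ and the reference.
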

The dependency on $\beta$ follows from calculations in Lemma 4.9 in
\cite{cmm06}. From the explanation of the above Lemma in
\cite{bansalsse11}, we know $\gamma = \sqrt{\beta}/8$, so the exponent
in Lemma 4.9 in \cite{cmm06} is $1/(1-\gamma^2)-1 = O(\beta)$, and we
want $(\log m'/m')^{O(\beta)}$ to be smaller than $1/m$.  Setting $m'
= m^{O(1/\beta)}$ suffices. Then the distortion is $O(\sqrt{\log
  |X|\log m'}) = O\left(\sqrt{\frac{\log |X|\log m}{\beta}}\right)$.
%

Now we show the following, which immediately implies \Cref{thm:main1}.

\begin{theorem} \label{thm:mainorth} For any $\delta > 0$, $0.25 >
  \beta > 0$, let $m = 10r^2/\delta$.  Let $D$ denote the best
  distortion possible for an $m$-orthogonal separator with separation
  $\beta$.  If $X$ is any set of vectors in $\ell_2^2$, one for each
  vertex in the graph, and the minimum expansion $\phi_{local}$ among
  subsets of at most $2n/r$ vertices satisfies $\phi_{local} \ge
  O(\phi_{SDP} D/ \delta)$, then there exist $r$ points $S$ in $X$ such
  that $\|X_S^\perp X\|_F^2 \le O(\delta+\beta) \|X\|_F^2$.
\end{theorem}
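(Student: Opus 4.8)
The plan is to use the $m$-orthogonal separator to extract a set $S$ of at most $r$ vertices such that the vectors $X_u$ are "mostly captured" by the span of $\{X_u : u \in S\}$. Concretely, I would sample a subset $\tilde S \subseteq V$ from the $m$-orthogonal separator distribution with $m = 10r^2/\delta$ and separation threshold $\beta$, and then condition on (or argue about) the event that $|\tilde S| \le r$. Property~1 of the separator gives $\expct[|\tilde S|] = \alpha \sum_u \|X_u\|^2 = \alpha \|X\|_F^2$, so by choosing the probability scale $\alpha$ appropriately (roughly $\alpha = r/\|X\|_F^2$ up to constants) one gets $\expct[|\tilde S|] \approx r$; the key point is that the separator's property~2 forces $\tilde S$ to be "spread out" so that with good probability $|\tilde S|$ is not much larger than $r$ and, crucially, $\tilde S$ behaves like a set of size $\Theta(r)$. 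If $|\tilde S| > r$ I would simply truncate to $r$ arbitrary elements, or alternatively reject and resample — the point is that a sample of size $\le r$ with the desired covering property exists.

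The heart of the argument is to bound $\expct\big[\|X_{\tilde S}^\perp X\|_F^2\big] = \sum_u \expct\big[\|X_{\tilde S}^\perp X_u\|^2\big]$. For a fixed $u$, if some $v \in \tilde S$ has $X_v$ close to $X_u$ in the sense that $\|X_u - X_v\|^2 < \beta \min\{\|X_u\|^2,\|X_v\|^2\}$, then $\|X_{\tilde S}^\perp X_u\|^2 \le \|X_u - X_v\|^2 < \beta\|X_u\|^2$ since $X_v$ lies in the span. So the contribution to the error from vertex $u$ is at most $\beta \|X_u\|^2$ unless $\tilde S$ contains no vertex $\beta$-close to $u$; summing the $\beta\|X_u\|^2$ terms gives the $O(\beta)\|X\|_F^2$ part of the bound. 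It remains to control $\sum_u \|X_u\|^2 \cdot \Pr[\text{no $\beta$-close vertex of } u \text{ lies in } \tilde S]$. Here is where the local-expansion hypothesis enters: for each $u$, consider the "ball" $B_u$ of vertices $\beta$-close to $X_u$. If $|B_u| \le 2n/r$, then by the local expansion bound $\phi_{local} \ge \Omega(\phi_{SDP} D/\delta)$ the set $B_u$ has few edges leaving it relative to its size, which — combined with property~3 of the orthogonal separator (the probability that $S$ separates the endpoints of an edge is at most $\alpha D \|X_u - X_v\|^2$) and the SDP objective bound $\sum_{uv} C_{uv}\|X_u - X_v\|^2 \le \phi_{SDP}\|X\|_F^2$ — will show that the total weight of "uncovered" $u$ is at most $O(\delta)\|X\|_F^2$. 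For $u$ with $|B_u| > 2n/r$ (a "large ball"), property~1 alone guarantees that with constant probability some vertex of $B_u$ lands in $\tilde S$, since $\expct[|\tilde S \cap B_u|] = \alpha \sum_{v \in B_u}\|X_v\|^2$ is large; here one uses that within a $\beta$-ball all the $\|X_v\|^2$ are comparable. Combining the two cases yields $\expct\big[\|X_{\tilde S}^\perp X\|_F^2\big] \le O(\delta + \beta)\|X\|_F^2$, and a sample achieving this (with $|\tilde S| \le r$) exists.

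I expect the main obstacle to be the bookkeeping around the "large ball" case and the simultaneous control of $|\tilde S| \le r$: one has to choose $\alpha$ large enough that every large $\beta$-ball is hit with constant probability, yet small enough that $\expct[|\tilde S|] = \alpha\|X\|_F^2$ stays $O(r)$, and then argue these are compatible — this is exactly where the choice $m = 10 r^2/\delta$ (rather than $m = \Theta(r)$) is needed, so that property~2 makes $\tilde S$ concentrate tightly enough around its mean and the "$\beta$-far" pairs essentially never co-occur in $\tilde S$. The edge-counting step translating $|B_u| \le 2n/r$ into a bound via $\phi_{local}$ also requires care: one needs that a small ball $B_u$, having few boundary edges, cannot have all of its internal edges "stretched" in the SDP solution, so that the $\ell_2^2$ geometry of $B_u$ is genuinely low-diameter and property~3 can be applied edge-by-edge along short paths inside $B_u$. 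The rest — choosing the constant $C(\epsilon)$ in \Cref{thm:main1} by setting $\delta, \beta = \Theta(\epsilon)$ and plugging in the distortion bound $D = O(\sqrt{(\log n \log m)/\beta}) = O(\sqrt{\log n \log r}/\sqrt{\epsilon})$ from the Bansal et al.\ lemma — is routine.
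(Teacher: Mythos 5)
There is a genuine gap in your approach. You propose to take the $r$ points to be (a truncation of) a single sample $\tilde S$ drawn from the orthogonal separator, with $\alpha$ tuned so that $\expct[|\tilde S|]\approx r$, and to argue that every ``large ball'' $B_u$ is hit with constant probability. But constant per-ball hit probability is not enough: the theorem needs the \emph{uncovered} mass to be $O(\delta+\beta)\|X\|_F^2$, i.e.\ per-vertex coverage probability $1-O(\delta+\beta)$. With $\expct[|\tilde S|]=\alpha\|X\|_F^2\approx r$ and, say, $r$ well-separated clusters each carrying mass $\|X\|_F^2/r$, each cluster is missed with constant probability, so the expected residual $\expct[\|X_{\tilde S}^\perp X\|_F^2]$ is $\Omega(\|X\|_F^2)$, not $O(\delta+\beta)\|X\|_F^2$. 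Boosting coverage forces $\alpha\|X\|_F^2\gg r$, and truncating the sample back to $r$ elements destroys exactly the coverage you need; property~2 (with $m=10r^2/\delta$) controls co-occurrence of $\beta$-far pairs inside $S$, not the probability that a given ball is hit, so it cannot close this gap. A second problem is your use of the expansion hypothesis: you write that for $|B_u|\le 2n/r$ the local expansion bound makes $B_u$ have \emph{few} edges leaving it, which is backwards (a lower bound on $\phi_{local}$ forces small sets to have \emph{many} boundary edges), and the suggested ``property~3 along short paths inside $B_u$'' does not correspond to a sound step.

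The paper's proof is structured quite differently, and the difference is exactly what fixes these issues. The expansion hypothesis is used only once, in \Cref{lem:smallsets}: comparing the separator's expected cut (property~3 plus the SDP objective) with the lower bound $\phi_{local}|S|/2$ that a \emph{small output set} of the separator would have to pay shows that sets of size $\le 2n/r$ carry at most a $\delta$ fraction of the separator's mass; Markov then gives a volume-$(1-2\delta)$ subset $X'$ on which the conditioned-to-be-large separator $S'$ still satisfies $\Pr[X_u\in S']\ge\alpha\|X_u\|^2/2$. The $r$ points are then \emph{not} a separator sample at all: they are chosen greedily as the highest-norm unmarked points $X_1,X_2,\dots$, and for each center $X_i$ the separator (conditioned on containing $X_i$, using property~2 with $m=10r^2/\delta$) certifies the existence of a set of size $\ge 2n/r$ of which all but $2n/m$ points are $\beta$-close to $X_i$; since consecutive certified sets overlap in at most $4n/m<n/r^2$ points, the greedy process stops within $r$ steps, and the marked clusters $M_i$ give the $O(\beta+\delta)$ bound on the projection residual. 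So the separator is used as a counting device to bound the number of cluster centers, while coverage of almost all vertices is deterministic (every point eventually gets marked), which is why the final bound is $O(\delta+\beta)$ rather than a constant. If you want to salvage your route, you would need some analogue of this clustering step; the single-sample argument as written cannot reach the stated bound.
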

The actual construction of orthogonal separators from~\cite{cmm06}
requires the origin to be inside the vector set. To achieve this, we
will translate all vectors in the same direction:
\begin{proposition} \label{thm:new-x-with-origin}
  If~\Cref{thm:gs-round-or-fail} fails, then there exists a set of
  vectors $\xmat \in \ell_2^2$ with $0 \in \xmat$.
\end{proposition}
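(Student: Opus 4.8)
The plan is to keep the vectors unchanged and merely \emph{translate} the whole configuration so that one of the $\xmat_u$ lands at the origin, and then to verify that every property needed downstream survives. Assume the conclusion of~\Cref{thm:gs-round-or-fail} cannot be met, i.e.\ that for every admissible candidate set $S$ we have $\|\xmat_S^\perp\xmat\|_F^2>\gamma\|\xmat\|_F^2$. By averaging pick a vertex $u^\ast$ with $\|\xmat_{u^\ast}\|^2\le\tfrac1n\|\xmat\|_F^2=\nu/n$ and set $\xmat'_u\triangleq\xmat_u-\xmat_{u^\ast}$ for all $u$. Then $\xmat'_{u^\ast}=0$, so $0\in\xmat'$. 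A translation preserves every pairwise squared distance, $\|\xmat'_u-\xmat'_v\|^2=\|\xmat_u-\xmat_v\|^2$, and the $\ell_2^2$ condition refers only to those distances, so $\xmat'\in\ell_2^2$; for the same reason the numerator $\sum_{uv}C_{uv}\|\xmat'_u-\xmat'_v\|^2$ is unchanged.

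For the denominator, $\sum_u\xmat_u=0$ yields the exact identity $\|\xmat'\|_F^2=\sum_u\|\xmat_u-\xmat_{u^\ast}\|^2=\|\xmat\|_F^2+n\|\xmat_{u^\ast}\|^2$, so $\|\xmat\|_F^2\le\|\xmat'\|_F^2\le 2\|\xmat\|_F^2$ by the choice of $u^\ast$. Consequently the relaxation ratio $\bigl(\sum_{uv}C_{uv}\|\xmat'_u-\xmat'_v\|^2\bigr)/\|\xmat'\|_F^2$ can only decrease and stays at most $\phi_{SDP}$, so the hypothesis $\phi_{local}\ge O(\phi_{SDP}D/\delta)$ used in~\Cref{thm:mainorth} remains in force for $\xmat'$. (The translated vectors now satisfy only $\|\xmat'_u\|^2\le4$ in place of $\le1$; a uniform rescaling by $\tfrac12$ restores the unit bound without touching any ratio, if one wants it.)

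The one substantive point is carrying over the ``no good separating set'' property. Fix a candidate $S$ of size $r$ in the translated configuration and put $U\triangleq\mathrm{span}\{\xmat_v:v\in S\cup\{u^\ast\}\}$. Since $\xmat_{u^\ast}\in U$, subtracting it changes no point's distance to $U$, and $\mathrm{span}\{\xmat'_v:v\in S\}\subseteq U$; hence $\mathrm{dist}(\xmat'_u,\mathrm{span}\{\xmat'_v:v\in S\})\ge\mathrm{dist}(\xmat_u,U)$ for every $u$, and summing squares gives $\|(\xmat'_S)^\perp\xmat'\|_F^2\ge\|\xmat_{S\cup\{u^\ast\}}^\perp\xmat\|_F^2$. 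Adding vertices to a candidate set only shrinks its residual, so this set of size at most $r+1$ still has residual $>\gamma\|\xmat\|_F^2\ge\tfrac{\gamma}{2}\|\xmat'\|_F^2$; that is, the failure of~\Cref{thm:gs-round-or-fail} persists for $\xmat'$ with the constant merely halved.

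The step I expect to be the main obstacle is exactly this last transfer: translation commutes with distances and with affine spans but \emph{not} with linear spans, so one has to re-index the candidate set as $S\cup\{u^\ast\}$ and pay for one extra vertex, which in turn means the failure hypothesis must be invoked for sets one size larger. Propagated through the reduction---together with the slack already present in~\Cref{thm:mainorth}---this is precisely why the final statements quantify over subsets of size $2n/r$ and over $r+3$ rounds of the Lasserre hierarchy rather than over $n/r$ and $r$. The remaining ingredients---the $\ell_2^2$ property, the Frobenius-norm comparison, and the ratio bound---are routine once the low-norm vertex $u^\ast$ has been isolated.
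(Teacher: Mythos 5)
Your proposal is correct and is essentially the paper's own argument: translate the configuration by one vertex's vector chosen by averaging (the paper picks $t$ with $\sum_u\|\xmat_u-\xmat_t\|^2\le 2\sum_u\|\xmat_u\|^2$, you pick a low-norm $u^\ast$ and use $\sum_u\xmat_u=0$, which gives the same factor-$2$ bound), then note that $0\in\xmat'$, the $\ell_2^2$ property is preserved, and the failure of~\Cref{thm:gs-round-or-fail} persists with $\gamma$ halved. Your explicit span bookkeeping (charging one extra vertex, $S\cup\{u^\ast\}$) is exactly what the paper compresses into its remark that the condition holds ``for every subset of size $r-1$ \ldots\ except $\epsilon$ becomes $\epsilon/2$.''
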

\begin{proof}
  Given the vectors $[\xmat_u]_u$ found
  by~\Cref{thm:gs-round-or-fail}, we know that $ \sum_u \|\xmat_u\|^2
  = \frac{1}{2}\mathbb{E}_{u} \sum_{v} \|\xmat_u - \xmat_v\|^2.  $
  Hence there exists some $t$ for which $ \sum_{u} \|\xmat_u -
  \xmat_t\|^2 \le 2\sum_u \|\xmat_u\|^2.  $ After having fixed such
  $t$, we define our new vectors as $ \xmat'_u \gets \xmat_u -
  \xmat_t.  $ It is easy to see that $\xmat'\in \ell^2_2$, $0 \in
  \xmat'$ and for every subset of size $r-1$,
  \cref{eq:sc-large-proj-dist} is satisfied (except $\epsilon$ becomes
  $\epsilon/2$, which only changes the constants in $O$ notation).
\end{proof}
%
%
We start by showing that most sets in the support of the orthogonal separator should be
large. 
\begin{lemma}
\label{lem:smallsets}
  If $\phi_{local} \ge 2\phi_{SDP} D/ \delta$ as in the hypothesis of
  \Cref{thm:mainorth}, and $S$ is chosen according to the orthogonal
  separator, then $ \E[|S|\cdot I_{|S| \le 2n/r}] \le \delta \E[|S|],
  $ where $I_{|S|\le 2n/r}$ is the indicator variable for the event
  ``$|S| \le 2n/r$.''
\end{lemma}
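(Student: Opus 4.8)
The plan is to bound the expected "mass" of small sets $\E[|S| \cdot I_{|S| \le 2n/r}]$ by exhibiting, for each small set in the support, a large number of separated pairs inside it, and then charging the probability of these pairs to the SDP objective via property (2) of the orthogonal separator. The key observation is that if $|S| \le 2n/r$, then $S$ is a small set in the graph, so it has large expansion $\phi_{local}$; but the SDP solution has total objective only $\phi_{SDP}\|X\|_F^2$, so $S$ cannot contain too many vertex pairs that are ``close'' in the $\ell_2^2$ metric. Thus a constant fraction of the ordered pairs $(u,v)$ with $u,v \in S$ must be $\beta$-separated, i.e. $\|X_u - X_v\|^2 \ge \beta \min\{\|X_u\|^2, \|X_v\|^2\}$, at which point property (2) kicks in.

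Concretely, first I would write
\[
\E[|S| \cdot I_{|S|\le 2n/r}] = \sum_u \Pr[u \in S,\ |S| \le 2n/r],
\]
and then, inside the event $|S| \le 2n/r$, split the vertices $v \in S$ into those that are $\beta$-close to $u$ and those that are $\beta$-separated from $u$. For the close vertices: since $S$ has size at most $2n/r$, the sparsity bound $\phi_{local} \ge 2\phi_{SDP}D/\delta$ and the SDP feasibility inequality $\sum_{uv} C_{uv}\|X_u-X_v\|^2 \le \phi_{SDP}\|X\|_F^2$ together force the total ``$\ell_2^2$-volume'' of close pairs inside any such $S$ to be small — roughly, the number of close pairs is at most a $\delta$-fraction of $|S|^2$ (this is where the hypothesis on $\phi_{local}$ is used; the expansion lower bound says a small set must cut many edges, but closeness in $\ell_2^2$ would make the cut cheap, contradiction unless there are few close pairs). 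Hence for all but a $\delta$-fraction of pairs in $S$, property (2) applies and $\Pr[u \in S \text{ and } v \in S] \le \Pr[v \in S]/m$. Summing over $v$ and using $\Pr[v \in S] = \alpha\|X_v\|^2 \le \alpha$ and $m = 10r^2/\delta$, with $|S| \le 2n/r$, I get that the contribution of the separated pairs is at most $(2n/r)\cdot \alpha / m = O(\delta \alpha n/r^2)\cdot$(something), which should be dominated by $\delta \E[|S|]$ once I use $\E[|S|] = \sum_u \alpha\|X_u\|^2 = \alpha\|X\|_F^2$ and the fact that $\|X\|_F^2$ is not too small relative to $n/r$ — actually, more carefully, I would compare directly against $\E[|S|]$ without needing an absolute bound on $\|X\|_F^2$, by keeping the $\|X_v\|^2$ weights throughout rather than bounding them by $1$.

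The main obstacle I anticipate is the close-pair counting step: I need a clean quantitative statement that ``a set of at most $2n/r$ vertices with expansion at least $\phi_{local}$, together with an $\ell_2^2$ solution of objective value $\phi_{SDP}$, can have only a $\delta$-fraction of its internal pairs be $\beta$-close,'' and getting the constants to line up so that the final bound is $\delta\E[|S|]$ rather than $O(\delta)\E[|S|]$ with a bad constant, or needing $\phi_{local}/\phi_{SDP}$ to beat $D/\delta$ by more than a constant. The subtlety is that ``close in $\ell_2^2$'' (property 2's hypothesis uses $\min\{\|X_u\|^2,\|X_v\|^2\}$ on the right-hand side, a multiplicative notion) has to be converted into a bound usable against the additive SDP objective $\sum C_{uv}\|X_u-X_v\|^2$; I expect this conversion to go through because $\sum_u \|X_u\|^2 = \nu/1 = \|X\|_F^2$ controls the total scale, but it will require care to avoid losing a $\log$ or an extra $r$ factor. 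Once that counting lemma is in hand, the rest is a direct summation.
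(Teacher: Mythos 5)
There is a genuine gap, and it is exactly at the step you flagged: the ``close-pair counting'' claim is not just delicate, it is false. The SDP feasibility bound $\sum_{uv} C_{uv}\|X_u-X_v\|^2 \le \phi_{SDP}\|X\|_F^2$ constrains only the $\ell_2^2$ lengths of \emph{graph edges}, weighted by capacities; it says nothing about the $\ell_2^2$ distances between arbitrary pairs of vertices inside a small set $S$. In the very regime the theorem is aimed at, the SDP solution maps each side of the sparse cut to (essentially) a single point, so a subset $S$ of size $2n/r$ taken inside one side has \emph{all} of its internal pairs at $\ell_2^2$ distance about $0$, while its graph expansion is at least $\phi_{local}$ and nothing is contradicted: the many edges leaving $S$ go to vertices outside $S$ that are also mapped to the same point, so those cut edges are cheap for the SDP. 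Hence your intended dichotomy (``either few close pairs inside $S$, or the cut would be cheap, contradiction'') does not hold, and property (2) cannot carry the argument: property (2) only says that a set containing $u$ rarely contains vertices \emph{far} from $u$ in $\ell_2^2$, which if anything makes small sets consisting of $u$ plus its $\ell_2^2$-neighbors perfectly plausible and gives no upper bound on $\Pr[u\in S,\ |S|\le 2n/r]$. (There is also a secondary accounting worry you noticed yourself: $\|X\|_F^2=\mu(1-\mu)n\cdot\frac1n$-type quantities can be far smaller than $n/r$, so absolute counts like $2n/r\cdot\alpha/m$ cannot be compared to $\delta\E[|S|]$ without keeping the $\|X_v\|^2$ weights; but this is moot given the main gap.)

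The paper's proof uses a different and much shorter mechanism, based on property (3) (the distortion property), not property (2). On any realization with $|S|\le 2n/r$, the set $S$ cuts graph capacity at least $\Phi_{local}|S|\ge \tfrac12\phi_{local}|S|$, so $\E[\text{cut capacity}]\ge \tfrac12\phi_{local}\,\E[|S|\cdot I_{|S|\le 2n/r}]$. On the other hand, property (3) gives $\Pr[\text{edge }(u,v)\text{ cut}]\le \alpha D\|X_u-X_v\|^2$, so $\E[\text{cut capacity}]\le \alpha D\sum_{uv}C_{uv}\|X_u-X_v\|^2\le \alpha D\,\phi_{SDP}\|X\|_F^2 = D\,\phi_{SDP}\,\E[|S|]$, using $\E[|S|]=\alpha\|X\|_F^2$ from property (1). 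Combining and invoking $\phi_{local}\ge 2\phi_{SDP}D/\delta$ yields $\E[|S|\cdot I_{|S|\le 2n/r}]\le \delta\,\E[|S|]$. So the hypothesis on $\phi_{local}$ is calibrated against the distortion $D$ of property (3); property (2) is reserved for the later covering argument (\Cref{thm:mainorth} and \Cref{cor:S'}), not for this lemma.
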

\begin{proof}
On one hand we know
$$
\E[\mbox{number of edges cut}]
\ge \E[|S|\cdot I_{|S| \le 2n/r}]\cdot \Phi_{local} \ge \E[|S|\cdot I_{|S| \le 2n/r}]\cdot \phi_{local}/2.
$$
On the other hand by condition 3 in the definition,
$$
\E[\mbox{number of edges cut}]
\le\alpha D  \sum C_{uv} \|X_u-X_v\|^2.
$$
Since $\sum C_{uv} \|X_u-X_v\|^2 \le \phi_{SDP} \sum \|X_u\|^2  = \phi_{SDP} \E[|S|]/\alpha$, we know
$$
\E[|S|\cdot I_{|S| \le 2n/r}] \le \frac{1}{\phi_{local}}\alpha D  \sum C_{uv} \|X_u-X_v\|^2
\le \delta \E[|S|].\qedhere
$$
\end{proof}
Now we state a corollary but first we need this definition.
%
\begin{definition}[volume]
  The {\em volume} of a subset $X'\subset X$ is $$vol(X') =
  \frac{\sum_{X_u\in X'} \|X_u\|^2}{\sum_{X_u\in X} \|X_u\|^2}$$.
\end{definition}
\begin{corollary} \label{cor:S'} There exists a subset $X'\subset X$
  with volume at least $1-2\delta$, such that the following is
  true. Let $S'$ be a set picked probabilistically by first picking
  $S$ randomly according to the separator and letting $S'$ be $S$ if
  $|S| \ge 2n/r$, and the empty set otherwise.  Then we have:
  \begin{enumerate}
  \item For all $X_u\in X'$ we have $\Pr[X_u\in S'] \ge \alpha
    \|X_u\|^2/2$.
  \item For all $X_u,X_v\in X$ with $\|X_u-X_v\|^2 \ge \beta
    \min\{\|X_u\|^2, \|X_v\|^2\}$,
    $$
    \Pr[X_u\in S'\mbox{ and }X_v\in S'] \le \frac{\min\{\Pr[X_u\in S],
      \Pr[X_v\in S]\}}{m}.
    $$
  \end{enumerate}
\end{corollary}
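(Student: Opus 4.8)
The plan is to deduce Corollary~\ref{cor:S'} directly from Lemma~\ref{lem:smallsets} by a straightforward averaging/Markov argument, after which property (2) is essentially immediate because $S' \subseteq S$ always.

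First I would handle property (2), which is the easy half. Since $S'$ is either $S$ itself (when $|S| \ge 2n/r$) or $\es$, we always have $S' \subseteq S$ as sets of vertices. Hence for any pair $X_u, X_v$, the event ``$X_u \in S'$ and $X_v \in S'$'' implies ``$X_u \in S$ and $X_v \in S$,'' so $\Pr[X_u \in S' \text{ and } X_v \in S'] \le \Pr[X_u \in S \text{ and } X_v \in S]$. For pairs with $\|X_u - X_v\|^2 \ge \beta \min\{\|X_u\|^2, \|X_v\|^2\}$, condition~2 of the orthogonal separator definition bounds the right-hand side by $\min\{\Pr[X_u\in S], \Pr[X_v\in S]\}/m$, which is exactly what is claimed. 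So property (2) holds for \emph{all} of $X$, with no need to pass to $X'$.

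Next, for property (1), I would use Lemma~\ref{lem:smallsets} together with a Markov-type argument on the vertex weights $\|X_u\|^2$. For each vertex $u$ write $p_u = \Pr[X_u \in S]$ and $q_u = \Pr[X_u \in S']$; note $q_u = \Pr[X_u \in S \text{ and } |S| \ge 2n/r] = p_u - \Pr[X_u \in S \text{ and } |S| < 2n/r]$, so $p_u - q_u = \Pr[X_u \in S, |S| \le 2n/r]$. Summing the ``deficiency'' weighted by $\|X_u\|^2$ and exchanging sum and expectation gives
\[
\sum_u \|X_u\|^2 (p_u - q_u) = \E\Big[ I_{|S| \le 2n/r} \sum_{X_u \in S} \|X_u\|^2 \Big] \le \E[|S| \cdot I_{|S| \le 2n/r}] \le \delta \E[|S|],
\]
where the last step is Lemma~\ref{lem:smallsets} and the middle inequality uses $\|X_u\|^2 \le 1$. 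On the other hand $\sum_u \|X_u\|^2 p_u = \alpha \sum_u \|X_u\|^4 \cdot \frac{1}{\|X_u\|^2}$... more simply, by condition~1, $\sum_u \|X_u\|^2 p_u = \alpha \sum_u \|X_u\|^4$, and also $\E[|S|] = \sum_u p_u = \alpha \sum_u \|X_u\|^2$. To make the comparison clean I would instead weight by $1$ and relate $\E[|S|]$ to $\|X\|_F^2 = \sum_u \|X_u\|^2$ via $\E[|S|] = \alpha \|X\|_F^2$, and bound $\sum_u \|X_u\|^2(p_u - q_u) \le \delta \alpha \|X\|_F^2$. Then define $X'$ to be the set of $u$ with $q_u \ge \alpha \|X_u\|^2 / 2$, i.e. $p_u - q_u \le p_u - \alpha\|X_u\|^2/2 \le \alpha\|X_u\|^2/2$ (using $p_u = \alpha\|X_u\|^2$). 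The complement $X \setminus X'$ consists of vertices with $p_u - q_u > \alpha \|X_u\|^2/2$, and summing the weight constraint over these,
\[
\sum_{X_u \in X \setminus X'} \|X_u\|^2 \cdot \frac{\alpha}{2} \le \sum_{X_u \in X\setminus X'} \|X_u\|^2(p_u - q_u) \le \delta \alpha \|X\|_F^2,
\]
so $\vol(X \setminus X') \le 2\delta$, giving $\vol(X') \ge 1 - 2\delta$ as required.

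I do not expect a genuine obstacle here; the only thing to be careful about is the bookkeeping between the two natural weightings ($\|X_u\|^2$ versus $1$) when converting Lemma~\ref{lem:smallsets}'s bound $\E[|S| \cdot I_{|S|\le 2n/r}] \le \delta\E[|S|]$ into a statement about the fraction of \emph{volume} on which $\Pr[X_u \in S']$ drops below half of $\Pr[X_u \in S]$. Using $\|X_u\|^2 \le 1$ to pass from ``number of vertices in $S$'' to ``weighted volume of $S$'' is the one inequality that needs to be invoked in the right direction, and it is harmless since it only costs a factor that is absorbed into the constant. Everything else is Markov's inequality plus linearity of expectation.
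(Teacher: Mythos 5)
Your route is the same as the paper's (the paper's proof is a two-line sketch: property (1) by Markov via \Cref{lem:smallsets}, property (2) by the monotonicity $S'\subseteq S$ plus property 2 of the separator), and your treatment of (2) and your overall plan for (1) are fine. However, your final display does not follow as written: for $X_u\in X\setminus X'$ you only know $p_u-q_u>\tfrac{\alpha}{2}\|X_u\|^2$, which gives $\|X_u\|^2(p_u-q_u)>\tfrac{\alpha}{2}\|X_u\|^4$, not $\tfrac{\alpha}{2}\|X_u\|^2$ as you wrote; since $\|X_u\|^4$ can be much smaller than $\|X_u\|^2$, the chain as displayed does not yield the volume bound. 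The clean fix is the unweighted version you already gestured at: sum the deficiencies with weight $1$, so that $\sum_u (p_u-q_u)=\E\big[|S|\cdot I_{|S|<2n/r}\big]\le\delta\,\E[|S|]=\delta\alpha\|X\|_F^2$ by \Cref{lem:smallsets} and separator property 1, and then Markov gives $\tfrac{\alpha}{2}\sum_{X_u\in X\setminus X'}\|X_u\|^2\le\sum_{X_u\in X\setminus X'}(p_u-q_u)\le\delta\alpha\|X\|_F^2$, i.e.\ $\vol(X\setminus X')\le 2\delta$. This also removes your reliance on the bound $\|X_u\|^2\le 1$, which is not among the stated hypotheses of \Cref{thm:mainorth} (it holds for the translated Lasserre vectors, but there is no need to invoke it).
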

\begin{proof}(Sketch) The first condition is by Markov. The second
  condition holds because the $S'$ is always a subset of $S$, so the
  probability of LHS only decreases.
\end{proof}
Now we are ready to prove \Cref{thm:mainorth}.
\begin{proof}(\Cref{thm:mainorth}) 
We give an algorithm that shows iteratively picks $r$ points such that
most of the volume in $X'$ lies close to them.
%
\begin{enumerate} \itemsep 0pt
\item Initially none of the points are marked.
\item $i \gets 1$.
\item While there is still a point in $X'$ that is not marked:
  \begin{enumerate}
  \item Let $X_i$ be the point with largest norm among the unmarked points of $X'$.
\item \label{step:orth-sep-5}
  Pick a set $S_i$ ($|S_i| \ge 2n/r$) containing $X_i$,
  and containing at most $2n/m$ points that have distance  more than $\beta \|X_i\|^2$
  from $X_i$. (Such a set exists as shown below.)
\item Mark all points in $S_i$ as well as all points that have
  distance at most $2\beta \|X_i\|^2$ from $X_i$. Denote by $M_i$  the set of
  points that were previously unmarked and got  marked in this step.
\item Look over all $M_j$ for $j<i$ and if any points in them have distance at most $\beta
  \|X_i\|^2$ to $X_i$  then add them to $M_i$ as well.
\item $i \leftarrow i+1$.
\end{enumerate}
\end{enumerate}
First we show using the probabilistic method why we can always perform
step~\ref{step:orth-sep-5}.  Pick a random set $S'$ from the distribution of the
separator, conditioning on its containing $X_i$.  By the properties of
$S'$ we know if $\|X_i-X_v\|^2 \ge \beta \|X_i\|^2$, then the
conditional probability $\Pr[X_v\in S'|X_i\in S'] \le 2/m$. So the
expected number of points in $S'$ whose distance is at least
$\beta\|X_i\|^2$ from $X_i$ is at most $2n/m$, and in particular there
must be one set that satisfy the condition.
Then we need to show that this process terminates in $r$ steps. To do
so it suffices to show that each $S_i$ has at least $n/r$ points that
were not in any $S_j$ for $j<i$.  We know that $|S_i| > 2n/r$.  We
claim its intersection with any $S_j$ for $j<i$ is at most
$4n/m$. 
The reason is that $X_i$ was unmarked at the start of this phase,
which implies that that the balls of radius $\beta \|X_j\|^2$ and
$\beta \|X_i\|^2$ around $X_j$ and $X_j$ respectively must be disjoint
(note that $\|X_j\| > \|X_i\| $) and thus the only intersections among
$S_i, S_j$ are from points outside these balls, which we know to be at
most $4n/m$. Since $4nr/m <n/r$ (recall $m = 10r^2/\delta$), we have
conclude that each $S_i$ introduces at least $n/r$ new points, so the
process must terminate in $r$ steps.
Finally we bound the average distance of the other points from this
set $S =\{X_1, ..., X_t\} (t\le r)$, specifically, the quantity
$\frac{\|X_S^\perp X\|_F^2}{\|X\|_F^2}$ by $O(\delta +\beta)$.
All points outside $X'$ (the set in \Cref{cor:S'}) anyway have volume
at most $2\delta$, so their contribution is upperbounded by that.  To
bound the contribution of points in $X'$, consider how the sets $M_1,
..., M_t$ were picked. If $X_u$ is $\beta \|X_i\|^2$-close to $X_i$,
then $X_u$ is in $M_i$ (these sets are disjoint by the
construction). Otherwise $X_u$ belongs to $M_i$ where $i$ is the time
that $X_u$ gets marked.
All points in $M_i$ have norm at most $\|X_i\|^2$ since otherwise they
would have been picked instead of $X_i$. Also more than $n/r$ points
(in fact, $2n/r - 2n/m > n/r$) in $M_i$ are $\beta\|X_i\|^2$-close to
$X_i$, and at most $2n/m$ points are $2\beta\|X_i\|^2$-far from $X_i$,
so
$$
\sum_{X_u\in M_i}\|X_u\|^2 \ge (|M_i| - 2n/m) (1-2\beta)\|X_i\|^2 \ge
|M_i|\|X_i\|^2/3.
$$
On the other hand, after projection to the space orthogonal to $X_S$,
all but $2n/m$ points are smaller than $2\beta \|X_i\|^2$, therefore
after projection
$$
\sum_{X_u\in M_i} \|X_i^\perp X_u\|^2 \le (|M_i| - 2n/m) \cdot 2\beta
\|X_i\|^2 + 2n/m \cdot \|X_i\|^2 \le O(\beta+\delta)|M_i|\|X_i\|^2.
$$
Summing up the inequalities for all $M_i$'s we get the upperbound
$O(\beta +\delta)$ needed for the theorem.
\end{proof}
\paragraph{Algorithmic version.}
Since Bansal et al.~\cite{bansalsse11} give an efficient algorithm for
constructing orthogonal separators, the above proof immediately can be
made algorithmic.
\begin{corollary}\label{thm:orth-sep-algorithmic}
  There is an algorithm that given a weighted graph $G=(V,E)$ in which
  $\phi_{local} > \frac{O(\sqrt{\log n \log r/\epsilon})}{\eps^{3/2}}
  \phi_{sparsest}$ computes a $(1+\epsilon)$-approximation to {\sc
    sparsest cut} in time $2^{O(r)} \mathrm{poly}(n)$. Here
  $\phi_{local}$ is the minimum sparsity of sets of size at most $2
  n/r$.
  In fact the algorithm outputs one of the following. 
  \begin{enumerate}
  \item Either a subset with sparsity at most $(1+\eps)
    \phi_{\mathrm{SDP}}$,
  \item Or a subset of size at most $\frac{2 n}{r}$ with sparsity at
    most $\frac{O(\sqrt{\log n \log r/\epsilon})}{\eps^{3/2}}
    \phi_{SDP}$.
  \end{enumerate} Here $\phi_{\mathrm{SDP}}$ is the optimum value
  of~\cref{eq:sc-sdp-alg} for $r+3$ rounds.
\end{corollary}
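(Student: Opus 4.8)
\textbf{Proof proposal for \Cref{thm:orth-sep-algorithmic}.}
The plan is to chain together the ingredients already built above: solve the Lasserre relaxation, translate the vectors so the origin lies in the point set, run the iterative orthogonal-separator procedure from the proof of \Cref{thm:mainorth} in an algorithmic form, and finish with the GS rounding of \Cref{thm:GS13}. Concretely, first compute an optimal solution to $r+3$ rounds of the Lasserre relaxation in \cref{eq:sc-sdp-alg}; by the ``furthermore'' clause of \Cref{thm:GS13} this takes $2^{O(r)}\mathrm{poly}(n)$ time and yields vectors $\xmat_u$ with $\sum_{uv}C_{uv}\|\xmat_u-\xmat_v\|^2\le\phi_{SDP}\|\xmat\|_F^2$ and $\phi_{SDP}\le\phi_{sparsest}$. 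Apply \Cref{thm:new-x-with-origin} to translate into $\ell_2^2$ vectors containing the origin (this is what costs the extra rounds). Then fix $\delta=\Theta(\epsilon)$ and $\beta=\Theta(\epsilon)$ with small enough constants that $O(\delta+\beta)\le\epsilon$, so $m=10r^2/\delta=\Theta(r^2/\epsilon)$ and the Bansal et al.\ construction produces, in polynomial time, $m$-orthogonal separators with distortion $D=O\!\left(\sqrt{(\log n)(\log m)/\beta}\right)=O\!\left(\sqrt{\log n\,\log(r/\epsilon)}/\sqrt\epsilon\right)$.

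Next I would run the iterative procedure in the proof of \Cref{thm:mainorth}, replacing the probabilistic existence statements (for step~\ref{step:orth-sep-5}, and for the ``most sampled sets are large'' conclusion of \Cref{lem:smallsets}/\Cref{cor:S'}) by drawing $\mathrm{poly}(n)$ samples from the separator, conditioned where needed on containing the current center $X_i$, and estimating the relevant expectations. Two outcomes are possible. If some sampled set $S$ with $|S|\le 2n/r$ has measured sparsity below the stated threshold $\frac{O(\sqrt{\log n\,\log r/\epsilon})}{\epsilon^{3/2}}\,\phi_{SDP}$, the algorithm halts and returns that set: this is outcome~2. Otherwise every small sampled set has sparsity above the threshold, which is precisely the inequality $\phi_{local}\ge 2\phi_{SDP}D/\delta$ that \Cref{lem:smallsets} needs (now with $\phi_{local}$ read as this empirically certified lower bound); hence the edge-cut estimate forces $\E[|S|\cdot I_{|S|\le 2n/r}]\le\delta\,\E[|S|]$, \Cref{cor:S'} applies, step~\ref{step:orth-sep-5} always succeeds with high probability, and the procedure terminates in at most $r$ iterations with a set $S$ of at most $r$ vertices satisfying $\|\xmat_S^\perp\xmat\|_F^2\le O(\delta+\beta)\|\xmat\|_F^2\le\epsilon\|\xmat\|_F^2$. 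Feeding this $S$ into \Cref{thm:GS13} returns a cut $T$ with $\phi_G(T)\le\phi_{SDP}/(1-\epsilon)$; rescaling $\epsilon$ at the outset gives sparsity at most $(1+\epsilon)\phi_{SDP}$, which is outcome~1. Since $\phi_{SDP}\le\phi_{sparsest}$, whenever the hypothesis $\phi_{local}>\frac{O(\sqrt{\log n\,\log r/\epsilon})}{\epsilon^{3/2}}\phi_{sparsest}$ holds, outcome~2 is impossible and the algorithm necessarily produces a $(1+\epsilon)$-approximation.

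For the running time, the Lasserre solve and the GS rounding are each $2^{O(r)}\mathrm{poly}(n)$ by \Cref{thm:GS13}, and the procedure runs at most $r$ iterations, each drawing $\mathrm{poly}(n)$ polynomial-time separator samples, so the total is $2^{O(r)}\mathrm{poly}(n)$, as claimed.

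I expect the main obstacle to be the bookkeeping that converts the \emph{existence} arguments of \Cref{lem:smallsets}, \Cref{cor:S'}, and step~\ref{step:orth-sep-5} into a robust algorithm that never needs to know $\phi_{local}$ in advance: one must show that any failure to find a good (large, few-far-points) sample at step~\ref{step:orth-sep-5}, or any detection of excess volume on small sampled sets, can be turned into an explicit small set whose measured sparsity falls below the threshold, and that $\mathrm{poly}(n)$ samples suffice for all the union-bounded empirical estimates to be accurate with high probability. Once these estimation details are pinned down, the rest is a direct instantiation of the results proved above with $\delta$ and $\beta$ tuned to $\epsilon$.
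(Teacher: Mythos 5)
Your route is built from the same ingredients the paper uses (solving \cref{eq:sc-sdp-alg} for $r+3$ rounds, the translation of \Cref{thm:new-x-with-origin}, the Bansal et al.\ separators with $\delta,\beta=\Theta(\eps)$, \Cref{lem:smallsets}, and \Cref{thm:GS13}), but it is organized quite differently from the paper's proof, and in a way that creates work the paper avoids. The paper's argument is a short contrapositive: run the GS rounding algorithm first (it does not need the set $S$ as input; the existence of $S$ only certifies the quality of its output). If the returned cut has sparsity at most $(1+\eps)\phi_{SDP}$, that is outcome~1. If not, then by \Cref{thm:gs-round-or-fail} no set $S$ with $\|X_S^\perp X\|_F^2\le\eps\|X\|_F^2$ exists, so the existence proof of \Cref{thm:mainorth} must fail; the only place it can fail is \Cref{lem:smallsets}, and the failure of that lemma directly exhibits, inside the support of the (efficiently constructible) separator, a small set of the desired sparsity, which one finds by sampling the separator. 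Nothing in the paper's algorithm ever constructs $S$, estimates expectations, or conditions on a point lying in the sampled set.

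By contrast, you run the construction of \Cref{thm:mainorth} forward and algorithmically, and this is where your proposal is not yet complete. Two specific points need repair rather than deferral. First, ``every small \emph{sampled} set has sparsity above the threshold'' is not the hypothesis \Cref{lem:smallsets} uses: its proof charges the cut bound against \emph{all} small sets in the support, weighted by probability, so unsampled low-sparsity sets could still break the inequality $\E[|S|\cdot I_{|S|\le 2n/r}]\le\delta\E[|S|]$. To make your dichotomy sound you need an explicit mass argument: either the $|S|$-weighted probability mass on low-sparsity small sets exceeds $\tfrac{\delta}{2}\E[|S|]$, in which case (using $\E[|S|]=\alpha\sum_u\|X_u\|^2$ and $|S|\le 2n/r$ on that event) a $\mathrm{poly}(n)$ number of samples finds one with high probability provided $\alpha\ge 1/\mathrm{poly}(n)$, or it is at most $\tfrac{\delta}{2}\E[|S|]$, in which case the conclusion of \Cref{lem:smallsets} and hence \Cref{cor:S'} hold with adjusted constants. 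Second, your implementation of step~\ref{step:orth-sep-5} by rejection sampling conditioned on $X_i\in S$ succeeds in $\mathrm{poly}(n)$ time only if $\alpha\|X_i\|^2$ is inverse-polynomial for every center $X_i$ you use; since centers are maximum-norm unmarked points their norms can shrink, and you must either lower-bound them (e.g.\ stop once the unmarked volume in $X'$ is an $O(\delta)$ fraction, which costs only $O(\delta)$ in the final bound) or argue this separately. With those two points filled in, your forward construction does yield the corollary, but it buys nothing over the paper's version: the contrapositive ordering (GS rounding first, separator only on failure) eliminates all of the conditional sampling and empirical-estimation bookkeeping you identify as the main obstacle.
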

\begin{proof}(Sketch) Consider the algorithm
  from~\Cref{thm:gs-round-or-fail}. If it outputs a partition, we are
  done. Otherwise, we apply the algorithm for constructing orthogonal
  separator in~\cite{bansalsse11} on the set of vectors as constructed
  in \Cref{thm:new-x-with-origin}. The above existence proof of the
  set $S$ fails for this set of vectors, therefore
  \Cref{lem:smallsets} fails, and there must be a small set in the
  orthogonal separator that has desired expansion.
  %
\end{proof}



 





\section{Bounded Genus Graphs}
\label{sec:planar-graphs}
In this section, we prove an analog of our result for graphs with orientable
genus $g$. 
The standard LP relaxation~\cite{lr99} for {\sc sparsest cut} on such
graphs has an integrality gap of $O(\log g)$~\cite{ls10}. For planar
graphs (when $g=0$), Park~and~Phillips~\cite{pp93} presented a weakly
polynomial time algorithm for the problem of edge expansion using
dynamic programming.
%

%
Here we show how to give a $(1+\epsilon)$-approximation when the graph
satisfies a certain local expansion condition. Note that this
expansion condition is true for instance in $O(1)$-dimensional grids
when $r=poly(1/\epsilon)$. 

\begin{theorem}\label{thm:bounded-genus-good-or-small}
  There is a polynomial-time algorithm that given a weighted graph $G$
with orientable genus $g$ in which
  $\phi_{local} > \frac{O(\log g)}{\eps^2} \phi_{sparsest}$ (where
  $\phi_{local}$ is the minimum sparsity of sets of size at most
  $n/r$) computes a $(1+\epsilon)$-approximation to {\sc sparsest cut}
  and similar problems in $2^{O(r)} \mathrm{poly}(n)$.

  In fact the algorithm outputs one of the following. 
  \begin{enumerate}
  \item Either a subset with sparsity at most $(1+\eps)
    \phi_{\mathrm{SDP}}$,
  \item Or a subset of size at most $\frac{n}{r}$ with sparsity at
    most $ \frac{O(\log g)}{\eps^{2}} \phi_{SDP}$.
  \end{enumerate} Here $\phi_{\mathrm{SDP}}$ is the optimum value
  of~\cref{eq:sc-sdp-alg} for $r+2$ rounds.
  %
\end{theorem}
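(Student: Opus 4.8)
The plan is to follow the template of \Cref{sec:orthogonal} (\Cref{thm:main1} and \Cref{thm:mainorth}), but to replace the orthogonal separator of Bansal et al.~\cite{bansalsse11} with a random partition tailored to bounded genus. The algorithm solves the $(r+2)$-round Lasserre relaxation, forms the mean-shifted vectors $\xmat_u$ of \cref{eq:sdp-demand-vectors}, and runs the GS rounding of \Cref{thm:GS13}; if that produces a set of sparsity at most $(1+\eps)\phi_{SDP}$ we output it (case~1). By \Cref{thm:GS13} this succeeds whenever there is a set $S$ of $r$ vertices with $\|\xmat_S^\perp\xmat\|_F^2\le\gamma\|\xmat\|_F^2$ for $\gamma=\eps/(1+\eps)$, so the whole theorem reduces to the dichotomy: \emph{either} such an $S$ exists, \emph{or} one can exhibit in polynomial time a set of size at most $n/r$ of sparsity at most $O(\log g/\eps^2)\,\phi_{SDP}$ (case~2). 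Unlike the orthogonal-separator construction, the partition below does not need the origin to lie among the vectors, so no further translation as in \Cref{thm:new-x-with-origin} is required; this is why $r+2$ rounds suffice here rather than $r+3$.

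The geometric input is a bounded-genus substitute for condition~3 of an orthogonal separator. Assign to each edge $(u,v)\in E$ the length $\ell(u,v)=\|\xmat_u-\xmat_v\|^2$ and let $d_\ell$ be the induced shortest-path metric. Because the Lasserre solution we use satisfies the $\ell_2^2$ triangle inequality on all triples, telescoping along a path gives $\|\xmat_u-\xmat_v\|^2\le d_\ell(u,v)$ for all $u,v$. It is classical that genus-$g$ graphs are $O(\log g)$-decomposable: for every scale $\Delta>0$ there is an efficiently samplable distribution over partitions $\mathcal P$ of $V$ such that every cluster has $d_\ell$-diameter (hence $\ell_2^2$-diameter) at most $\Delta$, and every edge $(u,v)$ is separated with probability at most $O(\log g)\,\|\xmat_u-\xmat_v\|^2/\Delta$. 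Summing the latter over edges and using $\sum_{uv}C_{uv}\|\xmat_u-\xmat_v\|^2\le\phi_{SDP}\|\xmat\|_F^2$ yields $\E[\#\text{edges cut}]\le O(\log g)\,\phi_{SDP}\|\xmat\|_F^2/\Delta$. This is exactly the ``distortion'' estimate we need, with $O(\log g)$ in place of the orthogonal-separator distortion $O(\sqrt{\log n\log r/\beta})$, and with the bonus that clusters have genuinely bounded diameter, so the separation-threshold parameter $\beta$ of \Cref{sec:orthogonal} is not needed.

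With this tool I would run the multi-scale, greedy center-selection of the proof of \Cref{thm:mainorth} essentially verbatim: repeatedly pick the unmarked vertex $X_i$ of largest norm, sample a partition at scale $\Delta_i=\Theta(\eps)\,\|\xmat_i\|^2$, select from it a large cluster $P_i$ that contains $X_i$ (or lies within $O(\Delta_i)$ of it), mark $P_i$ together with the $\Theta(\eps)\|\xmat_i\|^2$-ball around $X_i$, and migrate points from earlier marked sets exactly as in \Cref{sec:orthogonal}. Since any $u$ in $P_i$ has $\|X_i^\perp\xmat_u\|^2\le\|\xmat_u-\xmat_i\|^2\le\Delta_i=\Theta(\eps)\|\xmat_i\|^2$ while $\|\xmat_u\|^2\ge(1-o(1))\|\xmat_i\|^2$, each chosen cluster contributes at most an $O(\eps)$ fraction of its own $\ell_2^2$-volume to $\|X_S^\perp X\|_F^2$; adding the $O(\eps)$ fraction of volume left uncovered, one gets $\|X_S^\perp X\|_F^2\le O(\eps)\|\xmat\|_F^2$, which (after padding $S$ to exactly $r$ points and rescaling $\eps$) is the required bound. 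Everything here is algorithmic, so when GS rounding fails the greedy process cannot have halted within $r$ steps, meaning that at some scale a constant fraction of the cardinality lies in clusters of size at most $n/r$; combining this with the edge-cut estimate and the local-expansion bound $\Phi_{local}\ge\phi_{local}/2$ (as in \Cref{lem:smallsets}) exhibits one cluster of size at most $n/r$ and sparsity $O(\log g/\eps^2)\,\phi_{SDP}$ --- case~2.

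I expect the real work to be concentrated in the step ``select a large cluster $P_i$ near $X_i$, or else certify a small sparse set''. The averaging behind ``most clusters are large'' is transparent at a single scale (\Cref{lem:smallsets}), but here the scale $\Delta_i\asymp\eps\|\xmat_i\|^2$ changes from step to step while the edge-cut estimate carries a factor $\|\xmat\|_F^2/\Delta_i$, so one must control cardinality against $\ell_2^2$-volume across scales. I would handle this by organizing the partitions into geometric scales $2^{-j}$ --- the way the $O(\log g)$ distortion is spread over scales in the $\ell_1$ embedding of genus graphs --- and charging vertices of norm $\Theta(2^{-j})$ to level $j$. Tuning the constants so that the two outcomes meet at sparsity $O(\log g/\eps^2)\,\phi_{SDP}$ (one factor $1/\eps$ from $\Delta_i\asymp\eps\|\xmat_i\|^2$, one more from the $\Theta(\eps)$ slack in ``most clusters large'') is the remaining, routine bookkeeping.
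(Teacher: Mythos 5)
You have the right high-level reduction (GS rounding via \Cref{thm:gs-round-or-fail}, plus the $O(\log g)$ random-decomposition property of genus-$g$ graphs, which is exactly the paper's \Cref{thm:bounded-genus-padded}), but after that your route diverges from the paper's, and the step you explicitly defer as ``routine bookkeeping'' is where the plan breaks. The paper's proof of \Cref{lem:good-or-small-cut-for-padded} is \emph{not} an iterated, multi-scale center selection in the style of \Cref{thm:mainorth}: it takes a \emph{single} padded partition at one global scale $\Delta\approx\tfrac{\eps}{2}\nu$, where $\nu=\tfrac1n\sum_u\|\xmat_u\|^2$ is the \emph{average} squared norm, picks one arbitrary representative $t$ in every block of size $\ge n/r$ (there are at most $r$ such blocks), and applies the GS-failure inequality \cref{eq:sc-large-proj-dist} once to this representative set: since big blocks have diameter $\le\Delta$, failure forces the blocks of size $<n/r$ to contain at least $\tfrac{\eps}{2}\sum_u\|\xmat_u\|^2$ vertices. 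A random-radius trimming inside the small blocks (this is where the padding parameter $\beta=O(\log g)$ enters) then gives expected cut at most $\tfrac{\beta}{\Delta}\phi_{SDP}\sum_u\|\xmat_u\|^2$ against expected surviving cardinality at least $\tfrac{\eps}{16}\sum_u\|\xmat_u\|^2$, and averaging over the small blocks produces the sparse set of case~2.

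The gap in your version is the per-iteration dichotomy ``find a large cluster near $X_i$ at scale $\Delta_i\asymp\eps\|X_i\|^2$, or certify a small sparse set.'' At that scale the cut estimate is $O(\log g)\phi_{SDP}\sum_u\|\xmat_u\|^2/\Delta_i$, so to extract by averaging a small cluster of sparsity $O(\log g/\eps^2)\phi_{SDP}$ you need the small clusters to hold roughly $\eps\,\big(\sum_u\|\xmat_u\|^2\big)/\|X_i\|^2=\eps n\nu/\|X_i\|^2$ vertices; once $\|X_i\|^2$ falls below $\eps\nu$ this exceeds $n$, and truncating the greedy at norm $\approx\eps\nu$ still demands essentially all of $V$ in small clusters, so the certificate simply cannot be produced scale-by-scale from the failure at one center. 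Worse, the failure of a single iteration only concerns the cluster containing $X_i$ (it may be small, or consist almost entirely of already-marked points, for every sample), and says nothing about the \emph{total} cardinality lying in small clusters, which is what your averaging needs; in \Cref{thm:mainorth} this global information is supplied by the orthogonal-separator properties (membership probability $\alpha\|X_u\|^2$ and $m$-orthogonality, via \Cref{lem:smallsets} and \Cref{cor:S'}), and sampled partitions have no analogue of them, so the construction cannot be run ``essentially verbatim.'' The fix is the paper's one-shot argument: use the \emph{global} form of the GS-failure inequality on a single partition at the average-norm scale, rather than localizing it to one center and one scale at a time.
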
 
Before proving~\Cref{thm:bounded-genus-good-or-small}, let us first
recall the theory of random partitions of metric spaces, and its
specialization to graphs of bounded genus.  If $(V,d)$ is a metric
space then a {\em padded decomposition at scale $\Delta$} is a
distribution over partitions $P$ of $V$ where each block of $P$ has
diameter $\Delta$. Its {\em padding parameter} is the smallest $\beta
\ge 1$ such that the ball of radius $\Delta/\beta$ around a point has
a good chance of lying entirely in the block containing the point:
\begin{equation}
  \label{eq:padded-cond-1}
  \mathrm{Prob}_P[B_d(u, \Delta/\beta) \subseteq P(u)] \ge \frac{1}{8} 
  \text{ for all $u\in V$.}
\end{equation}
The {\em padding parameter} of a graph $G$ is the smallest $\beta$ such
that every semimetric formed by weighting the edges of $G$ has a
padded decomposition with padding parameter at most $\beta$.
The following theorems are known.
\begin{theorem} \label{thm:bounded-genus-padded}  
  \begin{enumerate}
  \item \cite{ls10} If $G$ has orientable genus $g$, then its padding
    parameter is $O(\log g)$.
  \item \cite{ft03} If $G$ has no $K_{p,p}$ minor, then
    its padding parameter is $O(p^2)$.
  \end{enumerate} 
\end{theorem}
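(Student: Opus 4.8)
Both parts are established in the cited works (\cite{ls10,ft03}); I sketch the shape of each argument, taking as a base case the fact that planar graphs have padding parameter $O(1)$ (which also follows from part~(2) with $p$ a constant, since planar graphs exclude $K_5$ and $K_{3,3}$ as minors).

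For part~(2) the plan is to run the recursive low-diameter partitioning scheme of Klein--Plotkin--Rao, sharpened as by Fakcharoenphol--Talwar. Fix a semimetric $d$ on $V(G)$ and a target scale $\Delta$, and set the padding parameter $\beta = c\,p^2$ for a large enough constant $c$. Build the random partition in $p$ phases: entering phase $i$ we have a partition into clusters of controlled diameter, and in phase $i$ each current cluster $C$ is re-partitioned by a Calinescu--Karloff--Rabani random partition at radius $\rho \approx \Delta/(4p)$ (draw a random radius in a dyadic range around $\Delta/(4p)$, a uniformly random ordering of $C$, and assign each vertex to the first center within the chosen radius). After $p$ phases every block has diameter at most $\Delta$. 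To bound the padding, fix $u$ and the ball $B = B_d(u,\Delta/\beta)$. The standard CKR estimate gives, for a single phase, $\mathrm{Prob}[B \text{ is split in this phase}] = O\!\bigl(N_i \cdot \tfrac{\Delta/\beta}{\rho}\bigr)$, where $N_i$ is the number of distinct current clusters that meet $B$; since $\tfrac{\Delta/\beta}{\rho} = O(1/(cp))$, it suffices to show that $B$ meets only $O(1)$ clusters per phase on average over the run. That is exactly where minor-freeness enters: the key combinatorial lemma asserts that if $B$ is split across many phases while meeting many ``new'' clusters, one can assemble those clusters (plus the chain of clusters that repeatedly contained $u$) into the two sides of a $K_{p,p}$ minor, connected through the slack region $B$ of radius $\Delta/\beta$. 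Hence a $K_{p,p}$-minor-free graph forces this count to be bounded, and summing the per-phase split probabilities over the $p$ phases gives a total of $O(1/c) < 7/8$. The main obstacle is making this minor-extraction lemma precise --- tracking that branch sets stay connected as $B$ is repeatedly chopped, and that the $\Delta/\beta = \Delta/(cp^2)$ slack (and not merely $\Delta/(cp)$) is what is needed to keep everything joined.

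For part~(1) the plan is to planarize $G$ along its handles and charge the loss carefully. First produce a \emph{cut graph}: a system of $O(g)$ cycles, each a shortest non-separating cycle in a suitable residual surface, whose removal turns $G$ into a planar graph. Then interleave the planar padded decomposition (padding $O(1)$) with a random process that ``cuts'' each handle cycle with appropriate probability, so that the resulting pieces are planar-partitioned and handle-separated simultaneously. Processing the $O(g)$ handles one at a time loses a constant factor per handle, i.e.\ $O(g)$; the improvement to $O(\log g)$ comes, as in Lee--Sidiropoulos, from handling the set of handles hierarchically --- a balanced divide-and-conquer over the cut-graph cycles so that only $O(\log g)$ levels of multiplicative loss accumulate, while cuts within a level are decoupled enough not to compound. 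The main obstacle here is bounding the distortion a single handle-cut inflicts on distances and on padding: cutting along a non-separating cycle can stretch the metric, and one must use the shortest-cycle property of the cut graph to argue that each level of the recursion distorts the padding by only a constant factor, so the product over $O(\log g)$ levels is $2^{O(\log g)}$ --- wait, that would be polynomial, so in fact the argument must be additive over levels (each level contributes an additive $O(1)$ to $\beta$, summing to $O(\log g)$), which is the delicate accounting step to get right.
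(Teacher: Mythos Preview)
The paper gives no proof of this theorem: it is stated purely as a citation of known results from \cite{ls10} and \cite{ft03}, and is then used as a black box in \Cref{lem:good-or-small-cut-for-padded}. So there is no ``paper's own proof'' to compare against; anything beyond a one-line citation is already more than the paper provides.

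That said, a brief assessment of your sketches on their own merits. Your outline for part~(2) is a reasonable summary of the Klein--Plotkin--Rao scheme with the Fakcharoenphol--Talwar sharpening: $p$ rounds of CKR-style chopping at radius $\Theta(\Delta/p)$, combined with the combinatorial lemma that persistent splitting of a small ball would let one assemble a $K_{p,p}$ minor. That is indeed the shape of the argument, and you correctly flag the minor-extraction lemma as the technical crux.

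Your outline for part~(1) has the right raw materials (cut graph of $O(g)$ loops, planar base case, some hierarchical processing of the handles) but, as you yourself notice mid-sketch, the accounting you describe doesn't close: multiplicative constant loss over $O(\log g)$ levels gives $\mathrm{poly}(g)$, not $O(\log g)$. The actual Lee--Sidiropoulos argument does organize the handle-cutting so that the padding parameter grows \emph{additively} by $O(1)$ per level of a depth-$O(\log g)$ recursion, but this relies on their specific ``peeling'' lemma (roughly: removing a geodesic path and passing to the induced metric increases the padding parameter by only an additive constant) together with a careful choice of cut system that halves the genus at each step. Your sketch gestures at this but does not supply the mechanism that makes the loss additive rather than multiplicative, which is exactly the nontrivial content of \cite{ls10}. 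For the purposes of this paper, simply citing the result (as the authors do) is the appropriate level of detail.
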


Our main technical lemma is the following.
\begin{lemma} \label{lem:good-or-small-cut-for-padded} Given a graph
  $G=(V,E)$ and positive integer $r$, there exists an algorithm which
  runs in time $2^{O(r)} \mathrm{poly}(n)$ and outputs one of the
  following for any $\eps>0$:
  \begin{enumerate}
  \item Either a subset with sparsity at most $(1+\eps)
    \phi_{\mathrm{SDP}}$ where $\phi_{\mathrm{SDP}}$ is the optimum
    value of~\cref{eq:sc-sdp-alg},
  \item Or a subset of size at most $\frac{n}{r}$ with sparsity at
    most $\frac{O(\beta)}{\eps^2} \phi_{SDP}.$
  \end{enumerate}
\end{lemma}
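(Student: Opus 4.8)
The plan is to run the argument that proves \Cref{thm:main1} through \Cref{thm:mainorth}, but with \emph{padded decompositions of the negative-type metric} in the role that orthogonal separators play there: the padding parameter $\beta$ replaces the orthogonal-separator distortion $D$, and since for these graph families padded decompositions carry no $\sqrt{\log n}$ overhead, this is exactly what drops the $\sqrt{\log n\log r}$ factor. First invoke \Cref{thm:gs-round-or-fail} with $\gamma=\Theta(\eps)$ chosen so that $\phi_{SDP}/(1-\gamma)\le(1+\eps)\phi_{SDP}$; if it returns a cut we are in case~1. Otherwise we are handed vectors $X_u\in\ell_2^2$ for which no $r$-element set $S$ has $\|X_S^\perp X\|_F^2\le\gamma\|X\|_F^2$, and by the translation of \Cref{thm:new-x-with-origin} we may assume $0\in X$. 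Write $d(u,v)=\|X_u-X_v\|^2$; this is a genuine metric (the SDP triangle inequalities), with $\|X_u\|^2=d(0,u)$, $\sum_{uv}C_{uv}d(u,v)\le\phi_{SDP}\|X\|_F^2$, and $\|X_u\|^2\ge\|X_i\|^2-\|X_u-X_i\|^2$ for all $u,i$.

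Now run an iterative covering parallel to the proof of \Cref{thm:mainorth}. Repeatedly take the not-yet-marked vertex $X_i$ of largest norm and, at scale $\Delta_i=\Theta(\eps)\|X_i\|^2$, use \Cref{thm:bounded-genus-padded} to get a padded decomposition in which, with probability at least $1/8$, the block $S_i:=P(X_i)$ both contains the ball $B_d(X_i,\Delta_i/\beta)$ and has diameter at most $\Delta_i$; then mark $S_i$ together with everything within distance $2\Delta_i$ of $X_i$. Because $X_i$ was unmarked when chosen, $d(X_i,X_j)>2\Delta_j\ge\Delta_i+\Delta_j$ for every earlier $X_j$ (the norms, hence the $\Delta$'s, are non-increasing), so the balls $B_d(X_i,\Delta_i)$, and hence the sets $S_i\subseteq B_d(X_i,\Delta_i)$, are pairwise disjoint; thus if each $S_i$ has at least $n/r$ vertices the process stops within $r$ steps. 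Every $u$ marked at step $i$ satisfies $\|X_u-X_i\|^2\le2\Delta_i=\Theta(\eps)\|X_i\|^2$, so $\|X_S^\perp X_u\|^2\le\|X_u-X_i\|^2$ while $\|X_u\|^2\ge(1-\Theta(\eps))\|X_i\|^2$; summing over $i$ gives $\|X_S^\perp X\|_F^2\le\Theta(\eps)\|X\|_F^2<\gamma\|X\|_F^2$ for suitable constants, contradicting the failure of \Cref{thm:gs-round-or-fail}. Hence at some step the ball $B_d(X_i,\Delta_i/\beta)$ --- in fact every block $P(X_i)$ coming from a good-padding partition --- must have fewer than $n/r$ vertices, and from such a block (or, more robustly, from a region-growing sweep over radii in $[0,\Delta_i]$) one reads off a set of size at most $n/r$ whose edge boundary, bounded via the separation probability of the decomposition together with $\Delta_i=\Theta(\eps)\|X_i\|^2$, is at most $\frac{O(\beta)}{\eps^2}\phi_{SDP}$ times its size, i.e.\ of sparsity at most $\frac{O(\beta)}{\eps^2}\phi_{SDP}$; that is case~2. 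Since padded decompositions for bounded-genus (more generally $K_{p,p}$-minor-free) graphs are samplable in polynomial time and \Cref{thm:gs-round-or-fail} runs in $2^{O(r)}\mathrm{poly}(n)$, so does the whole procedure.

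The step I expect to be the main obstacle is this last one, i.e.\ the padded-decomposition analogue of \Cref{lem:smallsets}. For orthogonal separators the identity $\Pr[X_u\in S]=\alpha\|X_u\|^2$ converts the norm-weighted ``volume'' that the Frobenius-norm projection measures directly into the vertex- and edge-count quantities that define sparsity; a padded decomposition hands us a whole partition instead, so for a would-be small block $B$ one must separately balance its mass $\sum_{u\in B}\|X_u\|^2$ against $|B|$ (using $\|X_u\|^2\le1$) and against its boundary $E(B,V\setminus B)$ (using the separation probability and the scale), and --- to keep the $\Theta(\log n)$ distinct norm-scales from contributing a spurious $\log n$ --- exploit that each edge meets the relevant neighbourhood of only $O(1)$ scales, so that the incident SDP costs telescope to $O(\phi_{SDP}\|X\|_F^2)$. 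Getting all of these constants to line up so that ``there is no good $r$-subset'' genuinely forces a set of size at most $n/r$ of sparsity $\frac{O(\beta)}{\eps^2}\phi_{SDP}$, and not, say, $\frac{O(\beta\log n)}{\eps^2}\phi_{SDP}$, is the delicate part; once it is in place, \Cref{thm:bounded-genus-good-or-small} follows by inserting the padding parameters from \Cref{thm:bounded-genus-padded}.
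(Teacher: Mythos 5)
Your high-level plan (run GS rounding, and on failure use padded decompositions of the $\ell_2^2$ metric in place of orthogonal separators) is the right one, but the step you yourself flag as ``the delicate part'' is precisely where the argument is missing, and your multi-scale iterative covering makes it genuinely hard to supply. In your scheme each candidate small set is a single block $P(X_i)$ of a decomposition at the vertex-dependent scale $\Delta_i=\Theta(\eps)\|X_i\|^2$. The separation-probability bound only controls the \emph{total expected} capacity cut by the whole partition, roughly $\frac{\beta}{\Delta_i}\sum_{uv}C_{uv}\|\xmat_u-\xmat_v\|^2\le\frac{\beta}{\Delta_i}\phi_{SDP}\|\xmat\|_F^2$, and there is no lower bound on the size (or padded core) of that one block to divide by: the block around $X_i$ can be very small or even essentially just $X_i$, in which case its sparsity is not controlled at all. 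To get case~2 you need an \emph{averaging} argument over many small blocks that together carry a constant fraction of the vertex-count mass, and nothing in your per-vertex-scale iteration provides that; your worry about the $\Theta(\log n)$ distinct scales contributing an extra $\log n$ is a symptom of the same missing charging argument. (The contradiction half of your argument --- disjoint balls, at most $r$ centers, $\|X_S^\perp X\|_F^2\le\Theta(\eps)\|X\|_F^2$ --- is fine modulo constants; it is the extraction of the sparse small set that is not.)

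The paper avoids this entirely by working at a \emph{single global} scale $\Delta=\Theta(\eps)\cdot\frac{\sum_u\|\xmat_u\|^2}{n}$, i.e.\ $\eps$ times the \emph{average} squared norm, and taking one padded decomposition $P$ at that scale. Failure of \Cref{thm:gs-round-or-fail} is then exploited not by a greedy covering but by picking one representative $t$ per block of size $\ge n/r$ (there are at most $r$ such blocks); since these representatives would otherwise witness \cref{eq:sc-large-proj-dist}, and each big block has diameter $\le\Delta$, one gets deterministically that the vertices lying in blocks of size $<n/r$ number at least $\frac{\eps}{2}\sum_u\|\xmat_u\|^2$ (using $\|\xmat_u\|^2\le1$ to convert mass to count). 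Then a uniformly random threshold $\tau\in[0,\Delta/\beta]$ is applied to all small blocks simultaneously: the expected total cut is $\le\frac{\beta}{\Delta}\phi_{SDP}\|\xmat\|_F^2$ while, by the padding condition \cref{eq:padded-cond-1}, the expected total padded size of the small blocks is $\ge\frac{\eps}{16}\sum_u\|\xmat_u\|^2$, so averaging over the small blocks of one partition yields a set of size $\le n/r$ and sparsity $\frac{O(\beta)}{\eps^2}\phi_{SDP}$, with no per-scale bookkeeping and no $\log n$ loss. So the fix is not to refine your constants but to replace the per-vertex scales and single-block extraction by the single average scale plus a global expected-cut versus expected-padded-mass averaging; as written, your proof of case~2 does not go through.
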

\begin{proof}  
  The idea is to apply the algorithm
  from~\Cref{thm:gs-round-or-fail}. If it finds a cut of sparsity
  $(1+\eps) \phi_{\mathrm{SDP}}$, then we are done. Otherwise let
  $[\xmat_u]_u$ be the vectors output by it. We show how to use padded
  decompositions of the shortest-path semimetric given by distances
  $\|\xmat_u - \xmat_v\|^2$ and then produce a small nonexpanding set.

 Let $\nu$ denote the average squared length of these vectors,
  i.e. $\nu \triangleq \frac{1}{n} \sum_u \|\xmat_u\|^2$ so that $\nu
  =\mu (1-\mu)$.
  Choose $\Delta$ at least $ \frac{\eps}{2} \frac{\sum_u
    \|\xmat_u\|^2}{n}$. Take a padded decomposition at scale
  $\Delta$ and pick a random partition $P$ out of it.

\noindent{\sc Claim:} {\em The expected number of nodes that lie in
subsets of size less than $n/r$ in $P$ is at least $\frac{\eps}{2} \sum_{u}
  \|\xmat_u\|^2$.}
  %

 \noindent{\bf Proof} For each subset $S\in P$ with
  size $|S|\ge \frac{n}{r}$, if we choose an arbitrary $t \in S$,
  \cref{eq:sc-large-proj-dist} implies that:
  \begin{align*}
    \eps \sum_u \|\xmat_u\|^2 \le&
    \sum_{S\in P: |S| \ge \frac{n}{r} } \sum_{u\in S} \|\xmat_t -
    \xmat_u\|^2 
    + \sum_{T\in P: |T| < \frac{n}{r}} \sum_{u\in T} \|\xmat_u\|^2 \\
    \le & \sum_{S\in P: |S| \ge \frac{n}{r} } \Delta
    |S| + \sum_{T\in P: |T| < \frac{n}{r}} \sum_{u\in T} \|\xmat_u\|^2 
    \le \frac{\eps}{2} \sum_u \|\xmat_u\|^2  + 
    \sum_{T\in P: |T| < \frac{n}{r}} |T|. \\
    \frac{\eps}{2} \sum_u \|\xmat_u\|^2 \le & \sum_{T\in P: |T| < \frac{n}{r}} |T|.
  \end{align*}
  \qed
  
Now we choose a threshold $\tau \in [0, \Delta/\beta]$ uniformly at
  random. Then for each $T\in P$ with $|T| \le \frac{n}{r}$, let
  $\widehat{T} \subseteq T$ be the subset of nodes which are in the
  same partition block as the ball of radius $\tau$ around them.
We output such $\widehat{T}$ with minimum sparsity among all
  $T\in P$ with $|T|\le \frac{n}{r}$. Using standard arguments, we can
  prove that any pair of nodes $u$ and $v$ is separated with
  probability at most $\frac{\|\xmat_u -
    \xmat_v\|^2}{\Delta/\beta}$. This means the total expected
  capacity cut will be at most $\frac{\beta}{\Delta} \sum_{u<v} C_{uv}
  \|\xmat_u - \xmat_v\|^2$. Moreover~\cref{eq:padded-cond-1} implies
  that:
  \[
  \mathbb{E}_P \Big[ \sum_{T\in P: |T|\le n/r} |\widehat{T}| \Big] 
  \ge \frac{1}{8} \sum_{T\in P: |T|\le n/r} |{T}| \ge \frac{\eps}{16}
  \sum_u \|\xmat_u\|^2.
  \]
  Putting all together, we see that there exists some $T \in P$ with
  $|T|\le\frac{n}{r}$ such that
  \[
    \phi_G(T) 
  \le \frac{O(\beta)}{\eps^2} \phi_{SDP}. \tag*{\qedhere}
  \]
\end{proof}
%
%
%
Combining \Cref{lem:good-or-small-cut-for-padded} with the bounds
from~\Cref{thm:bounded-genus-padded} immediately implies
\Cref{thm:bounded-genus-good-or-small}.
%


\section{Small-set Expander Flows}


In \cite{ARV}, {\em expander flows} are used as approximate
certificate for expansion, which work for {\em all} values of
expansion. (By contrast, the eigenvalue or spectral bound of
Alon-Cheeger is most useful only for expansion close to $\Omega(1)$.)
This section concerns {\em small-set expander flows} (SSE flows) which
can be viewed as approximate certificates of the expansion of small
sets. An $(r,d,\beta)$-SSE flow is a multicommodity flows in which
small sets $S$ (ie sets of size at most $n/r$ for some small $r$) have
$\beta d|S|$ outgoing flow where $\beta$ is close to $\Omega(1)$. The
flow is undirected, and the amount of flow originates at every node is
at most
$d$. 
Since the flow resides in the host graph and $\beta d|S|$ leaves every
small small set $S$, an $(r,d,\beta)$-SSE flow is trivially a
certificate that small sets have edge expansion $\Omega(d\beta)$ in
the host graph.

Of particular interest here will be a surprising connection between SSE
flows and finding near-optimal {\sc sparsest cut}. In other words,
information about expansion of small sets can be leveraged into
knowledge about the expansion of all sets. We note that such a
leveraging was already shown in \cite{ABS} using spectral techniques,
but only when Small set expansion is $\Omega(1)$, roughly speaking (the
reason is that the proof is Cheeger-like).

We note that given a flow it seems difficult (as far as we know) to verify that it
is an SSE flow. Thus we will also be interested in a closely related
notion of {\em spectral SSE flow}, which by contrast is easily
recognized using eigenvalue computation. This is the one used in our algorithm.

\begin{definition}[Spectral SSE Flow]
  A {\em $(r, d,\lambda)$-spectral SSE flow} is a multicommodity flow
  whose vertices have degree between $d/2$ and $d$, and the $r^{th}$
  smallest eigenvalue of its Laplacian matrix is at least $d\lambda$.
\end{definition}

The relationship between the two types of flow rely upon the so-called
higher order Cheeger inequalities~\cite{lrtb12, lgt12}.

\begin{theorem}[Rough statement]
\label{thm:SSEconvert}
If the graph has an $(r,d,\beta)$ SSE flow then it also has an
$(2r,d,\Omega(\beta^2/\log r))$ spectral SSE flow. Conversely, if the
graph has an $(r,d,\lambda)$ spectral SSE flow then it has a weaker
version of $(r,d,\beta = \lambda)$ combinatorial SSE flow.
\end{theorem}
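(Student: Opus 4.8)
The plan is to prove the two directions separately, each time invoking the higher-order Cheeger machinery as a black box. For the forward direction, suppose the host graph carries an $(r,d,\beta)$-SSE flow $f$; let $H$ be its demand graph, so $H$ has degree at most $d$ at every node and every set $S$ of size at most $n/r$ has $\Phi_H(S) \ge \beta d$. I would first normalize: pass to $H' = H + (d - \deg_H)I$-type padding, or more simply rescale so that all degrees lie in $[d/2,d]$ while losing only constant factors in the small-set expansion (adding self-loops does not change cut values and only helps expansion, so after this step $H'$ has the required degree bounds). Now I want a lower bound on the $2r$-th smallest eigenvalue $\lambda_{2r}(\mathcal{L}_{H'})$. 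This is exactly what the higher-order Cheeger inequality of \cite{lrtb12, lgt12} gives in the contrapositive: if $\lambda_{2r}$ were small, then there would exist $\Omega(r)$ disjointly-supported test functions certifying the existence of a set of size $O(n/r)$ with expansion $O(\sqrt{\lambda_{2r} \log r}) \cdot d$; combined with our hypothesis $\Phi_{H'}(S) \ge \Omega(\beta d)$ for all such small sets, this forces $\sqrt{\lambda_{2r}/d \cdot \log r} \ge \Omega(\beta)$, i.e. $\lambda_{2r}(\mathcal{L}_{H'}) \ge \Omega(d\,\beta^2/\log r)$. That is precisely the definition of an $(2r, d, \Omega(\beta^2/\log r))$-spectral SSE flow, using the same underlying flow $f$ (only its demand graph was rescaled). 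The factor $2$ in the index is the slack the higher-order Cheeger inequality needs (it produces a set of size $\le n/k$ from $\lambda_k$, so to talk about sets of size $\le n/r$ we look at $\lambda_{2r}$ or $\lambda_{\Theta(r)}$).

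For the converse direction, suppose the graph has an $(r,d,\lambda)$-spectral SSE flow with demand graph $H$, degrees in $[d/2,d]$, and $\lambda_r(\mathcal{L}_H) \ge d\lambda$. Here the claim is only the ``weaker'' combinatorial conclusion $\beta = \lambda$ (note: $\lambda$, not $\sqrt\lambda$ — so this is the easy, lossless direction, not a Cheeger direction). The cleanest route is a variational argument: for any $S$ with $|S| \le n/r$, I want $\Phi_H(S) \ge \lambda d \cdot (\text{something})$. Take the indicator-type test vector $x = \mathbf{1}_S - \frac{|S|}{n}\mathbf{1}$ (mean-zero), whose Rayleigh quotient $x^\top L_H x / x^\top x$ equals the sparsity $\phi_H(S)$ up to the usual factor; since $|S| \le n/r \le n/2$, sparsity and expansion agree up to a factor of $2$. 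The eigenvalue bound $\lambda_r \ge d\lambda$ only controls the Rayleigh quotient on the top $n-r+1$ dimensional eigenspace, not on $x$ directly, so a single indicator need not work. Instead I would use the standard trick behind the ``easy direction'' of higher Cheeger: given a hypothetical small set $S$ with expansion $< \lambda d$, one can peel off $r$ disjoint low-expansion sets (greedily, or by a direct packing argument using $|S_i| \le n/r$), obtaining $r$ orthogonal mean-zero vectors each of Rayleigh quotient $< \lambda d$, contradicting $\lambda_r(\mathcal{L}_H) \ge d\lambda$ via the Courant–Fischer min-max characterization. This yields $\Phi_H(S) \ge \Omega(\lambda d)$ for all $|S| \le n/r$, which is the stated $(r, d, \beta = \lambda)$ combinatorial SSE flow (same flow, same demand graph).

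I expect the main obstacle to be bookkeeping rather than a deep idea: making sure the degree-normalization step is genuinely lossless (adding self-loops to equalize degrees is safe; one must check it does not destroy the multicommodity-flow interpretation in the host graph — it does not, since a self-loop carries zero-length flow), and pinning down the exact constant and index blow-up in the higher-order Cheeger inequality as stated in \cite{lrtb12, lgt12} (some versions give $\lambda_{2k}$, some $\lambda_{\Theta(k)}$, some $\lambda_{k^2}$ with better constants — hence the deliberately ``rough'' phrasing of the theorem and the $2r$ in the statement). A secondary subtlety in the converse is the peeling argument: one must ensure that after removing $S_1, \dots, S_{i-1}$ from the graph the remaining graph still contains a low-expansion small set, which follows because $\sum |S_j| \le (i-1)\, n/r < n$ leaves enough room, but the expansion of $S_i$ is measured in the original graph $H$, so I would instead argue directly at the level of the test vectors $\mathbf{1}_{S_i} - \frac{|S_i|}{n}\mathbf{1}$ and only need them to be pairwise orthogonal, which holds whenever the $S_i$ are disjoint (up to the rank-one correction, which costs one extra dimension — another reason a small constant slack in the index is convenient).
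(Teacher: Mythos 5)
Your forward direction is essentially the paper's argument: fix the degrees into $[d/2,d]$, apply the higher-order Cheeger inequality of \cite{lrtb12,lgt12} to get $\lambda_{2r}(\mathcal{L}) \ge \Omega(\beta^2/\log r)$, and pass from the normalized to the unnormalized Laplacian via $d_{\min}$. The only difference is the normalization step: the paper mixes the SSE flow with a scaled copy of the host graph, $F_2 = F/2 + dF_1/2$ where $F_1$ routes the demands $\delta_{ij}=c_{ij}$ along the edges themselves, which keeps $F_2$ a bona fide flow embeddable in $G$ with degrees in $[d/2,d]$ and expansion $\ge \beta/2$; your self-loop padding achieves the same degree bounds but leans on accepting ``demands'' from a vertex to itself, which the paper's trick avoids. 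This part is fine.

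The converse direction has a genuine gap. You conclude $\Phi_H(S) \ge \Omega(\lambda d)$ for \emph{every} set of size at most $n/r$, and the step that is supposed to deliver this --- ``peel off $r$ disjoint low-expansion sets'' from a single hypothetical bad set $S$ --- is unjustified and in fact impossible: having room left over after removing $S_1,\dots,S_{i-1}$ does not produce another sparse cut, since the remaining vertices may induce an expander. The claimed conclusion is simply false: let the demand graph (and host graph) be the disjoint union of a $d$-regular expander on $n/r$ vertices and one on the remaining vertices; then $\lambda_r(L) = \Omega(d)$ for every $r \ge 3$, so this is a legitimate $(r,d,\Omega(1))$ spectral SSE flow, yet the small component is a set of size $n/r$ with zero outgoing demand. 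This is exactly why the theorem only promises a ``weaker version'' of combinatorial expansion, made precise in \Cref{lem:spectraltocomb}: among any $r$ \emph{disjoint} sets $S_1,\dots,S_r$, at least one has expansion at least $\lambda/2$. The paper proves this directly by Courant--Fischer on the $r$-dimensional span of the indicators $\mathbf{1}_{S_1},\dots,\mathbf{1}_{S_r}$, observing that any $h$ in this span has Rayleigh quotient at most $2\max_i R(\mathbf{1}_{S_i})$, so $\max_i R(\mathbf{1}_{S_i}) \ge \lambda_r(L)/2 \ge d\lambda/2$; no peeling, and no claim about individual small sets, is needed or possible.
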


See \Cref{lem:combtospectral} and \Cref{lem:spectraltocomb} for more precise statements.

Now we describe how these results are useful.  First, just {\em
  existence} of SSE flows is enough to imply a low integrality gap for
the Lasserre relaxation. This is reminiscent of primal-dual frameworks
(e.g., expander flows being a family of dual solutions for the ARV SDP
relaxation and thus giving a lower bound on the optimum) but we don't
know how to make that formal yet.

\begin{theorem} \label{thm:SSE1} If a $(r,d,\lambda)$-spectral SSE
  flow exists in the graph for $d\lambda \gg
  \frac{1}{\epsilon}\phi_{sparsest}$, then the GS rounding algorithm
  computes a $(1+\epsilon)$-approximation to {\sc sparsest cut} when
  applied on the $O(r/\epsilon)$-level Lasserre
  solution. 
  In particular, the integrality gap of the Lasserre relaxation is at
  most $(1+\epsilon)$.
\end{theorem}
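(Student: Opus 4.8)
The plan is to deduce this from the round-or-fail guarantee \Cref{thm:gs-round-or-fail}: it is enough to produce, from the level-$O(r/\epsilon)$ Lasserre vectors $\xmat=[\xmat_u]$, a ``seed'' $S$ of at most $r$ vertices with $\|\xmat_S^\perp\xmat\|_F^2\le O(\epsilon)\|\xmat\|_F^2$. Then \Cref{thm:gs-round-or-fail} with $\gamma=O(\epsilon)$ makes GS rounding output a cut $T$ with $\phi_G(T)\le\phi_{SDP}/(1-O(\epsilon))=(1+O(\epsilon))\phi_{SDP}$, and rescaling $\epsilon$ gives the claim. Note the rounding algorithm never inspects the flow; only its \emph{existence} is used, to certify that the seed exists. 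The integrality-gap statement is then immediate since $\phi_{sparsest}\le\phi_G(T)\le(1+\epsilon)\phi_{SDP}\le(1+\epsilon)\phi_{sparsest}$.

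To build the seed I would fix the hypothesized $(r,d,\lambda)$-spectral SSE flow $H$ --- a weighted graph on $V$, routable in $G$ with congestion at most $1$, whose Laplacian satisfies $\lambda_r(L_H)\ge d\lambda$ --- and run the Guruswami--Sinop column-selection (seeding) argument of \cite{gs11-exp}, but feeding it $L_H$ as the ``test'' quadratic form in place of the graph Laplacian. For $\xmat$ from enough Lasserre levels this yields a set $S$ of at most $r-1$ columns with
\[
\|\xmat_S^\perp\xmat\|_F^2\ \le\ \frac{\tr\!\big(L_H\,\xmat^\t\xmat\big)}{\lambda_r(L_H)}\ \le\ \frac{\tr\!\big(L_H\,\xmat^\t\xmat\big)}{d\lambda},
\qquad\text{where}\qquad
\tr\!\big(L_H\,\xmat^\t\xmat\big)=\sum_{(u,v)\in H}w_{uv}\,\|\xmat_u-\xmat_v\|^2
\]
is the flow-weighted sum of squared distances over demand pairs.

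The crucial (and only slightly delicate) estimate is that $\tr(L_H\xmat^\t\xmat)\le\phi_{SDP}\|\xmat\|_F^2$, and this is where the looseness of an SSE flow --- it may route demand along very long paths in $G$ --- turns out not to matter. Since the Lasserre vectors lie in $\ell_2^2$ (they satisfy the triangle inequality already at a small number of rounds), iterating the triangle inequality along the $G$-path $P_{uv}$ carrying a unit of the $u$--$v$ demand gives $\|\xmat_u-\xmat_v\|^2\le\sum_{(a,b)\in P_{uv}}\|\xmat_a-\xmat_b\|^2$ with \emph{no} path-length factor (a naive Cauchy--Schwarz estimate would lose the dilation, which here would be useless). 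Summing over demand edges and using congestion $\le1$,
\[
\tr\!\big(L_H\,\xmat^\t\xmat\big)\ \le\ \sum_{(a,b)\in E(G)}C_{ab}\,\|\xmat_a-\xmat_b\|^2\ \le\ \phi_{SDP}\,\|\xmat\|_F^2,
\]
the last step being the SDP bound recorded just before \Cref{thm:gs-round-or-fail}. Plugging this in and invoking $d\lambda\gg\phi_{sparsest}/\epsilon\ge\phi_{SDP}/\epsilon$ gives $\|\xmat_S^\perp\xmat\|_F^2\le O(\epsilon)\|\xmat\|_F^2$, which is exactly what the reduction needs.

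The part I expect to require the most care is the first inequality of the second paragraph: one must check that the GS seeding argument goes through with an arbitrary PSD test matrix (the flow Laplacian) rather than the graph Laplacian, and that the near-regularity of $H$ (degrees in $[d/2,d]$) lets us use the combinatorial Laplacian $L_H$ from the definition of a spectral SSE flow in place of a normalized one at the cost of a factor $2$. I expect this to be a routine adaptation, since the seeding proof in \cite{gs11-exp} uses only the pseudo-distribution/consistency structure of Lasserre solutions together with the eigendecomposition of whatever quadratic form is being tracked; the $\ell_2^2$ triangle inequality and the congestion-$1$ routing are standard. One should also verify the precise number of Lasserre levels ($O(r/\epsilon)$ as stated; a few levels above $r$ in fact suffice for the seeding, and only a small constant number are needed for the triangle inequality).
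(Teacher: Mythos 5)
Your proposal is essentially the paper's own proof: in both arguments the heart is the observation that routability of the flow in $G$ plus the $\ell_2^2$ triangle inequality gives $\tr(L(F)\,\xmat^\t\xmat)\le \tr(L(G)\,\xmat^\t\xmat)\le \phi_{SDP}\|\xmat\|_F^2$ (the paper's \Cref{thm:sigma-r-lb-by-flows}), which divided by $\lambda_{r+1}(L(F))\ge d\lambda \gg \phi_{SDP}/\epsilon$ makes the projection error $O(\epsilon)\|\xmat\|_F^2$ and lets \Cref{thm:gs-round-or-fail} finish. The only difference is organizational, and it removes the step you flagged as delicate: rather than re-running the GS seeding with $L_H$ as a test quadratic form, the paper invokes \Cref{thm:GS13-2}, which bounds $\|\xmat_S^\perp\xmat\|_F^2\le(1-\epsilon)^{-1}\sum_{i>r}\sigma_i(\xmat^\t\xmat)$ for a seed of $r/\epsilon+r-1$ columns (note: with only $r-1$ columns, as you claim, column selection cannot match the best rank-$(r-1)$ error, so your no-loss inequality would not hold and ``a few levels above $r$'' would not suffice), and only afterwards bounds the spectral tail via $\sum_{i>r}\sigma_i(Y)=\min_{Z\succeq 0}\tr(YZ)/\lambda_{r+1}(Z)$ with $Z=L(F)$ (\Cref{thm:lb-on-sigmar}), so the flow Laplacian never has to enter the seeding argument at all.
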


The other result is a more direct approximation algorithm that does
not use SDP hierarchies at all. Instead it uses a form of spectral
rounding (as in~\cite{ABS}) that produces a set with low symmetric
difference to the optimum sparsest cut, followed by the clever idea of
Andersen and Lang~\cite{al08} to purify this set into a bonafide cut
of low expansion.

\begin{theorem} \label{thm:SSE2} There is a $2^{O(r)} poly(n)$ time
  algorithm that given a graph and a $(r,d,\lambda)$-spectral SSE flow
  for $d\lambda \gg \frac{1}{\epsilon^2}\phi_{sparsest}$ outputs a cut of
  sparsity at most $(1+\epsilon)\phi_{sparsest}$.
\end{theorem}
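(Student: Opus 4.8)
The plan is to treat the given $(r,d,\lambda)$-spectral SSE flow $H$ as a spectral certificate that essentially pins down the optimal sparsest cut, run an ARV/\cite{ABS}-style enumeration over the bottom eigenspace of $H$'s Laplacian to recover a set of small symmetric difference with it, and then apply Andersen--Lang purification \cite{al08} to turn that set into a bona fide cut of sparsity $(1+\epsilon)\phi_{sparsest}$. Concretely, let $L_H$ be the Laplacian of $H$ and $\mathcal L_H = D_H^{-1/2}L_H D_H^{-1/2}$; since every vertex has $H$-degree in $[d/2,d]$ we get $\lambda_r(\mathcal L_H)\ge \lambda_r(L_H)/d\ge\lambda$. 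Let $W=\mathrm{span}(v_0,\dots,v_{r-1})$ be the span of the $r$ bottom eigenvectors of $\mathcal L_H$ and $\Pi_W$ its orthogonal projection; both are computable in $\mathrm{poly}(n)$. Let $S^{\ast}$ be the optimal sparsest cut, w.l.o.g.\ $|S^{\ast}|\le n/2$, and $\chi=\ones_{S^{\ast}}$. Because $H$ resides in the host graph, the $H$-flow crossing $(S^{\ast},\overline{S^{\ast}})$ — which equals $\chi^{\top}L_H\chi$ — is at most $E_G(S^{\ast},\overline{S^{\ast}})=\phi_{sparsest}|S^{\ast}|(n-|S^{\ast}|)/n\le\phi_{sparsest}|S^{\ast}|$.

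First I would show $\chi$ is almost contained in $D_H^{-1/2}W$. Set $y=D_H^{1/2}\chi$ and let $y_{\perp}$ be $y$ with its component along the trivial eigenvector $v_0$ removed; then $y_{\perp}^{\top}\mathcal L_H y_{\perp}=\chi^{\top}L_H\chi\le\phi_{sparsest}|S^{\ast}|$ while $\|y_{\perp}\|^2\ge\tfrac12\mathrm{vol}_H(S^{\ast})\ge\tfrac{d}{4}|S^{\ast}|$, so the Rayleigh quotient of $y_{\perp}$ is at most $4\phi_{sparsest}/d\ll\epsilon^2\lambda\le\epsilon^2\lambda_r(\mathcal L_H)$ by the hypothesis $d\lambda\gg\phi_{sparsest}/\epsilon^2$. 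Expanding $y_{\perp}=\sum_{i\ge1}c_iv_i$ gives $\sum_{i\ge r}c_i^2\le\lambda_r^{-1}\sum_i\lambda_ic_i^2\ll\epsilon^2\|y_{\perp}\|^2$, hence $\|y-\Pi_W y\|^2\ll\epsilon^2\,\mathrm{vol}_H(S^{\ast})$. Writing $\widehat\chi:=D_H^{-1/2}\Pi_W y$, a vector in the computable $(\le r)$-dimensional space $D_H^{-1/2}W$, this reads $\sum_u d^H_u(\widehat\chi_u-\chi_u)^2\ll\epsilon^2\,\mathrm{vol}_H(S^{\ast})$.

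Next I would convert this into a seed set. Enumerate an $O(\epsilon)$-net $\mathcal N$ of the unit sphere of $D_H^{-1/2}W$ (of size $2^{O(r)}$ for constant $\epsilon$, and $2^{O(r\log(1/\epsilon))}$ in general); for each $w\in\mathcal N$ and each of the $n$ threshold sets $T_{w,t}=\{u:w_u\ge t\}$, run the purification below, and finally output the sparsest cut found over all branches. One branch succeeds: for $w\in\mathcal N$ within $O(\epsilon)$ of $\widehat\chi/\|\widehat\chi\|$, the appropriately clipped and rescaled vector $w'$ still satisfies $\sum_u d^H_u(w'_u-\chi_u)^2=O(\epsilon^2)\mathrm{vol}_H(S^{\ast})$, so the set where $w'$ and $\chi$ differ by at least $1/2$ has $H$-volume $O(\epsilon^2)\mathrm{vol}_H(S^{\ast})$ and contains the symmetric difference of $S^{\ast}$ with the $1/2$-threshold cut of $w'$; since we do not know the scaling $\|\widehat\chi\|$ we simply sweep all $n$ thresholds of $w$, one of which realizes this cut. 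As $H$-degrees are within a factor $2$ of $d$ and $G$ is $1$-regular, this produces a seed $T$ with $|T\,\triangle\,S^{\ast}|\le O(\epsilon^2)|S^{\ast}|$. Finally I would invoke the Andersen--Lang purification \cite{al08}: a $\mathrm{poly}(n)$-time sequence of $s$--$t$ min-cut computations in $G$ which, given a seed $A$ with $|A\,\triangle\,S^{\ast}|\le\delta|S^{\ast}|$, returns a cut of sparsity at most $(1+O(\delta))\phi_{sparsest}$. With $A=T$ and $\delta=O(\epsilon^2)$ this branch yields sparsity $(1+O(\epsilon^2))\phi_{sparsest}\le(1+\epsilon)\phi_{sparsest}$ once the constant hidden in $d\lambda\gg\phi_{sparsest}/\epsilon^2$ is chosen large enough (and after rescaling $\epsilon$); the total running time is $2^{O(r)}\cdot n\cdot\mathrm{poly}(n)=2^{O(r)}\mathrm{poly}(n)$.

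I expect the main obstacle to be the purification step rather than the spectral argument. Passing from ``$T$ is close to $S^{\ast}$ in symmetric difference'' to ``some nearby cut has sparsity $(1+\epsilon)\phi_{sparsest}$'' is delicate — a seed differing from $S^{\ast}$ by only a few vertices that happen to be heavily connected to $\overline{S^{\ast}}$ can have far larger sparsity — and it requires exactly the right form of the Andersen--Lang lemma: one that controls \emph{sparsity} (not merely conductance of a sub-cut) and does so with a multiplicative $(1+O(\delta))$ loss. A secondary nuisance, though routine, is keeping the normalization bookkeeping consistent: the factor-$2$ slack between $H$-degrees and $d$, between $\mathrm{vol}_H$ and cardinality, and between $L_H$ and $\mathcal L_H$ must all be tracked so that the hypothesis $d\lambda\gg\phi_{sparsest}/\epsilon^2$ is genuinely strong enough at every step.
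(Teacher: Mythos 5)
Your overall route coincides with the paper's: use the flow's Laplacian as the spectral certificate, recover from its bottom-$r$ eigenspace a vector close to $\ones_{S^{\ast}}$ by net/threshold enumeration, and then purify the resulting seed with $s$--$t$ min cuts. Your first two paragraphs are in effect a re-derivation of the eigenspace-enumeration lemma of \cite{ABS} that the paper simply invokes as \Cref{lem:eigenenumeration} (applied to the flow, whose crossing demand on $S^{\ast}$ is bounded by $E_G(S^{\ast},\overline{S^{\ast}})$ by embeddability); that half is correct up to harmless constants (e.g.\ in the worst case $\|y_{\perp}\|^2\ge \mathrm{vol}_H(S^{\ast})/3$ rather than $/2$, since $H$-degrees vary by a factor $2$).

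The genuine gap is the purification step, which you invoke as a black box and yourself flag as uncertain. The statement you attribute to \cite{al08} --- a seed $A$ with $|A\,\triangle\,S^{\ast}|\le\delta|S^{\ast}|$ yields a cut of \emph{sparsity} at most $(1+O(\delta))\phi_{sparsest}$ --- is not what that paper provides, and as stated it is stronger than what the min-cut machinery delivers: if the source/sink penalty edges have capacity $c$, the cut $\{s\}\cup S^{\ast}$ certifies value $\Phi(S^{\ast})|S^{\ast}|+\delta|S^{\ast}|c$, while the min cut can drift from the seed by about $\Phi(S^{\ast})|S^{\ast}|/c$ vertices; optimizing $c$ gives a $(1+O(\sqrt{\delta}))$-type loss, not $(1+O(\delta))$. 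This is precisely why the paper proves its own quantitative cut-improvement lemma (\Cref{lem:improvecut}): attach the source to $V\setminus T$ and the sink to $T$ with edges of capacity $4\Phi(S)/\delta$ (enumerating a multiplicative guess for $\Phi(S)$), take the min cut $Q$, and show both $|Q\,\triangle\, S|\le\delta|S|$ and $\Phi(Q)\le(1+\epsilon)\Phi(S)$, assuming the seed is $\epsilon\delta/2$-close and $\Phi(S)\ll d\lambda\,\epsilon\delta$; taking $\delta=\epsilon$ and using the bound $|Q\,\triangle\,S|\le\delta|S|$ to convert expansion into sparsity gives \Cref{thm:SSE2}. Your parameters (an $O(\epsilon^{2})$-close seed, target loss $1+\epsilon$) are consistent with this correct, weaker tradeoff, so your plan is completable, but as written the key lemma is assumed rather than proved, and supplying it --- the penalty-capacity construction, the guess for $\Phi(S)$, the bound on $|Q\,\triangle\,T|$ via the penalty edges, and the expansion-to-sparsity conversion --- is exactly the substance of the paper's proof.
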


The above two theorems become important only because of the following
two theorems which concern the {\em existence} of the
flow. 

\begin{theorem}\label{thm:SSEexist}
  If $d \ll \Phi_{local}\sqrt{\log r}/\sqrt{\log n}$ then the graph
  has a $(r,d, \Omega((\log r)^{-2}))$ SSE flow.
\end{theorem}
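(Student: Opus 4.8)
The plan is to follow the route by which \cite{ARV} establish existence of (ordinary) expander flows: write a flow LP, pass to its dual, and bound the dual optimum from below by producing, for every fractional cut the dual offers, either a small sparse cut (excluded by hypothesis) or a cheap routing of a small-set expander. Concretely, fix $\beta=\Theta((\log r)^{-2})$ and let $d^\star$ be the largest $d$ for which there is a multicommodity flow $F$ in $G$ such that (i) each edge $e$ carries at most $c_e$; (ii) each vertex carries at most $d$; and (iii) for every $S\subseteq V$ with $|S|\le n/r$, at least $\beta d|S|$ units of demand cross $S$, where $H$ denotes the demand graph routed by $F$. This is a linear program in the path variables, and $d^\star\le\Phi_{local}/\beta$ is immediate from (i) and (iii); the content is the lower bound $d^\star=\Omega\bigl(\Phi_{local}\sqrt{\log r}/\sqrt{\log n}\bigr)$.

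The first move is LP duality. By a minimax argument --- between the compact polytope of ``$(d,\beta)$-valid'' demand graphs $\{H:\deg_H(v)\le d\ \forall v,\ |\delta_H(S)|\ge\beta d|S|\ \forall|S|\le n/r\}$ and the simplex of edge lengths $\ell\ge0$ with $\sum_e c_e\ell_e=1$, using that a demand graph routes in $G$ with congestion $\le 1$ iff it has $\rho_\ell$-cost at most $\sum_e c_e\ell_e$ for every $\ell$ --- rate $d$ is feasible iff for every such $\ell$ there is a $(d,\beta)$-valid $H$ with $\sum_{uv}H_{uv}\,\rho_\ell(u,v)\le 1$, where $\rho_\ell$ is the shortest-path metric of $\ell$; after the standard transformation this metric (or the metric from an SDP tightening of the flow LP as in \cite{ARV}) may be taken essentially in $\ell_2^2$. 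Everything thus reduces to the following: given such a metric, and given that in $G$ every set of size $\le n/r$ has at least $\Phi_{local}|S|$ edges leaving it, build a demand graph $H$ of vertex-degree $\le d=\Theta(\Phi_{local}\sqrt{\log r}/\sqrt{\log n})$ that is a $\beta$-small-set-expander on sets of size $\le n/r$ and that routes in $G$ at $\rho_\ell$-cost $\le 1$. I would build $H$ by the ARV template --- a structure theorem for $\ell_2^2$ metrics exhibits many well-separated clusters, and routing matchings between them along shortest paths makes routability property (iii) automatic against the budget $\sum_e c_e\ell_e=1$ --- with two new ingredients: (a) a \emph{localized} structure theorem producing clusters of size $\Omega(n/r)$ rather than $\Omega(n)$ (the tools being the orthogonal separators of \cite{cmm06,bansalsse11} at constant separation parameter and multiplicity $m\approx r^2$, or the higher-order Cheeger inequalities of \cite{lrtb12,lgt12}), which is what brings in both $\sqrt{\log n}$ and $\log r$; and (b) a sweep over the $O(\log r)$ dyadic size-scales between $n/r$ and $n$, routing one expander-like layer per scale so that every set of size $\le n/r$ is expanded by some layer, which is what degrades the expansion constant to $\Omega((\log r)^{-2})$. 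The hypothesis $d\ll\Phi_{local}\sqrt{\log r}/\sqrt{\log n}$ enters precisely to kill the sole obstruction to this construction: a cluster that is simultaneously of size $\le n/r$ and cheaply separated by $\rho_\ell$ would be a subset of size $\le n/r$ of expansion below $\Phi_{local}$, contradicting the definition of $\phi_{local}$ --- the same mechanism as in \Cref{lem:smallsets}, now used inside the flow argument.

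I expect the crux to be ingredient (b). Ordinary expander flows certify expansion of \emph{balanced} cuts only; forcing the routed demand graph to expand \emph{every} set of size $\le n/r$ simultaneously, with merely a $(\log r)^2$ loss in the expansion constant, is the genuinely new ``small-set'' difficulty, and it seems to need the full strength of \cite{bansalsse11} coupled to the local-expansion hypothesis (or a carefully orchestrated multi-scale routing) rather than a black-box invocation of \cite{ARV}. A secondary nuisance is the accounting that converts the localized structure theorem's distortion into the claimed rate $\Phi_{local}\sqrt{\log r}/\sqrt{\log n}$ while simultaneously carrying the $\beta=\Theta((\log r)^{-2})$ of ingredient (b); this hinges on the obstruction being pinned at scale $n/r$ rather than at the global scale $n$. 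The argument can equally be run combinatorially, building $F$ by a multiplicative-weights iteration in the style of \cite{AHK1} whose oracle either returns the required small sparse cut or appends a short expander layer to the current flow; its analysis is dual to the above.
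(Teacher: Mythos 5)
Your high-level plan (flow LP, duality/minimax, use local expansion to kill the obstruction offered by the dual, ARV-style machinery in between) is indeed the paper's strategy, but the sketch has concrete gaps at exactly the points where the paper has to work hardest. First, your separating object after the minimax step is only an edge-length metric $\rho_\ell$, and a shortest-path metric is \emph{not} ``essentially $\ell_2^2$'' after any standard transformation; the ARV chaining/structure machinery you invoke has nothing to bite on. In the paper the $\ell_2^2$ structure comes from the \emph{cut-weight} part of the full dual (the variables $z_S$ on the expansion constraints, mapped to vectors $Z_i$ in \Cref{subsec:mapping}), and the two properties that make the chaining work --- the spreading constraint \cref{eqn:spreading} and the norm bound \cref{eqn:normZ} --- come precisely from restricting the LP's expansion constraints to sets of size in $[n/3r,n/r]$ (the \emph{weak} SSE flow, \Cref{def:weaksseflow}). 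Your LP over all sets of size at most $n/r$ destroys the norm bound, so the single-scale bucketing would cost $\log n$ rather than $\log r$. You also drop the degree-dual variables $s_i$ entirely; these create the ``type II'' obstacle sets handled by \Cref{lem:obstacleII}, and it is the accounting against $\sum_i s_i\le\beta n$ together with the chaining loss $D_2=\Omega(\Delta/\log r)$ that forces $\beta=\Theta((\log r)^{-2})$. Relatedly, ``routing matchings between well-separated clusters makes routability automatic'' is backwards as stated: demands across clusters that are well separated in $\rho_\ell$ are expensive against the budget; what the argument actually needs are pairs \emph{close} in the length metric yet \emph{far} in the cut ($\ell_2^2$) metric, which is why both dual components must be kept.

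Second, your ingredient (b) is both misplaced and not where the loss comes from: an SSE flow constrains only sets of size at most $n/r$, so a per-size-scale construction would have to sweep the $\log(n/r)$ scales below $n/r$, not ``$O(\log r)$ scales between $n/r$ and $n$,'' and would then degrade with $\log n$. The paper instead proves existence of a weak SSE flow (\Cref{lem:weaksseflowexists}, via \Cref{lemma:dualsolution}: any feasible dual yields a set of size at most $100n/r$ with expansion $O(d\sqrt{\log n}/\sqrt{\log r})$, contradicting the hypothesis), and then upgrades weak to genuine SSE flow by a separate max-flow surgery (\Cref{lem:weaktosseflow}) that loses only a constant factor in $\beta$ --- no multi-scale layering at all. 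Finally, the quantitative target is stronger than your tools deliver: orthogonal separators have distortion $O(\sqrt{\log n\log r})$, which would only give $d\ll\Phi_{local}/\sqrt{\log n\log r}$, a $\log r$ factor weaker than the claimed $d\ll\Phi_{local}\sqrt{\log r}/\sqrt{\log n}$. The paper's gain comes from widening the Gaussian projection threshold to $\Theta(\sqrt{\log r})$ standard deviations (\Cref{lem:gaussian}), which the spreading constraint permits, so the ARV chain needs only $O(\sqrt{\log n/\log r})$ steps; together with the region-growing partition oracle (\Cref{lem:regiongrow}) and the obstacle-set bookkeeping (\Cref{lem:obstacleI,lem:obstacleII,lem:coverfail}), this is the machinery your proposal would still have to supply.
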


This theorem follows from \Cref{lemma:dualsolution} and
\Cref{lem:weaktosseflow}.


\begin{theorem}\label{thm:spectralSSEexist}
  If $d \ll \Phi_{local}\sqrt{\log r}/\sqrt{\log n}$ then the graph
  has a $(2r,d, \Omega((\log r)^{-5}))$ spectral SSE flow. Furthermore, a
  $(4r,d, \Omega((\log r)^{-5}))$ spectral SSE flow can be found in
  polynomial time.
\end{theorem}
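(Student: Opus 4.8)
The plan is to obtain this essentially for free by composing \Cref{thm:SSEexist} with the combinatorial-to-spectral conversion of \Cref{thm:SSEconvert} (in its precise form \Cref{lem:combtospectral}). Under the hypothesis $d \ll \Phi_{local}\sqrt{\log r}/\sqrt{\log n}$, \Cref{thm:SSEexist} furnishes an $(r,d,\beta)$-SSE flow with $\beta = \Omega((\log r)^{-2})$: a multicommodity flow routable in $G$, with $d$ units of flow incident to each vertex, whose demand graph $H$ satisfies $E_H(S,\bar S) \ge \beta d\,|S|$ for all $|S|\le n/r$. Since $H$ is $d$-regular (up to the $[d/2,d]$ slack in the spectral-SSE-flow definition), this says every set of size at most $n/r$ has normalized expansion $\ge \Omega(\beta)$ in $H$; applying the contrapositive of the small-set higher-order Cheeger inequality of \cite{lrtb12,lgt12} --- roughly, if $\lambda_{2k}(\mathcal L_H)$ is small then $H$ has a set of size $\le n/k$ and sparsity $O(\sqrt{\lambda_{2k}(\mathcal L_H)\log k})$ --- with $k = r$ forces $\lambda_{2r}(\mathcal L_H) \ge \Omega(\beta^2/\log r)$. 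Plugging in $\beta = \Omega((\log r)^{-2})$ gives $\Omega((\log r)^{-4}/\log r) = \Omega((\log r)^{-5})$, and unnormalizing yields $\lambda_{2r}(L_H) \ge d\cdot\Omega((\log r)^{-5})$. Thus $H$ is a $(2r,d,\Omega((\log r)^{-5}))$-spectral SSE flow, after a cosmetic rescaling/padding step to push all degrees into $[d/2,d]$.

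\textbf{Algorithmic part.} The difficulty is that the SSE flow of \Cref{thm:SSEexist} is produced non-constructively, via the dual object of \Cref{lemma:dualsolution} and the rounding of \Cref{lem:weaktosseflow}, and --- as emphasized above --- there is no known way to certify after the fact that a given flow is a combinatorial SSE flow. The plan is therefore to (i) run the algorithmic version of \Cref{lemma:dualsolution} to produce, in polynomial time, a flow whose demand graph has small-set expansion $\Omega((\log r)^{-2})$, where the efficiently obtainable guarantee covers only sets up to size $n/(2r)$ rather than $n/r$; (ii) feed this into the same small-set Cheeger inequality, now with $k = 2r$, obtaining $\lambda_{4r}(L_H)\ge d\cdot\Omega((\log r)^{-5})$, i.e. a $(4r,d,\Omega((\log r)^{-5}))$-spectral SSE flow; and (iii) verify the output by a single eigenvalue computation --- which is legitimate precisely because, unlike a combinatorial SSE flow, a spectral SSE flow is a statement about one eigenvalue of an explicit matrix. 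The extra factor of $2$ in the index ($2r\to 4r$) is exactly the price paid for replacing the existential expansion guarantee by the weaker one we can compute in polynomial time.

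\textbf{Main obstacle.} I expect the crux to be making step (i) precise: showing that the polynomially computable near-optimal solution of \Cref{lemma:dualsolution} --- whatever relaxation or multiplicative-weights procedure underlies it --- still yields a demand graph whose small-set expansion is within a constant factor of $\beta$ on sets of size $\Omega(n/r)$, so that the conversion goes through with only the stated factor-$2$ slack in $r$ and no loss in $\beta$ beyond constants. A secondary, purely bookkeeping obstacle is the degree normalization: ensuring the output literally satisfies the ``degrees in $[d/2,d]$'' clause, which may require a slight rescaling of $d$ or the addition of padding flow and a corresponding constant-factor adjustment of the eigenvalue bound. The higher-order Cheeger step is a black-box invocation of \cite{lrtb12,lgt12}, and the arithmetic $\beta^2/\log r$ with $\beta = \Theta((\log r)^{-2})$ is routine.
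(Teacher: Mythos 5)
Your existential half is exactly the paper's argument: compose \Cref{thm:SSEexist} with \Cref{lem:combtospectral}, and the arithmetic $\beta^2/\log r$ with $\beta=\Theta((\log r)^{-2})$ gives $\Omega((\log r)^{-5})$ at index $2r$. That part is fine.

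The algorithmic half has a genuine gap, and it is precisely the step you yourself flag as the ``main obstacle.'' Your step (i) asks for a polynomial-time procedure that outputs a flow whose demand graph is \emph{certified} to have small-set expansion $\Omega((\log r)^{-2})$ on sets up to size $n/(2r)$. Nothing in the paper provides this, and it is unlikely to be obtainable by ``making \Cref{lemma:dualsolution} algorithmic'': the primal LP for (weak) SSE flows has one expansion constraint per set of size $\approx n/r$, so its separation problem is (a form of) small-set expansion, and the paper explicitly notes that even given a flow there is no known way to verify it is a combinatorial SSE flow. \Cref{lemma:dualsolution} is only a one-sided, approximate refutation tool for the \emph{dual} (it finds either a violated path constraint or a nonexpanding set in the host graph); it does not hand you a primal flow with a certified combinatorial expansion guarantee, which is what your steps (i)--(ii) need before you can invoke higher-order Cheeger on the computed object. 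So as written, the constructive claim does not go through.

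The paper sidesteps this entirely and never constructs the combinatorial flow algorithmically. Instead it uses \Cref{lem:findspectralsseflow}: over the polyhedron of flows embeddable in $G$ with all degrees in $[d/2,d]$, maximize $\sum_{i\le 4r}\lambda_i(L(F))$, which is a concave function of the flow variables, hence a polynomial-time convex program. The existential $(2r,d,\lambda)$ spectral SSE flow (with $\lambda=\Omega((\log r)^{-5})$) is feasible and shows the optimum is at least $2r\, d\lambda$; by averaging, the computed optimizer satisfies $\lambda_{4r}(L(F))\ge d\lambda/2$, i.e.\ it is a $(4r,d,\Omega((\log r)^{-5}))$ spectral SSE flow. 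Note also that this pins down where the $2r\to 4r$ loss really comes from: it is the averaging step in the eigenvalue-sum program, not a weakened set-size range in an algorithmic construction of the combinatorial flow as you hypothesized. Your step (iii) (verifying a single eigenvalue of an explicit matrix) is then unnecessary, since the convex program's guarantee already certifies the output.
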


This theorem follows \Cref{thm:SSEexist,thm:SSEconvert}
and \Cref{lem:findspectralsseflow}. The algorithm to find the
spectral SSE flow uses the fact that maximizing the sum of first $r$
eigenvalues of a matrix is a convex objective.

In fact, when $\Phi_{local}$ is small, we can actually find a small
set that does not expand well.

\begin{theorem}
  For any graph $G = (V,E)$ and any value $d$, there is a polynomial time algorithm
  that either finds a $(4r,d, \Omega((\log r)^{-5}))$ spectral SSE
  flow, or finds a set of size at most $100n/r$ that has expansion at
  most $O(d\sqrt{\log n}/\sqrt{\log r})$.
\end{theorem}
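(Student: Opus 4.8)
The plan is to run, in polynomial time, the same convex program that underlies \Cref{lem:findspectralsseflow}, and to read a small non-expanding set off its dual certificate whenever the program fails. Recall that a $(4r,d,\lambda)$-spectral SSE flow is a demand graph $H$ routable in $G$ with per-vertex flow in $[d/2,d]$ whose Laplacian $L_H$ has $\lambda_{4r}(L_H)\ge d\lambda$; since the routability and degree constraints cut out a polytope and the sum of the smallest eigenvalues is a concave function of the PSD matrix $L_H$, maximizing a concave relaxation of $\lambda_{4r}(L_H)$ over routable $H$ is a $\mathrm{poly}(n)$-size convex optimization problem. First I would solve this program (to within a $(1+o(1))$ factor). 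If the solution certifies $\lambda_{4r}(L_H)\ge d\cdot\Omega((\log r)^{-5})$ — after the same eigenvalue-spreading/averaging step used inside \Cref{thm:spectralSSEexist} and \Cref{thm:SSEconvert} — then I output $H$, which is the desired spectral SSE flow, and we are done.

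Otherwise the optimum is small, which is exactly the regime in which the hypothesis $d\ll\Phi_{local}\sqrt{\log r}/\sqrt{\log n}$ of \Cref{thm:spectralSSEexist} (equivalently of \Cref{thm:SSEexist}) must fail; so a set of size at most $n/r$ with expansion $O(d\sqrt{\log n}/\sqrt{\log r})$ exists, and the task is to produce one. Here I would use the dual of the convex program: convex duality hands back a low-rank PSD ``potential'' together with the semimetric on $V$ coming from the routing constraints, witnessing that no routable demand graph can concentrate eigenvalue mass. Unpacking this certificate is precisely the ARV $\sqrt{\log n}$ argument combined with the higher-order Cheeger inequality (the pipeline behind \Cref{lemma:dualsolution} and \Cref{thm:SSEconvert}) run in its contrapositive direction, and it yields an explicit $\ell_2^2$ configuration $\{x_u\}_{u\in V}$ in which $G$ fails to be a small-set expander. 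A standard region-growing / threshold-sweep over concentric balls in this configuration — restricted to the ``heavy'' part carrying most of $\sum_u\|x_u\|^2$ and performed at the scale corresponding to volume $\Theta(n/r)$ — then returns an actual set of size at most $100n/r$ with expansion $O(d\sqrt{\log n}/\sqrt{\log r})$. The slack in the statement (size $100n/r$ rather than $n/r$, level $4r$ rather than $2r$, extra powers of $\log r$) is exactly what absorbs the constant losses in this rounding and in \Cref{thm:SSEconvert}.

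The main obstacle is making the dual-extraction step honest. The literal Lagrangian dual of ``maximize the $4r$-th smallest Laplacian eigenvalue of a routable demand graph'' is not itself an embedding of $V$, so I would either recast the primal as an SDP whose dual variables are directly vectors $x_u\in\R^k$ together with an explicit ``spreading'' constraint (so that infeasibility yields a spread-out but non-expanding configuration), or invoke the machinery behind \Cref{lemma:dualsolution} as a black box, used contrapositively. Either way the delicate points are (i) controlling the \emph{size} of the set produced by region growing so that it lands in the range $(n/r,\,100n/r]$ and not, say, near $n/2$ — this forces the sweep to be done only over the heavy sub-metric and at the right scale — and (ii) verifying that the whole pipeline (convex solve, dual read-off, sweep) runs in genuine polynomial time with no $2^{O(r)}$ factor, which holds because the convex program has $\mathrm{poly}(n)$ size and every rounding step is elementary. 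The rest is bookkeeping of constants and polylogarithmic factors.
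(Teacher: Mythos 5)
Your first step---the convex program that maximizes the sum of the bottom eigenvalues of the flow Laplacian over demand graphs routable in $G$ with per-vertex degrees in $[d/2,d]$---is exactly the paper's \Cref{lem:findspectralsseflow}, and that part of your plan is sound. The genuine gap is in what you do when the program fails. You treat ``the convex program has small optimum'' as if it directly yields a dual certificate that can be unpacked into an $\ell_2^2$ configuration, or equivalently as if the machinery of \Cref{lemma:dualsolution} can then be invoked as a black box. But \Cref{lemma:dualsolution} requires as input a \emph{feasible dual solution of the weak SSE flow LP}, and such a dual exists only when that LP is infeasible. Failure of the eigenvalue program does not imply this: the chain ``weak SSE flow $\Rightarrow$ SSE flow $\Rightarrow$ spectral SSE flow'' (\Cref{lem:weaktosseflow,lem:combtospectral}) can break at its first link, namely when $G$ contains a set of size at most $n/3r$ with expansion below $d\beta$, $\beta=\Theta((\log r)^{-2})$. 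In that situation a weak SSE flow (whose expansion requirement only covers sets of size between $n/3r$ and $n/r$) may well exist, so the LP dual is infeasible and your dual-extraction step has nothing to extract, even though no spectral SSE flow exists and the convex program correctly fails. The paper closes exactly this middle case with a separate tool: it runs the $O(\sqrt{\log n\log r})$-approximation algorithm for small-set expansion of \cite{bansalsse11}, which outputs a small set of expansion at most $d\beta\sqrt{\log n\log r}=O(d\sqrt{\log n}/\log r)$, within the claimed bound. Your proposal has no mechanism for this case; the existence statement you invoke (the contrapositive of \Cref{thm:spectralSSEexist}) does not by itself produce the set.

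Within the remaining case (weak SSE LP infeasible), the paper also does not dualize the eigenvalue program: it solves an SDP for the dual of the weak-SSE-flow LP after replacing the non-convex spreading constraint \cref{eqn:spreading} by the linearized form $\sum_{j\in V}\min\{0.9\norm{Z_i}^2,\norm{Z_j-Z_i}^2\}\ge 0.9\norm{Z_i}^2 n-\norm{Z_i}^2\,\frac{n}{r}$, and checks that this relaxed constraint still supports \Cref{lemma:dualsolution} with slightly worse constants. Your alternative of ``recasting the primal as an SDP with an explicit spreading constraint'' is in this spirit, but it must be written for the flow LP, not for the eigenvalue-maximization program, whose Lagrangian dual (as you yourself note) is not an embedding of $V$. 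Finally, the set extraction is not a standard threshold sweep over concentric balls: it is the single-scale reduction, matching-cover, obstacle-set and region-growing argument inside \Cref{lemma:dualsolution}, which is precisely where the size bound $100n/r$ and the factor $\sqrt{\log n}/\sqrt{\log r}$ come from; invoking it as a black box is legitimate, but only once you actually hold a feasible dual solution, which brings you back to the missing case split above.
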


For more details see \Cref{lem:finddual}.

\subsection{Overview of proof of existence of SSE flows}

To keep the main paper relatively concise, we have move the proof of
existence to the appendix and give an overview here.

From a distance, the existence proof for SSE flows uses similar ideas
as the one for expander flows in \cite{ARV}: we write an exponential
size LP that is feasible iff the desired flow exists, and then reason
about the properties of dual solutions (using properties of flows,
cuts, and $\ell_2^2$ metrics) to show that the LP is feasible.

We write an LP that enforces each vertex has degree at most $d$ in the
flow, and for every set $S$ of size $n/3r$ to $n/r$, the amount of
outgoing flow is at least $\beta d|S|$, the precise LP can be found in
\Cref{subsec:LPformulation}.

The dual of this LP consists of a nonnegative weight $s_i$ for all
vertices and $w_e$ for each edge, and also a nonegative weight for every set of
size between $n/3r$ to $n/r$. We shall prove the following Lemma:

\begin{lemma}[imprecise]
  Given a valid dual solution with degree $d$ and $\beta$ parameter
$=\Theta((\log r)^{-2})$, there is an algorithm that finds a set of
  size at most $100n/r$ with expansion $O(d\sqrt{\log n}/\sqrt{\log r})$.
\end{lemma}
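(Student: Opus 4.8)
The plan is to prove this imprecise lemma by LP duality, following the template of the expander-flow argument in \cite{ARV} but adapted to small sets. First I would write the exponential-size LP whose feasibility is equivalent to the existence of a $(r,d,\beta)$-SSE flow with $\beta = \Theta((\log r)^{-2})$: variables are flow paths (or, more conveniently, path-indexed flow values $f_p$), the constraints enforce that each vertex carries at most $d$ units of flow, and that every set $S$ with $n/3r \le |S| \le n/r$ has outgoing flow at least $\beta d|S|$. Taking the dual, one obtains nonnegative vertex weights $s_i$, nonnegative edge weights $w_e$, and a nonnegative weight $z_S$ for each candidate small set $S$. By LP duality, if the flow does \emph{not} exist then there is a dual solution of positive value; the bulk of the proof is to show that such a dual solution can be \emph{rounded} into an actual small nonexpanding set, contradicting the hypothesis $d \ll \Phi_{local}\sqrt{\log r}/\sqrt{\log n}$ (or, read the other way, producing the claimed set of size at most $100n/r$ with expansion $O(d\sqrt{\log n}/\sqrt{\log r})$).

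**Extracting a metric and applying ARV-style structure theorems.**
The key step is to interpret the dual solution geometrically. The edge weights $w_e$ (rescaled) define a shortest-path semimetric on $V$; the constraint structure forces this metric, together with the set-weights $z_S$, to ``spread out'' a constant fraction of the volume concentrated on small sets. I would then invoke the $\ell_2^2$ / ARV machinery: because the $x_u$-metric has small total length relative to the amount of spreading demanded, one finds two sets $A,B$, each of size $\Omega(n/r)$ and at metric distance $\Omega(1/\sqrt{\log n})$ apart (the ARV ``well-separated sets'' / chaining argument, now localized to the $n/r$ scale where the extra $\sqrt{\log r}$ savings appears). Sweeping over the ball-growing threshold around $A$ then yields a cut of size $O(n/r)$ whose expansion is $O(d \sqrt{\log n}/\sqrt{\log r})$, because the total edge weight crossing is controlled by the dual objective while the separation guarantees a constant fraction of the mass is cut off. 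The factor $(\log r)^{-2}$ in $\beta$ is exactly what is lost (and later recovered in the conversion to spectral flows) in making the ball-growing and volume bookkeeping go through at the small-set scale.

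**Main obstacle and bookkeeping.**
The hard part will be the localized ARV argument: the original $O(\sqrt{\log n})$ structure theorem produces \emph{one} pair of large well-separated sets, but here I need the separated sets to have size $\Theta(n/r)$ \emph{and} to sit inside the region where the dual's set-constraints are active, so that cutting them actually witnesses a violation of small-set expansion rather than global expansion. Handling this requires a careful conditioning/restriction to the subset of volume (about $1-O(1/\log r)$ of it, analogous to \Cref{cor:S'}) on which the dual solution behaves well, and then running the measured-descent / region-growing step only on that portion; the $(\log r)$-powers are absorbed here. A secondary, more routine obstacle is verifying that the LP is genuinely finite-dimensional enough (polynomially many ``interesting'' small sets, or a separation oracle) so that duality and the polynomial-time claim are legitimate — this is where one cites that the separating hyperplane can be found by an eigenvalue/flow computation, deferring the precise statement to \Cref{lem:finddual}. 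I would present the clean geometric core first and relegate the volume-accounting constants to a lemma.
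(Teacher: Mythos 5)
Your overall frame (LP duality, then a geometric rounding of the dual using the shortest-path metric given by the $w_e$'s and an $\ell_2^2$ representation of the cut weights $z_S$) matches the paper's plan, but the mechanism you propose for actually extracting the small set has a genuine gap. You want to find two well-separated sets $A,B$ of size $\Omega(n/r)$ and then sweep a ball around $A$; however (i) nothing in your sketch upper-bounds the swept set by $O(n/r)$ --- a cut separating $A$ from $B$ can have both sides of size $\Omega(n)$, and the spreading constraint only controls balls in the $Z$-metric around single points, not balls in the graph ($w$-)metric around a whole set --- and (ii) the existence of this localized well-separated structure is exactly the hard content, not a black-box application of ARV. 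In the paper the chaining step (\Cref{lem:ARV}) is not what produces the cut: it is used to exhibit a single pair $i,j$ that is close in graph distance, far in $\|Z_i-Z_j\|^2$, and has small $s_i,s_j$, which would violate the dual path constraint \cref{eqn:newpath}; since the dual solution is assumed valid, that branch must fail, and the nonexpanding set is extracted from the failure analysis rather than from a metric sweep.

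Concretely, the matched pairs in the matching cover must be close in \emph{graph} distance (a requirement absent from the standard ARV covers), which is what forces the region-growing partition oracle of \cite{lr99} and the two kinds of obstacle sets: type I sets (whose $D_0$-neighborhood in graph distance contains only $O(n/r)$ vertices), whose accumulation yields the desired set of size at most $100n/r$ and expansion $O(d\sqrt{\log n}/\sqrt{\log r})$ via the capacity accounting of \Cref{cor:expansion}; and type II sets (clusters of high-$s_i$ vertices blocking the matching), which cannot accumulate because $\sum_i s_i < \beta n$ --- this is precisely where $\beta=\Theta(\log^{-2}r)$ is used (\Cref{lem:obstacleII}), not in ball-growing bookkeeping ``recovered'' later by the spectral conversion (that conversion in fact loses further factors). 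Your geometric core also ignores the $s_i$'s entirely (the separation is only useful on low-$s$ vertices, since \cref{eqn:newpath} has $s_i+s_j$ on the left), omits the reduction to a single scale $\Delta$ needed because the norms $\|Z_i\|$ vary (\Cref{lem:singlescale}), and assumes a constant-order separation where only $D_2=\Omega(\Delta/\log r)$ against graph distance $D_1=O(D_0\sqrt{\log n}/\sqrt{\log r})$ is achievable. Without these pieces the proposed sweep argument does not go through as a proof of the stated lemma.
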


In order for the algorithm to run in polynomial time, we first need to
represent the LP dual concisely, and as stated above it involves a
nonegative weight on exponentially many cuts! As in ARV, this concise
representation is possible since 
a nonnegative weighting of cuts is an $\ell_1$ metric and 
the algorithm is only interested in the ``distance'' between two
vertices in this metric  (which is the measure of sets that contains one of the vertices but
not the other). The $\ell_1$ metric can be concisely represented by some
$\ell_2^2$ vectors; see \Cref{subsec:mapping}
and \Cref{lem:finddual}. 

The proof of the Lemma above uses the
``chaining'' idea from \cite{ARV}, but there are many
differences which we list here.

\begin{enumerate}[(a)]
\item The proof is handicapped since it is only allowed to use {\em
    local expansion} (i.e., expansion of sets of size at most
  $O(n/r)$), and this requires us to invent novel ways of applying the
  region-growing framework in \cite{lr99} (see
  \Cref{subsec:regiongrowing}). Many steps in our algorithms
  rely on such region growing arguments, including 
  \Cref{lem:singlescale,lem:coverfail,lem:ARV}.
  
\item In \cite{ARV} all vectors have unit norm, here however the
  $\ell_2^2$ vectors can have different norms. We use a known reduction
  that transforms the vectors for a large subset of vertices, so that
  they are in a sphere of fixed radius. See
  \Cref{subsec:singlescale}.

\item The existence of matching covers used in the ARV proof is
  unclear and has to be carefully established. This uses a certain
  ``spreading constraint'' that holds for $\ell_2^2$ metrics supported
  on small sets. See \Cref{lem:gaussian}. Also, a matching cover
  may not exist because a set of vertices is far away from other
  vertices in graph distance (distance according to the weights on
  edges). We call such sets {\em obstacle sets of type I}, and use
  region-growing arguments to remove these sets, see
  \Cref{subsec:obstacle}.

\item The crux of the ARV proof is to prove the existence of a special
  pair of vertices  that are close in graph metric (i.e., the metric
  given by the weights on the edges) and far apart in $\ell_2^2$
  metric). From the existence proof and global expansion
  $\Phi_{global}$, one can immediately establish the existence of
  $\Omega(n)$ such pairs, which is needed in the argument. The
  analogous idea does not work here since the proof is handicapped by
  being restricted to only use local expansion. However, we show that this step can only
  fail if there exists an {\em obstacle set of
  type II}. We design another region-growing type argument to handle
this; see~\Cref{lem:coverfail,lem:ARV,lem:obstacleII}.

\item The ARV argument uses Alon-Cheeger inequality: for $d$ regular
  graphs, the second eigenvalue of the Laplacian is $\Omega(1)$ iff
  the graph has expansion $\Omega(1)$. The analogous result for small
  set expansion, the so-called ``higher order Cheeger inequality,''
  has only recently been established, and only in one direction and in
  a weaker form \cite{lrtb12, lgt12}.  This weak form makes us lose
  extra $poly(\log r)$ factors in many theorems which are potentially
  improveable.
  For details see \Cref{lem:combtospectral}.
\end{enumerate}

\subsection{Finding Sparsest Cut using SSE flow}

Before we delve into the long proof of existence of SSE flows, we
quickly show how they are useful in approximating {\sc sparsest cut}.
As mentioned, there are two methods for this.

\subsubsection{Method 1: Using Lasserre Hierarchy Relaxation}
\label{subsubsec:lasserreSSE}

This will use a modification of an idea of Guruswami-Sinop which we
now recall.  Recall (see \Cref{sec:overview-sdp}) that the solutions
for $r'+2$ rounds of Lasserre Hierarchy relaxation satisfies the
following property:
$$
\frac{\sum_{u<v} C_{uv} \|\xmat_u-\xmat_v\|^2}{\sum_u \|\xmat_u\|^2} =
\frac{\tr(X^TXL(G))}{\|X\|_F^2} = \phi_{SDP},
$$
where the approximation ratio is bounded by $(1-\frac{\|X_S^\perp
  X\|_F^2}{\|X\|_F^2})^{-1}$ over all sets $S$ of size $r'$
by~\Cref{thm:gs-round-or-fail}.

\begin{theorem}[Theorem 3.2 in \cite{gs11-exp}] \label{thm:GS13-2} Given
  positive integer $r \ge 0$ and positive real $\eps>0$, the above
  approximation ratio is upperbounded by $\left(1 - \frac{1}{1-\eps}
    \frac{\sum_{i> r} \sigma_i(\xmat^T
      \xmat)}{\|X\|_F^2}\right)^{-1}$ for $r' = \frac{r}{\eps}+r+1$.
\end{theorem}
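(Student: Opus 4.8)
By \Cref{thm:gs-round-or-fail}, the rounding produces a cut whose sparsity is at most $\phi_{SDP}(1-\gamma)^{-1}$ with $\gamma=\min_{S\in\binom{V}{r'}}\|X_S^\perp X\|_F^2/\|X\|_F^2$, so it suffices to exhibit \emph{some} set $S$ of size $r'=\tfrac r\eps+r+1$ with
$$\|X_S^\perp X\|_F^2 \le \frac{1}{1-\eps}\sum_{i>r}\sigma_i(X^TX).$$
Put $M=X^TX$, an $n\times n$ PSD matrix whose eigenvalues $\sigma_1\ge\cdots\ge\sigma_n\ge 0$ are the $\sigma_i(X^TX)$, so $\|X\|_F^2=\tr M$. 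The first step is the classical identity $\|X_S^\perp X\|_F^2=\tr(M/M_{SS})$, where $M/M_{SS}=M_{\bar S\bar S}-M_{\bar SS}M_{SS}^{-1}M_{S\bar S}$ is the Schur complement of the principal block $M_{SS}$; equivalently, this is what remains of $\|X\|_F^2$ after projecting all columns onto $\operatorname{span}\{X_u:u\in S\}$. If $\rank X\le r'$ we are done trivially (take $S$ to contain a maximal independent set of columns and pad it to size $r'$; the left side is then $0$), so assume $\rank X>r'$, making all the determinants below positive.

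The plan is to choose $S$ by the probabilistic method, sampling $S\in\binom{V}{r'}$ from the \emph{volume sampling} distribution $\Pr[S]\propto\det(M_{SS})$, and to evaluate the expected residual in closed form. Using $\det(M_{TT})=\det(M_{SS})\,(M/M_{SS})_{uu}$ for $T=S\cup\{u\}$, summing over $u\notin S$ to recover $\tr(M/M_{SS})$, then summing over $|S|=r'$ and recalling $\sum_{|T|=k}\det(M_{TT})=e_k(\sigma_1,\dots,\sigma_n)$ (the $k$-th elementary symmetric polynomial of the eigenvalues, i.e.\ the sum of $k\times k$ principal minors), one obtains
$$\sum_{|S|=r'}\det(M_{SS})\,\tr(M/M_{SS}) = (r'+1)\,e_{r'+1}(\sigma),\qquad \sum_{|S|=r'}\det(M_{SS})=e_{r'}(\sigma),$$
and therefore $\E_S\bigl[\|X_S^\perp X\|_F^2\bigr]=(r'+1)\,e_{r'+1}(\sigma)/e_{r'}(\sigma)$.

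It then remains to prove the elementary-symmetric-polynomial inequality $e_{r'+1}(\sigma)/e_{r'}(\sigma)\le\frac{1}{r'+1-r}\sum_{i>r}\sigma_i$. I would establish it by ``peeling off'' the top $r$ eigenvalues: writing $e_k(\sigma)=\sum_{j}A_jB_{k-j}$ with $A_j=e_j(\sigma_1,\dots,\sigma_r)$ (so $A_j=0$ for $j>r$) and $B_j=e_j(\sigma_{r+1},\dots,\sigma_n)$, it suffices to check $(r'+1-r)\,B_{m+1}\le B_1B_m$ for every $m\ge r'-r$, and this is immediate from Newton's inequalities applied to the $n-r$ numbers $\sigma_{r+1},\dots,\sigma_n$ (which give $(m+1)B_{m+1}\le B_1B_m$, and $r'+1-r\le m+1$). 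A one-line computation then gives $\frac{r'+1}{\,r'+1-r\,}=1+\frac{r}{r/\eps+2}\le 1+\frac{\eps}{1-\eps}=\frac{1}{1-\eps}$ for $r'=\tfrac r\eps+r+1$, so $\E_S[\|X_S^\perp X\|_F^2]\le\frac{1}{1-\eps}\sum_{i>r}\sigma_i$ and the desired $S$ exists. I expect the elementary-symmetric-polynomial step to carry the only real content; the Schur-complement and principal-minor manipulations are bookkeeping, and one could alternatively cite the corresponding volume-sampling bound of Deshpande, Rademacher, Vempala and Wang (and its refinements comparing to the best rank-$r$ error) and skip straight to the arithmetic check on $r'$.
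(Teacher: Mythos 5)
Your proof is correct and takes essentially the same route as the paper: the paper's proof is a one-line appeal to the column-based low-rank reconstruction (volume-sampling) bound of \cite{gs11-svd} combined with \Cref{thm:gs-round-or-fail}, and you simply reprove that cited bound in-line (volume sampling, $\E_S\|\xmat_S^\perp \xmat\|_F^2=(r'+1)e_{r'+1}(\sigma)/e_{r'}(\sigma)$, and the Newton-type inequality) before doing the same arithmetic on $r'$. The only cosmetic discrepancy is that the paper's sketch uses $r'=r/\eps+r-1$ while the statement has $r'=r/\eps+r+1$; your computation verifies the stated value, and both choices satisfy the needed bound $\frac{r'+1}{r'+1-r}\le\frac{1}{1-\eps}$.
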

\begin{proof} (Sketch) Using the column based low-rank matrix
  reconstruction error bound from~\cite{gs11-svd}, it can be shown
  that there exists set $S$ of size $r' = r/\epsilon + r - 1$ such
  that the numerator $\|X_S^\perp X\|_F^2 \le (1-\epsilon)^{-1}
  \sum_{j\ge r+1} \sigma_j(X^TX)$, where $\sigma_j(X^TX)$ is the
  $j^{th}$ largest eigenvalue of $X^TX$.
\end{proof}
In order to bound the sum of eigenvalues, the analysis
in~\cite{gs11-exp} uses von Neumann's trace inequality, which we
present in a slightly more general form:
\begin{proposition} \label{thm:lb-on-sigmar} For any matrix $Y\succeq
  0$ and positive integer $r$:
  \[
  \sum_{i\ge r+1} \sigma_i(Y) = 
  \min_{Z\succeq 0} \frac{\tr(Y \cdot Z)}{\lambda_{r+1}(Z)}.
  \]
\end{proposition}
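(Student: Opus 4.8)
The plan is to prove the identity
\[
\sum_{i\ge r+1}\sigma_i(Y)=\min_{Z\succeq 0}\frac{\tr(YZ)}{\lambda_{r+1}(Z)}
\]
by showing the two inequalities separately. For the direction $\ge$ (the minimum is at most the left side), I would simply exhibit a good $Z$: take $Z=\proj$, the orthogonal projection onto the span of the bottom $n-r$ eigenvectors of $Y$ (equivalently, the eigenvectors with the $n-r$ smallest eigenvalues). Then $\lambda_{r+1}(Z)=1$ (the $(r+1)$st largest eigenvalue of a rank-$(n-r)$ projection is $1$), and $\tr(YZ)=\sum_{i\ge r+1}\sigma_i(Y)$ since $Y$ and $Z$ are simultaneously diagonalizable in the eigenbasis of $Y$. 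Hence the right-hand minimum is $\le$ the left-hand side.

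For the direction $\le$ (every feasible $Z$ gives ratio at least the tail sum), fix any $Z\succeq 0$; by homogeneity we may scale so that $\lambda_{r+1}(Z)=1$, and we must show $\tr(YZ)\ge\sum_{i\ge r+1}\sigma_i(Y)$. The key observation is that $Z$ has at least $n-r$ eigenvalues that are $\ge 1$ (namely $\lambda_{r+1}(Z),\dots,\lambda_n(Z)$ in the nonincreasing-by-index convention, i.e.\ the $n-r$ smallest, which are all $\ge\lambda_{r+1}$... I will be careful with the indexing so that ``$\lambda_{r+1}$'' denotes the $(r+1)$st \emph{largest}, so the smallest $n-r$ eigenvalues are each $\le\lambda_{r+1}(Z)=1$, and the largest $r+1$ are each $\ge 1$). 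The cleanest route is to write $Z\succeq \proj_W$ where $W$ is \emph{any} subspace on which the Rayleigh quotient of $Z$ is $\ge 1$; but more robustly I would just invoke the standard trace/eigenvalue inequality: for $Y,Z\succeq 0$, $\tr(YZ)\ge\sum_i \sigma_i(Y)\,\sigma_{n+1-i}(Z)$ (the ``reverse'' von Neumann trace inequality, pairing largest with smallest), or equivalently apply Fan's inequality / Ruhe's trace inequality. Since the $n-r$ smallest eigenvalues of $Z$ may be small, that particular pairing is too weak; instead I would use: $\tr(YZ)\ge\sum_{i=1}^{n}\sigma_i(Y)\mu_i$ where $\mu_i$ are the eigenvalues of $Z$ arranged in the \emph{same} order as those of $Y$ is false in general — so the right tool is the Ky Fan--type bound $\tr(YZ)\ge \lambda_{r+1}(Z)\sum_{i\ge r+1}\sigma_i(Y)$, which follows because $Z\succeq \lambda_{r+1}(Z)\cdot \proj_{r+1}^{\perp}$ where $\proj_{r+1}^\perp$ projects onto the span of the bottom $n-r$ eigenvectors of $Z$... . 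Let me state the clean lemma I will actually use: if $Z\succeq 0$ and $\lambda_{r+1}(Z)\ge 1$, then $Z\succeq \proj$ for \emph{some} rank-$(n-r)$ projection $\proj$ is \emph{not} true, but $Z + (\text{something}) \succeq \proj$...

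The cleanest correct argument, which I would ultimately write up: by the Courant--Fischer / Ky Fan characterization, $\sum_{i\ge r+1}\sigma_i(Y)=\min_{\proj}\tr(Y\proj)$ where $\proj$ ranges over rank-$(n-r)$ orthogonal projections. So it suffices to show that for every $Z\succeq 0$ with $\lambda_{r+1}(Z)=1$ there is a rank-$(n-r)$ projection $\proj$ with $\tr(YZ)\ge\tr(Y\proj)$. Take $\proj$ to be the projection onto the span of the eigenvectors of $Z$ with the $n-r$ smallest eigenvalues; then on the complementary $(r+1)$-dimensional top eigenspace $Z\succeq 0=\proj$ there, while on $\proj$'s range $Z\succeq \lambda_{n}(Z)\cdot\proj$... this still needs $\lambda_n(Z)\ge1$ which can fail. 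The actual fix is to instead pick $\proj$ = projection onto the top $n-r$ eigenvectors of $Z$ \emph{no}—. The honest resolution: use $Z\succeq \proj_{\ge 1}$ where $\proj_{\ge1}$ projects onto the eigenspaces of $Z$ with eigenvalue $\ge1$, which has rank $\ge r+1$ (since $\lambda_{r+1}(Z)\ge1$); then $\tr(YZ)\ge\tr(Y\proj_{\ge1})\ge\sum_{i\ge r+1}\sigma_i(Y)$, where the last step is Ky Fan's minimum principle applied to the projection $\proj_{\ge1}$ of rank at least $r+1$, using that $\tr(Y\proj)\ge\sum_{i>r}\sigma_i(Y)$ for any projection of rank $\ge n-r$ — but $\proj_{\ge 1}$ has \emph{large} rank, not small rank, so I want its orthogonal complement. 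Correctly: the complement $\proj_{<1}=I-\proj_{\ge1}$ has rank $\le n-r-1<n-r$, so extend it to a rank-$(n-r)$ projection $\proj'\supseteq\proj_{<1}$; then $\tr(Y\proj')\ge\sum_{i\ge r+1}\sigma_i(Y)$ by Ky Fan, and separately one shows $\tr(YZ)\ge\tr(Y\proj_{\ge1})+\tr(Y(\proj'-\proj_{<1}))\cdot 0\ge\ldots$ — I will carefully assemble this so that the bottom-dimensions contribution of $Z$ is nonnegative and the top $(r{+}1)$-eigenspace gives the full tail sum.

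The main obstacle is exactly this indexing/projection bookkeeping in the $\le$ direction: making rigorous that a PSD matrix $Z$ with $\lambda_{r+1}(Z)=1$ "dominates" enough of the spectrum of $Y$ to force $\tr(YZ)\ge\sum_{i>r}\sigma_i(Y)$, without accidentally needing control of $Z$'s smallest eigenvalues. I expect the slickest writeup routes entirely through the variational principle $\sum_{i>r}\sigma_i(Y)=\min\{\tr(Y\proj):\proj^2=\proj=\proj^\t,\ \rank\proj=n-r\}$ together with the fact that if $\lambda_{r+1}(Z)=1$ then $Z$ can be written as $Z=\proj_{\ge1}^{(\ge1\text{-part, }\succeq \proj_{\ge1})} + (\text{PSD})$, and since $\rank\proj_{\ge1}\ge r+1$ its complement can be padded to a rank-$(n-r)$ projection witnessing the bound; the extra PSD part only increases $\tr(YZ)$. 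Everything else (the $\ge$ direction, homogeneity reduction, the Ky Fan principle itself) is routine and will be stated with minimal computation.
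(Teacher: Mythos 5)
There is a genuine gap, and it starts with the indexing convention you chose. In this paper $\sigma_i(Y)$ is the $i$-th \emph{largest} eigenvalue of $Y$ but $\lambda_{r+1}(Z)$ is the $(r+1)$-th \emph{smallest} eigenvalue of $Z$ (that is how the proposition is used: $Z$ is a graph or flow Laplacian and $\lambda_{r+1}(G)$, $\lambda_{r+1}(F)$ are the usual low-lying Laplacian eigenvalues). You explicitly fix the opposite convention, reading $\lambda_{r+1}(Z)$ as the $(r+1)$-st largest eigenvalue, and under that reading the identity is simply false: take $n=4$, $r=1$, $Y=\diag(3,1,1,1)$ and $Z=\diag(0,0,1,1)$; then the second-largest eigenvalue of $Z$ is $1$ while $\tr(YZ)=2<3=\sum_{i\ge 2}\sigma_i(Y)$. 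This is why your ``$\le$'' direction keeps collapsing: with your convention the normalization $\lambda_{r+1}(Z)=1$ only guarantees $r+1$ eigenvalues of $Z$ that are $\ge 1$, so the projection $\Pi_{\ge 1}$ you extract has rank $\ge r+1$, and Ky Fan then bounds $\tr(Y\Pi_{\ge 1})$ below only by the sum of the $r+1$ \emph{smallest} eigenvalues of $Y$, not by the tail sum of the $n-r$ smallest. The final patch you sketch (padding $\Pi_{<1}$ to a rank-$(n-r)$ projection $\Pi'$) does give $\tr(Y\Pi')\ge\sum_{i>r}\sigma_i(Y)$, but you have no inequality relating $\tr(YZ)$ to $\tr(Y\Pi')$, since $Z$ need not dominate $\Pi'$; as written the direction is a sequence of acknowledged false starts ending in ``I will carefully assemble this,'' i.e.\ it is not a proof.

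With the paper's convention the argument closes in one line, essentially along the route you were circling. Normalize $\lambda_{r+1}(Z)=1$ with $\lambda_{r+1}$ the $(r+1)$-th smallest eigenvalue; then $Z$ has at least $n-r$ eigenvalues $\ge 1$, so $Z\succeq \Pi$ where $\Pi$ is the projection onto the span of those eigenvectors, $\rank\Pi\ge n-r$. Since $Y\succeq 0$, $\tr(YZ)\ge\tr(Y\Pi)$, and Ky Fan's minimum principle gives $\tr(Y\Pi)\ge\sum_{i\ge r+1}\sigma_i(Y)$ for any projection of rank at least $n-r$. Your ``$\ge$'' direction (testing with the projection onto the bottom $n-r$ eigenvectors of $Y$) is correct and is the natural choice of $Z$; note that under the correct convention its $(r+1)$-th smallest eigenvalue is $1$ unconditionally, whereas your justification via the $(r+1)$-st largest eigenvalue would additionally require $n\ge 2r+1$. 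For calibration: the paper states this proposition without proof, as a mild generalization of the von Neumann/Ky Fan trace inequality invoked in the Guruswami--Sinop analysis, so the intended argument is exactly this two-line variational one.
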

In the original analysis of~\cite{gs11-exp}, this claim is used with
$Y\gets \xmat^T \xmat$ and $Z\gets L(G)$, whereupon one obtains:
\[
\frac{\sum_{i> r} \sigma_i(\xmat^T \xmat)}{\|X\|_F^2} \le \frac{\tr(\xmat^T \xmat\cdot
  L(G))}{\lambda_{r+1}(G)\|X\|_F^2} \le \frac{\phi_{SDP}}{\lambda_{r+1}(G)}.
\]
Thus 
\begin{align}
\frac{\sum_{i\le r} \sigma_i(\xmat^T \xmat)}{\|X\|_F^2} \ge
1-\frac{\phi_{SDP}}{\lambda_{r+1}(G)}.  
\label{eq:4}
\end{align}
Consequently, the rounding analysis in~\cite{gs11-exp} requires a
bound on the $\lambda_{r+1}$ value of the graph.

Our idea is to use~\Cref{thm:lb-on-sigmar} by substituting the
Laplacian of the spectral SSE flow as $Z$ in the above calculation,
and then use the lowerbound on the $\lambda_r$ value of this flow
Laplacian. This uses the following easy lemma.
%
%
%
%
\begin{lemma}
  \label{thm:sigma-r-lb-by-flows} If $\xmat$ is described above, then for
  for any flow $ F$ that lies in the host graph  $G$:
  \[
  \frac{\sum_{i> r} \sigma_i(\xmat^T \xmat)}{\|X\|_F^2} \le \frac{\phi_{SDP}}{\lambda_{r+1}(F)}.
  \]
\end{lemma}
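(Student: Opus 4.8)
The plan is to invoke \Cref{thm:lb-on-sigmar} with $Y \gets \xmat^T\xmat$, but instead of substituting the host-graph Laplacian $L(G)$ for $Z$, we substitute $L(F)$, the Laplacian of the (weighted) demand graph of the flow $F$. Since $L(F)\succeq 0$, the Proposition immediately gives
\[
\sum_{i> r}\sigma_i(\xmat^T\xmat) \;\le\; \frac{\tr(\xmat^T\xmat\cdot L(F))}{\lambda_{r+1}(L(F))} \;=\; \frac{\tr(\xmat^T\xmat\cdot L(F))}{\lambda_{r+1}(F)},
\]
so the whole statement reduces to the single claim that $\tr(\xmat^T\xmat\cdot L(F)) \le \phi_{SDP}\,\|X\|_F^2$, after which we divide through by $\|X\|_F^2$.

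To prove this claim I would first expand the quadratic form as $\tr(\xmat^T\xmat\cdot L(F)) = \sum_{u<v} F_{uv}\,\|\xmat_u - \xmat_v\|^2$, where $F_{uv}$ denotes the total amount of flow routed between $u$ and $v$. The point of requiring that $F$ \emph{lies in the host graph} $G$ is that every unit of this flow travels along some path $P$ in $G$, and the total flow across any edge $e$ of $G$ does not exceed its capacity $c_e$ (congestion at most $1$). Because $\xmat \in \ell_2^2$, the squared distances $\|\xmat_a-\xmat_b\|^2$ satisfy the triangle inequality, so they may be chained along a routing path: $\|\xmat_u - \xmat_v\|^2 \le \sum_{(a,b)\in P}\|\xmat_a - \xmat_b\|^2$. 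Summing this over all flow paths and collecting, for each edge $e=(a,b)$, the at-most-$c_e$ units of flow crossing it gives
\[
\sum_{u<v} F_{uv}\,\|\xmat_u - \xmat_v\|^2 \;\le\; \sum_{(a,b)\in E} c_e\,\|\xmat_a - \xmat_b\|^2 \;=\; \tr(\xmat^T\xmat\cdot L(G)).
\]

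It then remains to recall, from the defining property of the $r'$-round Lasserre solution in \Cref{subsec:lasserreprelim}, that $\tr(\xmat^T\xmat\cdot L(G)) = \sum_{u<v} C_{uv}\|\xmat_u - \xmat_v\|^2 \le \phi_{SDP}\,\|X\|_F^2$. Combining the three displays proves the claim and hence the lemma. The step requiring the most care is the routing argument: one must pin down precisely what ``$F$ lies in $G$'' means (a fractional routing of the demands with edge congestion at most $1$) and verify that the $\ell_2^2$ triangle inequality can legitimately be applied edge-by-edge along routing paths — this is exactly where it matters that the Lasserre demand vectors induce a genuine metric (in fact an $\ell_1$ metric). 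The eigenvalue bound itself is then obtained for free from \Cref{thm:lb-on-sigmar} with no loss, which is what makes this lemma the right replacement for the bound~\eqref{eq:4} that previously forced a lower bound on $\lambda_{r+1}(G)$.
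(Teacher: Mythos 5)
Your proposal is correct and is essentially the paper's own proof: both apply \Cref{thm:lb-on-sigmar} with $Y\gets \xmat^T\xmat$ and $Z\gets L(F)$ and reduce everything to $\tr(\xmat^T\xmat\cdot L(F))\le \tr(\xmat^T\xmat\cdot L(G))\le \phi_{SDP}\|X\|_F^2$, which the paper justifies in one line by ``$F$ is routable in $G$ and $\xmat\in\ell_2^2$'' and you spell out via path-chaining of the $\ell_2^2$ triangle inequality and the congestion-at-most-one capacity bound. Only your parenthetical that the demand vectors induce an $\ell_1$ metric is an unneeded (and unsupported) aside; the $\ell_2^2$ triangle inequality is all the argument uses.
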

\begin{proof}
  Since $F$ is routable in $G$ and $\xmat\in\ell^2_2$, we have:
  \[ \tr(\xmat^T \xmat \cdot L(F)) \le \tr(\xmat^T \xmat\cdot L(G))
  \le \phi_{SDP} \|X\|_F^2.\] Choosing $Y\gets \xmat^T \xmat$ and $Z\gets
  L(F)$, we see that the Claim implies:
  \[
  \frac{\sum_{i> r} \sigma_i(\xmat^T \xmat)}{\|X\|_F^2} \le
  \frac{\tr(\xmat^T \xmat \cdot L(F))}{\lambda_{r+1}(F)\|X\|_F^2} \le
  \frac{\phi_{SDP}}{\lambda_{r+1}(F)}.\tag*{\qedhere}
  \] 
\end{proof}
Now \Cref{thm:SSE1} follows using \Cref{thm:sigma-r-lb-by-flows}~and~\cref{eq:4}.

\medskip \noindent{\bf Remark:} Note that we only need
$\lambda_{r+1}(F)$ to be more than $\phi_{SDP}$. Such flows could
potentially exist under more general conditions than our local
expansion condition.



%
\subsubsection{Method 2: Using Subspace Enumeration and Cut
  Improvement} 
We show that given a $(r,d,\lambda)$ spectral SSE flow, where $d\lambda$ is much
larger than the expansion $\Phi$ of sparsest cut, it is possible to
use eigenspace enumeration idea of~\cite{ABS} together with the ideas of~\cite{al08} to
get a good approximation to {\sc sparsest cut.}

\begin{lemma}[Eigenspace Enumeration, \cite{ABS}]
  \label{lem:eigenenumeration}
  There is a $2^{O(r)}n^{O(1)}$ time algorithm that, given a graph
  whose $r^{th}$ smallest eigenvalue (of Laplacian) is $\lambda_r \ge
  20\Phi/\epsilon$, outputs a set of subsets $X\subset \{0,1\}^V$ with
  the following guarantee: for every subset $S$ that has expansion
  $\Phi$, there is a vector $x\in X$ such that
  $$
  \frac{|x-\vec{1}_S|}{|\vec{1}_S|} \le \frac{8\Phi}{\lambda_r}.
  $$
\end{lemma}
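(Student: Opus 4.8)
The plan is to reduce the statement to a net enumeration over the low-frequency eigenspace of the Laplacian, exactly as in~\cite{ABS}. Write $0=\lambda_1\le\dots\le\lambda_n$ for the eigenvalues of $L$ with orthonormal eigenvectors $v_1,\dots,v_n$, and let $V=\mathrm{span}\{v_1,\dots,v_r\}$, an $r$-dimensional subspace computable in $\mathrm{poly}(n)$ time. The first step is the spectral-containment claim: for any $S$ with $\Phi(S)=\Phi$ (take $|S|\le n/2$), writing $\vec1_S=\sum_i c_i v_i$ we have $\sum_i c_i^2=|S|$ and $\sum_i\lambda_i c_i^2=\vec1_S^T L\vec1_S=E(S,\bar S)=\Phi|S|$, so the part of $\vec1_S$ orthogonal to $V$ satisfies $\sum_{i>r}c_i^2\le\frac1{\lambda_{r+1}}\sum_{i>r}\lambda_i c_i^2\le\frac{\Phi|S|}{\lambda_r}$. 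Hence, with $P$ the orthogonal projection onto $V$, we get $\|\vec1_S-P\vec1_S\|^2\le\frac{\Phi}{\lambda_r}\|\vec1_S\|^2$ and $\|P\vec1_S\|^2\ge(1-\Phi/\lambda_r)|S|$.

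Next, the algorithm builds a $\rho$-net $N$ of the unit ball of $V$, where $\rho$ is a small constant chosen as a function of $\epsilon$; since $\dim V\le r$ this net has size $(3/\rho)^r=2^{O(r)}$. For each $w\in N$, the algorithm sorts the coordinates of $w$ and adds to the output list $X$ every sweep set $\{u:w_u\ge\tau\}$ as $\tau$ ranges over coordinate values (at most $n$ of them). This produces $|X|\le 2^{O(r)}\cdot n$ sets, and the running time --- eigendecomposition, net construction, and sweeping --- is $2^{O(r)}\mathrm{poly}(n)$.

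For correctness, fix $S$ with expansion $\Phi$ and set $z=P\vec1_S$; by the first step $\hat z:=z/\|z\|$ is a unit vector in $V$, so some $w\in N$ has $\|w-\hat z\|\le\rho$. Since $\|z\|\in[\sqrt{|S|}(1-\Phi/\lambda_r),\sqrt{|S|}]$, rescaling $w$ by $\sqrt{|S|}$ (the resulting set $\{u:(\sqrt{|S|}\,w)_u\ge\tfrac12\}$ is one of the swept sets of $w$) and two triangle inequalities give $\|\sqrt{|S|}\,w-\vec1_S\|^2\le O(\rho^2+\Phi/\lambda_r)\,|S|$. Because $\vec1_S$ is $\{0,1\}$-valued, each coordinate on which the sweep set $x=\{u:(\sqrt{|S|}\,w)_u\ge\tfrac12\}$ disagrees with $S$ contributes at least $\tfrac14$ to this squared distance, so $|x\triangle S|\le 4\|\sqrt{|S|}\,w-\vec1_S\|^2\le O(\rho^2+\Phi/\lambda_r)\,|S|$; choosing $\rho$ small enough in terms of $\epsilon$ (using $\Phi/\lambda_r\le\epsilon/20$ to absorb the $\rho^2$ term) yields $|x-\vec1_S|=|x\triangle S|\le\frac{8\Phi}{\lambda_r}|S|=\frac{8\Phi}{\lambda_r}|\vec1_S|$, with $x\in X$ as required.

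The step I expect to be the main obstacle is making the final error bound genuinely \emph{linear} in $\Phi/\lambda_r$ rather than in $\sqrt{\Phi/\lambda_r}$: the net contributes an additive $O(\rho^2)$ (plus cross terms), and a constant-size net only gives $O(\sqrt{\Phi/\lambda_r})$. This is precisely why the hypothesis $\lambda_r\ge 20\Phi/\epsilon$ is present --- it caps $\Phi/\lambda_r$ by $\epsilon/20$, so a fixed $\rho=\rho(\epsilon)$ simultaneously keeps $|X|=2^{O(r)}\mathrm{poly}(n)$ and makes $\rho^2=O(\Phi/\lambda_r)$. The other delicate point, converting $\ell_2$-closeness of the scaled net point into a symmetric-difference bound with the stated constant, is the standard threshold/sweep argument and needs only careful bookkeeping of the unknown scale $\|z\|\approx\sqrt{|S|}$.
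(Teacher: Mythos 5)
Your overall route --- show $\vec{1}_S$ has small mass outside the span of the bottom $r$ eigenvectors, enumerate a net of that span, and sweep/threshold each net point --- is exactly the standard subspace-enumeration argument this lemma is importing from \cite{ABS} (the paper itself gives no proof, it just cites it), and most of your steps are fine: the spectral containment bound $\sum_{i>r}c_i^2\le(\Phi/\lambda_r)|S|$, the observation that the unknown scale $\sqrt{|S|}$ is handled automatically because every threshold set of $\sqrt{|S|}\,w$ is already one of the sweep sets of $w$, and the factor-$4$ conversion from squared $\ell_2$ distance to symmetric difference are all correct.

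The gap is in the very step you flagged, and your resolution of it is backwards. The hypothesis $\lambda_r\ge 20\Phi/\epsilon$ gives an \emph{upper} bound $\Phi/\lambda_r\le\epsilon/20$, but to absorb the net error $O(\rho^2)$ into the target $8\Phi/\lambda_r$ you need a \emph{lower} bound on $\Phi/\lambda_r$ in terms of $\rho$; nothing prevents $\Phi/\lambda_r$ from being far smaller than any fixed $\rho(\epsilon)^2$ (e.g.\ $\Phi/\lambda_r=1/n$), in which case a $\rho$-net genuinely cannot localize $\vec{1}_S$ to within $8(\Phi/\lambda_r)|S|$: perturbing $\hat z$ by $\rho$ in $\ell_2$ can move $\Theta(\rho^2|S|)$ coordinates across the threshold, so your construction only guarantees $|x\triangle S|\le O(\rho^2+\Phi/\lambda_r)\,|S|$, i.e.\ $O(\epsilon)|S|$ after choosing $\rho=\rho(\epsilon)$, not a bound linear in $\Phi/\lambda_r$. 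Getting the literal conclusion by your method would force $\rho\lesssim\sqrt{\Phi/\lambda_r}$ and hence a list of size $(\lambda_r/\Phi)^{\Theta(r)}$, which is not $2^{O(r)}\mathrm{poly}(n)$. So what you have proved is a weaker statement: for every $S$ with $\Phi(S)\le\epsilon\lambda_r/20$ the list contains some $x$ with $|x\triangle S|\le O(\epsilon)|S|$, with list size $(1/\epsilon)^{O(r)}n^{O(1)}$. It is worth noting that this weaker form is in fact all the paper uses downstream (in \Cref{lem:improvecut} one only needs a set $\epsilon\delta/2$-close to $S$ under the assumption $\Phi(S)\ll d\lambda\epsilon\delta$, so a net with granularity depending on $\epsilon\delta$ suffices), but as a proof of the lemma as stated --- with the $8\Phi/\lambda_r$ precision --- the final step does not go through.
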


The above eigenspace enumeration allows us to compute a ``guess'' that
has low symmetric difference with 
the optimum cut. Then we can use a simple version of
cut improvement algorithm of~\cite{al08} to improve it to a cut of low
expansion.
\begin{lemma}
  \label{lem:improvecut}
  There is a $2^{O(r)}n^{O(1)}$ time algorithm that given a graph
  $G=(V,E)$, and a $(r,d,\lambda)$ spectral SSE flow embeddable in
  $G$, enumerates $2^{O(r)}n^{O(1)}$ sets with the following
  guarantee.  For any set $S$ of size at most $n/2$ that has expansion
  $\Phi(S) \ll d\lambda \epsilon \delta$ (for $\epsilon+ \delta < 1$),
  there is a set $T$ in the output such that $\frac{|T\Delta S|}{|S|}
  \le \delta$ and $\Phi(T)\le (1+\epsilon)\Phi(S)$ ($\Delta$ denotes
  symmetric difference).
\end{lemma}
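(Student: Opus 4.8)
The plan is to combine the eigenspace enumeration of \Cref{lem:eigenenumeration} with a cut-improvement subroutine in the spirit of Andersen--Lang \cite{al08}. First, observe that since the $(r,d,\lambda)$-spectral SSE flow $F$ is embeddable in $G$, the $r$th smallest eigenvalue of $L(G)$ dominates that of $L(F)$, hence $\lambda_r(G) \ge d\lambda$. Thus under the hypothesis $\Phi(S) \ll d\lambda\epsilon\delta \le \lambda_r(G)\epsilon\delta$, \Cref{lem:eigenenumeration} applies (with its $\epsilon$ set to a small constant), and among the $2^{O(r)}n^{O(1)}$ enumerated vectors there is one $x = \vec 1_{S_0}$ with $|S_0 \Delta S|/|S| \le 8\Phi(S)/\lambda_r(G) \ll \epsilon\delta$; in particular $S_0$ already has small symmetric difference with $S$, but its expansion is not yet controlled.

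The second step is the cut-improvement step. For each enumerated ``seed'' set $S_0$, I would run a local cut-improvement procedure: solve the minimum-cut / maximum-flow problem that, given $S_0$, finds the set $T$ minimizing a penalized cut value of the form $E(T, V\setminus T) - \mu\,|T \cap S_0| + \mu'\,|T \setminus S_0|$ for an appropriate multiplier (equivalently, the Andersen--Lang quotient-improvement LP, which is exactly solvable by a parametric max-flow computation). The key properties I need from this subroutine are: (i) it runs in $n^{O(1)}$ time; (ii) if $S$ is a set of expansion $\Phi(S)$ with $|S_0 \Delta S| \le \delta'|S|$ for a suitably small $\delta'$, then the output $T$ satisfies $\Phi(T) \le (1+\epsilon)\Phi(S)$ and $|T \Delta S|/|S| \le \delta$. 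Property (ii) is the standard guarantee of cut-improvement algorithms: because $S$ itself is a feasible candidate with small cut and large overlap with the seed, the optimum $T$ of the penalized objective cannot be much worse than $S$, and a charging argument bounds both its expansion and its distance to $S$ in terms of $\Phi(S)/(d\lambda)$ and $\delta'$. The output of the lemma is the union over all seeds $S_0$ of the sets $T$ produced this way, so the total count is $2^{O(r)}n^{O(1)}$.

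The main obstacle I anticipate is tuning the parameters so the two stages chain together cleanly: the eigenspace enumeration gives symmetric-difference error $O(\Phi(S)/\lambda_r)$, and the cut-improvement step needs this to be below the threshold $\delta'$ at which its guarantee kicks in, while simultaneously the final guarantee must read $\delta$ and $(1+\epsilon)$ rather than absolute constants. This is why the hypothesis is stated as $\Phi(S) \ll d\lambda\epsilon\delta$ rather than merely $\Phi(S) \ll d\lambda$: the $\epsilon\delta$ slack is exactly what is consumed in passing through both steps. A secondary technical point is handling the size restriction $|S|\le n/2$ and making sure the cut-improvement LP is seeded and normalized so that it does not drift toward the complement; this is routine but must be stated. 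I would also remark that one can take the seed from \Cref{lem:eigenenumeration} directly rather than re-deriving it, so the proof is essentially a one-paragraph reduction once the Andersen--Lang subroutine is quoted with the right quantitative guarantee.
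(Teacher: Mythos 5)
Your two-stage plan (eigenspace enumeration to get a seed with small symmetric difference, then an Andersen--Lang-style max-flow improvement) is the same architecture as the paper's proof, and your second stage is essentially identical to it: the paper's source/sink construction with penalty capacities $4\Phi(S)/\delta$ attached according to membership in the seed is exactly the penalized cut you describe, with $\Phi(S)$ guessed up to a multiplicative factor. But your first stage has a genuine gap. You justify running \Cref{lem:eigenenumeration} on $G$ by claiming that embeddability of the flow $F$ in $G$ gives $\lambda_r(G)\ge\lambda_r(L(F))\ge d\lambda$. This is false: routability only implies $\sum_{ij}\delta_{ij}\,\mathrm{dist}(i,j)\le\sum_e c_e\,\mathrm{dist}(e)$ for distances satisfying the triangle inequality, and the squared line-embedding distances $(x_i-x_j)^2$ arising in a Rayleigh quotient do not satisfy it; that is precisely why \Cref{thm:sigma-r-lb-by-flows} needs $\xmat\in\ell_2^2$ and compares traces rather than eigenvalues. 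A concrete counterexample: in the $n$-cycle one can route a flow whose demand graph is a disjoint union of $r$ expanders of size $n/r$ and degree $\Theta(r/n)$, so $\lambda_{2r}(L(F))=\Theta(r/n)$ while $\lambda_{2r}(G)=\Theta((r/n)^2)\ll r/n$. Worse, if your claim were true the lemma would be pointless, since the whole setting of the paper is graphs (grids, cycles) where $\lambda_r(G)$ is far below the local expansion certified by the flow.

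The fix is the paper's actual move: apply \Cref{lem:eigenenumeration} to the \emph{flow graph} $F$ itself, whose Laplacian has $\lambda_r\ge d\lambda$ by the definition of a spectral SSE flow, and use embeddability only for the cut bound --- the flow crossing $S$ is at most $E_G(S,V\setminus S)=\Phi(S)|S|$, so $S$ has expansion $O(\Phi(S)/d)$ in $F$ and the enumeration yields a seed $T$ with $|T\Delta S|/|S|\le O(\Phi(S)/(d\lambda))\ll\epsilon\delta$. With that correction, your second paragraph goes through as you sketched (the charging argument you defer to Andersen--Lang is the short explicit computation in the paper: the cut $\{s\}\cup S$ certifies a min-cut value of at most $(1+O(\epsilon))\Phi(S)|S|$, each vertex of the output's symmetric difference with the seed pays $4\Phi(S)/\delta$, and the expansion bound follows).
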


\begin{proof}
  The capacity of flow that crosses $S$ in the spectral SSE flow can
  only be smaller than $\Phi(S)\cdot |S|$ because the flow is
  embeddable in $G$. Hence when we apply
  \Cref{lem:eigenenumeration} on the flow, we know there is a
  vector $\vec{1}_T$ in $X$ that is $\epsilon\delta/2$ close to the
  indicator vector of $S$.

  Using this vector, suppose we know the expansion $\Phi(S)$ (later we
  shall see we only need to know this value up to multiplicative
  factor, so the algorithm will enumerate all possible
  values). Construct a single commodity flow instance where we add a
  source $s$ and sink $t$ to the graph. For each vertex $i\in T$,
  there is an edge from $i$ to sink $t$ with capacity
  $4\Phi(S)/\delta$. For each vertex $i\not\in T$, there is an edge
  from source $s$ to $i$ with capacity $4\Phi(S)/\delta$.

  Now we find the min-cut that separates source $s$ and sink
  $t$. Since $T$ is close to $S$, we know the capacity of this cut is
  at most $(1+\epsilon/2)\Phi(S)|S|$ because $\{s\}\cup S$ achieves
  this capacity. Let the vertices that are on the same side with sink
  be $Q$, then we know $|Q\Delta T| \le
  \frac{(1+\epsilon/2)\Phi(S)|S|}{4\Phi(S)/\delta} \le
  |S|\delta/2$. Therefore $\frac{|Q\Delta S|}{|S|} \le \frac{|Q\Delta
    T| + |T\Delta S|}{|S|} \le \delta$.

  On the other hand, the expansion of $Q$ is at most
  $$\frac{(1+\epsilon/2)\Phi(S)|S| - |Q\Delta T|\cdot
    4\Phi(S)/\delta}{|S| - |S\Delta T| - |Q\Delta T|} 
  = \frac{(1+\epsilon/2)\Phi(S) -
    4x\Phi(S)/\delta}{(1-\epsilon\delta/2) - x} 
  \le (1+\epsilon)\Phi(S).$$
  (where in the second step, we substituted $x \triangleq \frac{|Q\Delta T|}{|S|}$).
\end{proof}

\begin{corollary}
  Given graph $G = (V,E)$ and a $(r,d,\lambda)$ spectral SSE flow
  embeddable in $G$. There is a $2^{O(r)}n^{O(1)}$ time algorithm that:
  \begin{itemize}
  \item Finds a set $S$ with $\phi(S) \le
    (1+O(\epsilon))\phi_{sparsest}$ if $d\lambda \gg
    \phi_{sparsest}/\epsilon^2$;
  \item Finds a set $S$ with $\Phi(S) \le
    (1+O(\epsilon))\Phi_{global}$ if $d\lambda \gg
    \Phi_{global}/\epsilon$;
  \item Finds a set $S$ of size at least $cn/2$ such that $\Phi(S) \le
    (1+O(\epsilon))\Phi_{c\mbox{-balanced}}$ if $d\lambda
    \gg\Phi_{c\mbox{-balanced}}/c\epsilon$.
  \end{itemize}
\end{corollary}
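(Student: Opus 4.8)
The plan is to derive all three items from \Cref{lem:improvecut} by applying it to the relevant optimum set and then returning the best set in the $2^{O(r)}n^{O(1)}$-element output list. Fix the target accuracy $\epsilon$ and a second parameter $\delta$ to be chosen per item, and run the algorithm of \Cref{lem:improvecut} (which internally relies on the eigenspace enumeration of \Cref{lem:eigenenumeration} applied to the given spectral SSE flow) on $G$ with that $\delta$; this produces a list $\mathcal{L}$ of $2^{O(r)}n^{O(1)}$ candidate sets in the stated running time. The only postprocessing is to select from $\mathcal{L}$, so the total time is still $2^{O(r)}n^{O(1)}$.

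For the sparsest cut, let $S^\star$ be an optimal sparsest cut with $|S^\star|\le n/2$, so $\Phi(S^\star)\le \phi(S^\star)=\phi_{sparsest}$. The hypothesis $d\lambda\gg \phi_{sparsest}/\epsilon^2$ lets us take $\delta=\Theta(\epsilon)$ with $\Phi(S^\star)\ll d\lambda\,\epsilon\delta$ and $\epsilon+\delta<1$, so \Cref{lem:improvecut} yields $T\in\mathcal{L}$ with $|T\triangle S^\star|\le \delta|S^\star|$ and $\Phi(T)\le (1+\epsilon)\Phi(S^\star)$; we output the minimum-sparsity set of $\mathcal{L}$. For global edge expansion, let $S^\star$ minimize $\Phi$ with $|S^\star|\le n/2$ and $\Phi(S^\star)=\Phi_{global}$; taking $\delta$ to be any fixed constant below $1$, the hypothesis $d\lambda\gg\Phi_{global}/\epsilon$ gives $\Phi(S^\star)\ll d\lambda\,\epsilon\delta$, so \Cref{lem:improvecut} produces $T\in\mathcal{L}$ with $\Phi(T)\le (1+\epsilon)\Phi_{global}$, and we output the minimum-expansion set of $\mathcal{L}$. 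For the $c$-balanced separator, let $U$ be the smaller side of an optimal $c$-balanced separator, so $|U|\le n/2$ and $\Phi(U)=\Phi_{c\text{-balanced}}$; choose $\delta=c/2$, so the hypothesis $d\lambda\gg\Phi_{c\text{-balanced}}/(c\epsilon)$ gives $\Phi(U)\ll d\lambda\,\epsilon\delta$, and we obtain $T\in\mathcal{L}$ with $|T\triangle U|\le (c/2)|U|$ and $\Phi(T)\le (1+\epsilon)\Phi_{c\text{-balanced}}$. One of $T,\bar T$ has size at least $cn/2$ (using $cn\le |S^\star|$, $|\bar U|\le (1-c)n$, and $\delta=c/2$), and both have the same expansion $\Phi$, so we output that one; more simply, we return the minimum-expansion set of $\mathcal{L}\cup\{\bar T:T\in\mathcal{L}\}$ of size at least $cn/2$.

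The routine but essential bookkeeping—and the only place anything can go slightly wrong—is the sparsest-cut case: \Cref{lem:improvecut} controls $\Phi(T)$ whereas the claim is about $\phi(T)$, and $T$ may marginally overshoot size $n/2$. Here one uses $\phi(W)=\Phi(W)\cdot n/\max\{|W|,n-|W|\}$ together with $|T\triangle S^\star|\le\delta|S^\star|\le\delta(n-|S^\star|)$ (valid since $|S^\star|\le n/2$), which forces the larger side of $T$ to have size at least $(1-\delta)(n-|S^\star|)$ and hence $\phi(T)\le \frac{1+\epsilon}{1-\delta}\phi(S^\star)=(1+O(\epsilon))\phi_{sparsest}$; this is exactly where the small-symmetric-difference guarantee, rather than just the expansion guarantee, of \Cref{lem:improvecut} is needed. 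One should also verify that it suffices to run \Cref{lem:improvecut} once with the $\delta$ appropriate to each item (a constant number of runs overall), since $c$ is part of the input and \Cref{lem:improvecut} already enumerates the candidate values of $\Phi(S^\star)$, so no additional guessing is required.
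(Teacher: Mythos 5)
Your proposal is correct and follows essentially the same route as the paper, whose proof is just a sketch specifying the same choices of $\delta$ (roughly $\epsilon$ for sparsest cut, a constant for edge expansion, $c/2$ for the $c$-balanced case) in \Cref{lem:improvecut} and outputting the best enumerated set. Your extra bookkeeping—converting the expansion guarantee on $T$ into a sparsity guarantee via the symmetric-difference bound, and checking $|T|\ge cn/2$—fills in details the paper leaves implicit, and it is accurate.
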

\begin{proof} (sketch) For sparsest cut, choose $\delta = \epsilon$ in
  \Cref{lem:improvecut}. For edge expansion, choose $\delta =
  1/2$. For $c$-balanced separator, choose $\delta = c/2$.
\end{proof}
%

%
\section{Conclusions}
The fact that it is possible to compute $(1+\epsilon)$-approximation
for {\sc sparsest cut} on an interesting family of graphs seems very
surprising to us.  Further study of Guruswami-Sinop rounding also
seems promising: our analysis is still not using the full power of
their theorem.

Our work naturally leads us to the following imprecise conjecture,
which if true would yield immediate progress.

\medskip

\noindent {\bf Conjecture:} (Imprecise) {\em In ``interesting''
  families of graphs ---ie those where existing algorithms for {\sc
    sparsest cut} fail--- $\Phi_{local}/\Phi_{global}$ is large, say
  $\gg \sqrt{\log n}$.}

As support for this conjecture we observe that if our algorithm does
not beat $\sqrt{\log n}$-approximation on some graph, then there is a
constant $r$ and a set of size $n/r$ whose expansion is at least
$\sqrt{\log n}$ times the optimum.

Furthermore, it is conceivable that SSE flows exist in graphs even
when the local expansion condition is not met. For our analysis of the
rounding algorithm from~\cite{gs11-exp} we only need the existence of
an SSE flow of degree say $> 1.1 \phi_{sparsest}$
(see~\Cref{subsubsec:lasserreSSE}). Conceivably such flows exist in a
wider family of graphs, and this could be another avenue for progress.
\section*{Acknowledgements}
We gratefully acknowledge helpful discussions with Venkat Guruswami,
Jon Kelner, Ravi Krishnaswamy, Assaf Naor and David Steurer. This work
was funded by grants from the NSF and the Simons Foundation.



\bibliographystyle{alpha}
\bibliography{references}

\appendix


 




\section{Overview of Lasserre Hierarchy Relaxation for Sparsest Cut
  and Rounding}
\label{sec:overview-sdp}
In this section, we will give a brief description of Lasserre
Hierarchy relaxation for Uniform Sparsest Cut problem, the rounding
algorithm of~\cite{gs11-exp} and its analysis. 

We present the formal definitions of Lasserre Hierarchy
relaxations~\cite{Las02}, tailored to the setting of the problems we
are interested in, where the goal is to assign to each node in $V$ a
label from $\{0,1\}$.
\def\dim{\Upsilon}
\begin{definition}[Lasserre vector set]
\label{def:las-sdp}
Given a set of variables $V$ and a positive integer $r$, a collection
of vectors $\xvec$ is said to satisfy $r$-rounds of Lasserre
Hierarchy, denoted by $\xvec \in
\lasserreii{r}{V}$, 
if it satisfies the following conditions:
\begin{enumerate}
\item For each set $S\in\binom{V}{\le r+1}$, there exists a function
  $\xvec_S:\{0,1\}^{S}\to \R^\dim$ that associates a vector of some
  finite dimension $\dim$ with each possible labeling of $S$.  We use
  $\xvec_S(f)$ to denote the vector associated with the labeling $f\in
  \{0,1\}^S$.
  For singletons $u\in V$, we will use $\xvec_u$ and $\xvec_u(1)$
  interchangeably.
  For $f \in \{0,1\}^S$ and $v\in S$, we use $f(v)$ as the label $v$
  receives from $f$.  Also given sets $S$ with labeling $f\in
  \{0,1\}^S$ and $T$ with labeling $g\in\{0,1\}^T$ such that $f$ and
  $g$ agree on $S\cap T$, we use $f\circ g$ to denote the labeling of
  $S\cup T$ consistent with $f$ and $g$: If $u\in S$, $(f\circ g)(u) =
  f(u)$ and vice versa.
\item $\|\xvec_{\es}\|^2 = 1$.
\item $\langle \xvec_S(f), \xvec_T(g)\rangle = 0$ if there exists
  $u\in S\cap T$ such that $f(u)\neq g(u)$.
\item \label{def:las-sdp:consistent} $\langle \xvec_S(f),
  \xvec_T(g)\rangle = \langle \xvec_A(f'), \xvec_B(g')\rangle$ if
  $S\cup T=A\cup B$ and $f\circ g = f'\circ g'$.
\item For any $u\in V$, $\sum_{j\in \{0,1\}} \|\xvec_u(j)\|^2 =
  \|\xvec_\es\|^2$.
\item (implied by above constraints) For any $S\in\binom{V}{\le r+1}$,
  $u\in S$ and $f\in \{0,1\}^{S\setminus\{u\}}$, $\sum_{g\in
    \{0,1\}^u} \xvec_{S}(f\circ g) = \xvec_{S\setminus\{u\}} (f)$.
\end{enumerate}
\end{definition}
One can view $\|\xvec_{S}(f)\|^2$ as the ``probability'' of $f$, in
which case the corresponding ``conditional'' probabilities are given
by $\frac{ \langle \xvec_S(f), \xvec_u \rangle }{\|\xvec_S(f)\|^2}$.
Our relaxation is the following:
\begin{equation}
\label{eq:sc-sdp-alg}
\begin{array}{rll}
  \min_{\mu,\xvec} & \frac{1}{n \mu (1-\mu)} \sum_{u<v} C_{u,v} \| \xvec_u - \xvec_v \|^2 \\
  \mathrm{st} & \frac{1}{n} \sum_u \xvec_u = \mu \xvec_{\es}, \\
  & \|\xvec_{\es}\|^2 = 1,\quad \xvec \in \lasserreii{r'+2}{V},\quad
  \mu \in \{1/n,2/n,\ldots,1/2\}.
\end{array}
\end{equation} 
%
%
Note that we can easily eliminate the variable $\mu$
from~\cref{eq:sc-sdp-alg} by enumerating over all $\frac{n}{2}$
values.
For the special case of uniform sparsest cut, the rounding algorithm
from~\cite{gs11-exp} can be summarized as follows. Given a feasible
solution of~\cref{eq:sc-sdp-alg}:
\begin{enumerate}
\item Let $ \xmat_u \triangleq \xvec_u - \frac{1}{n} \sum_v \xvec_v =
  \xvec_u - \mu \xvec_{\es}$.
  Observe
  \begin{inparaenum} [(i)]   
  \item $\|\xmat_u - \xmat_v\|^2 = \|\xvec_u-\xvec_v\|^2$;
  \item $\sum_u \xmat_u = 0$;
  \item $\sum_u \|\xmat_u\|^2 = \frac{1}{n} \sum_{u<v} \|\xvec_u -
    \xvec_v\|^2 = \mu (1-\mu)$ ;
  \item $1-\mu\ge \|\xmat_u\|^2 \ge \mu^2$.
  \end{inparaenum}
\item Choose a set $S$ of $r'$ nodes using column
  selection~\cite{gs11-svd,dr10} from $[\xmat_u]_u$.
\item Sample $f: S \to \{0,1\}$ with probability proportional to
  $\|\xvec_S(f)\|^2$.
\item Perform threshold rounding using the ``conditional
  probabilities'' assigned for each node $u\in V$ which is
  proportional to $\langle \xvec_{S}(f), \xvec_u \rangle$.
\end{enumerate}

\section{Constructing SSE Flows}

\subsection{Definition and LP Formulation of SSE Flows}
\label{subsec:LPformulation}


Recall for any graph $G =
(V,E)$ 
, a multicommodity flow in $G$ assigns demand $\delta_{i,j}$ for pairs
of vertices $i,j$, and simultaneously route $\delta_{i,j}$ units of
flow from $i$ to $j$ for all pairs while satisfying capacity
constraints. In particular, if we use $f_p$ to denote the amount of
flow routed along path $p$, $\mathcal{P}_{i,j}$ to denote all paths
from $i$ to $j$, then a multicommodity flow should satisfy the
following constraints:
\begin{align}
  \forall i,j \in V\ \quad \sum_{p\in \mathcal{P}_{i,j}} f_p & = \delta_{i,j} \\
  \forall e\in E \quad \sum_{e\in p} f_p & \le c_e
\end{align}
We shall only consider symmetric flows (i.e. $\delta_{i,j} =
\delta_{j,i}$). For a multicommodity flow, we call $\delta_i =
\sum_{j\in V} \delta_{i,j}$ the degree of vertex $i$. Now we can
define SSE flows:

\begin{definition}[SSE flow]
\label{def:sseflow}
A $(r, d, \beta)$-SSE flow is a multicommodity flow whose vertices
have degree at most $d$, and for any set $S$ of size at most $n/r$, we
have
$$\sum_{i\in S, j\not\in S} \delta_{i,j} \ge \beta d |S|.$$
\end{definition}

For a flow, we will define the expansion of a set $S$ to be
$\frac{\sum_{i\in S,j\not \in S} \delta_{i,j}}{d|S|}$, so the
requirement of SSE flow is just the expansion of all small sets should
be at least $\beta$.

SSE flow is a complicated object, we will also use two weaker versions
of SSE flow. The first one is especially useful for the LP
formulation:

\begin{definition}[Weak SSE flow]
\label{def:weaksseflow}
A $(r, d, \beta)$ weak SSE flow is a multicommodity flow whose
vertices have degree at most $d$, and for any set $S$ of size between
$n/3r$ and $n/r$, we have
$$\sum_{i\in S, j\not\in S} \delta_{i,j} \ge \beta d |S|.$$
\end{definition}
Notice that the idea of restricting set $S$ to have roughly size $n/r$
is also used in \cite{ARV} (where in the LP formulation the sets have
size $n/6$ to $n/2$).  We will use the LP formulation for weak SSE
flows:
\begin{align}
  \forall i\in V \quad & \sum_j \sum_{p\in \mathcal{P}_{i,j}} f_p \le d \label{eqn:degree}\\
  \forall e\in E \quad & \sum_{e\in p} f_p \le c_e \label{eqn:capacity}\\
  \forall S\subset V, n/3r\le |S| \le n/r \quad & \sum_{i\in S,
    j\not\in S} \sum_{p\in \mathcal{P}_{i,j}} f_p \ge \beta d
  |S| \label{eqn:expansion}
\end{align}
Later we show this LP is feasible for some values of $\beta$ and $d$
by showing that the dual LP is not feasible. The dual LP is given by
\begin{align}
  \sum_e c_e w_e + d \sum_{i\in V} s_i  & < \beta d \sum_{S} z_S |S| \label{eqn:dualvalue}\\
  \forall i,j, p\in \mathcal{P}_{i,j} \qquad
  \sum_{e\in p} w_e + s_i + s_j & \ge \sum_{S:i\in S, j\not \in S} z_S \label{eqn:path}\\
  z_S, w_e, s_i & \ge 0 \label{eqn:dualnonnegative}
\end{align}
Here $s_i$ ($i\in V$), $w_e$ ($e\in V$) and $z_S$ ($S\subset V,
n/3r\le |S|\le n/r$) are the dual variables corresponding to
\cref{eqn:expansion,,eqn:capacity,,eqn:degree} respectively. Since the
dual LP is homogeneous, all variables can be scaled simultaneously
without effecting the validity of a solution. Throughout this section
we shall always assume the following normalization:
\begin{equation}
\sum_S z_S|S| = n \label{eqn:normalization}
\end{equation}
We would like SSE flows to serve as approximate certificate of small
set expansion. However, small set expansion is in general hard to
certify, even if the expansion is close to $1$. Hence we give a weaker
(but still useful) form, spectral SSE flow, which can be easily
certified, and is closely related to combinatorial SSE flows by high
order Cheeger's inequality\cite{lrtb12, lgt12}.
\begin{definition}
  A $(r, d,\lambda)$ spectral SSE flow is a multicommodity flow whose
  vertices have degree between $d/2$ and $d$, and the $r^{th}$
  smallest eigenvalues of the Laplacian of the graph is at least
  $d\lambda$.
\end{definition}
\subsection{Existence of SSE Flows}
The main result of this Section is the existence of weak SSE flows.
\begin{lemma}
\label{lem:weaksseflowexists}
For any graph $G = (V,E)$ and any $d>0$ either there is a set of size
at most $100n/r$ that has expansion smaller $d\sqrt{log n}/\sqrt{\log
  r}$, or there exists a $(r, d, \Omega(log^{-2} r))$ weak SSE flow
embeddable in $G$.
\end{lemma}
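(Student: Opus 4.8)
The plan is to certify feasibility of the weak‑SSE‑flow LP \cref{eqn:degree,eqn:capacity,eqn:expansion} with degree $d$ and $\beta = \Theta(\log^{-2} r)$, and to show that the only obstruction to feasibility is a set of size at most $100n/r$ with expansion below $d\sqrt{\log n}/\sqrt{\log r}$. By LP duality it is enough to take an arbitrary feasible solution $(s_i)_{i\in V},(w_e)_{e\in E},(z_S)_S$ of the dual \cref{eqn:dualvalue,eqn:path,eqn:dualnonnegative} normalized by \cref{eqn:normalization} and extract such a small non‑expanding set; so I fix one. It satisfies $\sum_e c_e w_e + d\sum_i s_i < \beta d n$, and for every pair $i,j$ and every path $p$ between them, $\sum_{e\in p} w_e + s_i + s_j \ge \rho(i,j)$, where $\rho(i,j) \triangleq \sum_{S:\,i\in S,\,j\notin S} z_S$. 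Minimized over $p$, the left side is $d_w(i,j)+s_i+s_j$ with $d_w$ the shortest‑path metric for edge lengths $w_e$; the right side $\rho$ is a nonnegative combination of cut metrics, hence $\ell_1$, with total mass $\sum_S z_S|S| = n$ and supported only on sets of size $\le n/r$.

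First I would put $\rho$ into usable form: pass to an $\ell_2^2$ representation $\{v_i\}$ with $\|v_i-v_j\|^2$ comparable to $\rho(i,j)$ (\Cref{subsec:mapping}, \Cref{lem:finddual}), and record the structural feature inherited from $\rho$ being built out of small sets — a \emph{spreading constraint}: for every vertex $i$ and every scale, the number of $j$ with $\|v_i-v_j\|^2$ above that scale is $\Omega(n)$ (this is the role of \Cref{lem:gaussian}). Next I would dispose of the two ``cheap'' ingredients of the dual: since $d\sum_i s_i < \beta dn$, only an $O(\beta)$‑fraction of vertices have non‑negligible $s_i$, and since $\sum_e c_e w_e < \beta dn$, the $w$‑weighted volume of $G$ is small; discard the heavy‑$s$ vertices and restrict to a single dyadic scale $\Delta$ of $\rho$ carrying an $\Omega(1/\log(n/r))$ fraction of the mass $\sum_S z_S$. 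The subtlety is that a block of vertices may be cut off from the rest of $G$ in the $d_w$‑metric — an \emph{obstacle set of type I} — and must be peeled off by a Leighton–Rao region‑growing argument (\Cref{subsec:regiongrowing}, \Cref{lem:singlescale}); if such a block has size $\le 100n/r$ it already has small $w$‑weighted, hence small unweighted, edge boundary and we are done; otherwise we recurse inside it. See \Cref{subsec:obstacle,subsec:singlescale}.

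After this reduction we are in the ARV situation at a single scale: a large ``core'' of vertices whose $\ell_2^2$ images are $\Delta$‑spread, with vertex potentials negligible and $d_w$ controlled. Following the ARV chaining argument adapted to this setting (\Cref{lem:ARV}, \Cref{lem:coverfail}), one shows that either (i) there are $\Omega(n)$ pairs $(i,j)$ with $d_w(i,j)$ tiny but $\rho(i,j)=\Omega(\Delta)$, which stitch together into a matching cover; integrating the dual inequality $d_w(i,j)+s_i+s_j\ge \rho(i,j)$ over this cover, and using that the matchings route with low congestion in $G$, forces $\sum_e c_e w_e$ to exceed $\beta dn$, contradicting \cref{eqn:dualvalue} once $\beta=\Theta(\log^{-2}r)$ is chosen small enough; or (ii) the chaining breaks, and ARV‑style analysis shows this happens only because of an \emph{obstacle set of type II} (\Cref{lem:obstacleII}): a set of $O(n/r)$ vertices separated from the core both in $d_w$ and in $\rho$, and hence with small edge boundary in $G$ — again the sought non‑expanding set of size $\le 100n/r$.

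The main obstacle is precisely this chaining and obstacle‑set analysis. In ARV, global expansion supplies $\Omega(n)$ well‑separated pairs for free; here we are restricted to using only expansion of sets of size $\le O(n/r)$, so the standard argument fails and must be rebuilt around region growing — proving the spreading constraint for $\ell_2^2$ metrics supported on small sets, establishing existence of matching covers in this restricted regime, and verifying that each way the chaining can fail corresponds to a genuine small non‑expanding set. This rebuilding is also where the $\mathrm{poly}(\log r)$ slack enters and fixes $\beta=\Theta(\log^{-2}r)$.
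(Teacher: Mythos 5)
Your overall architecture does match the paper's route (dual of the flow LP, $\ell_2^2$ representation of the cut measure $\rho$ with the spreading property inherited from $|S|\le n/r$, single-scale reduction, Leighton--Rao region growing, obstacle sets of types I/II, and an ARV-style matching-cover/chaining argument), but two of your steps do not hold as stated. First, you repeatedly pass from ``separated in the $d_w$-metric'' to ``small unweighted edge boundary in $G$'': for a peeled type-I block (``small $w$-weighted, hence small unweighted, edge boundary and we are done'') and again for your type-II case. The $w_e$ are dual edge \emph{lengths} while the $c_e$ are \emph{capacities}; a set can be far from everything in $d_w$ while enormous capacity crosses it, so this inference fails. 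The paper never makes it: the low-expansion set is extracted only through the partition oracle, whose \emph{total} cut capacity is bounded using $\sum_e c_e w_e < \beta d n$ (\Cref{lem:regiongrow}), followed by an accumulation-and-averaging step (\Cref{cor:expansion}) -- only once the small removed blocks add up to enough total size does one of them have expansion $O(d\sqrt{\log n}/\sqrt{\log r})$; a single peeled block certifies nothing. Relatedly, type-II obstacle sets are not the output of the argument: in the paper they consist of heavy-$s_i$ vertices and are charged against $d\sum_i s_i < \beta d n$ (\Cref{lem:obstacleII}), so that removing too many of them is itself a contradiction with dual feasibility; they obstruct the matching-cover construction but are not certificates of small expansion, and your claim that such a set ``hence'' has small edge boundary in $G$ is unsupported.

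Second, your contradiction in case (i) -- summing $d_w(i,j)+s_i+s_j\ge\rho(i,j)$ over a matching cover and ``using that the matchings route with low congestion in $G$'' to force $\sum_e c_e w_e > \beta d n$ -- contains an unjustified step: nothing guarantees the matched pairs admit a low-congestion routing in $G$ (closeness in $d_w$ controls length, not available capacity), and no such aggregation is needed. The paper's chaining step (\Cref{lem:ARV}) produces a \emph{single} pair with $d_{i,j}\le D_1=O(D_0\sqrt{\log n}/\sqrt{\log r})$, $\|X_i-X_j\|^2\ge D_2=\Omega(\Delta/\log r)$, $D_1<D_2/10$, and both endpoints having $s\le D_2/10$ -- which is exactly why the heavy-$s$ vertices must be tracked via type-II obstacles rather than ``discarded'' up front, as you propose -- and this one pair already violates the path constraint \cref{eqn:newpath}, finishing the proof. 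Repairing your write-up essentially means replacing the aggregation by this one-pair violation and replacing the ``small $d_w$-separation $\Rightarrow$ small capacity boundary'' claims by the partition-oracle capacity accounting of \Cref{cor:expansion}.
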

\subsubsection{$\ell_2^2$ Mapping of $z_S$}
\label{subsec:mapping}
In order to show that weak SSE flow exist, we argue that the dual LP
does not have any valid solution. In fact, we show something even
stronger: given any dual solution that satisfies 
\cref{eqn:dualvalue}, \cref{eqn:dualnonnegative} and 
\cref{eqn:normalization},
there is a polynomial time algorithm that
either finds a nonexpanding set of size at most $100n/r$, or finds a
path where \cref{eqn:path}
is violated.

The dual solution has exponentially many variables. We shall use a
compressed description that is good enough for the algorithm: all the
variables $z_S$ are mapped to $n$ vectors $Z_1,...,Z_n$, such that for
all $i,j$, $\norm{Z_i-Z_j}^2 = \sum_{i\in S, j\not\in S} z_S$. This is
possible because $Z_i$'s can have one coordinate for each set $S$, if
$i$ is in $S$ then the coordinate is $\sqrt{z_S}$, otherwise the
coordinate is just 0.

The upper-bound on the size of $S$ implies for any $i$, there are at
most $Cn/r$ points within $\ell_2^2$ distance $\norm{Z_i}^2/C$ for all
$C > 1$. This is because for any set $S$ of $Cn/r$ points, if we pick
a random point $j$ in $S$, with probability $1$ the expected distance
between $i$ and $j$ is at least $\E_{j\in S}[\norm{Z_i-Z_j}^2] \ge
\sum_{t\ni i} z_S\Pr[j\not \in t] \ge \sum_{t\ni i} z_S/C =
\norm{Z_i}^2/C$. To avoid giving out exponentially many variables, the
dual solution will only contain $Z_i$'s that satisfy this property.

We shall rewrite the constraints for $Z_i$'s

\begin{align}
  \sum_e c_e w_e + d \sum_{i\in V} s_i  & < \beta d n \label{eqn:newdualvalue}\\
  z_S, w_e, s_i &  \ge 0 \label{eqn:newdualnonneg} \\
  \sum_{i\in V} \norm{Z_i}^2  & = n \label{eqn:newnorm} \\
  \forall i\in V, C>1 \qquad |\{j: \norm{Z_j-Z_i}^2 \le 4\norm{Z_i}^2/5\}| &\le 5n/4r \label{eqn:spreading} \\
  \forall i\in V \qquad \norm{Z_i}^2 & \le 3r \label{eqn:normZ} \\
  \forall i,j\in V, p\in \mathcal{P}_{i,j} \qquad \sum_{e\in p} w_e +
  s_i+s_j & \ge \norm{Z_i-Z_j}^2 \label{eqn:newpath}
\end{align}



The last constraint is called the {\em spreading constraint}. A
candidate dual solution is just a set of variables $s_i(i\in V),
w_e(e\in E), Z_i(i \in S)$ that satisfies these constraints.

Consider $w_e$'s as edge distances on the graph, and let $d_{i,j}$ be
the shortest path distance between $i$ and $j$ with weights
$w_e$. From now on we refer to this weighted distance as the {\em
  graph distance}. Intuitively, given a candidate dual solution, the
algorithm either finds a nonexpanding set or finds two vertices who
have small $s_i$'s, small distance in graph distance and large
distance in $\ell_2^2$ distance.

\subsubsection{Proof Idea}
\label{subsec:proofidea}

Given the dual solution, we try to apply arguments similar to
\cite{ARV} in order to find a pair of vertices that are close in graph
distance, but far in $l_2^2$ metric. When this pair of vertices also
have small $s_i$'s, it violates \Cref{eqn:newpath} hence contradicting
the feasibility of dual solution.

In~\cite{ARV} this proof goes by projecting all points along a random
direction and arguing that there must be many pairs of points that are
close in graph distance but far in projection distance: they call it a
{\em matching cover}. However, constructing a matching cover in
our setting case is highly nontrivial, because the proof is only
allowed to use local expansion. We adapt the region-growing argument
in \cite{lr99} in novel ways to solve this problem, see
\Cref{subsec:regiongrowing}.

The first difficulty in the argument is that each vertex might have
very different $\norm{Z_i}$ (that is, they are in very different
measure of sets in the dual solution). But this is easily fixed by
embedding the points into a single scale using ideas from \cite{aln}
(see \Cref{lem:singlescale} in \Cref{subsec:singlescale}).

It turns out that in order for a matching cover to not exist, one of
the following two types of {\em obstacles} must exist, detailed
discussion appears in \Cref{subsec:obstacle}.

The first type of obstacle set is a set whose $D_0$-neighborhood in
graph distance ($D_0$ is a parameter that will be chosen later)
contains only $O(n/r)$ points. Intuitively, this is an obstacle since
it would be hard to match these vertices to other vertices within
graph distance $D_0$ because they simply don't have enough
neighbours. We show that the total volume of such sets cannot be too
large using the region-growing framework, see \Cref{cor:expansion}.

The second type of obstacle set is a set of at most $10n/r$ vertices
with large $s_i$ whose $D_0$ neighbourhood in graph distance contains
only $O(n/r)$ points with small $s_i$. Intuitively such sets are bad
because we want to construct matching covers only on vertices with
small $s_i$ (in order to get the final contradiction
with~\Cref{eqn:newpath}). Such sets would mean it is possible for a set
$S$ with small $s_i$ to be only close to vertices with large $s_i$'s,
and it would be impossible to match all the vertices in $S$ with
vertices with small $s_i$'s. We again use region-growing arguments to
remove such sets. The number of vertices removed cannot be large,
because otherwise the sum of $s_i$'s will be too large and violates
\cref{eqn:dualvalue} (see \Cref{lem:obstacleII}).

Without these obstacle sets, it becomes possible to construct a
matching cover (see \Cref{lem:coverfail} in
\Cref{subsec:matchingcover}). This matching cover allows us to adapt arguments in \cite{ARV} (see
\Cref{lem:bigcover,lem:ARV}), and conclude that either there is a short path (in graph distance) that crosses many cuts, or there is a nonexpanding set. The first case contradicts with the validity of the dual solution. In the second case we get a nonexpanding set, which again implies the existence of obstacle sets of type I or II.

\subsubsection{Region-growing Argument}
\label{subsec:regiongrowing}
As mentioned earlier, a key component of our proof is the
region-growing argument from \cite{lr99}. This argument applies to an
undirected graph whose edges have arbitrary nonengative
capacities. The goal (in \cite{lr99}) is to give a partition into
blocks that have low {\em diameter} (distance being measured using
edge weights) and on average have {\em few} edges crossing between the
blocks. The tradeoff between these two quantities is controlled by the
expansion of the underlying unweighted graph. Here we view this
argument as giving an efficient {\em partition oracle}, which
maintains a set of vertices $V_i$ at step $i$ ($V_0 = V$). At step $i$
the oracle takes a set $S_i\subset V_{i-1}$ of size at least $n/F(r)$
where $F$ is a fixed polynomial, and then outputs $S_i'$, $S_i\subset
S_i'\subset V_{i-1}$, and updates $V_i = V_{i-1}\backslash
S_i'$. There is a ``center'' $j \in S_i$ such that every other $j' \in
S_i$ has distance at most $D_0$ to $j$ (we specify $D_0$ later in
\Cref{lem:regiongrow}).

At step $t$, we say the partition maintained by the oracle is the
collection of disjoint sets $S_1', S_2', ..., S_i', V_i$. The capacity
of edges in the partition is always at most $n\alpha/20\Delta\log
30r$.
\begin{lemma}[\cite{lr99}]
\label{lem:regiongrow}
Let $G=(V,E)$ be a graph with edge capacities $c_e$ and edge lengths
$w_e$. Let $W$ be the total weighted edge length: $W = \sum_{e\in E}
c_ew_e$. Then for any polynomial $F(r)$, and any $D_0 = C\Delta\log
30r \cdot \log rW/n\alpha$ (where $C$ is a constant depending on $F$),
there is an efficient partition oracle whose partitions always have
capacity at most $n\alpha/20\Delta\log 30r$.
\end{lemma}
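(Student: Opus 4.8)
\textbf{Proof plan for Lemma~\ref{lem:regiongrow} (region-growing partition oracle).}
The plan is to follow the classical region-growing argument of Leighton--Rao, adapting it so that it outputs one set at a time rather than a full partition up front. At a generic step $i$ the oracle is handed a set $S_i \subseteq V_{i-1}$ of size at least $n/F(r)$. Fix its ``center'' $j \in S_i$ (any vertex will do; if $S_i$ comes with a prescribed center we use that). For a radius $\rho \ge 0$ let $B(\rho) = \{ j' \in V_{i-1} : d_w(j,j') \le \rho \}$ be the ball of radius $\rho$ around $j$ inside the current vertex set, measured in the shortest-path metric $d_w$ induced by the edge lengths $w_e$, and let $c(\partial B(\rho))$ denote the total capacity of edges leaving $B(\rho)$ within $V_{i-1}$. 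The key quantity is the ``volume'' $\mathrm{vol}(\rho)$, which we take to be $\frac{W}{n} \cdot \frac{1}{F(r)}\,$-ish base mass plus the weighted edge length contained in $B(\rho)$; the precise bookkeeping constant is chosen so that $\mathrm{vol}(0)>0$ and $\mathrm{vol}(\Delta)$ is bounded by $W$ plus a small additive term. The standard ball-growing estimate says that if $c(\partial B(\rho)) > \theta \cdot \mathrm{vol}(\rho)$ held for every $\rho$ in an interval of length $D_0$, then $\mathrm{vol}$ would grow by a factor $e^{\theta D_0}$ over that interval, which is impossible once $e^{\theta D_0}$ exceeds the ratio of the maximum possible volume to the minimum. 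Choosing $\theta = \Theta\!\big(\tfrac{1}{\Delta \log 30 r}\big)$ and $D_0 = C \Delta \log 30 r \cdot \log\!\big(rW/n\alpha\big)$ makes $e^{\theta D_0}$ large enough to force a ``good'' radius $\rho^\star \le D_0$ with $c(\partial B(\rho^\star)) \le \theta \cdot \mathrm{vol}(\rho^\star)$.

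Given such a $\rho^\star$, the oracle sets $S_i' = B(\rho^\star)$ and updates $V_i = V_{i-1} \setminus S_i'$; by construction $S_i \subseteq S_i'$ (since $S_i$ has radius at most $D_0$ about $j$ by hypothesis, we should actually grow from $\rho = D_0$ upward, or equivalently grow the ball until it contains $S_i$ and then continue the ball-growing search in the window $[\rho_{S_i}, \rho_{S_i}+D_0]$ — I would present it this way to guarantee $S_i \subseteq S_i'$ while keeping the radius $O(D_0)$), and every vertex of $S_i'$ is within distance $O(D_0)$ of $j$, giving the diameter bound. For the global capacity bound I would argue by a charging/telescoping argument: each edge cut at step $i$ is charged against the volume consumed, $c(\partial S_i') \le \theta \cdot \mathrm{vol}(\rho^\star)$, and since the volumes $\mathrm{vol}$ of the successively removed regions are disjoint pieces of a quantity bounded by $O(W + n\alpha/\text{something})$, summing over all steps gives total cut capacity at most $\theta \cdot O(W + \dots)$. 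Plugging in $\theta = \Theta(1/\Delta\log 30r)$ and the hypothesis relating $W$, $\Delta$, $\alpha$ and the choice of $F$ yields the claimed bound $n\alpha / 20\Delta \log 30 r$; this is where the constant $C$ in $D_0$ (depending on $F$) gets pinned down, since $D_0$ must be large enough that the ball-growing dichotomy fires before the volume budget runs out at \emph{every} one of the at most $F(r)$ steps.

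The main obstacle I anticipate is not any single estimate but getting the bookkeeping constants mutually consistent: the same parameter $\Delta$ controls both the target diameter and (through $\theta \sim 1/\Delta\log 30r$) the per-step cut rate, while $D_0$ must simultaneously be small enough to be a meaningful diameter bound and large enough ($\sim \Delta \log 30r \cdot \log(rW/n\alpha)$) that $e^{\theta D_0} = (rW/n\alpha)^{\Theta(C)}$ dominates the ratio of max to min volume, and the additive ``seed'' mass in the definition of $\mathrm{vol}$ (needed so that $\mathrm{vol}(0) > 0$) must be calibrated against both $W$ and the final target $n\alpha/20\Delta\log 30r$. I would therefore first write the ball-growing lemma abstractly with a free parameter $\theta$ and a free seed mass $m_0$, derive the two inequalities they must satisfy (one forcing a good radius within $D_0$, one bounding the total cut), and only then substitute the values from the statement; this keeps the argument modular and makes transparent why the polynomial $F$ enters only through the constant $C$. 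The actual region-growing inequality itself is completely standard and I would cite/reproduce it from \cite{lr99} with at most a one-line integration-by-parts style proof.
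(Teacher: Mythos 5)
Your overall plan is the same as the paper's: the paper proves this lemma by citing Lemma~3 of \cite{lr99} and noting the single modification that region-growing is seeded at the sets handed to the oracle, which have size at least $n/F(r)$, so the relevant volume ratio is $\mathrm{poly}(r)$ rather than $n$ and the loss is $\log r$ rather than $\log n$. You identify exactly this point, so the architecture of your argument is right.

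However, two of your concrete parameter instantiations would not give the stated bound. First, you read the radius as $D_0 = C\Delta\log 30r\cdot\log\bigl(rW/(n\alpha)\bigr)$ and accordingly set the cut-rate to $\theta=\Theta\bigl(1/(\Delta\log 30r)\bigr)$; but then charging each region's boundary against its volume gives total capacity $O\bigl(W/(\Delta\log 30r)\bigr)$, which is not the claimed $n\alpha/(20\Delta\log 30r)$ for arbitrary capacities and lengths (there is no hypothesis relating $W$ to $n\alpha$). The formula is meant as $D_0 = C\,\Delta\log(30r)\cdot\log r\cdot \frac{W}{n\alpha}$, i.e.\ linear in $W/(n\alpha)$, with the matching cut-rate $\theta=\Theta\bigl(\tfrac{n\alpha}{W\,\Delta\log 30r}\bigr)$; then $\theta\cdot O(W)$ gives the claimed capacity, and $D_0=\theta^{-1}\cdot O(\log F(r))$ gives the claimed radius. (The logarithmic reading is also ruled out by how $D_0$ is used later: downstream one needs $D_1=O(D_0\sqrt{\log n/\log r})$ to be below $D_2=\Omega(\Delta/\log r)$, which is impossible if $D_0\ge\Delta\log 30r$.) Second, your seed mass ``$\frac{W}{n}\cdot\frac{1}{F(r)}$-ish'' is too small: with total seed $W/(nF(r))$ the max-to-min volume ratio is $nF(r)$ and a $\log n$ reappears in $D_0$, which is precisely what the lemma is designed to avoid. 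The seed should be the set's own share, e.g.\ $W/n$ per seed vertex, hence $\Theta(W/F(r))$ per call (and at most $O(W)$ in total over the at most $F(r)$ calls), so the per-call ratio is $O(F(r))$ and only $\log r$ enters. Your stated plan of keeping $\theta$ and the seed mass free and deriving the constraints would surface and fix both issues, but as written the substituted values do not yield the lemma.
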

\begin{proof}
  The proof is similar to Lemma 3 in \cite{lr99}. However there the
  region-growing procedure starts from a single vertex (and the loss
  is $\log n$ because $n$ is roughly the ratio between the volume of
  the graph and the volume of a single vertex). Here instead we start
  region-growing from the sets given to the oracle. Because the sets
  all have large volume (more than $n/poly(r)$), we lose only a $\log
  r$ factor.
\end{proof}
%
%
%
\subsubsection{Reducing to Single Scale}
\label{subsec:singlescale}
The region growing argument applies to a particular scale
$\Delta$. However, not all vertices have $\ell_2^2$ norm close to that
scale. In this part we show how to reduce the problem to a single
scale $\Delta$.
\begin{lemma}\label{lem:singlescale}
  Given a dual solution with $\beta \le C_\beta(log r)^{-3/2}$, there
  is an algorithm that finds a $\Delta$, and calls a partition oracle
  with scale $\Delta$ and size $F(r)$. After the algorithm, the number
  of remaining vertices in the oracle is at least
  $\frac{n}{5\Delta\log 30r}$, and all but $n/F(r)$ of the remaining
  vertices satisfy one of the two properties:
  \begin{enumerate}
  \item $\norm{Z_i}^2 \ge \Delta/2$.
  \item $s_i \ge D_2/10 = \Omega(\Delta/\log r)$.
  \end{enumerate}

  The value $D_2$ comes from \Cref{lem:ARV} and will be
  $\Omega(\Delta/\log r)$.
\end{lemma}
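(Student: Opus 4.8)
The plan is to combine two ingredients: a geometric (dyadic) bucketing of the squared norms $\norm{Z_i}^2$ to \emph{select} the working scale $\Delta$, and the region-growing partition oracle of \Cref{lem:regiongrow}, used to peel away the ``bad'' vertices --- those that are simultaneously below scale ($\norm{Z_i}^2 < \Delta/2$) and of small dual weight ($s_i < D_2/10$), every other vertex already satisfying the first or the second property of the lemma.

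\textbf{Selecting $\Delta$.} First I would use the normalization: by \cref{eqn:newnorm} the masses $\norm{Z_i}^2$ sum to $n$, and by \cref{eqn:normZ} each is at most $3r$. Fix a small absolute constant $t_0$ (say $t_0=1/10$); the vertices with $\norm{Z_i}^2 < t_0$ carry at most $t_0 n$ of the total mass, so the remaining $\ge (1-t_0)n$ units of mass live in the $O(\log r)$ dyadic bands $[2^k,2^{k+1}) \subseteq [t_0,3r]$. By averaging, one band carries $\Omega(n/\log r)$ mass; I would take $\Delta = 2^{k+1}$ for that band. Then every vertex in the band has $\norm{Z_i}^2 \in [\Delta/2,\Delta)$, so the band contains at least $n/(5\Delta\log 30r)$ vertices, each satisfying property~(1); and since $\Delta \ge 2t_0$ is bounded away from $0$, the value $D_2 = \Omega(\Delta/\log r)$ produced by \Cref{lem:ARV} is of the intended order in property~(2).

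\textbf{Cleaning up.} Next I would start the oracle with $V_0=V$ at scale $\Delta$ and size $F(r)$, and repeat the following: while the surviving set $V_i$ contains a cluster of at least $n/F(r)$ bad vertices lying within graph distance $D_0$ of a common center, hand that cluster to the oracle, which inflates it by region-growing to a low-boundary region $S_i' \supseteq S_i$ and deletes it. Each call destroys at least $n/F(r)$ bad vertices, so the loop halts after at most $F(r)$ calls; when it halts no cluster of that many bad vertices survives, and a region-growing counting argument of the kind used later for obstacle sets should bound the number of surviving bad vertices by $n/F(r)$. For the lower bound on $|V_t|$ I would invoke \Cref{lem:regiongrow}: every region the oracle produces has boundary capacity at most $n\alpha/(20\Delta\log 30r)$, and because the bound $W = \sum_e c_e w_e < \beta d n$ from \cref{eqn:newdualvalue} keeps $D_0 = C\Delta\log 30r\cdot\log(rW/n\alpha)$ small, the total volume peeled off is only a small fraction of $n$, leaving $|V_t| \ge n/(5\Delta\log 30r)$ --- in particular the band that defined $\Delta$ survives almost intact. (The $s_i$ term of \cref{eqn:newdualvalue} is not needed here beyond the remark that large-$s_i$ vertices are already ``good''.)

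\textbf{Main obstacle.} The hard part will be the accounting in the cleanup step: one must simultaneously push the count of surviving bad vertices below $n/F(r)$ --- which favours many, large peeling steps --- and retain at least $n/(5\Delta\log 30r)$ vertices --- which limits how far region-growing may erode the chosen band. Balancing $D_0$, $F(r)$, $\Delta$ and the separation threshold of the spreading constraint \cref{eqn:spreading} against one another is exactly where the hypothesis $\beta \le C_\beta(\log r)^{-3/2}$ is used: it is what forces the region-growing radius $D_0$ to be small enough for both bounds to hold at once.
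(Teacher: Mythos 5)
Your scale-selection step is fine and matches the paper's \Cref{lem:bucketing}, but the cleanup step has two genuine gaps, and both trace to the same missing idea. First, your loop only removes bad vertices when at least $n/F(r)$ of them sit inside a common $D_0$-ball, and when no such cluster exists you assert that ``a region-growing counting argument'' should leave at most $n/F(r)$ bad vertices --- but nothing bounds scattered bad vertices: they can be spread out so that no $D_0$-ball captures $n/F(r)$ of them while their total number is huge, and the obstacle-set lemmas you point to bound cut capacity and $\sum_i s_i$, not counts of dispersed vertices. The paper avoids this entirely: the partition oracle only requires its seed to have size at least $n/F(r)$, not bounded diameter, so if the bad set $Q=\{i: \norm{Z_i}^2<\Delta/2,\ s_i<D_2/10\}$ has $|Q|\ge n/F(r)$ one feeds \emph{all of} $Q$ to the oracle in a single call, which deletes a superset $Q'\supseteq Q$; afterwards zero bad vertices remain (and if $|Q|<n/F(r)$ there is nothing to do).

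Second, your lower bound on the surviving set is argued by saying the oracle's small boundary capacity plus $W<\beta dn$ ``keeps the total volume peeled off small.'' The oracle's guarantee controls the capacity of cut edges, not the number of vertices removed --- the regions $S_i'$ can swallow arbitrarily many vertices --- so this does not show the heavy band survives. The paper's actual mechanism is the one you explicitly set aside: every vertex $i$ of the band $B=\{i:\norm{Z_i}^2\ge\Delta\}$ that gets removed lies (by the oracle's guarantee) within graph distance $D_0$ of some $j\in Q$, and since $\norm{Z_i-Z_j}^2\ge\norm{Z_i}^2-\norm{Z_j}^2\ge\Delta/2$ while $d_{i,j}\le D_0$ and $s_j\le D_2/10$, the path constraint \cref{eqn:newpath} forces $s_i=\Omega(\Delta)$; the bound $\sum_i s_i\le\beta n$ from \cref{eqn:newdualvalue} then caps the collateral damage by $|Q'\cap B|\le O(\beta n/\Delta)\le |B|/50$, so at least $n/(5\Delta\log 30r)$ vertices of $B$ survive. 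This is also precisely where the hypothesis $\beta\le C_\beta(\log r)^{-3/2}$ enters (one needs $\beta\ll 1/\log r$ to beat $|B|\ge n/(4\Delta\log 30r)$), not in controlling the radius $D_0$ as your final paragraph suggests; dismissing the $s_i$ term of \cref{eqn:newdualvalue} removes the linchpin of the proof.
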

First we use an averaging argument to find $\Delta$.
\begin{lemma}\label{lem:bucketing}
  There exists some threshold $0.1 < \Delta < 3r$ such that the number
  of vertices with $\norm{Z_i}^2\ge \Delta$ is at least
  $\frac{n}{4\Delta\log 30r}$.
\end{lemma}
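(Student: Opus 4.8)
The plan is a standard layer-cake (averaging) argument that uses only the two facts about the vectors $Z_i$ that are available at this point: the normalization $\sum_{i\in V}\norm{Z_i}^2 = n$ from \cref{eqn:newnorm}, and the pointwise bound $\norm{Z_i}^2\le 3r$ from \cref{eqn:normZ}. Write $N(\Delta):=|\{i\in V:\norm{Z_i}^2\ge\Delta\}|$. Using $\norm{Z_i}^2=\int_0^\infty \ind{\norm{Z_i}^2\ge\Delta}\,d\Delta$ and summing over $i$,
\[
n=\sum_{i\in V}\norm{Z_i}^2=\int_0^\infty N(\Delta)\,d\Delta,
\]
and by \cref{eqn:normZ} the integrand vanishes for $\Delta>3r$, so the integral is effectively over $(0,3r]$.

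Now suppose, for contradiction, that $N(\Delta)<\frac{n}{4\Delta\log 30r}$ for \emph{every} $\Delta\in(0.1,3r)$. Split the integral at $\Delta=0.1$. On $(0,0.1]$ use only the trivial bound $N(\Delta)\le n$, which contributes at most $0.1\,n$. On $(0.1,3r]$ insert the assumed bound:
\[
\int_{0.1}^{3r}N(\Delta)\,d\Delta\le\frac{n}{4\log 30r}\int_{0.1}^{3r}\frac{d\Delta}{\Delta}=\frac{n}{4\log 30r}\,\ln\!\frac{3r}{0.1}=\frac{n\,\ln 30r}{4\log 30r}.
\]
The factor $30r$ here is exactly the reason the statement is phrased with $\log 30r$, since $3r/0.1=30r$. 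Taking $\log$ to be the natural logarithm (the base only changes the constant $4$), the right-hand side is $n/4$, so altogether $n\le 0.1\,n+n/4<n$, a contradiction. Hence there is some $\Delta\in(0.1,3r)$ with $N(\Delta)\ge\frac{n}{4\Delta\log 30r}$, which is the claim.

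I do not expect a genuine obstacle here; the only care needed is bookkeeping of constants — the choice of split point $0.1$, the base of the logarithm, and matching the strict versus non-strict inequalities. If one prefers to avoid integrals, the same computation can be organized by bucketing the indices according to $\norm{Z_i}^2\in[0.1\cdot 2^k,\,0.1\cdot 2^{k+1})$ for $0\le k\le \log_2(30r)$: the bucket with $\norm{Z_i}^2<0.1$ contributes less than $0.1\,n$ to $\sum_i\norm{Z_i}^2$, so by pigeonhole one of the $O(\log r)$ remaining buckets carries total mass $\Omega(n/\log r)$, and taking $\Delta$ to be the lower endpoint of that bucket (so that $\norm{Z_i}^2<2\Delta$ throughout it) yields the desired lower bound on $N(\Delta)$.
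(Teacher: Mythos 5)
Your proof is correct and is essentially the paper's argument: both use only the normalization $\sum_i\norm{Z_i}^2=n$ and the cap $\norm{Z_i}^2\le 3r$, discard the mass below $0.1$ (at most $0.1n$), and average over the remaining $O(\log r)$ scales. Your continuous layer-cake-plus-contradiction phrasing is just a repackaging of the paper's dyadic-bucket pigeonhole, which is exactly the alternative you sketch in your final paragraph.
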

\begin{proof}
  We shall bucket the vertices according to $\norm{Z_i}^2$. There will
  be $b = \lceil \log 30r\rceil$ buckets, the $u$-th ($u\in
  \{1,...,b\}$) bucket $B_u$ contains vertices with $\norm{Z_i}^2$ in
  range $[0.1*2^{u-1}, 0.1*2^u)$. There will be one extra bucket $B_0$
  which contains vertices with $\norm{Z_i}^2$ in range
  $[0,0.1)$. By~\Cref{eqn:normZ}, $\norm{Z_i}^2 \le 3 r$ so these
  buckets cover all vertices.

  We know $\sum \norm{Z_i}^2 = n$, let $l_u = \sum_{i\in B_u}
  \norm{Z_i}^2$, then $\sum_{u=0}^{b} l_u = n$.  We also know $l_0 \le
  0.1n$, so there must be a bucket $B_u$ with $l_u \ge 0.9n/b$. Choose
  $\Delta = 0.1*2^{u-1}$, we know the number of vertices with
  $\norm{Z_i}^2\ge \Delta$ is at least the size of $B_u$, which is at
  least $l_u / 2\Delta \ge n/4\Delta b$.
\end{proof}
Now we are ready to prove \Cref{lem:singlescale}.
\begin{proof}[Proof of~\Cref{lem:singlescale}.]
  Take the value $\Delta$ from~\Cref{lem:bucketing}. Let $Q$ be
  the set of vertices whose $\norm{Z_i}^2$ is at most $\Delta/2$ and
  $s_i$ is at most $D_2/10$. Let $B$ be the set of vertices whose
  $\norm{Z_i}^2$ is at least $\Delta$.

  If the size of $Q$ is at most $n/F(r)$, then the Lemma is
  true. Otherwise, use the partition oracle to separate the set
  $Q$. From the oracle we get a $Q'$ which contains everything in
  $Q$. Consider any vertex $i\in Q'\cap B$, by definition of oracle we
  know there is a vertex $j\in Q$ such that $D_{i,j} \le D_0$. On the
  other hand, $\norm{Z_i-Z_j}^2 \ge \norm{Z_i}^2-\norm{Z_j}^2 \ge
  \Delta/2$. By \cref{eqn:newpath} we know $s_i >
  \Omega(\Delta)$, since $\sum_{i\in V} s_i \le \beta n$, the size of
  $Q'\cap B$ is at most $\beta n/\Delta < |B|/50$. The size of current
  set of the oracle is at least $n/5\Delta \log 30r$.
\end{proof}
We shall also use the following Lemma from \cite{aln,mn04} to project
everything to a ball of squared radius $\Delta$. 
\begin{lemma}[\cite{mn04}] 
  There exists a mapping $T:\ell_2\to \ell_2$ such that $\norm{T(z)}
  \le \sqrt{\Delta}$ for all $z\in \ell_2$ and for all $z,z'\in
  \ell_2$
$$
\frac{1}{2} \le \frac{\norm{T(z)-T(z')}}{\min\{\sqrt{\Delta},
  \norm{z-z'}\}} \le 1.
$$
\end{lemma}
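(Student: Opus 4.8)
The plan is to construct $T$ from the Gaussian (``heat-kernel'') feature map, which sends a Hilbert space onto a sphere in another Hilbert space in such a way that the chordal distance between two images is proportional to $\|z-z'\|$ when $\|z-z'\|$ is small and saturates to a fixed constant when $\|z-z'\|$ is large --- precisely the truncation behaviour we want. Once the radius of that sphere is fixed correctly, the whole lemma reduces to a one-variable elementary inequality.

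First I would invoke Schoenberg's theorem: for every $t>0$ the kernel $K_t(u,v)=\exp(-t\|u-v\|^2)$ is positive semidefinite on $\ell_2$. Hence there is a (separable, so isometrically a subspace of $\ell_2$) Hilbert space $\mathcal H$ and a map $g\colon\ell_2\to\mathcal H$ with $\langle g(u),g(v)\rangle=\exp(-\|u-v\|^2)$ for all $u,v$; in particular $\|g(u)\|=1$ and $\|g(u)-g(v)\|^2=2\bigl(1-e^{-\|u-v\|^2}\bigr)$. One need not even enter the infinite-dimensional setting: in our application $T$ is only ever evaluated on the finitely many vectors $Z_i$, whose linear span is finite dimensional, and there $g$ is the explicit map $z\mapsto\bigl(\omega\mapsto e^{i\langle\omega,z\rangle}\bigr)\in L_2(\mathbb R^k,\gamma)$ for a suitable Gaussian measure $\gamma$, by Bochner's theorem.

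Then I would set $T(z):=\sqrt{\Delta/2}\;g\bigl(z/\sqrt{\Delta}\bigr)$. Immediately $\|T(z)\|=\sqrt{\Delta/2}\le\sqrt{\Delta}$, which gives the first assertion, while
\[
\|T(z)-T(z')\|^2=\tfrac{\Delta}{2}\bigl\|g(z/\sqrt\Delta)-g(z'/\sqrt\Delta)\bigr\|^2=\Delta\Bigl(1-e^{-\|z-z'\|^2/\Delta}\Bigr).
\]
Setting $x:=\|z-z'\|^2/\Delta\ge 0$ and using $\min\{1,x\}=\min\{\sqrt\Delta,\|z-z'\|\}^2/\Delta$, the desired two-sided bound becomes, after squaring and dividing by $\Delta$, exactly $\tfrac14\min\{1,x\}\le 1-e^{-x}\le\min\{1,x\}$. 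The upper bound follows from $e^{-x}\ge 1-x$ together with $e^{-x}\ge 0$. For the lower bound: on $[0,1]$ the function $x\mapsto 1-e^{-x}-x/4$ vanishes at $0$ and has derivative $e^{-x}-\tfrac14>0$, hence is nonnegative; and for $x\ge 1$ one has $1-e^{-x}\ge 1-e^{-1}\ge\tfrac14$. Taking square roots and multiplying back by $\sqrt\Delta$ turns this into $\tfrac12\min\{\sqrt\Delta,\|z-z'\|\}\le\|T(z)-T(z')\|\le\min\{\sqrt\Delta,\|z-z'\|\}$, as claimed.

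There is no substantive obstacle here once one hits on the Gaussian-kernel construction; the only two points requiring a little care are (i) legitimizing the feature map $g$ in full generality on $\ell_2$ --- handled by positive-definiteness of the Gaussian kernel, or avoided entirely by passing to the finite-dimensional span of the $Z_i$ --- and (ii) choosing the sphere radius $\sqrt{\Delta/2}$ so that the large-scale limit $\sqrt\Delta\,\sqrt{1-e^{-x}}\to\sqrt\Delta$ matches the truncation threshold exactly without overshooting it. Incidentally the same computation shows this $T$ in fact satisfies the lower bound with constant $\sqrt{1-1/e}>\tfrac12$, but the weaker constant stated in the lemma is all we use.
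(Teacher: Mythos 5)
Your proof is correct: the Schoenberg/Gaussian-kernel construction with $\|T(z)-T(z')\|^2=\Delta\bigl(1-e^{-\|z-z'\|^2/\Delta}\bigr)$ and the elementary bound $\tfrac14\min\{1,x\}\le 1-e^{-x}\le\min\{1,x\}$ all check out, and the separability point (or the restriction to the finite span of the $Z_i$) legitimately places the image back in $\ell_2$. The paper itself gives no proof of this lemma---it is quoted from \cite{mn04,aln}---and your argument is essentially the standard one from those references, so there is nothing further to reconcile.
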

As a corollary, we now prove the following.
\begin{corollary}
  \label{cor:singlescalemapping}
  There is a mapping that maps $Z_i$ to $X_i$, such that $\norm{X_i} =
  \sqrt{\Delta/2}$ for all $i\in V$, and for all $i,j\in V$
  $$
  \frac{1}{8} \le \frac{\norm{X_i-X_j}^2}{\min\{\Delta,
    \norm{Z_i-Z_j}^2\}} \le 1.
  $$
\end{corollary}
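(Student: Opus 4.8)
The plan is to derive \Cref{cor:singlescalemapping} directly from the preceding \Cref{lem:mn04} (the Mendel--Naor style contraction map $T$), combined with the single-scale reduction \Cref{lem:singlescale} and the corollary \Cref{cor:singlescalemapping}'s own hypothesis that we are working with the good set of vertices. First I would apply the mapping $T$ of \Cref{lem:mn04} to each vector $Z_i$, obtaining $T(Z_i)$ with $\|T(Z_i)\|\le\sqrt\Delta$ and, for all $i,j$,
\[
\tfrac12\le\frac{\|T(Z_i)-T(Z_j)\|}{\min\{\sqrt\Delta,\|Z_i-Z_j\|\}}\le 1,
\]
which after squaring gives $\tfrac14\le \|T(Z_i)-T(Z_j)\|^2/\min\{\Delta,\|Z_i-Z_j\|^2\}\le 1$. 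The issue is that \Cref{cor:singlescalemapping} wants all the image vectors to have \emph{exactly} the same norm $\sqrt{\Delta/2}$, whereas $T(Z_i)$ only satisfies an upper bound on the norm. So the second step is a ``lifting to a sphere'' trick: append one extra coordinate to each $T(Z_i)$, setting it to $\sqrt{\Delta/2-\|T(Z_i)\|^2+c}$ for a suitable constant shift, so that every augmented vector has norm exactly $\sqrt{\Delta/2}$. Pairwise squared distances are unchanged except for the contribution of the new coordinate, which is $\big(\sqrt{\Delta/2-\|T(Z_i)\|^2+c}-\sqrt{\Delta/2-\|T(Z_j)\|^2+c}\big)^2$; one checks this is bounded by $\big|\,\|T(Z_i)\|^2-\|T(Z_j)\|^2\,\big|$ up to constants, which in turn is controlled by $\|T(Z_i)-T(Z_j)\|^2$ via the identity $\|a\|^2-\|b\|^2=\langle a-b,a+b\rangle$ and the uniform norm bound $\|T(Z_i)\|\le\sqrt\Delta$. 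Thus the added coordinate only perturbs distances by a constant factor, which is why the constants in the corollary are $1/8$ and $1$ rather than $1/4$ and $1$.

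Concretely, I would define $X_i$ to be $T(Z_i)$ with an appended coordinate making $\|X_i\|^2=\Delta/2$; for this to be well-defined one needs $\|T(Z_i)\|^2\le\Delta/2$, which may fail if $\|T(Z_i)\|$ is close to $\sqrt\Delta$. This is where \Cref{lem:singlescale} enters: on the good set of vertices we have $\|Z_i\|^2\ge\Delta/2$ for the vertices we care about, but we also need an \emph{upper} bound to keep $\|T(Z_i)\|^2$ comfortably below $\Delta/2$. The cleanest fix is to first rescale: replace $T$ by $T'=\tfrac{1}{\sqrt2}T\circ(\sqrt2\,\cdot)$, i.e. apply $T$ at scale $\Delta/2$ instead of $\Delta$, so $\|T'(Z_i)\|\le\sqrt{\Delta/2}$ automatically and the contraction estimate becomes $\tfrac14\le\|T'(Z_i)-T'(Z_j)\|^2/\min\{\Delta/2,\|Z_i-Z_j\|^2\}\le 1$ — this changes the $\min\{\Delta,\cdot\}$ to $\min\{\Delta/2,\cdot\}$, absorbing yet another constant factor and landing in the stated $[1/8,1]$ window once one accounts for the extra-coordinate perturbation.

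The main obstacle I expect is purely bookkeeping: tracking how the three successive constant-factor losses — squaring the Mendel--Naor inequality, rescaling to half-scale, and the norm-equalization coordinate — compose, and verifying that the product of the lower bounds stays at least $1/8$ and the product of the upper bounds stays at most $1$. None of these steps is deep; the content is entirely in \Cref{lem:mn04}. I would present the proof as: (i) invoke \Cref{lem:mn04} at scale $\Delta/2$; (ii) square and note $\|T'(Z_i)\|\le\sqrt{\Delta/2}$; (iii) append the equalizing coordinate and bound its distance contribution by $\big|\,\|T'(Z_i)\|^2-\|T'(Z_j)\|^2\,\big|\le 2\sqrt{\Delta/2}\,\|T'(Z_i)-T'(Z_j)\|\le O(\|T'(Z_i)-T'(Z_j)\|^2)$ using $\|X_i-X_j\|\ge$ something — actually here one uses that $\|T'(Z_i)-T'(Z_j)\|\le\sqrt{2\Delta}$ so a linear term is at most a constant times itself only if the distance is bounded below, so more carefully one bounds the new coordinate's contribution directly as at most $4\cdot\|T'(Z_i)-T'(Z_j)\|^2/\Delta\cdot(\Delta/2)$, i.e. a constant multiple of the old squared distance; and (iv) collect constants. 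The resulting $X_i$ all have norm $\sqrt{\Delta/2}$ and satisfy the claimed two-sided bound, proving \Cref{cor:singlescalemapping}.
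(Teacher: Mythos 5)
Your construction is essentially the paper's. The paper's proof is a one-liner: it sets $X_i=\tfrac{1}{\sqrt2}\bigl(T(Z_i)\oplus\sqrt{\Delta-\norm{T(Z_i)}^2}\bigr)$, i.e.\ it applies the Mendel--Naor map at scale $\Delta$, lifts to the sphere by one equalizing coordinate, and scales by $1/\sqrt{2}$, declaring the claim ``easy to verify.'' Your variant (apply $T$ at scale $\Delta/2$, then append the equalizing coordinate) differs only in where the factor $\sqrt2$ is absorbed, and your accounting for the norm claim and for the lower bound ($\tfrac14$ from squaring the Mendel--Naor inequality, another $\tfrac12$ from $\min\{\Delta/2,\cdot\}\ge\tfrac12\min\{\Delta,\cdot\}$, and the appended coordinate only increasing distances) is correct and is exactly how the paper's constant $1/8$ arises.

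The gap is in your step (iii). The bound you assert there---that the equalizing coordinate contributes at most a constant multiple of $\norm{T'(Z_i)-T'(Z_j)}^2$---is false. Writing $a=\norm{T'(Z_i)}$, $b=\norm{T'(Z_j)}$, $R=\sqrt{\Delta/2}$, what $(\sqrt{u}-\sqrt{v})^2\le|u-v|$ gives is only the \emph{linear} bound
\[
\Bigl(\sqrt{R^2-a^2}-\sqrt{R^2-b^2}\Bigr)^2\;\le\;|a^2-b^2|\;\le\;2R\,\norm{T'(Z_i)-T'(Z_j)},
\]
and the hemisphere lift $y\mapsto\bigl(y,\sqrt{R^2-\norm{y}^2}\bigr)$ really is not Lipschitz near the equator: if $T'(Z_i)$ and $T'(Z_j)$ are parallel with norms $R$ and $R-\epsilon$, then $\norm{T'(Z_i)-T'(Z_j)}=\epsilon$ while the new coordinates differ by about $\sqrt{2R\epsilon}$, so the added squared contribution is $\approx 2R\epsilon\gg\epsilon^2$. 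Consequently your composition of constants does not yield the stated upper bound $1$ (nor any constant) for pairs at small $Z$-distance whose images lie near the boundary of the ball; what does follow cleanly is the capped bound $\norm{X_i-X_j}^2\le\Delta$ together with $\norm{X_i-X_j}^2\le\norm{Z_i-Z_j}^2+O\bigl(\sqrt{\Delta}\,\norm{Z_i-Z_j}\bigr)$. In fairness, the paper's ``easy to verify'' glosses over exactly the same point (its construction has the identical extra-coordinate term), and the later arguments only use the corollary up to constants in the two regimes where these weaker facts suffice; but as a self-contained derivation of the two-sided inequality as stated, your step (iii) would not go through---indeed your own paragraph first notices the linear-versus-quadratic problem and then waves it away with an incorrect quadratic estimate. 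To repair it you must either settle for the weaker (capped/linear) form of the upper bound, or argue that on the vertices you care about (those surviving \Cref{lem:singlescale}) the problematic boundary configuration can be avoided.
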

\begin{proof}
  Just let $X_i =\frac{1}{\sqrt{2}} T(Z_i)\oplus (\sqrt{\Delta -
    \norm{T(Z_i)}^2})$ where $\oplus$ denotes concatenation of
  vectors. It is easy to verify the claim.
\end{proof}
After mapping all the $Z_i$'s to $X_i$'s, for a vertex $i$ with
$\norm{Z_i}^2 \ge \Delta/2$, vertices that are within squared distance
$\Delta/20$ in $X$ metric are also within squared distance $2\Delta/5$
in $Z$ metric (\Cref{cor:singlescalemapping}). By spreading
constraints there can only be at most $5n/4r$ such vertices.
\subsubsection{Obstacle Sets}
\label{subsec:obstacle}
The plan of the proof is to apply cover composition from \cite{ARV} in
order to find a short path (in graph metric) that crosses a lot of
cuts. At any step $i$, let $V_i$ be the remaining vertices in the
partition oracle. Let $Q_i$ be the set of vertices in $V_i$ that have
large $s$ values (at least $D_2/10$ as in \Cref{lem:singlescale}. In
this case there are two kinds of obstacle sets that prevents us from
applying the cover composition argument.

\begin{definition}[Obstacle Sets]
  At some step $i$ of the partition oracle, a set $S\subset V_i$ is an
  obstacle set of type I if it has size at least $n/F(r)$, and the
  $D_0$ neighbourhood contains at most $100n/r$ vertices in $V_i$.

  A set $T\subset Q_i$ is an obstacle set of type II, if it has size
  at least $10n/r$, and the $D_0$ neighbourhood contains at most
  $100n/r$ vertices in $V_i\backslash Q_i$.
\end{definition}

Using region-growing arguments, we can remove the obstacle sets using
the partition oracle without removing many vertices.




\begin{lemma}
  \label{lem:obstacleI}
  \label{cor:expansion}
  For a partition oracle with distance $D_0$ as in
  \Cref{lem:singlescale}, if at some step $i$, $\sum_{j\le i, |S_j'|
    \le 100n/r} |S_j'| \ge n\alpha/10\Delta\log 30r$, then one of the
  $S_j'$ of size at most $100n/r$ has expansion at most $\alpha$.
  
  In particular, there is a set $H \subset V_i$ whose size is at least
  $|V_i| - n/10\Delta\log 30r$, such that any subset $S\subset H$ of
  size at least $n/F(r)$ expands to at least $100n/r$ vertices in $H$.
\end{lemma}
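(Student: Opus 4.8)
The plan is to run the region-growing partition oracle of \Cref{lem:regiongrow} and track the capacity of edges that it cuts. Recall the guarantee: the partition maintained by the oracle always has total cut capacity at most $n\alpha/(20\Delta\log 30r)$, and each set $S_j'$ that it produces has a center $j\in S_j$ with every $j'\in S_j$ within graph distance $D_0$ of it. First I would argue the contrapositive of the main statement: suppose \emph{every} $S_j'$ with $|S_j'|\le 100n/r$ has expansion strictly more than $\alpha$, i.e.\ the number of edges (capacity) leaving $S_j'$ exceeds $\alpha|S_j'|$. Then the total capacity leaving all such small blocks is more than $\alpha\sum_{j\le i,\,|S_j'|\le 100n/r}|S_j'|$. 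Each cut edge is counted at most twice in this sum (once for each endpoint's block, and a priori an edge could leave two different small blocks), so the capacity in the current partition is more than $\tfrac{\alpha}{2}\sum_{j\le i,\,|S_j'|\le 100n/r}|S_j'|$. Combining with the oracle's upper bound $n\alpha/(20\Delta\log 30r)$ on the partition capacity yields $\sum_{j\le i,\,|S_j'|\le 100n/r}|S_j'| < n/(10\Delta\log 30r)$, contradicting the hypothesis. This proves the first assertion.

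For the ``in particular'' part, I would apply the first assertion at the \emph{end} of the oracle's run, or equivalently observe that as long as $\sum_{j,\,|S_j'|\le 100n/r}|S_j'| < n\alpha/(10\Delta\log 30r)$ holds, no small block has been revealed to have expansion $\le\alpha$ and we are free to keep feeding the oracle sets. Let $H \subseteq V_i$ be obtained by discarding from $V_i$ all the small blocks $S_j'$ (those of size at most $100n/r$) that have been stripped off so far. By the bound just established, the total number of vertices discarded this way is at most $n/(10\Delta\log 30r)$, so $|H|\ge |V_i| - n/(10\Delta\log 30r)$ (with a harmless adjustment of constants, $\alpha<1$ being absorbed). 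Now take any $S\subseteq H$ with $|S|\ge n/F(r)$. Feed $S$ to the oracle; it returns $S'\supseteq S$ with all of $S'$ within graph distance $D_0$ of a common center, and since no small block has expansion $\le\alpha$ and $S'$ is, up to the previously-removed part, the $D_0$-neighborhood structure the oracle produces, $S'$ must have size more than $100n/r$ — otherwise $S'$ would be a small block, contradicting that we are still in the regime where no small block has small expansion. Hence the $D_0$-neighborhood of $S$ meets more than $100n/r$ vertices; intersecting with $H$ (only $\le n/(10\Delta\log 30r)$ vertices were removed, which is a lower-order term) still leaves at least $100n/r$ vertices of $H$ reachable, giving the claimed expansion statement inside $H$.

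The main obstacle I anticipate is bookkeeping the double-counting and the exact constants: one must be careful that an edge leaving a small block is charged correctly (it contributes to the partition's cut capacity, and one small block's ``expansion'' is measured against its own size, so the factor $2$ versus the constants $20$ and $10$ in \Cref{lem:regiongrow} has to line up), and that removing the $\le n/(10\Delta\log 30r)$ discarded vertices from the $D_0$-neighborhood count does not destroy the ``at least $100n/r$'' conclusion — this is why the statement is phrased with $100n/r$ rather than a tight constant, leaving slack. A secondary subtlety is ensuring that the center/diameter property of the oracle is genuinely what lets us identify the $D_0$-neighborhood of $S$ with (a subset of) the returned block $S'$; this is exactly the interface promised by \Cref{lem:regiongrow}, so it should go through, but it needs to be invoked explicitly. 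Everything else is a direct charging argument against the capacity bound already supplied by the region-growing framework.
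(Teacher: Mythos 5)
Your proof of the first assertion is essentially the paper's own argument, just phrased contrapositively: the outgoing capacities of the small blocks are charged against the oracle's cut-capacity bound with a factor of $2$ for double counting, and an averaging step yields a block of expansion at most $\alpha$. (As in the paper's own proof, your contradiction really uses the hypothesis in the form $\sum_{j\le i,\,|S_j'|\le 100n/r}|S_j'|\ge n/(10\Delta\log 30r)$; the extra $\alpha$ in the displayed hypothesis is an inconsistency of the paper itself, so this is fine.)

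For the ``in particular'' clause, which the paper leaves implicit, your overall plan is the intended one (strip off non-expanding large subsets as small blocks and control the total stripped mass via the first assertion), but the final deduction is misstated. When you feed $S\subseteq H$ with $|S|\ge n/F(r)$ to the oracle and get back a block $S'$ of size at most $100n/r$, this is \emph{not} by itself in contradiction with ``no small block has expansion $\le\alpha$'': a small block with large expansion is perfectly consistent with the regime you are in. The correct chain is: any large subset whose $D_0$-neighborhood in the current set has fewer than $100n/r$ vertices gets removed as a small block; either the cumulative size of small blocks ever reaches $n/(10\Delta\log 30r)$, in which case the first assertion produces a low-expansion block and we are done, or one iterates this removal until no subset of the surviving set $H$ of size at least $n/F(r)$ fails to reach $100n/r$ vertices inside $H$ (iteration is needed because each removal can shrink the neighborhoods of the remaining sets), and then $|H|\ge |V_i|-n/(10\Delta\log 30r)$ follows from the cumulative-size bound. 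With that restatement your argument goes through and matches the intended use of the lemma in the paper.
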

\begin{proof}
  If we take the sum of capacity of all outgoing edges from these
  $S_j'$, each edge in the partition is counted at most twice,
  therefore
  $$\sum_{j\le i, |S_j'| \le 100n/r} |E(S_j',V\backslash S_j')| \le n\alpha/10\Delta\log 30r.$$
  On the other hand we know the sum of sizes is at least $n/10\Delta\log
  30r$, by averaging argument there must be one set that has expansion
  $\alpha$.
\end{proof}
For proving \Cref{lemma:dualsolution} this $\alpha$ will be
chosen as $O(d\sqrt{\log n}/\sqrt{\log r})$.

Notice that it is very important that the algorithm always uses the
partition oracle when it wants to remove a set of vertices. If the
algorithm simply removes a set of vertices, it will be hard to bound
the number of edges cut, and \Cref{cor:expansion} is no longer
true. In this case we may have many obstacle sets of type I and cannot
find a matching cover.

For obstacle sets of type II, since a large fraction of the vertices
in their neighbourhood have large $s_i$, they cannot cover a lot of
vertices without contradicting the validity of the dual solution

\begin{lemma}
\label{lem:obstacleII}
Use the partition oracle in \Cref{lem:singlescale} to remove
obstacle sets of type II. At any step, let $II$ be the set of steps
where a set of type II is removed. Then $\sum_{j\in II} |S_j'| \le
n/20\Delta\log 30r$.
\end{lemma}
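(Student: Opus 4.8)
The plan is to bound $\sum_{j \in II}|S_j'|$ by charging the volume of each removed type-II set against the $s_i$-mass sitting on its $D_0$-neighborhood, and then use the dual-value constraint~\eqref{eqn:newdualvalue} (equivalently, the normalized bound $\sum_{i\in V} s_i \le \beta n$) to cap the total. Recall that when the oracle removes a type-II obstacle $T \subseteq Q_j$ at step $j$, by definition $|T| \ge 10n/r$, and the $D_0$-neighborhood of $T$ in $V_{j-1}$ contains at most $100n/r$ vertices of $V_{j-1}\setminus Q_{j-1}$. The oracle's output $S_j'$ satisfies $T \subseteq S_j' \subseteq V_{j-1}$ and has diameter at most $D_0$ (every point of $S_j'$ is within $D_0$ of the center, which lies in $T$), so $S_j' $ lies inside the $D_0$-neighborhood of $T$. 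Hence $|S_j' \setminus Q_{j-1}| \le 100n/r$, which means that at least $|S_j'| - 100n/r$ of the vertices of $S_j'$ have $s_i \ge D_2/10 = \Omega(\Delta/\log r)$. Since $|S_j'| \ge |T| \ge 10n/r$ is only a weak lower bound, I would instead observe directly that $S_j'$ contains $T \subseteq Q_{j-1}$, so in fact \emph{all} of $T$ carries $s$-weight $\ge D_2/10$; this already gives $\sum_{i \in S_j'} s_i \ge |T|\cdot D_2/10 \ge \frac{D_2}{10}\max\{|T|, \, |S_j'| - 100n/r\}$, and the right comparison to push through is with $|S_j'|$ itself.

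Next I would turn this into a bound on $|S_j'|$. The point of the type-II definition is that the large-$s$ part of $S_j'$ is a constant fraction of $S_j'$: since $|S_j' \cap Q_{j-1}| \ge |T| \ge 10n/r$ while $|S_j' \setminus Q_{j-1}| \le 100n/r$, we don't immediately get a constant fraction — so the cleaner route is to note $|S_j'| \le |S_j' \cap Q_{j-1}| + 100n/r$ and bound the two contributions separately: $\sum_{j\in II}|S_j' \cap Q_{j-1}| \le \sum_{i: s_i \ge D_2/10} 1 \le \frac{\beta n}{D_2/10} = O(\beta n \log r / \Delta)$ because the sets $S_j'$ are disjoint (the oracle removes them) and each such vertex is counted at most once; and $\sum_{j\in II} 100n/r \le 100n/r \cdot |II|$. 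To control $|II|$, I use that each step of the oracle removes a set of size at least $n/F(r)$, and the total number of vertices is $n$, so $|II| \le F(r)$; this gives $\sum_{j\in II} 100n/r \le 100 F(r) n / r$, which is far too big unless $F(r)$ is chosen small — so this crude accounting is not enough, and I must instead charge the $100n/r$ slack differently.

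The fix, and what I expect to be the main obstacle, is the $100n/r$ ``slack'' vertices per type-II set: these are genuinely low-$s$ vertices inside $S_j'$, so they cannot be charged to $\sum s_i$. Here I would invoke the region-growing / partition-oracle accounting in parallel with \Cref{cor:expansion}: the oracle's partition has total cut capacity at most $n\alpha/(20\Delta\log 30r)$ at every step, and each removed set $S_j'$ with the type-II property must in particular \emph{not} be an expanding set (otherwise its $D_0$-neighborhood would be large), which by the region-growing guarantee means each such $S_j'$ of size $\le 100n/r$ contributes boundary capacity proportional to its size; summing, $\sum_{j\in II}|S_j'| \le$ (const)$\cdot \Delta\log 30r / \alpha \cdot (\text{total cut capacity}) \le n/(20\Delta\log 30r)$ after plugging in the capacity bound, exactly mirroring the proof of \Cref{lem:obstacleI}. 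Concretely: run the same averaging argument as in \Cref{cor:expansion} but restricted to the steps $j \in II$; if $\sum_{j\in II}|S_j'|$ exceeded $n/(20\Delta\log 30r)$ then by the capacity bound one of these $S_j'$ would have expansion below $\alpha$, and combined with the fact that its large-$s$ core has $s$-mass $\ge \frac{D_2}{10}|S_j'\cap Q|$, we would either violate $\sum_i s_i \le \beta n$ (too much $s$-mass) or violate \eqref{eqn:newpath} along a short path through the non-expanding set — either way contradicting feasibility of the dual solution. Thus the only surviving possibility is $\sum_{j\in II}|S_j'| \le n/(20\Delta\log 30r)$, which is the claim. The delicate point to get right is the interleaving of the two budgets (the $s_i$-budget $\beta n$ and the cut-capacity budget $n\alpha/(20\Delta\log 30r)$) so that removing type-I and type-II sets simultaneously through the \emph{same} oracle keeps both \Cref{lem:obstacleI} and this lemma valid; I would state the bookkeeping so that type-I removals are charged to the capacity budget and type-II removals are charged to the $s_i$-budget, with the $100n/r$ slack per type-II set absorbed into the capacity side exactly as in \Cref{cor:expansion}.
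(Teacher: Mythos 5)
There is a genuine gap, and ironically you had all the pieces before you abandoned them. From the two facts you correctly derive --- $|S_j'\cap Q_{j-1}|\ge |T|\ge 10n/r$ and $|S_j'\setminus Q_{j-1}|\le 100n/r$ --- a constant fraction \emph{does} follow immediately: $|S_j'\setminus Q_{j-1}|\le 100n/r = 10\cdot(10n/r)\le 10\,|S_j'\cap Q_{j-1}|$, hence $|S_j'|\le 11\,|S_j'\cap Q_{j-1}|$, i.e.\ at least a $1/11$ fraction (the paper gets $1/10$ using its $90n/r$ count) of every removed type-II set consists of vertices with $s_i\ge D_2/10$. This is exactly the paper's one-line argument: since the removed sets are disjoint, $\beta n\ \ge\ \sum_i s_i\ \ge\ \sum_{j\in II}\frac{|S_j'|}{11}\cdot\frac{D_2}{10}$, and with $D_2=\Omega(\Delta/\log r)$ and $\beta\le C\log^{-2}r$ for small enough $C$ this gives $\sum_{j\in II}|S_j'|\le n/(20\Delta\log 30r)$. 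Your claim that ``we don't immediately get a constant fraction'' is simply wrong, and it is what sends you off on the detour; the $100n/r$ ``slack'' vertices never need to be charged separately, because their number is dominated by ten times the size of the large-$s$ core sitting in the same set.

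The capacity-based patch in your last paragraph does not work. First, \Cref{cor:expansion} is an averaging statement: \emph{if} the total removed volume is large, \emph{then} some removed set has host-graph expansion at most $\alpha$ --- it is not an upper bound on the removed volume, and producing such a low-expansion set is the algorithm's legitimate alternative output in \Cref{lemma:dualsolution}, not a contradiction with feasibility of the dual (the dual constraints \cref{eqn:newdualvalue}--\cref{eqn:newpath} say nothing that a small non-expanding set in the host graph would violate). Second, your premise that a type-II set ``must not be an expanding set, otherwise its $D_0$-neighborhood would be large'' is false: the type-II condition only limits the number of \emph{low-$s$} vertices (those in $V_i\setminus Q_i$) in the neighborhood, so $S_j'$ may expand into arbitrarily many high-$s$ vertices. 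So the final ``either violate $\sum_i s_i\le\beta n$ or violate \cref{eqn:newpath}'' dichotomy is not established by your argument, whereas the lemma needs (and the paper gives) an unconditional bound using only the $s$-budget.
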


\begin{proof}
  By the definition of obstacle sets of type II, we know each $S_j'$
  contains at least $10n/r$ vertices with $s$-value at least
  $D_2/10$. On the other hand, it contains at most $90n/r$ vertices
  with $s$-value smaller than $D_2/10$. Therefore $1/10$ fraction of
  the vertices in $S_j'$ have $s$ value at least $D_2/10$.
  $$\sum_{u\in V} s_u \ge \sum_{j\in II}\sum_{u\in S_j'} s_u \ge \frac{|S_j'|}{10} \cdot \frac{D_2}{10}.$$
  On the other hand $\sum_{u\in V} s_u \le \beta n$, so when $\beta =
  C\log^{-2} r$ for small enough $C$ we know $\sum_{j\in II} |S_j'|
  \le n/20\Delta\log 30r$.
\end{proof}

\subsubsection{Gaussian Projections and Matching Covers}
\label{subsec:matchingcover}
Recall the definitions of Matching Covers and Uniform Matching Covers
in~\cite{ARV}:
\begin{definition}
  A $(\sigma, \delta, c')$-matching cover of a set of points is a set
  $\mathcal{M}$ of matchings such that for at least a fraction
  $\delta$ of directions $u$, there exists a matching $M_u\in
  \mathcal{M}$ of at least $c'n$ pairs of points, such that each pair
  $(i,j)\in M_u$ are within graph distance $2D_0$, and
  satisfies $$\left<X_i-X_j, u\right> \ge
  2\sigma\sqrt{\Delta}/\sqrt{d}.$$
  The associated matching graph $M$ is defined as the multigraph
  consisting of the unions of all matchings $M_u$.
\end{definition}

\begin{definition}
  A set of matchings $\mathcal{M}$
  $(\sigma,\delta)$-uniform-matching-covers a set of points $S$ if for
  every unit vector $u$, there is a matching $M_u$ of $S$ such that
  every $(i,j)\in M_u$ is $2D_0$ close in graph distance, satisfies
  $|\left<u,X_i-X_j\right>| \ge 2\sigma\sqrt{\Delta}/\sqrt{d}$, and
  for every $i$, $\mu(u:i \mbox{ matched in } M_u) \ge \delta$.
\end{definition}

Notice that in addition to the properties in \cite{ARV}, we further
require that every matched pair must be close in graph distance.

Let the dimension of $X_i$'s be $d$. Let $u$ be a uniformly random
unit vector, when $d$ is large enough we know
\begin{lemma}
\label{lem:gaussian}
There exists thresholds $0 < \theta_1 < \theta_2$ such that $\theta_2
- \theta_1 = \Omega(\sqrt{\log r}/\sqrt{d})$ and polynomial
$G(r)$. Let $Good(u)$ be the event that number of vertices with
projection more than $\theta_1\Delta$ is smaller than $5n/r$.  For any
vertex $i$ whose $\norm{Z_i}^2 \ge \Delta/2$, $\Pr[u\cdot X_i\ge
\theta_2 \sqrt{\Delta} \mbox{ and }Good(u)] \ge 1/G(r)$.
\end{lemma}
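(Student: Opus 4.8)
The plan is to realize $u$ as a uniformly random unit vector in $\R^d$ and exploit that, after the single‑scale mapping of \Cref{cor:singlescalemapping}, every vertex has $\norm{X_i}^2 = \Delta/2$, so that for $d$ large the scalar $u\cdot X_i$ behaves like a centered Gaussian of variance $\norm{X_i}^2/d = \Delta/(2d)$. I would fix $\theta_2 = C_0\sqrt{(\log r)/d}$ for a large absolute constant $C_0$ and $\theta_1 = \theta_2 - c\sqrt{(\log r)/d}$ for a small constant $c<C_0$, so that $\theta_2-\theta_1 = c\sqrt{(\log r)/d} = \Omega(\sqrt{\log r}/\sqrt d)$ holds automatically; writing $A_i$ for the event $u\cdot X_i\ge\theta_2\sqrt\Delta$ and $N(u) = |\{j: u\cdot X_j > \theta_1\sqrt\Delta\}|$ (so $\mathrm{Good}(u) = \{N(u)<5n/r\}$), the whole task is to lower bound $\Pr[A_i\cap\mathrm{Good}(u)]$. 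A Gaussian tail \emph{lower} bound gives $\Pr[A_i]\ge 1/G_0(r)$ for a polynomial $G_0$ whose degree grows with $C_0^2$ (the threshold lies $\Theta(C_0\sqrt{\log r})$ standard deviations out), so it suffices to prove $\E[\,N(u)\mid A_i\,]\le \tfrac{5n}{2r}$: then Markov gives $\Pr[\mathrm{Good}(u)\mid A_i]\ge\tfrac12$ and hence $\Pr[A_i\cap\mathrm{Good}(u)]\ge\tfrac12\Pr[A_i]\ge 1/G(r)$ with $G:=2G_0$.

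To bound $\E[N(u)\mid A_i]$ I split the vertices $j$ into those close to $i$, with $\norm{X_i-X_j}^2\le\Delta/20$, and the rest. For a close vertex, \Cref{cor:singlescalemapping} and the hypothesis $\norm{Z_i}^2\ge\Delta/2$ give $\norm{Z_i-Z_j}^2\le 8\norm{X_i-X_j}^2\le 2\Delta/5\le \tfrac45\norm{Z_i}^2$, so the spreading constraint \cref{eqn:spreading} caps the number of close vertices at $5n/(4r)$, which I charge to $N(u)$ in full. For a far vertex, decompose $X_j=\rho_jX_i+Y_j$ with $Y_j\perp X_i$; then $\norm{X_i-X_j}^2>\Delta/20$ forces $\rho_j<19/20$, and conditioned on $A_i$ one has $u\cdot X_j=\rho_j(u\cdot X_i)+u\cdot Y_j$ where $u\cdot X_i$ is a Gaussian truncated below at $\theta_2\sqrt\Delta$ (concentrated just above it, with an exponentially thin upper tail) and $u\cdot Y_j$ is an essentially independent centered Gaussian of variance $(1-\rho_j^2)\Delta/(2d)$. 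Because $\rho_j\le 19/20$, the conditional mean $\rho_j\theta_2\sqrt\Delta$ of $u\cdot X_j$ falls below $\theta_1\sqrt\Delta$ by $\Omega(\sqrt{\Delta(\log r)/d})$ once $C_0$ is large enough relative to $c$ — this is exactly why $\theta_2$ must be a large constant multiple of $\sqrt{(\log r)/d}$, not just a couple of standard deviations — so a Gaussian tail bound yields $\Pr[\,u\cdot X_j>\theta_1\sqrt\Delta\mid A_i\,]\le 1/r$ for each far $j$, contributing at most $n/r$ in total. Hence $\E[N(u)\mid A_i]\le \tfrac{5n}{4r}+\tfrac nr\le\tfrac{5n}{2r}$ and the lemma follows.

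The main obstacle is precisely this conditional second moment. Naively, conditioning on $A_i$ biases $u\cdot X_j$ upward for every $j$ whose direction correlates with $X_i$, and with no control on that correlation $\Pr[\,u\cdot X_j>\theta_1\sqrt\Delta\mid A_i\,]$ could be a constant, destroying the Markov bound. Two ingredients rescue it: the spreading constraint, which confines the highly correlated vertices (those with $\rho_j$ near $1$) to a set of size $O(n/r)$ that we may overcount for free; and taking $C_0$ large, which leaves an $\Omega(\sqrt{(\log r)/d})$ safety margin between the conditional mean of $u\cdot X_j$ and $\theta_1\sqrt\Delta$ even in the worst surviving case $\rho_j=19/20$. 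One also has to control the thin upper tail of the truncated Gaussian $u\cdot X_i\mid A_i$, since a rare very large value of $u\cdot X_i$ would again inflate $u\cdot X_j$; further conditioning on $u\cdot X_i\in[\theta_2\sqrt\Delta,2\theta_2\sqrt\Delta]$, an event of probability $\Theta(\Pr[A_i])$, removes this cleanly.
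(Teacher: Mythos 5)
Your proposal is correct and follows essentially the same route as the paper's (sketched) proof: the spreading constraint caps the close-by points at $5n/4r$, the conditional probability that a far point also projects above $\theta_1\sqrt{\Delta}$ given $u\cdot X_i\ge\theta_2\sqrt{\Delta}$ is shown to be small, and Markov yields $\Pr[Good(u)\mid A_i]\ge 1/2$. You merely flesh out the details (explicit thresholds, the orthogonal decomposition, and the handling of the truncated Gaussian's upper tail) that the paper leaves implicit.
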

\begin{proof} (sketch) We know each such $i$ has at most $5n/4r$
  closeby points. For points that are not close, conditioned on $i$
  has large projection, the probability that they also have pretty
  large projection is very small. Hence conditioned on $i$ being in,
  the expected number of vertices that have large projection is
  small. By Markov we know $\Pr[Good(u)|u\cdot X_i\ge
  \theta_2\sqrt{\Delta}] \ge 1/2$.
\end{proof}
Given a set of vertices $V_t$, which can be partitioned into three
parts $P,Q,R$, vertices $i\in P$ all have $\norm{Z_i}^2 \ge \Delta/2$,
vertices $j\in Q$ all have $s_j \le D_2/10$, and $|R| \le n/F(r)$
(notice that this is exactly what's guaranteed by
\Cref{lem:singlescale}), we will use the following algorithm to
find matching covers:
\framebox{
\parbox{0.95\textwidth}{
{\sc Construct Cover (P,Q,R)}
\begin{enumerate}\itemsep 0pt
\item Pick uniformly random unit vector $u$.
\item Let $Left = \{i:i\in P\mbox{ and } \left<X_i,u\right> \ge
  \theta_2\sqrt{\Delta}\}$, $Right = \{i:i\in P\mbox{ and }
  \left<X_i,u\right> \le \theta_1\sqrt{\Delta}\}$.
\item While exists pair $i\in Left$ and $j\in Right$ within graph
  distance $2D_0$.
\item $\qquad$ Match $(i,j)$, remove $i,j$ from $Left,Right$.
\item Fail if $|P\backslash Right| < 5n/r$ and number of unmatched
  vertices in $Left$ is at least $n/F(r)$
\end{enumerate}
}}

If the algorithm fails, the following Lemma shows that we will have an
obstacle set of type I or II.
\begin{lemma}
\label{lem:coverfail}
If Construct Cover fails, then
it finds an obstacle set of I or II.
\end{lemma}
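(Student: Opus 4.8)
The plan is to analyze the two ways in which \textsc{Construct Cover} can reach its failure line in step~5 and show that each one exhibits one of the two obstacle sets. The failure condition says that $|P\setminus Right| < 5n/r$ \emph{and} there are at least $n/F(r)$ unmatched vertices left in $Left$. Let $U\subseteq Left$ be this set of unmatched vertices, $|U|\ge n/F(r)$. The key structural fact produced by the matching loop (steps 3--4) is that when it terminates, \emph{no} vertex of $U$ is within graph distance $2D_0$ of any remaining vertex of $Right$; in particular every vertex of $U$ has its $D_0$-neighbourhood (in graph distance, within $V_t$) disjoint from all of $Right$ except for the $O(n/r)$ vertices that were matched and removed. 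So the $D_0$-neighbourhood of $U$ inside $V_t$ is contained in $(V_t\setminus Right) \cup (\text{matched part of }Right)$, which has size at most $|P\setminus Right| + |R| + (\text{number of matched pairs})$. Since $Good(u)$-type reasoning (\Cref{lem:gaussian}) controls the number of high-projection vertices, I would first condition on, or restrict attention to, the directions $u$ for which $|P\setminus Right| = |\{i\in P: \langle X_i,u\rangle > \theta_1\sqrt\Delta\}| < 5n/r$; these are exactly the directions where the failure branch is even reachable. On such a direction the first half of the failure condition is automatic, so the content is entirely in the second half.

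Next I would split into cases according to which part the $D_0$-neighbourhood of $U$ lands in. First suppose a large fraction (say at least half) of the $D_0$-neighbourhood of $U$ inside $V_t$ consists of vertices not in $Q$, i.e.\ vertices with $s$-value at least $D_2/10$; since $|U|\ge n/F(r)\ge 10n/r$ we would then declare $U$ (or a slightly enlarged version of it, intersected with $Q_i$) an obstacle set of type II, using that its $D_0$-neighbourhood contains at most $100n/r$ vertices of $V_t\setminus Q_t$. Conversely, if the $D_0$-neighbourhood of $U$ is mostly inside $Q$ (small $s$-value vertices) and yet small overall, I use the bound above: the neighbourhood has size at most $|P\setminus Right| + |R| + 2|Left| \le 5n/r + n/F(r) + 2\cdot(5n/r) = O(n/r)$ — here the matched part of $Right$ is at most the number of matched pairs, which is at most $|Left|$, and $|Left|$ is itself $O(n/r)$ because $Left\subseteq \{i:\langle X_i,u\rangle\ge\theta_2\sqrt\Delta\}$ and on a $Good$-type direction that set has size $O(n/r)$ by \Cref{lem:gaussian}. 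Then the $D_0$-neighbourhood of $U$ contains at most $100n/r$ vertices of $V_t$, so $U$ is an obstacle set of type I (it has size $\ge n/F(r)$, as required).

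The bookkeeping to watch is that the failure line mixes a ``global'' quantity $|P\setminus Right|$ (measured against the whole of $P$) with a ``local'' quantity (unmatched vertices in $Left$), and one has to be careful that $Left$ itself is small — this is where the $Good(u)$ mechanism of \Cref{lem:gaussian} enters, guaranteeing $|Left| = O(n/r)$ for the relevant directions so that the matched-and-removed part of $Right$ does not blow up the neighbourhood bound. I expect the main obstacle to be precisely this: cleanly arguing that the directions reaching the failure branch are exactly (a subset of) the good directions, so that all of $|P\setminus Right|$, $|Left|$, and the number of matched pairs are simultaneously $O(n/r)$; once that is in place the dichotomy into type I versus type II obstacle sets according to the $s$-values in the $D_0$-neighbourhood is essentially a counting argument. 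Finally I would note that the obstacle set is found constructively — $U$ is just the set of unmatched left-vertices, and deciding type I vs.\ type II only requires counting $s$-values in a graph-distance ball — so the lemma is algorithmic as needed downstream.
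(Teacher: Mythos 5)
Your opening observation is the right one, and it matches the paper's proof: under the failure branch $Left\subseteq\{i\in P:\langle X_i,u\rangle>\theta_1\sqrt{\Delta}\}=P\setminus Right$ has size $<5n/r$, so at most $5n/r$ vertices of $Right$ were ever matched and removed, and since the loop only halts when no unmatched $Left$--$Right$ pair lies within graph distance $2D_0$, the set $U$ of unmatched $Left$ vertices satisfies $|\Gamma_{2D_0}(U)\cap P|\le|P\setminus Right|+|Left|<10n/r$ (the paper states this with the generous constant $90n/r$). The genuine gap is in how you exhibit the obstacle set. A type II obstacle must be a subset of $Q_i$ (vertices with $s_i\ge D_2/10$) of size at least $10n/r$ whose $D_0$-neighbourhood contains at most $100n/r$ vertices of $V_i\setminus Q_i$. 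The set you offer, $U$ (or $U\cap Q_i$), does not qualify: $U\subseteq Left\subseteq P$ carries no guarantee on $s$-values, intersecting with $Q_i$ can destroy the cardinality, and the inequality you invoke, $n/F(r)\ge 10n/r$, points the wrong way --- $F(r)$ is a polynomial larger than $r$ (the paper's own proof of this lemma needs $90n/r+n/F(r)<100n/r$), so $|U|\ge n/F(r)$ gives nothing near $10n/r$. Moreover you never verify the defining property of a type II set, which concerns the $D_0$-neighbourhood of the type II set itself, not of $U$: knowing that half of $\Gamma_{D_0}(U)$ has large $s$ does not bound by $100n/r$ the number of small-$s$ vertices near your candidate set.

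The paper's resolution, which your argument is missing, is to take the type II obstacle to be $T=\Gamma_{D_0}(U)\cap Q$, the large-$s$ part of the neighbourhood, in the case $|\Gamma_{D_0}(U)|>100n/r$: then $|T|>100n/r-90n/r-n/F(r)\ge 10n/r$ because the $P$-part of the neighbourhood is small by the matching argument and $|R|\le n/F(r)$, while $\Gamma_{D_0}(T)\cap P\subseteq\Gamma_{2D_0}(U)\cap P\le 90n/r$ by the same ``no unmatched pair within $2D_0$'' argument, so $|\Gamma_{D_0}(T)\cap(P\cup R)|<100n/r$ as required; in the complementary case $|\Gamma_{D_0}(U)|\le 100n/r$, the set $U$ itself is a type I obstacle. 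Your ``type I'' case is likewise not established: the size bound $|P\setminus Right|+|R|+2|Left|$ omits the $Q$-part of $\Gamma_{D_0}(U)$, which your containment $(V_t\setminus Right)\cup(\text{matched }Right)$ does not exclude (it contains all of $Q$), so you cannot conclude the neighbourhood is $O(n/r)$ --- the case of a neighbourhood swollen by $Q$-vertices is precisely the one that must be converted into a type II obstacle as above. (A mitigating note: the description of {\sc Construct Cover} says $s_j\le D_2/10$ for $j\in Q$, but consistency with \Cref{lem:singlescale} and with the definition of obstacle sets forces $Q$ to be the large-$s$ vertices; this typo in the paper may have contributed to the confusion.)
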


\begin{proof}
  Let $S$ be the set of vertices that are left unmatched in
  $Left$. Let $\Gamma_{D_0}(S)$ be the $D_0$ neighbourhood of $S$ in
  $P\cup Q\cup R$. If $|\Gamma_{D_0}(S)| \le 100n/r$ first case of the
  Lemma is satisfied.

  If $|\Gamma_{D_0}(S)| > 100n/r$, then either $|\Gamma_{D_0}(S)\cap
  P| > 80n/r$, in which case by simple counting argument there must be
  a point left in $Right$ that is close to some point in $S$, and
  these two vertices can be matched (this contradicts with the
  assumption). When $|\Gamma_{D_0}(S)\cap P| \le 90n/r$, let $T =
  \Gamma_{D_0}(S)\cap Q$. Clearly $|T| > 10n/r$, and
  $\Gamma_{D_0}(T)\cap P \subset \Gamma_{2D_0}(S) \cap P$. The number
  of $2D_0$ neighbours of $S$ in $P$ cannot be more than $90n/r$
  (otherwise we will be able to find a matching pair), hence
  $|\Gamma_{D_0}(T)\cap P| \le 90n/r$ and $|\Gamma_{D_0}(T)\cap (P\cup
  R)| \le 90n/r+n/F(r) < 100n/r$.
\end{proof}

\cite{ARV} has a Lemma that shows matching covers imply uniform
matching covers. However in our situation, in order to apply the cover
composition Lemma, we need a really large uniform matching cover,
which is not guaranteed by the Lemma in \cite{ARV}.

If Construct Cover does not fail with polynomial probability, then
\Cref{lem:gaussian} means the matching cover is already
``almost'' uniform, in the sense that if we ignore the fact that
$n/F(r)$ points will not be matched, each vertex will be in the
matching with probability at least $1/G(r)$.

\begin{lemma}
\label{lem:bigcover}
If Construct Cover fails with probability less than $1/n^2G(r)$, then
there is a set $W\subset P$ of size at least $|P|-4|P|/r$ that is
$(1/G(r)r, \theta_2-\theta_1)$ uniformly matching covered.
\end{lemma}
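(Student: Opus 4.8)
The plan is to analyze the randomized experiment underlying \textsc{Construct Cover}: sample a uniformly random unit vector $u$, run the procedure, and use \Cref{lem:gaussian} together with the low-failure hypothesis to show that all but an $O(1/r)$-fraction of the vertices of $P$ get matched --- by the matching \textsc{Construct Cover} outputs --- with probability $\Omega(1/G(r))$ over the choice of $u$. These vertices will form $W$, and for each $u$ we take $M_u$ to be the matching produced on input $u$, restricted to pairs whose endpoints both lie in $W$; the uniform matching cover is then obtained by verifying the three conditions in the definition.

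First I would fix $i\in P$. Since $i\in P$ satisfies $\|Z_i\|^2\ge\Delta/2$, \Cref{lem:gaussian} gives $\Pr_u[\langle X_i,u\rangle\ge\theta_2\sqrt{\Delta}\text{ and }Good(u)]\ge 1/G(r)$; on that event $i\in Left$ and $|P\setminus Right|<5n/r$. By hypothesis $\Pr_u[\text{fail}]<1/(n^2G(r))$, and whenever a run does not fail and has $|P\setminus Right|<5n/r$, the failure rule of the procedure forces the number of unmatched $Left$-vertices to be below $n/F(r)$. Setting $p_i:=\Pr_u[\,i\in Left,\ Good(u),\ i\text{ not matched, run does not fail}\,]$, we get $\Pr_u[i\text{ matched}]\ge 1/G(r)-1/(n^2G(r))-p_i$, and summing over $i\in P$ the quantity $\sum_i p_i$ is at most the expected number of unmatched $Left$-vertices over non-failing runs with $Good(u)$, which is below $n/F(r)$.

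Then I would apply Markov's inequality twice. First, all but at most $2nG(r)/F(r)$ vertices of $P$ have $p_i\le 1/(2G(r))$, and for these $\Pr_u[i\text{ matched}]\ge 1/(3G(r))$ once $n$ is large; call this set $W_0$. Second, in any single run at most $|P\setminus W_0|$ vertices of $W_0$ can be matched to vertices outside $W_0$, so all but a further $O(nG(r)^2/F(r))$ vertices of $W_0$ are matched \emph{inside} $W_0$ with probability at least $1/(6G(r))$; let $W$ be this set. The total number removed is $O(nG(r)^2/F(r))$, which is below $4|P|/r$ once the fixed polynomial $F(r)$ is chosen large enough (using that $|P|\ge n/\mathrm{poly}(r)$ at this stage). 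Finally I would check that $W$ is $(\sigma,\delta)$-uniformly-matching-covered with $\sigma=1/(G(r)r)$ and $\delta=\theta_2-\theta_1$: every pair matched by \textsc{Construct Cover} is within graph distance $2D_0$ (by the matching step); a matched pair $(i,j)$ has $\langle u,X_i-X_j\rangle\ge(\theta_2-\theta_1)\sqrt{\Delta}\ge 2\sqrt{\Delta}/(G(r)r\sqrt{d})$ since $\theta_2-\theta_1=\Omega(\sqrt{\log r}/\sqrt{d})$ by \Cref{lem:gaussian}; and for each $i\in W$ the set of $u$ in which $i$ is matched inside $W$ has measure at least $1/(6G(r))\ge\theta_2-\theta_1$, the last step again using the bound on $\theta_2-\theta_1$ (valid once the ambient dimension $d$ is large).

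The step I expect to be the main obstacle is this two-layer accounting for exceptional vertices: one must simultaneously discard the vertices the greedy matching often leaves unmatched (controlled via the $n/F(r)$ slack in the failure rule and the hypothesis $\Pr_u[\text{fail}]<1/(n^2G(r))$, which together give $\sum_i p_i\le n/F(r)$) and the vertices frequently matched to already-discarded ones, all while keeping the total loss below $4|P|/r$. This is precisely where the freedom to take $F(r)$ a sufficiently large polynomial is used; once this bookkeeping is in place, the two probabilistic ingredients --- \Cref{lem:gaussian} and the failure hypothesis --- slot in directly.
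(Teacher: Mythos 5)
Your first step (per-vertex lower bound of $1/G(r)-1/(n^2G(r))-p_i$ on the probability of being matched, together with $\sum_i p_i\le n/F(r)$ from the failure rule) is sound and is essentially the same measure-counting that the paper performs on the union matching graph. The genuine gap is in your second pruning layer. After the second Markov step you define $W$ so that every $i\in W$ is matched \emph{inside $W_0$} with probability at least $1/(6G(r))$ --- but the definition of a uniform matching cover of $W$ requires, for each direction $u$, a matching \emph{of $W$}, i.e.\ both endpoints of every retained pair must lie in $W$, and the measure bound must hold for being matched \emph{within $W$}. Directions in which $i\in W$ is matched to a vertex of $W_0\setminus W$ are lost, and your argument gives no per-vertex control on that loss; you only control its sum over $i$. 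Repairing this by one more Markov round just reproduces the same problem one level down (each round shrinks the guaranteed probability and multiplies the exceptional count by roughly $G(r)$, and the final round still only certifies matching within the previous set), so no fixed number of rounds closes the argument, no matter how large the polynomial $F(r)$ is taken.

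The paper closes exactly this hole with an iterative pruning plus a volume (charging) bound on the weighted matching multigraph: first delete the ``fictitious'' edges of unmatched $Left$-vertices (total weight at most $n/F(r)$, i.e.\ at most about $|P|/(G(r)r)$), then \emph{repeatedly} delete any vertex whose remaining within-surviving-set matched measure falls below $1/(G(r)r)$; each such deletion removes at most $1/(G(r)r)$ of weight, so the total weight removed is $O(|P|/(G(r)r))$, and since every vertex of $P$ started with degree at least $1/G(r)$, at most $O(|P|/r)$ vertices can ever be deleted. The surviving set $W$ then satisfies the within-$W$ measure bound $1/(G(r)r)$ simultaneously for all its vertices, which is what \Cref{lem:ARV} consumes. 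Note also that the paper uses the weaker threshold $1/(G(r)r)$ (not $\Theta(1/G(r))$) precisely because the fixed-point of this pruning is what can be guaranteed; relatedly, in the intended reading of the lemma the parameter $1/(G(r)r)$ is the measure of directions in which each vertex of $W$ is matched and $\theta_2-\theta_1$ governs the projection gap, so your final check ``$1/(6G(r))\ge\theta_2-\theta_1$'' is aimed at the wrong inequality. I would rewrite your last two steps as: keep your $W_0$-free first step (or the paper's simpler ``pretend unmatched vertices are matched'' device), then run the iterative low-degree deletion with the charging bound above.
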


\begin{proof}
  Consider the matching graph. First, even for the $n/F(r)$ points
  that remains unmatched, consider that they are matched to
  something. In this case each vertex has degree at least $1/G(r)$ by
  \Cref{lem:gaussian}.

  Now remove the edges that correspond to unmatched edges. In this
  step we have removed at most $|P|\cdot 1/G(r)r$ volume. Then we
  repeatedly remove any vertex that has degree at most
  $1/G(r)r$. Again we will remove at most $|P|/G(r)r$ volume. So the
  total volume removed is bounded by $2|P|/G(r)r$.

  However, we know that each vertex in $P$ started with degree at
  least $1/G(r)$. Removing $2|P|/G(r)r$ volume can reduce the degree
  of at most $4|P|/r$ vertices to below $1/2G(r)$. Therefore at most
  $4|P|/r$ vertices are removed.
\end{proof}

\subsubsection{Adapting ARV}
\label{subsec:ARV}

Using the uniform matching cover constructed above, and mechanisms in
\cite{ARV}, we can get the following Lemma.

\begin{lemma}
  \label{lem:ARV}
  If $W\subset V$ and has $(1/G(r)r, \Omega(\sqrt{\log r}/\sqrt{d})$
  uniform matching cover. Then there exists an algorithm that either
  finds $i,j\in W$, such that $d_{i,j} \le D_1 = O(D_0 \cdot \sqrt{\log
    n}/\sqrt{\log r})$, and $\|X_i-X_j\|^2 \ge D_2 = \Omega(\Delta/\log
  r)$, or finds a set whose $2D_0$ neighbourhood has size smaller than
  $100n/r$ in $P$.
\end{lemma}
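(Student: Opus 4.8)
The plan is to follow the chaining/cover-composition argument of \cite{ARV}, but carefully tracking the extra requirement (present in all our matching covers) that matched pairs are close in graph distance, and introducing a branching/region-growing step to handle the ``obstacle'' situation. First I would recall the ARV mechanism: given a $(\sigma,\delta)$-uniform matching cover of $W$ with $\sigma = \Omega(\sqrt{\log r}/\sqrt{d})$ and $\delta = 1/G(r)r$, one builds a sequence of ``stretched'' sets by composing $\Theta(\sqrt{\log n})$ matchings along random directions; each composition step moves a pair's $\ell_2^2$ separation up by roughly $\sigma^2\Delta/d \cdot (\text{something})$ while moving them by at most $2D_0$ in graph distance, and the $\ell_2^2$ triangle inequality (recall $X \in \ell_2^2$) together with the spreading/``smoothness'' of the vectors keeps a constant fraction of vertices alive. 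After $k = \Theta(\sqrt{\log n}/\sqrt{\log r})$ such steps, a surviving pair $i,j \in W$ has $d_{i,j} \le 2kD_0 = O(D_0\sqrt{\log n}/\sqrt{\log r}) = D_1$ and $\|X_i - X_j\|^2 \ge \Omega(k \sigma^2 \Delta / \text{poly}) = \Omega(\Delta/\log r) = D_2$, which is exactly the first alternative.

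Second, I would isolate precisely where this composition can break. At each step we need, for a random direction $u$, a large matching between the current ``high-projection'' set and ``low-projection'' set consisting only of pairs within graph distance $2D_0$; the uniform matching cover guarantees such matchings exist for the original set $W$, but as we compose and restrict to survivors we need the surviving set to still be ``well-connected'' in graph distance. The obstruction is that the set $S$ of survivors we want to match might have a small $D_0$-neighbourhood in $P$ — i.e., it expands to at most $100n/r$ vertices in $P$ within graph distance $2D_0$. If at \emph{any} step this happens, we simply stop and output that set $S$ (after tracing it back through the composition, its $2D_0$-neighbourhood in $P$ is still $\le 100n/r$, perhaps after adjusting constants), giving the second alternative. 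When it does not happen, the neighbourhood is large enough that the counting argument of \cite{ARV} (as in the proof of \Cref{lem:coverfail}) produces the required graph-distance-respecting matching, and the chaining proceeds.

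Third, I would assemble the pieces: set up the potential/measure argument tracking $\mu(\text{vertex } i \text{ survives step } \ell)$, verify it decreases by at most a constant factor per step using the uniform cover property and the $\ell_2^2$ smoothness bound (at most $5n/4r$ close-by points for each vertex with $\|Z_i\|^2 \ge \Delta/2$, from the spreading constraint and \Cref{cor:singlescalemapping}), and check that after $k$ steps a positive measure of vertices survive so that a stretched pair exists. The main obstacle I expect is the bookkeeping in the composition step: ensuring that the graph-distance bound accumulates additively (only $O(k)$ applications of the $2D_0$ triangle inequality, not exponential blowup) while the $\ell_2^2$ separation accumulates by the ARV hyperplane-rounding argument, and simultaneously that ``failure to find a matching'' is correctly converted into an obstacle set whose $2D_0$-neighbourhood in $P$ has size $< 100n/r$. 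The delicate point is that the obstacle set must be expressed in terms of the \emph{original} vertex set, which requires pulling back the survivor set through all previous matchings while controlling how much the neighbourhood can grow — this is where a region-growing argument in the spirit of \Cref{lem:regiongrow}, applied to the graph metric $d_{i,j}$, is needed to keep the constants under control.
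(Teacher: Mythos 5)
Your proposal follows essentially the same route as the paper's (very brief) proof: adapt the ARV cover-composition/chaining argument while letting graph distance accumulate additively by $2D_0$ over $O(\sqrt{\log n}/\sqrt{\log r})$ steps, and convert any failure to continue the chain into a set whose $2D_0$-neighbourhood in $P$ has fewer than $100n/r$ vertices. One point worth correcting in your write-up: the paper attributes the loss $D_2=\Omega(\Delta/\log r)$ to boosting the cover probability from $1/G(r)r$ up to $1-1/r$, not to additive accumulation of $\ell_2^2$ separation along the chain (the $\ell_2^2$ triangle inequality only gives upper bounds on composed distances, so the stretched pair must come from the concentration-of-measure contradiction rather than from "adding up" $k\sigma^2\Delta$).
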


\begin{proof}
  The proof follows from \cite{ARV}, the algorithm basically follows
  the cover composition proof, maintaining the cover $S_k$ along the
  induction steps (this is possible because the probabilities we are
  dealing with are all larger than some inverse polynomial, and the
  probabilities do not need to be estimated exactly). The main
  differences are:
  \begin{enumerate}
  \item Here we need to boost the probability from $1/G(r)r$ to
    $1-1/r$, this is $D_2 = \Omega(\Delta/\log r)$ (in \cite{ARV} we
    can find a pair that are constant distance away in $\ell_2^2$
    metric).
  \item The definition of non-expanding set is now a set that does not
    expand to $100n/r$ vertices within graph distance $2D_0$. This is
    OK because either there is a pair within graph distance $2D_0$ and
    $\ell_2^2$ distance more than $D_2$, in which case the Lemma is
    true; or all vertices in this neighbouring set are also close in
    $\ell_2^2$ distance, which then matches the definition of
    non-expanding set in \cite{ARV}. \qedhere
  \end{enumerate}
\end{proof}

\subsubsection{Final Proof}
\label{subsec:final}
The following Lemma immediately implies \Cref{lem:weaksseflowexists}.
\begin{lemma}
  \label{lemma:dualsolution}
  Given a dual solution with degree $d$ and expansion $\beta < C_\beta
  \log^{-2} r$ (where $C_\beta$ is a universal constant), there is an
  algorithm that finds a set of size at most $100n/r$ with expansion
  $O(d\sqrt{\log n}/\sqrt{\log r})$.
\end{lemma}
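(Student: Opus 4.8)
The plan is to assemble the machinery developed in Sections~\ref{subsec:mapping}--\ref{subsec:ARV} into a single region-growing-driven algorithm that maintains the partition oracle of \Cref{lem:regiongrow} and peels off witnesses of non-expansion whenever progress stalls. First I would invoke \Cref{cor:singlescalemapping} (via \Cref{lem:singlescale,lem:bucketing}) to pass from the $\ell_2^2$ vectors $Z_i$ attached to the dual solution to single-scale vectors $X_i$ of squared norm $\Delta/2$, while keeping a set $V_{t_0}$ of at least $n/(5\Delta\log 30r)$ live vertices in the oracle, all but $n/F(r)$ of which have either $\|Z_i\|^2\ge\Delta/2$ or $s_i\ge D_2/10$. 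This is where the parameter $\alpha = O(d\sqrt{\log n}/\sqrt{\log r})$ gets fixed: the LP normalization \cref{eqn:newnorm} together with the relation between the flow degree $d$, the capacity $W=\sum_e c_e w_e$, and the dual value constraint \cref{eqn:newdualvalue} forces $W/n$ to be bounded (after rescaling by $\Delta$), so that $D_0 = C\Delta\log 30r\cdot\log(rW/n\alpha)$ from \Cref{lem:regiongrow} is $O(\Delta\log^2 r)$ and the oracle's partitions have capacity at most $n\alpha/(20\Delta\log 30r)$.

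Next I would run the iterative loop: at step $i$, partition the live set $V_i$ into $P_i$ (large-norm vertices), $Q_i$ (small-$s$ vertices) and a residual $R_i$ of size $\le n/F(r)$ as guaranteed by \Cref{lem:singlescale}, then call \textsc{Construct Cover}$(P_i,Q_i,R_i)$. If it fails with non-negligible probability, \Cref{lem:coverfail} hands us an obstacle set of type~I or~II; we feed that set to the partition oracle, obtaining an $S_j'$ which we delete from $V_i$. If instead \textsc{Construct Cover} succeeds with high probability, \Cref{lem:bigcover} gives a large set $W$ that is $(1/G(r)r,\theta_2-\theta_1)$-uniformly matching covered, and then \Cref{lem:ARV} either outputs a pair $i,j\in W$ with $d_{i,j}\le D_1 = O(D_0\sqrt{\log n}/\sqrt{\log r})$ and $\|X_i-X_j\|^2\ge D_2 = \Omega(\Delta/\log r)$, or outputs a set whose $2D_0$-neighbourhood in $P$ has fewer than $100n/r$ points (another obstacle, again routed through the oracle). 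In the first of these sub-cases we are done for the contradiction purpose but not for the lemma: I would translate the graph-distance-close, $\ell_2^2$-far pair back to the dual constraints — along the shortest $w$-path of length $\le D_1$ between $i$ and $j$ we have $\sum_{e\in p}w_e\le D_1$, both endpoints have $s$-value $\le D_2/10$ (they lie in $W\subseteq P_i\subseteq$ the small-$s$ part, after noting the single-scale reduction handles the $s_i\ge D_2/10$ vertices), yet $\|X_i-X_j\|^2\ge D_2$, which after undoing the $X\leftarrow Z$ distortion of \Cref{cor:singlescalemapping} and choosing the constants in $D_1,D_2$ appropriately violates \cref{eqn:newpath}; contradiction with the dual solution being valid. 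Hence this sub-case cannot occur, so at every step the algorithm is in fact removing an obstacle set.

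The termination/accounting argument is the last piece and is where the stated bound $100n/r$ on the output set's size enters. Whenever we remove an obstacle set $S_j'$ of size at most $100n/r$, we account its volume against the budget; by \Cref{cor:expansion} the total volume of such small removed sets can reach $n\alpha/(10\Delta\log 30r)$ only if one of them already has expansion $\le\alpha$, at which point we stop and output it — this is exactly the non-expanding set of size $\le 100n/r$ promised. Obstacle sets of type~II contribute only $n/(20\Delta\log 30r)$ total volume by \Cref{lem:obstacleII} (using $\beta = C_\beta\log^{-2}r$), and obstacle sets of type~I contribute a bounded total by the region-growing framework of \Cref{cor:expansion} as well; since $V_{t_0}$ has at least $n/(5\Delta\log 30r)$ vertices, the budgets are set so that before we exhaust the live set we must have triggered the expansion-$\le\alpha$ stopping condition. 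I expect the main obstacle to be precisely this bookkeeping: making all the thresholds ($100n/r$, $10n/r$, $n/F(r)$, $5n/4r$, $5n/(4\Delta\log 30r)$, the three volume budgets, and the constant $C_\beta$) mutually consistent so that (a) \textsc{Construct Cover} and \Cref{lem:ARV} never land in their ``good pair'' branch without contradiction, and (b) the total removed volume provably stays below the size of the live set until a genuinely non-expanding small set is found. The $\ell_2^2$-distortion factors from \Cref{cor:singlescalemapping} (the $1/8$ and $1$) and the boosting from $1/G(r)r$ to $1-1/r$ inside \Cref{lem:ARV} propagate into $D_2 = \Omega(\Delta/\log r)$, which must still be large enough to beat $D_1\cdot(\text{distortion})$ in the \cref{eqn:newpath} violation — tracking those $\mathrm{poly}\log r$ factors carefully is the delicate part, but they are all absorbed into the universal constant $C_\beta$ and the $O(\cdot)$ in the expansion bound $O(d\sqrt{\log n}/\sqrt{\log r})$.
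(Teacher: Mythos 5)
Your proposal assembles exactly the same ingredients, in the same order, as the paper's own proof of \Cref{lemma:dualsolution}: the single-scale reduction of \Cref{lem:singlescale}, repeated calls to {\sc Construct Cover} combined with \Cref{lem:coverfail,lem:bigcover,lem:ARV}, routing every type-I/type-II obstacle set through the partition oracle, ruling out the graph-close/$\ell_2^2$-far pair as a violation of \cref{eqn:newpath} (using $D_1 < D_2/10$), and closing with the volume accounting that plays \Cref{cor:expansion} against \Cref{lem:obstacleII} before the live set of size $\ge n/(5\Delta\log 30r)$ is exhausted. This is essentially the paper's argument step for step, at a comparable level of detail, so it is the same approach.
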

\begin{proof}
  First apply \Cref{lem:singlescale}.  If
  \Cref{lem:singlescale} did not find a set, then we have sets
  $P,Q,R$ from \Cref{lem:singlescale} and a partition oracle
  whose current set is $P\cup Q \cup R$.

  Now we shall repeatedly apply Construct Cover. In this case we can
  get an obstacle set of type I or II.

  If it fails with more than $1/n^2G(r)$ probability then we get an
  obstacle set from \Cref{lem:coverfail}. Otherwise we would have
  a large uniform matching cover by \Cref{lem:bigcover}. Then we
  apply \Cref{lem:ARV} on this uniform matching cover, under the
  assumptions $D_1 < D_2/10$ \footnote{Notice that the constant in
    $D_1$ is in fact hiding in the expansion $O(d\sqrt{\log
      n}/\sqrt{\log r})$ which can be chosen independently of $D_2$.},
  since $W\subset P$ the first case of \Cref{lem:ARV} cannot
  happen. We must get a non-expanding set. \Cref{lem:coverfail}
  also applies to this non-expanding set and we can again get an
  obstacle set of type I or II.

  Once we get the obstacle set, feed that set into the partition
  oracle, and recurse on the current set of the oracle. We always call
  obstacle sets of type I $S$, and obstacle sets of type II $T$. The
  corresponding sets returned by the oracle will be called $S'$ and
  $T'$, respectively.

  At the end one of the two cases will happen: Either the sets
  corresponding to $S'$ take up more than $n/10\Delta\log 30r$
  vertices or the sets corresponding to $T'$ take up more than
  $n/10\Delta\log 30r$ vertices.

  In the first case \Cref{cor:expansion} shows one of the $S'$
  must have low expansion.

  The second case contradicts the feasibility of dual solution because
  of \Cref{lem:obstacleII}.
  %
\end{proof}

\subsection{Getting SSE Flows and Spectral SSE Flows}
Thus far our existence proof dealt with weak SSE flows.
\subsubsection{Getting to SSE flows}
\begin{lemma}
  \label{lem:weaktosseflow} If $G = (V,E)$ is a graph and a $(r, d,
  \beta)$ weak SSE flow is embeddable in $G$, either there is a set of
  size at most $n/r$ that has expansion less than $d\beta$, or there
  exists a $(r, d, \beta/6)$ SSE flow embeddable in $G$.
\end{lemma}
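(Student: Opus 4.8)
The plan is to produce either the promised $(r,d,\beta/6)$ SSE flow or the small non-expanding set by \emph{patching} the given weak SSE flow at the small scales it ignores. Write $F$ for the given $(r,d,\beta)$ weak SSE flow embeddable in $G$, with demands $\delta_{i,j}$; by hypothesis every $S$ with $n/3r\le |S|\le n/r$ already satisfies $\sum_{i\in S,j\notin S}\delta_{i,j}\ge \beta d|S|$, so only the sets of size $<n/3r$ are missing. I would try to build an auxiliary multicommodity flow $F_2$ of degree at most $d$, embeddable in $G$, such that every $S$ of size at most $n/3r$ has $\sum_{i\in S,j\notin S}\delta^{(F_2)}_{i,j}\ge (\beta/3)\,d|S|$ --- and if the construction fails, it will hand back a set of size at most $n/r$ of expansion $<d\beta$, which is the first alternative. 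Given $F_2$, set $\tilde F:=\tfrac12 F+\tfrac12 F_2$. On each edge $\tfrac12 F$ uses at most $\tfrac12 c_e$ and $\tfrac12 F_2$ at most $\tfrac12 c_e$, so $\tilde F$ is embeddable in $G$; every vertex has degree at most $\tfrac12 d+\tfrac12 d=d$; and for $|S|\le n/r$ the outgoing flow of $\tilde F$ equals $\tfrac12$ (outgoing flow of $F$) $+\,\tfrac12$ (outgoing flow of $F_2$), which is at least $\tfrac12\beta d|S|$ when $|S|\ge n/3r$ (via $F$) and at least $\tfrac12\cdot\tfrac{\beta}{3}d|S|=\tfrac{\beta}{6}d|S|$ when $|S|<n/3r$ (via $F_2$). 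Hence $\tilde F$ is a $(r,d,\beta/6)$ SSE flow, and the constant $6$ is exactly $2$ (convex combination) times $3$ (the loss in building $F_2$).

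So everything reduces to constructing $F_2$ or exhibiting the non-expanding set, and for this I would reuse the LP--duality and region-growing framework already set up for weak SSE flows. Write the flow LP whose feasibility is equivalent to the existence of such an $F_2$: it is the LP of \Cref{subsec:LPformulation}, but with the expansion constraint~\eqref{eqn:expansion} ranging over all $S$ with $|S|\le n/3r$ and $\beta$ replaced by $\beta/3$; its dual is the analogue of~\eqref{eqn:dualvalue}--\eqref{eqn:dualnonnegative}. Suppose we are handed a dual solution: nonnegative $s_i$, $w_e$, and $z_S$ for $|S|\le n/3r$, normalized so that $\sum_S z_S|S|=n$, with $\sum_e c_e w_e+d\sum_i s_i<(\beta/3)\,dn$ and the path constraints $\sum_{e\in p}w_e+s_i+s_j\ge\sum_{S:i\in S,\,j\notin S}z_S$. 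Treating the $w_e$ as edge lengths on $G$ and running the region-growing partition oracle of \Cref{lem:regiongrow}, I would peel off low-diameter blocks as in \Cref{cor:expansion}: if some peeled block has size at most $n/r$ and expansion below $d\beta$, we output it (first alternative); otherwise the smallness of $\sum_e c_e w_e$ and of $\sum_i s_i$, together with the fact that all blocks expand well, forces a path $p$ from $i$ to $j$ on which $\sum_{e\in p}w_e+s_i+s_j$ is strictly smaller than $\sum_{S:i\in S,\,j\notin S}z_S$, contradicting dual feasibility --- so $F_2$ must exist. The bookkeeping is the same ``region growing charges the cut capacity to $\alpha$ times the enclosed volume'' argument used throughout the appendix (cf.\ \Cref{cor:expansion,lem:obstacleII}), only with the size threshold pushed down into the small-set regime, and this is where the numeric value $d\beta$ in the first alternative gets pinned down.

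The main obstacle is making the region-growing step emit a cut of size at most $n/r$ rather than an arbitrary low-diameter ball, while keeping the loss a constant --- and in particular handling the regime $d\ll 1$, where one cannot afford the naive fix of adding a rescaled copy of the host graph to $F$: a degree-$d$ copy of $G$ only certifies small-set expansion $d\beta$, which is far below $\beta$ when $d$ is small. The point that should make it work is that a ball which \emph{fails} to keep growing under the oracle must be small and have tiny $G$-boundary, hence is itself a witness for the first alternative; so the only way region growing runs out of room is by producing the required non-expanding set. Once this is in place the convex-combination step above completes the proof, and (combined with \Cref{lemma:dualsolution}) it upgrades the weak SSE flow of \Cref{lem:weaksseflowexists} to a genuine SSE flow.
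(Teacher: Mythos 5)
There is a genuine gap, and it sits exactly where you placed the ``main obstacle.'' Your plan reduces the lemma to the following dichotomy: either there is a multicommodity flow $F_2$ in $G$, of degree at most $d$, whose demand graph gives every set of size at most $n/3r$ outgoing flow $\ge (\beta/3)d|S|$, or there is a set of size at most $n/r$ with $G$-expansion strictly less than $d\beta$. That is a \emph{constant-factor} flow/cut statement for small-set expansion, and the LP-duality-plus-region-growing machinery you propose to reuse cannot deliver it: the dual-rounding argument of the appendix (\Cref{lemma:dualsolution}, feeding into \Cref{lem:weaksseflowexists}) only extracts a set of expansion $O(d\sqrt{\log n}/\sqrt{\log r})$ from an alleged dual solution, which is much larger than the $d\beta$ you need (recall $\beta\le 1$, and in all applications $\beta=\Theta(\log^{-2}r)$), and the choice of the region-growing parameter $\alpha$ in \Cref{lem:regiongrow,cor:expansion} is tied to exactly that $\sqrt{\log n/\log r}$ loss. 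The sentence ``the smallness of $\sum_e c_e w_e$ and of $\sum_i s_i$ \ldots forces a path violating the dual constraint'' is precisely the hard content (in the paper it takes matching covers, obstacle sets and the ARV chaining argument), and even that machinery does not give the exact threshold $d\beta$. A further technical obstruction: once your LP's expansion constraints range over \emph{all} sets of size at most $n/3r$, the dual variables $z_S$ live on arbitrarily small sets, and the bound $\|Z_i\|^2\le 3r$ (which comes from the lower bound $|S|\ge n/3r$ and is what makes the bucketing/single-scale reduction lose only $\log r$) disappears entirely; so the appendix's argument does not even transfer in its weakened form. The restriction to sizes in $[n/3r,n/r]$ is exactly what makes the dual tractable, and this lemma is the place where the paper patches the small scales \emph{without} another round of multicommodity duality.

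The paper's proof is much more elementary and is worth internalizing because it is what buys the loss-free $d\beta$ alternative. One iteratively removes sets of size $<n/3r$ that fail to expand \emph{in the flow $F$} (not in $G$); the weak-flow guarantee on sizes in $[n/3r,n/r]$ forces the union $U$ of removed sets to stay below $n/3r$. Then one runs a \emph{single-commodity} max-flow between a source attached to $U$ and a sink attached to $V\setminus U$, with terminal capacities $d\beta$. If the max flow is below $d\beta|U|$, the min cut $Q$ satisfies $E(Q,V\setminus Q)<d\beta(|U|-|Q\,\Delta\,U|)\le d\beta|Q|$ and $|Q|<2|U|\le n/r$ --- an exact max-flow/min-cut computation, with no approximation loss, which is why the first alternative can be stated at threshold $d\beta$. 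Otherwise the routed flow $F_1$ is averaged with $F$: sets with at least a third of their mass outside $U$ are handled by $F/2$ (the removal process stopped, so the residual flow expands them), and sets mostly inside $U$ are handled by $F_1/2$, giving the $\beta/6$. Your convex-combination step at the end is fine in itself, but the object you combine with has to come from this kind of exact single-commodity argument, not from a re-run of the weak-SSE-flow duality.
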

\begin{proof}
  Let $F$ be the weak SSE flow. If for all sets $S$ of size $|S| \le
  n/3r$, the $F$ has expansion at least $\beta$, then $F$ is already a
  SSE flow.

  When there exists $S$ of size smaller than $n/3r$ and the expansion
  in $F$ is smaller than $\beta$, remove $S$ (for remaining vertices
  replace edges going to $S$ with self-loops) and repeat this
  procedure.

  If the union of the removed sets is $U$, the size of $U$ cannot be
  larger than $n/3r$: if after removing some $S$ the size of $U$ first
  become larger than $n/3r$, then since $S$ has size smaller than
  $n/3r$, the size of $U$ must be between $n/3r$ and $2n/3r$. The
  expansion of $U$ is at most the maximum expansion among sets $S$,
  which is smaller than $\beta$. Such a set cannot exist by the
  definition of weak SSE flows.

  Now add a source and a sink to the graph. Add an edge from source to
  every vertex in $U$ with capacity $d\beta$, add an edge from every
  vertex in $V\backslash U$ to the sink with capacity $d\beta$, and
  then try to route the maximum single-commodity flow from source to
  sink.

  If the maximum flow is smaller than $d\beta|U|$, then there must be
  a cut of value smaller than $d\beta|U|$ in the new graph. Let $Q$ be
  one side of this cut that contains the source, then $E(Q,V\backslash
  Q) < d\beta|U| - |Q\oplus U|d\beta$ (this is because, for every $i$
  in $Q$ but not $U$, it has degree $d\beta$ to the sink; for every
  $i$ in $U$ but not $Q$, it has degree $d\beta$ from the source), and
  $|Q| \ge |U| - |Q\oplus U|$. The expansion of $Q$ is strictly
  smaller than $d\beta$.

  If the maximum flow is $d\beta|U|$, let the single-commodity flow be
  $F_1$, and let $F_2 = (F+F_1)/2$ (here ``$+$'' just take the linear
  combination of demands). Clearly $F_2$ is still embeddable into
  $G$. For any set $S$ of size at most $n/r$, if more than $|S|/3$ of
  the vertices are outside $U$, then it already has $\beta d|S|/6$
  outgoing edges outside $U$ in $F/2$; if less than $|S|/3$ of the
  vertices are outside $U$, then it has $d\beta|S|/6$ outgoing edges
  just by the flow $F_1/2$. Therefore $F_2$ is a $(r,d,\beta/6)$ SSE
  flow.
\end{proof}

Unfortunately, this Lemma is only existential. In general, even if we
are given a SSE flow, it is hard to verify it exactly.

\subsubsection{Getting Spectral SSE flow}
We can use higher order equivalents of Cheeger's
Inequality to establish a relation between SSE flows and spectral SSE flows:
\begin{theorem}[\cite{lrtb12,lgt12}]
  \label{thm:highcheeger}
  For any graph $G$,
  $\Phi_r \le O(\sqrt{\lambda_{2r}(\mathcal{L})\log r}).$
  Here $\lambda_{2r}(\mathcal{L})$ is the $2r$-th smallest eigenvalue
  of the normalized Laplacian of $G$.
\end{theorem}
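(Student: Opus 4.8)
The plan is to follow the spectral-embedding-plus-localization strategy underlying \cite{lgt12}. Since the statement concerns the small-set expansion constant $\Phi_r$, it suffices to produce $r$ pairwise disjoint nonempty vertex sets each of expansion $O(\sqrt{\lambda_{2r}(\mathcal{L})\log r})$: among $r$ disjoint sets at least one has size at most $n/r$ (they total at most $n$ vertices), and that one witnesses the claimed bound on $\Phi_r$. This also explains the appearance of $2r$ rather than $r$ in the hypothesis: we will feed $2r$ eigenvectors into the localization machinery and get $r$ disjoint sets out.

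First I would set up the spectral embedding. Let $f_1,\dots,f_{2r}$ be orthonormal eigenvectors of $\mathcal{L}$ for its $2r$ smallest eigenvalues, and set $F(v) = D^{-1/2}\big(f_1(v),\dots,f_{2r}(v)\big)\in\mathbb{R}^{2r}$. After scaling we may assume $\sum_v d_v\|F(v)\|^2 = 1$, and then $\sum_{(u,v)\in E} c_{uv}\|F(u)-F(v)\|^2 \le \lambda_{2r}(\mathcal{L})$; in words, $F$ is a map into $\mathbb{R}^{2r}$ whose Dirichlet energy relative to its total mass is at most $\lambda_{2r}(\mathcal{L})$. The remaining two steps are: (i) a \emph{localization} step turning $F$ into $r$ disjointly supported, smooth, real-valued test functions each of small Rayleigh quotient; and (ii) a \emph{rounding} step turning each test function into a cut.

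For localization I would work with the weighted point set $\{F(v)\}_{v\in V}$, weights $d_v\|F(v)\|^2$, and its induced $\ell_2$ metric. Using a random padded decomposition of this point set at a carefully chosen scale $\Delta$ (comparable to the typical value of $\|F(v)\|$), together with a radial-projection trick in the spirit of \cite{aln,mn04} to put the relevant points on a sphere of fixed radius, I would carve $V$ into regions $V_1,\dots,V_r$ carrying comparable mass; on each $V_\ell$ define a bump function $g_\ell$ supported in $V_\ell$ by truncating and rescaling $\|F(\cdot)\|$ against the distance to the region boundary. A smoothing-and-energy-accounting argument then shows $R(g_\ell) = \frac{\sum_{(u,v)\in E} c_{uv}(g_\ell(u)-g_\ell(v))^2}{\sum_v d_v\, g_\ell(v)^2} \le O(\log r)\cdot\lambda_{2r}(\mathcal{L})$, the $\log r$ (rather than $\log n$) being the padding parameter one can afford since the object being partitioned is effectively $O(r)$-dimensional / has only $O(r)$ well-separated clusters. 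A counting argument guarantees that $r$ disjointly supported such functions can be extracted from the $2r$-dimensional bottom eigenspace while keeping their energies controlled (truncation keeps each $g_\ell$ "mostly" in that eigenspace, so the global energy bound transfers). For rounding I would apply the classical Cheeger sweep: sorting vertices by $g_\ell$ and taking the best threshold set $S_\ell$ yields $S_\ell\subseteq V_\ell$ with $\Phi(S_\ell)\le\sqrt{2R(g_\ell)} = O(\sqrt{\lambda_{2r}(\mathcal{L})\log r})$. The $S_\ell$ are disjoint because the $V_\ell$ are, and the pigeonhole above finishes the proof.

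The main obstacle is precisely the localization step: obtaining $\Omega(r)$ disjointly supported test functions with Rayleigh quotient only $O(\log r)\cdot\lambda_{2r}(\mathcal{L})$, rather than the naive $O(\log n)\cdot\lambda_{2r}$ one gets from a generic random partition (the weaker argument of \cite{lrtb12} sidesteps this but pays a $\mathrm{poly}(r)$ factor instead). Making this work requires care to (a) smooth the cutoff so the Dirichlet energy does not blow up across partition boundaries, (b) verify each region retains a constant fraction of its mass after the radial truncation, and (c) check that the truncated functions still lie essentially in the span of $f_1,\dots,f_{2r}$. I would isolate this as a self-contained "spectral localization lemma," after which the Cheeger rounding and the final pigeonhole are routine.
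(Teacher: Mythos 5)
First, a point of comparison: the paper does not prove this statement at all --- it is the higher-order Cheeger inequality, imported verbatim from \cite{lrtb12,lgt12} --- so the only meaningful benchmark is the proofs in those cited works. Your outline (spectral embedding by the bottom $2r$ eigenvectors, extraction of $r$ disjointly supported test functions with Rayleigh quotient $O(\lambda_{2r}(\mathcal{L})\log r)$, a Cheeger sweep on each, and a pigeonhole argument to obtain one set of size at most $n/r$) is exactly the strategy of \cite{lgt12}, and the easy surrounding steps are handled correctly: after rescaling so the embedding has total mass $1$, its energy is at most the average of the $2r$ smallest eigenvalues, hence at most $\lambda_{2r}(\mathcal{L})$; the sweep and the pigeonhole are routine.

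The genuine gap is precisely the step you defer to a ``spectral localization lemma'': that lemma \emph{is} the theorem, and the mechanism you offer for the $\log r$ factor does not deliver it. A padded random decomposition of the embedded point set in $\mathbb{R}^{2r}$ has padding parameter polynomial in the dimension (for the $\ell_2$ metric in $d$ dimensions one gets padding $\Theta(\sqrt{d})$), and the energy loss of the resulting bump functions scales with the square of the padding; so the fact that the points live in $\mathbb{R}^{2r}$, i.e.\ are ``effectively $O(r)$-dimensional,'' only yields a $\mathrm{poly}(r)$ loss and a bound of the form $\Phi_r \le O(\sqrt{\lambda_{2r}}\,\mathrm{poly}(r))$ --- the weaker regime the cited papers explicitly improve upon. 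The $\log r$ in \cite{lgt12} comes from a different ingredient: after radially normalizing the embedding, a spreading property ensures the mass is well separated on the sphere, one random-projects into $O(\log r)$ dimensions with only constant distortion at the relevant scale, and only then partitions, so the padding is governed by the reduced dimension; \cite{lrtb12} instead uses Gaussian projections to a line. Without this dimension-reduction (or Gaussian-rounding) component, together with the verification that each localized function keeps an $\Omega(1/r)$ fraction of its mass and stays essentially within the bottom eigenspace (your items (a)--(c), which are asserted but not argued), the plan as written does not reach the claimed $O(\sqrt{\lambda_{2r}(\mathcal{L})\log r})$ bound.
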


This implies that if the largest and smallest degree are close, then
an SSE flow is already a spectral SSE flow.

\begin{lemma}
  \label{lem:combtospectral}
  For any graph $G=(V,E)$, if there is a $(r,d,\beta)$ SSE-flow
  embeddable in $G$, then there is a $(2r,d, \Omega(\beta^2/\log r))$
  spectral SSE flow embeddable in $G$.
\end{lemma}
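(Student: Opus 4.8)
The plan is to massage the given SSE flow into $d$-regular form (so that it meets the degree requirement in the definition of a spectral SSE flow) and then extract the eigenvalue lower bound from the ``easy'' side of the higher-order Cheeger inequality, \Cref{thm:highcheeger}.

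First I would regularize. Let $F$ be the given $(r,d,\beta)$ SSE flow with demands $\delta_{i,j}$, so each degree $\delta_i=\sum_j\delta_{i,j}$ is at most $d$. For each $i$ with $\delta_i<d$ add a demand from $i$ to itself of weight $d-\delta_i$, obtaining a flow $F'$. Self-demands route along the trivial zero-length path, so $F'$ is still embeddable in $G$; every vertex of $F'$ now has degree exactly $d\in[d/2,d]$; a self-demand does not contribute to the amount of flow leaving any proper subset, so $F'$ is still an SSE flow with the same $\beta$ on sets of size at most $n/r$; and a self-demand does not change the Laplacian of the demand graph, so $L(F')=L(F)$ and, $F'$ being $d$-regular, its normalized Laplacian is $\mathcal{L}(F')=L(F')/d$.

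Next I would run the contradiction. Suppose $\lambda_{2r}(\mathcal{L}(F'))<c\beta^2/\log r$ for a small absolute constant $c$ to be fixed. Applying \Cref{thm:highcheeger} to the graph $F'$ yields a set $S$ with $|S|\le n/r\le n/2$ and conductance at most $K\sqrt{\lambda_{2r}(\mathcal{L}(F'))\log r}<K\sqrt{c}\,\beta$, where $K$ is the constant hidden in the $O(\cdot)$ of \Cref{thm:highcheeger}. Since $F'$ is $d$-regular and $|S|\le n/2$, this conductance is exactly $\big(\sum_{i\in S,\,j\notin S}\delta'_{i,j}\big)/(d|S|)$, i.e.\ the flow-expansion of $S$, which the SSE property of $F'$ forces to be at least $\beta$. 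Choosing $c=1/(2K^2)$ gives $\beta\le K\sqrt{c}\,\beta=\beta/\sqrt2<\beta$, a contradiction. Hence $\lambda_{2r}(\mathcal{L}(F'))\ge\beta^2/(2K^2\log r)=\Omega(\beta^2/\log r)$, so $\lambda_{2r}(L(F'))=d\cdot\Omega(\beta^2/\log r)$; that is, $F'$ is a $(2r,d,\Omega(\beta^2/\log r))$ spectral SSE flow embeddable in $G$.

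I do not expect a real obstacle: the lemma is essentially a restatement of \Cref{thm:highcheeger}. The only care needed is (i) verifying that the self-demand regularization simultaneously preserves routability, the SSE inequality on small sets, and the Laplacian --- so that repairing the degree lower bound costs nothing spectrally --- and (ii) matching normalizations, since \Cref{thm:highcheeger} is phrased with the \emph{normalized} Laplacian and with \emph{conductance}, which coincide with $L/d$ and the flow-expansion only after the flow has been made $d$-regular. The factor $2$ in the number of eigenvalues and the $\log r$ loss are inherited verbatim from \Cref{thm:highcheeger} and are not improvable with the currently known high-order Cheeger inequalities.
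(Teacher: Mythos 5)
Your proof is correct in its spectral core, but it diverges from the paper's argument in the degree-repair step, and that is exactly where it takes on some risk. The paper does not pad with self-demands: it sets $F_1$ to be the flow whose demands are the host graph's own capacities ($\delta_{i,j}=c_{i,j}$, degree $1$, trivially routable along its own edges) and takes $F_2=F/2+dF_1/2$. Since $d\le 1$ in the relevant regime, $F_2$ is still embeddable; its \emph{genuine} inter-vertex degrees lie in $[d/2,d]$; small sets still have flow-expansion at least $\beta/2$; and then \Cref{thm:highcheeger} plus \Cref{thm:laplacian-norm-to-unnorm} (the comparison $\frac{1}{d_{\max}}L\preceq\mathcal{L}\preceq\frac{1}{d_{\min}}L$) gives $\lambda_{2r}(L(F_2))\ge\Omega(d\beta^2/\log r)$. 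Your route instead keeps $L(F)$ untouched and raises degrees only nominally, which buys a cleaner conversion ($\mathcal{L}=L/d$ exactly) and loses no factor of $2$ in $\beta$.

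The caveats you should address explicitly: (i) a ``demand from $i$ to itself'' is not obviously a legal object in the paper's flow formalism, and if self-demands do not count toward degree then your $F'$ has true degrees possibly as low as $\beta d$, so it does not meet the $[d/2,d]$ window in the definition of a spectral SSE flow except by bookkeeping. This is not purely cosmetic: the degree window is imposed as a hard constraint in the convex program of \Cref{lem:findspectralsseflow}, where the existence statement of this lemma is used to certify feasibility, so a self-loop-padded flow is only feasible there if one also admits self-demand variables in that program (harmless, but it must be said). (ii) Your application of \Cref{thm:highcheeger} is to a graph with self-loops, and you identify the conductance of the self-looped $d$-regular graph with the flow-expansion $\bigl(\sum_{i\in S,j\notin S}\delta_{i,j}\bigr)/(d|S|)$; the higher-order Cheeger inequality does extend to weighted graphs with self-loops (they only enter the volumes/degrees), but the theorem as cited is stated for ordinary graphs, so this step deserves a sentence of justification. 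The paper's mixing construction avoids both issues, which is why it accepts the $\beta/2$ loss and the extra claim about normalized versus unnormalized Laplacians. With those two points patched, your argument is a valid and slightly tighter alternative.
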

Before proving~\Cref{lem:combtospectral}, we will need the following
simple claim so as to relate the eigenvalues of {\em normalized}
Laplacian matrix to the original Laplacian.
\begin{clm} \label{thm:laplacian-norm-to-unnorm} Let
  $d_{\min}$,$d_{\max}$ be the minimum and maximum degrees in $G$,
  respectively. Then:
  $$
  \frac{1}{d_{\max}} L(G) \preceq \mathcal{L}(G) \preceq
  \frac{1}{d_{\min}} L(G).
  $$
\end{clm}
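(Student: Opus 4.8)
The plan is to reduce the two claimed inequalities to the trivial bounds $d_{\min} I \preceq D \preceq d_{\max} I$ via the congruence identity $\mathcal{L}(G) = D^{-1/2} L(G) D^{-1/2}$ (equivalently $L(G) = D^{1/2}\mathcal{L}(G)D^{1/2}$). Concretely, for a nonzero vector $x$ I would set $y \triangleq D^{-1/2}x$, so that $x^\top x = y^\top D y$ and $x^\top \mathcal{L}(G) x = y^\top L(G) y \ge 0$ (using $L(G) \succeq 0$). Then $d_{\min}\|y\|^2 \le y^\top D y \le d_{\max}\|y\|^2$ gives the Rayleigh-quotient comparison $\frac{y^\top L y}{d_{\max}\|y\|^2} \le \frac{y^\top L y}{y^\top D y} = \frac{x^\top \mathcal{L} x}{x^\top x} \le \frac{y^\top L y}{d_{\min}\|y\|^2}$, and since $x \mapsto y = D^{-1/2}x$ is a linear bijection of $\R^n\setminus\{0\}$, the Courant--Fischer min-max over $k$-dimensional subspaces yields $\frac{1}{d_{\max}}\lambda_k(L(G)) \le \lambda_k(\mathcal{L}(G)) \le \frac{1}{d_{\min}}\lambda_k(L(G))$ for every $k$. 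This is the spectral content for which the statement is actually invoked --- in \Cref{lem:combtospectral}, to pass between the $r$-th smallest eigenvalue of the normalized and the unnormalized Laplacian of an SSE flow --- so I would prove this per-eigenvalue form and then read the claim through it.

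The step that needs care is the jump from this Rayleigh-quotient (equivalently eigenvalue) comparison to a genuine Loewner inequality: the identity $x^\top \mathcal{L} x = y^\top L y$ compares the form of $\mathcal{L}$ at $x$ with the form of $L$ at the \emph{rotated} vector $y = D^{-1/2}x \ne x$, and I see no way to obtain a same-vector comparison unless $D$ and $L(G)$ commute. Indeed, the literal ordering $\frac{1}{d_{\max}}L(G) \preceq \mathcal{L}(G) \preceq \frac{1}{d_{\min}}L(G)$ fails for every irregular connected $G$: the all-ones vector has $\ones^\top L(G)\ones = 0$ but $\ones^\top \mathcal{L}(G)\ones = \sum_{(i,j)\in E} c_{ij}\,(d_i^{-1/2}-d_j^{-1/2})^2 > 0$, which contradicts $\mathcal{L}(G) \preceq \frac{1}{d_{\min}}L(G)$; symmetrically $D^{1/2}\ones$ has form $0$ under $\mathcal{L}(G)$ but form $\sum_{(i,j)\in E} c_{ij}(\sqrt{d_i}-\sqrt{d_j})^2 > 0$ under $L(G)$, contradicting $\frac{1}{d_{\max}}L(G)\preceq \mathcal{L}(G)$.

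So the main obstacle is that the claim is literally true only in the regular case, and the clean congruence argument above delivers --- and is all that is ever used downstream --- the per-eigenvalue inequalities $\frac{1}{d_{\max}}\lambda_k(L) \le \lambda_k(\mathcal{L}) \le \frac{1}{d_{\min}}\lambda_k(L)$. In the only application the two degree bounds differ by a constant factor (spectral SSE flows have all degrees in $[d/2,d]$), so this pins down $\lambda_{2r}(\mathcal{L})$ up to that constant relative to $\lambda_{2r}(L)/d$, which is exactly what is needed to feed the higher-order Cheeger bound of \Cref{lem:combtospectral}. I would therefore state the claim, and write its one-line proof, in the eigenvalue-comparison form rather than as a PSD ordering.
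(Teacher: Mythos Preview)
Your analysis is correct, and in fact sharper than the paper's own proof. The paper argues the Loewner ordering directly by writing
\[
x^\top \mathcal{L}(G) x \;=\; \sum_{u<v} \frac{C_{uv}}{\sqrt{d_u d_v}}\,(x_u - x_v)^2,
\]
and then sandwiching $\sqrt{d_u d_v}$ between $d_{\min}$ and $d_{\max}$. But that identity is false: the correct expression is $\sum_{u<v} C_{uv}\big(x_u/\sqrt{d_u} - x_v/\sqrt{d_v}\big)^2$, and the two coincide only when all degrees agree. Your counterexamples with $x=\ones$ and $x = D^{1/2}\ones$ show that both sides of the claimed Loewner chain fail for any connected irregular $G$, so the statement as written in the paper is incorrect and its proof contains exactly the slip you anticipated.

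Your repair is the right one. The congruence $\mathcal{L} = D^{-1/2} L D^{-1/2}$ together with $d_{\min} I \preceq D \preceq d_{\max} I$ and Courant--Fischer gives
\[
\frac{1}{d_{\max}}\,\lambda_k(L(G)) \;\le\; \lambda_k(\mathcal{L}(G)) \;\le\; \frac{1}{d_{\min}}\,\lambda_k(L(G))\qquad\text{for every }k,
\]
and this per-eigenvalue comparison is precisely what the paper invokes in the proof of \Cref{lem:combtospectral} (only the inequality $\lambda_{2r}(L) \ge \tfrac{d}{2}\lambda_{2r}(\mathcal{L})$ is used, with $d_{\min}\ge d/2$). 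So restating the claim at the eigenvalue level, as you propose, both fixes the error and loses nothing downstream.
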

\begin{proof}
  For any pair of nodes $u,v$, $d_{\min} \le \sqrt{d_u d_v} \le
  d_{\max}$. Hence for any $x\in \R^V$:
  \[\frac{x^T L(G) x}{d_{\max}} =
  \sum_{u<v} \frac{C_{uv}}{d_{\max}} \left(x_u - x_v\right)^2 \le x^T
  \mathcal{L} x = \sum_{u<v} \frac{C_{uv}}{\sqrt{d_u d_v}} \left(x_u -
    x_v\right)^2 \le \frac{x^T L(G) x}{d_{\min}}. \tag*{\qedhere}
  \] 
\end{proof}
\begin{proof}[Proof of~\Cref{lem:combtospectral}]
  Let $F$ be the $(r,d,\beta)$ SSE-flow, let $F_1$ be a flow whose
  demands are $\delta_{i,j} = c_{i,j}$. Clearly $F_1$ is embeddable in
  $G$ and has degree $1$. Let $F_2 = F/2+dF_1/2$, then the degrees of
  vertices in $F_2$ are between $d/2$ and $d$.

  By definition of SSE flow we know $\Phi_r(F_2) \ge \beta/2$. Let
  $\mathcal{L}$ be the normalized Laplacian of $F_2$, and $L$ be its
  Laplacian, then by \Cref{thm:highcheeger}
  $\lambda_{2r}(\mathcal{L}) \ge \Omega(\beta^2/\log r)$.

  Since the degrees of $F_2$ are all between $d/2$ and $d$,
  by~\Cref{thm:laplacian-norm-to-unnorm}, the eigenvalues of its
  normalized Laplacian are closely related to its Laplacian:
  $\lambda_{2r}(L) \ge \frac{d}{2} \lambda_{2r}(\mathcal{L}) =
  \Omega(d\beta^2/\log r)$.
\end{proof}

The inverse direction (spectral flows imply combinatorial flows) is
also true, except the combinatorial expansion must be defined on $r$
disjoint sets instead of one set.

\begin{lemma}[\cite{KLLGT}]
\label{lem:spectraltocomb}
A $(r,d,\lambda)$ spectral flow satisfies the following combinatorial
expansion property: for any $r$ disjoint sets $S_1, S_2,..., S_r$, the
maximum of the expansion of these sets is at least $\lambda/2$.
\end{lemma}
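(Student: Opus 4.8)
The plan is to obtain \Cref{lem:spectraltocomb} from the ``easy'' direction of the higher-order Cheeger inequality (as in, e.g., \cite{KLLGT}), but applied to the demand graph of the flow $F$ rather than to $G$ itself. Write $\delta_{uv}$ for the (symmetric) demands of $F$ and let $L=L(F)$ be the unnormalized Laplacian of this demand graph, so that for every $x\in\R^V$ one has the standard identity $x^T L x=\sum_{\{u,v\}}\delta_{uv}(x_u-x_v)^2$, the sum running over unordered pairs $\{u,v\}$ with $u\ne v$. The hypothesis that $F$ is a $(r,d,\lambda)$ spectral SSE flow says exactly that the $r$-th smallest eigenvalue of $L$ is at least $d\lambda$.

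First I would fix \emph{nonempty} disjoint sets $S_1,\dots,S_r$ and pass to their indicator vectors $\mathbf{1}_{S_1},\dots,\mathbf{1}_{S_r}\in\R^V$. Since the $S_i$ are disjoint and nonempty these vectors are pairwise orthogonal and nonzero, hence span an $r$-dimensional subspace $U$. By the Courant--Fischer min--max characterization, $\lambda_r(L)\le\max_{0\ne x\in U} x^T L x/x^T x$, so there exist coefficients $c_1,\dots,c_r$, not all zero, such that $x=\sum_i c_i\mathbf{1}_{S_i}$ satisfies $x^T L x\ge d\lambda\,x^T x$. The remaining task is to bound the Rayleigh quotient of any such $x$ from above by $2d\max_i\Phi(S_i)$, where $\Phi(S_i)=\frac{1}{d|S_i|}\sum_{u\in S_i,\,v\notin S_i}\delta_{uv}$ is the flow-expansion of $S_i$. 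Because $x$ equals $c_i$ on $S_i$ and $0$ off $\bigcup_i S_i$, only ``boundary'' pairs $\{u,v\}$ (those for which $u$ and $v$ do not lie in a common $S_i$) contribute to $x^T L x$, and for each such pair $(x_u-x_v)^2\le 2x_u^2+2x_v^2$. Summing this bound over boundary pairs and regrouping the contribution by which block each endpoint lies in gives $x^T L x\le 2\sum_i c_i^2\sum_{u\in S_i,\,v\notin S_i}\delta_{uv}$. Since $\sum_{u\in S_i,\,v\notin S_i}\delta_{uv}=d|S_i|\Phi(S_i)$ and, by orthogonality, $x^T x=\sum_i c_i^2|S_i|$, the right-hand side is at most $2d\big(\max_i\Phi(S_i)\big)x^T x$. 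Combining with $x^T L x\ge d\lambda\,x^T x$ and dividing by $x^T x>0$ yields $\lambda\le 2\max_i\Phi(S_i)$, i.e., $\max_i\Phi(S_i)\ge\lambda/2$, as desired.

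There is no serious obstacle here; the argument is short and self-contained. The one place that demands a little care is the last chain of estimates: one must be consistent about whether the sums defining $x^T L x$ and $\Phi$ run over ordered or unordered pairs, so that each boundary pair $\{u,v\}$ with $u\in S_i$ gets charged exactly once (with the right constant) to $c_i^2\sum_{u'\in S_i,\,v'\notin S_i}\delta_{u'v'}$ and no spurious extra factor of $2$ creeps in. It is also worth noting that the degree bounds $d/2\le\delta_u\le d$ from the definition of a spectral SSE flow play no role in this direction; they are needed only for the converse conversion (\Cref{lem:combtospectral}), where one must compare the normalized and unnormalized Laplacians.
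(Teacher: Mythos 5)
Your proof is correct and follows essentially the same route as the paper's: apply the Courant--Fischer characterization of $\lambda_r(L(F))$ to the span of the indicator vectors $\mathbf{1}_{S_1},\dots,\mathbf{1}_{S_r}$, then bound the Rayleigh quotient of any vector in that span by twice the maximum flow-expansion of the $S_i$ via the elementary inequality $(a-b)^2\le 2a^2+2b^2$ on boundary pairs. Your write-up is in fact a little more careful than the paper's about the factor-of-$d$ bookkeeping and the ordered/unordered pair counting, but there is no substantive difference in approach.
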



\begin{proof}
  This proof comes from \cite{KLLGT}, we restate it here for
  completeness.
  We use Courant-Fischer-Weyl characterization the variational
  definition of $r^{th}$ smallest eigenvalue:
  $$\lambda_r(L) = \min_{\mbox{\small subspace $P$ of dimension $r$}}
  \max_{h\in P} \frac{h^T L h}{h^T h}.$$
  Let the subspace $P$ be the span of the indicator vectors of
  $S_i$'s. For any $h = \sum_{i=1}^r \lambda_i \vec{1}_{S_i}$, for all
  $u,v\in V$,
  $$(h(u)-h(v))^2 \le \sum_{i=1}^r 2\lambda_i^2 (\vec{1}_{S_i}(u) - \vec{1}_{S_i}(v)^2$$
  So the Rayleigh Quotient of $h$ is at most
  $$
  R(h) = \max_{h\in P} h^T L h/\norm{h}^2 \le \frac{2\sum_{i=1}^r
    \lambda_i (\vec{1}_{S_i}(u) - \vec{1}_{S_i}(v)^2}{\sum_{i=1}^k
    \lambda_i^2\norm{\vec{1}_{S_i}}^2} \le 2\max_{i\in[r]}
  R(\vec{1}_{S_i}).$$
  We know $\max R(h) \ge \lambda$, so the maximum expansion must be at
  least $\lambda/2$.
\end{proof}
In order to find spectral SSE flows, the following algorithm uses a
convex program:
\begin{lemma}
\label{lem:findspectralsseflow}
If there exists a $(r,d,\lambda)$ spectral SSE flow embeddable in $G$,
there is an efficient algorithm that finds a $(2r, d, \lambda/2)$
spectral SSE flow.
\end{lemma}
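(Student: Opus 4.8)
The plan is to express "there exists a flow embeddable in $G$ whose Laplacian has large $r$th eigenvalue'' as the feasibility of a convex program, solve a relaxed version of it, and then post-process the solution to obtain a genuine spectral SSE flow with only a factor-$2$ loss in the eigenvalue parameter (and a doubling of the number of sets, i.e. $r \to 2r$).

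First I would set up the decision variables. A multicommodity flow embeddable in $G$ is described by the path-flow variables $f_p \ge 0$ subject to the capacity constraints $\sum_{e \in p} f_p \le c_e$ for every edge $e$; this is a polytope (of exponential dimension, but we will only need its image under a linear map). Associated to such a flow $F$ are its demands $\delta_{ij} = \sum_{p \in \mathcal P_{ij}} f_p$, its degrees $\delta_i = \sum_j \delta_{ij}$, and its Laplacian $L(F) = \sum_{i<j} \delta_{ij}(\e i - \e j)(\e i - \e j)^\t$, all of which are \emph{linear} in the $f_p$. The quantity we want to make large is $\lambda_{r+1}(L(F))$ (the $r$th smallest nonzero eigenvalue, using the indexing convention of the paper). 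By the Courant--Fischer / Ky Fan characterization,
\[
\sum_{i=1}^{r} \lambda_i(L(F)) \;=\; \min_{\substack{\Pi \succeq 0,\ \Pi \preceq I\\ \tr \Pi = r}} \tr\!\big(L(F)\,\Pi\big),
\]
and more usefully the map $L \mapsto \sum_{i \le r}\lambda_i(L)$ is a concave function of $L$ (a minimum of linear functions), so maximizing it over the flow polytope is a convex program solvable to arbitrary accuracy in polynomial time (e.g.\ by the ellipsoid method with an eigenvalue-based separation oracle, or an SDP). I would run this convex program subject to the degree constraint $\delta_i \le d$ for all $i$. By hypothesis a $(r,d,\lambda)$ spectral SSE flow $F^\star$ exists and is feasible, so the optimum value of $\sum_{i \le r}\lambda_i$ is at least $(r-\text{(number of zero eigenvalues)})\cdot$ nothing useful directly --- instead the cleaner statement is: the optimal $F$ we find satisfies $\lambda_{r+1}(L(F)) \ge \lambda_{r+1}(L(F^\star)) \ge d\lambda$, because one can just take $F = F^\star$ as a feasible point and the program certifies at least this much. (Concretely: maximize $\lambda_{r+1}$ directly; it is also concave in $L$ as $\min$ over $(n-r)$-dimensional subspaces of the max Rayleigh quotient restricted to that subspace --- actually $\lambda_{r+1}(L) = \min_{\dim P = r}\max_{h \perp P} R_L(h)$ is concave in $L$ since it is an infimum of concave-in-$L$ functions.)

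The remaining step is to fix up the lower bound on degrees, since the convex program only enforces $\delta_i \le d$, whereas a spectral SSE flow needs $\delta_i \ge d/2$ as well, and this is exactly where the $r \to 2r$ and $\lambda \to \lambda/2$ losses enter. Let $F$ be the flow found above, and as in the proof of \Cref{lem:combtospectral} let $F_1$ be the flow with demands $\delta_{ij} = c_{ij}$, which is embeddable in $G$ and has every degree equal to $1$; set $F_2 = F/2 + (d/2)\,F_1$. Then $F_2$ is embeddable in $G$ (convex combination of embeddable flows, scaled down by a total factor $\le 1$ since each of $F/2$ and $(d/2)F_1$ uses at most half the capacity when $F$ has degrees $\le d$... one checks the capacity budget directly), every degree of $F_2$ lies in $[d/2,\, d]$, and $L(F_2) = L(F)/2 + (d/2)L(F_1) \succeq L(F)/2$, so $\lambda_{2r+1}(L(F_2)) \ge \lambda_{2r+1}(L(F)/2) \ge \frac12\lambda_{2r+1}(L(F)) \ge \frac12\lambda_{r+1}(L(F)) \ge d\lambda/2$, where the inequality $\lambda_{2r+1} \ge \lambda_{r+1}$ of the same matrix is trivial (eigenvalues are nondecreasing in the index). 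Thus $F_2$ is a $(2r, d, \lambda/2)$ spectral SSE flow, and it was produced in polynomial time.

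The main obstacle is the honest handling of the exponentially many path variables $f_p$: one cannot literally optimize over them. The fix is standard --- work instead with the polytope of \emph{edge-capacitated multicommodity flows} described by per-commodity flow-conservation constraints (polynomially many variables and constraints, one flow vector per source), from which the demand matrix $\delta$, hence $L(F)$, is recovered by a linear map; equivalently, observe that we only ever need the image of the flow polytope under $F \mapsto L(F)$ together with the degree functionals, and this projection is a polytope with a polynomial-time separation/membership oracle (feasibility of a multicommodity flow with prescribed demands is an LP). With that in hand the convex program has a polynomial-time separation oracle for both the feasible region and the (sub)gradient of the concave objective $\lambda_{r+1}(\cdot)$, so the ellipsoid method runs in polynomial time. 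The only mild care needed beyond this is the capacity-budget check for $F_2$, which is immediate from $F$ having degrees at most $d$ and $F_1$ having degree exactly $1$.
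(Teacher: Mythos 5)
There is a genuine gap at the central step of your proposal: the optimization problem you ultimately propose to solve is not a convex program. The map $L \mapsto \lambda_{r+1}(L)$ is \emph{not} concave for an intermediate eigenvalue: in the Courant--Fischer form the inner quantity $\max_{h} h^\t L h/ h^\t h$ over a fixed subspace is a supremum of linear functions of $L$, hence convex, and taking a minimum over subspaces of convex functions yields a function that is in general neither convex nor concave (only the extreme eigenvalues and the Ky Fan sums enjoy convexity/concavity). So ``maximize $\lambda_{r+1}(L(F))$ directly'' cannot be handled by the ellipsoid method as you suggest, and the inequality $\lambda_{r+1}(L(F)) \ge \lambda_{r+1}(L(F^\star)) \ge d\lambda$ for the computed $F$ has no justification. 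Your fallback, maximizing the Ky Fan sum $\sum_{i\le r}\lambda_i(L(F))$, is indeed concave but does not certify what you need: the feasible point $F^\star$ only guarantees objective value $\ge d\lambda$ (its eigenvalues below the $r$th, starting with $\lambda_1=0$, can be arbitrarily small), so averaging gives only $\lambda_r(L(F)) \ge d\lambda/r$, a factor-$r$ loss rather than a factor $2$.

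The missing idea is exactly how the paper gets its parameters: it maximizes the concave objective $\sum_{i=1}^{2r}\lambda_i(L(F))$ subject to the \emph{linear} degree constraints $d/2 \le \delta_i \le d$ (which $F^\star$ satisfies, so no post-processing is needed). Since $F^\star$ has at least $r$ eigenvalues among the first $2r$ that are each $\ge d\lambda$, the optimum value is $\ge r d\lambda$, and averaging over the $2r$ smallest eigenvalues of the computed flow gives $\lambda_{2r}(L(F)) \ge r d\lambda/(2r) = d\lambda/2$. In other words, the $r \to 2r$ and $\lambda \to \lambda/2$ losses come from this averaging step in the objective, not from repairing the degree lower bound. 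Your mixing step $F_2 = F/2 + (d/2)F_1$ is a legitimate way to enforce degrees in $[d/2,d]$ (it mirrors the proof of \Cref{lem:combtospectral}, and the capacity budget works out only because $d\le 1$ in the paper's normalization), and your handling of the exponentially many path variables matches the paper's; but neither repairs the non-convexity at the heart of the argument.
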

\begin{proof}
  The algorithm tries to solve the following optimization problem:
  \begin{align*}
    \max & \sum_{i=1}^{2r} \lambda_i(L(F)) \\
    s.t. \forall i\in V \quad & \frac{d}{2} \le \sum_{j\in V} \sum_{p\in\mathcal{P}_{i,j}} f_p \le d. \\
    & F\mbox{ embeddable in }G.
  \end{align*}
  Here $L(F)$ is the Laplacian of the flow. The first constraint just
  says the degree of every vertex should be between $d/2$ and
  $d$. This is a convex program because entries of $L(F)$ are linear
  functions over $f_p$, and the sum of first $2r$ eigenvalues of a
  matrix is a concave function. The convex program can be solved in
  polynomial time.\footnote{There are exponentially many paths, but
    there is a canonical way of reducing the number of variables for
    flows.}
  Clearly the $(r,d,\lambda)$ spectral SSE flow is a feasible solution
  and has objective value at least $rd\lambda$. Hence the solution of
  this convex program must have objective function at least $rd\lambda$,
  which means the $2r$-th eigenvalue of $L(F)$ is at least
  $\frac{rd\lambda}{2r} = d\lambda/2$.
\end{proof}
\subsection{Finding a Small Nonexpanding Set when Eigenspace
  Enumeration Fails}
Combining \Cref{thm:spectralSSEexist,thm:SSE2}, we know if
$\Phi_{local}$ for a graph is at least $O(\Phi_{global}\sqrt{\log n}
\log ^{4.5} r/\epsilon)$, there is an eigenspace enumeration algorithm
that finds a $(1+\epsilon)$ approximation of sparsest cut. Here we
show when the algorithm fails, how to find a small set that does not
expand in polynomial time.
\begin{lemma}
  \label{lem:finddual}
  Given a graph $G$, for any $d$, $r$, there is a polynomial time
  algorithm that either finds a $(2r,d,\lambda = \Omega((\log
  r)^{-5}))$ spectral flow, or finds a set of size at most $100n/r$
  that has expansion at most $O(d\sqrt{\log n}/\sqrt{\log r})$.
\end{lemma}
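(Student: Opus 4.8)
The plan is to turn the existence argument of this section into an algorithm by the multiplicative‑weights method, and then feed the resulting flow through the (already constructive) upgrade lemmas. Fix $\beta = \Omega(\log^{-2} r)$, just below the constant $C_\beta$ of \Cref{lemma:dualsolution}. The only nontrivial step is to produce, in polynomial time, \emph{either} a set of size at most $100n/r$ of expansion $O(d\sqrt{\log n}/\sqrt{\log r})$, \emph{or} a $(r,d,\beta)$ weak SSE flow embeddable in $G$. Given such a flow, I would finish constructively: \Cref{lem:weaktosseflow} is an algorithm (one single‑commodity max‑flow computation; a deficient flow value returns a set of size $\le n/r$ of expansion $< d\beta = O(d\log^{-2} r) = O(d\sqrt{\log n}/\sqrt{\log r})$, otherwise it returns a genuine $(r,d,\beta/6)$ SSE flow), and the construction inside \Cref{lem:combtospectral} --- form $F_2 = F/2 + dF_1/2$ and invoke the higher‑order Cheeger inequality \Cref{thm:highcheeger} as a black box --- turns that SSE flow into a $(2r,d,\Omega((\beta/6)^2/\log r)) = (2r,d,\Omega(\log^{-5} r))$ spectral SSE flow, which is the desired output; if one insists on exact degree bounds one can finish with \Cref{lem:findspectralsseflow}.

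For the first step I would run multiplicative weights on the weak SSE flow LP \cref{eqn:degree,eqn:capacity,eqn:expansion}, treating \cref{eqn:degree,eqn:capacity} as the tractable constraints and the exponentially many expansion constraints \cref{eqn:expansion} as the ones to dualize. I maintain nonnegative weights $z_S^{(t)}$ supported on sets $S$ with $n/3r \le |S| \le n/r$, normalized by \cref{eqn:normalization}, and I carry them in the compressed $\ell_2^2$ form of \Cref{subsec:mapping}, $\norm{Z_i - Z_j}^2 = \sum_{S:\, i\in S\,\oplus\, j\in S} z_S$. Because the support lies on sets of size in $[n/3r, n/r]$, the norm identity \cref{eqn:newnorm}, the norm bound \cref{eqn:normZ} and the spreading constraint \cref{eqn:spreading} then hold automatically --- these are exactly the implications derived in \Cref{subsec:mapping}. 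At round $t$ the oracle solves the polynomial‑size LP: find a multicommodity flow $F$ routable in $G$ with degrees $\le d$ and capacities $\le c_e$ maximizing $\sum_{i<j} \delta_{ij}\norm{Z_i^{(t)} - Z_j^{(t)}}^2$.

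Two cases arise each round. If the oracle optimum is at least $\beta d n = \beta d\sum_S z_S^{(t)}|S|$, then $\sum_S z_S^{(t)}(\mathrm{cut}_F(S) - \beta d|S|) \ge 0$, so $F^{(t)} := F$ is good on average; the weights are updated multiplicatively on the still‑violated small sets and we continue. If the oracle optimum is below $\beta d n$, its LP dual supplies nonnegative $w_e$ (dual to \cref{eqn:capacity}) and $s_i$ (dual to \cref{eqn:degree}) with $\sum_e c_e w_e + d\sum_i s_i < \beta d n$ and $\sum_{e\in p} w_e + s_i + s_j \ge \norm{Z_i^{(t)} - Z_j^{(t)}}^2$ on every path $p$; that is, $(s, w, Z^{(t)})$ is a \emph{valid} dual solution obeying \cref{eqn:newdualvalue,eqn:newdualnonneg,eqn:newnorm,eqn:spreading,eqn:normZ,eqn:newpath}, so \Cref{lemma:dualsolution} applied to it must return a set of size at most $100n/r$ of expansion $O(d\sqrt{\log n}/\sqrt{\log r})$ (it cannot exhibit a violated path, there being none); output that set and halt. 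If this second case never occurs over the $\mathrm{poly}(n)$ rounds required by the multiplicative‑weights regret bound, then the average flow $\bar F = \frac{1}{T}\sum_t F^{(t)}$ is routable in $G$ with degrees $\le d$ and capacities $\le c_e$ by linearity, and the regret guarantee gives $\mathrm{cut}_{\bar F}(S) \ge (\beta - o(1))d|S|$ for every $S$ with $|S| \le n/r$ --- i.e.\ $\bar F$ is a $(r,d,\Omega(\log^{-2} r))$ weak SSE flow --- which is exactly the input to the finishing step of the first paragraph.

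The main work, and the place where I would expect to spend effort, is the bookkeeping around the multiplicative‑weights run rather than any new geometric idea. One must check that the oracle LP genuinely has polynomial size (the standard flow‑decomposition reduction from path variables to per‑commodity edge variables) and bounded width --- the quantities $\mathrm{cut}_F(S) - \beta d|S|$ lie in $[-\beta dn,\, dn]$, so width over target is $O(1/\beta) = \mathrm{poly}(\log r)$, giving $\mathrm{poly}(\log r)\cdot n$ rounds since there are at most $2^n$ constraints --- and, most delicately, that in the ``oracle optimum $< \beta dn$'' branch the LP‑dual multipliers really do make $(s,w,Z^{(t)})$ satisfy \emph{every} hypothesis of \Cref{lemma:dualsolution}, with the spreading and norm constraints inherited from the support of $z_S$ and never needing to be re‑imposed. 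One should also confirm that $\bar F$, being a convex combination of flows each routable in $G$, is itself routable, and handle the usual strict‑inequality and precision issues in the oracle's LP duality. These steps are routine but must be spelled out.
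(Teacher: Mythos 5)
There is a genuine gap, and it sits exactly at the step you call routine. Your plan hinges on two sub-steps that require solving a small-set-expansion problem \emph{on the current flow}, which is not known to be polynomial time. First, you describe \Cref{lem:weaktosseflow} as ``an algorithm (one single-commodity max-flow computation)'', but its proof begins by repeatedly finding sets of size at most $n/3r$ on which the weak flow has expansion below $\beta$ and removing them to build the set $U$; only after $U$ is in hand does the max-flow step occur. Finding such sets is an SSE-type problem, and the paper explicitly flags this right after the lemma: it ``is only existential'' and even verifying an SSE flow is hard. So once your multiplicative-weights run hands you a weak SSE flow, you cannot constructively upgrade it the way you claim, and in the dichotomy's other branch the small non-expanding set is likewise produced by a nonconstructive removal process, not by the max-flow computation. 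Second, the MW loop itself needs to update weights ``on the still-violated small sets'', i.e.\ it needs a separation oracle that, given $F^{(t)}$, finds a small set violating \cref{eqn:expansion} — the same hard problem — and it is not clear how to maintain the compressed vectors $Z_i^{(t)}$ under multiplicative updates over exponentially many sets so that $\norm{Z_i^{(t)}-Z_j^{(t)}}^2$ is even computable. Making an ARV/Arora–Kale-style primal–dual scheme work here would require an \emph{approximate} separation oracle (e.g.\ the Bansal et al.\ SSE algorithm, losing $\sqrt{\log n\log r}$) folded into the width/regret analysis; that is substantive new work, not bookkeeping, and your proposal does not supply it.

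The paper avoids constructing the weak flow altogether, and you could repair your argument by following the same route: use \Cref{lem:weaktosseflow,lem:combtospectral} purely existentially to conclude that a spectral SSE flow exists unless a very sparse small set exists or no weak SSE flow exists; run the eigenvalue-sum convex program of \Cref{lem:findspectralsseflow} to \emph{find} the spectral flow when it exists (its objective, the sum of the first $2r$ Laplacian eigenvalues, is concave, and the spectral guarantee is certified by an eigenvalue computation, so no SSE verification is needed); if it fails, either run the Bansal et al.\ approximation to recover the sparse small set (the $\sqrt{\log n\log r}$ loss on $d\beta$ with $\beta=\Theta(\log^{-2}r)$ is absorbed into $O(d\sqrt{\log n}/\sqrt{\log r})$), or note that the weak-flow LP is infeasible, so its dual is feasible; encode the dual concisely as an SDP (the spreading constraint is rewritten in the tractable $\min\{\cdot,\cdot\}$ form with slightly worse constants), solve it, and feed the solution to \Cref{lemma:dualsolution}. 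Your observation that the $Z$-side constraints \cref{eqn:newnorm,eqn:normZ,eqn:spreading} follow automatically from weights supported on sets of size in $[n/3r,n/r]$, and that LP duality of your oracle produces $(w,s)$ satisfying \cref{eqn:newdualvalue,eqn:newpath}, is correct in spirit — that is essentially the role \Cref{lemma:dualsolution} plays — but the algorithmic vehicle must be the SDP/convex-program route, not an explicit construction of the weak SSE flow.
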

\begin{proof}
  By \Cref{lem:weaktosseflow,lem:combtospectral}, we know a weak SSE
  flow implies a spectral SSE flow unless there is a small set with
  very small expansion.  Therefore if the algorithm in
  \Cref{lem:findspectralsseflow} does not work, either there is a set
  of size at most $n/3r$ that has expansion $d \beta$ where $\beta =
  \theta((\log r)^{-2})$, or there is no weak SSE flow.

  In the first case we can simply run the approximation algorithm for
  small set expansion in \cite{bansalsse11}, which gives a $\sqrt{\log
    n \log r}$ approximation, the set we get will be small and has
  expansion at most $d\beta \sqrt{\log n \log r} < O(d\sqrt{\log
    n}/{\log r})$

  In the second case, there is no weak SSE flow, so the LP for the
  weak SSE flow must be infeasible, and its dual must be feasible. The
  original dual formulation has exponentially many variables, however
  in \Cref{subsec:mapping} we mapped the solution to a concise
  representation using $l_2^2$ vectors $Z_i$'s.
  \Crefrange{eqn:newdualvalue}{eqn:newpath} are almost constraints of
  a semidefinite program, except for \cref{eqn:spreading}. However, we
  can write the spreading constraint in more tractable way:
  $$
  \forall i\in V \quad \sum_{j\in V} \min \{0.9\norm{Z_i}^2,
  \norm{Z_j-Z_i}^2\} \ge 0.9\norm{Z_i}^2 n - \norm{Z_i}^2 \cdot
  \frac{n}{r}.
  $$
  This equation is clearly satisfied by the $Z_i$'s converted from the
  original dual solution, because there the number of vectors within
  $0.9\norm{Z_i}^2$ is at most $10n/9r$, even if all of them are
  identical with $Z_i$, the sum on the LHS can only be $\norm{Z_i}^2
  \cdot \frac{n}{r}$ away from its maximum possible value
  $0.9\norm{Z_i}^2 n$.

  On the other hand, if this equation is satisfied, we know for any
  $i$, the number of $j$ such that $\norm{Z_j-Z_i}^2 \le
  0.8\norm{Z_i}^2$ is at most $10n/r$. This is very similar to
  Constraint (\ref{eqn:spreading}) except the constants are
  larger. This increase in constant does not change anything in the
  proof of \Cref{lemma:dualsolution}.

  Therefore, we can solve the SDP to get a concise representation of
  the dual solution, and then apply \Cref{lemma:dualsolution} to find
  a set that has size at most $100n/r$ with expansion $O(d\sqrt{\log
    n}/\sqrt{\log r})$.
\end{proof}


\section{Planted Expander Model}
\label{sec:semirandom}
Our algorithm naturally applies to the {\em planted expander
  model}. In this model the graph has a planted bisection of expansion
$\Phi_{planted}$, the smaller side of the bisection has size $\rho
n$. The induced subgraph on each side of the partition is an expander
with expansion $\Phi_{global} \gg \Phi_{planted} \sqrt{\log n \log 1/\rho}$. In
this case we can show the assumptions in \Cref{thm:main1} hold, and
the algorithm gives a good approximation to sparsest cut.

This result is similar to the ``planted spectral expander model'' in
\cite{mmv}. The main difference is that they assume the induced graphs
of the partition have {\em algebraic} expansion constant times more
than $\Phi_{planted}$. Notice that our combinatorial expansion
property only implies algebraic expansion of $\Phi_{planted}^2 \log n
\log 1/\rho$, which might be smaller than $\Phi_{planted}$ if the
planted bisection is sparse enough. Unfortunately our result only
applies to regular graphs, therefore a comparison is not possible {\it
  per se}.
%
%
Our formal guarantee is given in the following theorem.
\begin{theorem}
  \label{thm:semirandom}
  Assume graph $G = (V,E)$ is a regular graph with an unknown planted
  bisection $(S, V\backslash S)$. The size of $S$ is $\rho n$ ($\rho
  \le 1/2$) with $\Phi(S)=\Phi_{planted}$. If the induced subgraphs of
  $S$ and $V\backslash S$ both have expansion $\Phi \gg
  \frac{1}{\epsilon^{1.5}} \Phi(S)\sqrt{\log n \log \frac{1}{\rho
      \epsilon}}$, then the algorithm in \Cref{thm:main1} with $r =
  O(1/\rho)$ gives a $(1+\epsilon)$ approximation to sparsest cut.
\end{theorem}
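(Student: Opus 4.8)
The plan is to reduce everything to \Cref{thm:main1}, in its algorithmic form \Cref{thm:orth-sep-algorithmic}, whose hypothesis is a lower bound on the sparsity of \emph{small} sets. Recall that on input $G$ the algorithm of \Cref{thm:main1} returns either a cut of sparsity at most $(1+\epsilon)\phi_{SDP}$ --- and since $\phi_{SDP}\le\phi_{sparsest}$ this is already a $(1+\epsilon)$-approximation to \usc{} --- or a set of size at most $2n/r$ whose sparsity is too small, i.e. it witnesses that the local-expansion hypothesis fails. So it suffices to (i) choose $r=\Theta(1/\rho)$ (with a large enough universal constant) and show that every set of size at most $2n/r$ has expansion at least $\Phi$ in $G$, and (ii) check that $\Phi$ is large enough, under the stated assumption, that the second alternative is impossible. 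Step (ii) is pure bookkeeping once (i) is in hand: $\phi_{SDP}\le\phi_{sparsest}\le\phi(S)=\Phi(S)/(1-\rho)\le 2\Phi(S)$, and with $r=\Theta(1/\rho)$ we have $\log r=O(\log\tfrac1\rho)$; unfolding the constant $C(\epsilon)$ of \Cref{thm:main1} (choosing the separation parameter and $\delta$ both $\Theta(\epsilon)$, so that $m=\Theta(r^2/\epsilon)$ and the distortion is $O(\sqrt{\log n\,\log(r^2/\epsilon)/\epsilon})$), the requirement on $\phi_{local}$ becomes $\phi_{local}\gtrsim \phi_{SDP}\,\epsilon^{-3/2}\sqrt{\log n\,\log\tfrac1{\rho\epsilon}}$, which by $\phi_{SDP}\le 2\Phi(S)$ is exactly what the assumption $\Phi\gg\frac1{\epsilon^{1.5}}\Phi(S)\sqrt{\log n\,\log\tfrac1{\rho\epsilon}}$ together with $\phi_{local}\ge\Phi$ guarantees.

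The combinatorial heart is step (i). Fix $r$ to be a sufficiently large constant times $1/\rho$, so that $2n/r\le \rho n/2=|S|/2$. Let $T$ be any set with $|T|\le 2n/r$, and write $T_S=T\cap S$ and $T_{\bar S}=T\cap(V\setminus S)$. Then $|T_S|\le|T|\le|S|/2$ and $|T_{\bar S}|\le|T|\le (1-\rho)n/2=|V\setminus S|/2$, so the edge-expansion hypotheses on the induced subgraphs $G[S]$ and $G[V\setminus S]$ both apply: at least $\Phi|T_S|$ units of capacity leave $T_S$ inside $G[S]$, and at least $\Phi|T_{\bar S}|$ leave $T_{\bar S}$ inside $G[V\setminus S]$. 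Every such edge is an edge of $G$ with one endpoint in $T$ and the other in $V\setminus T$ (the far endpoint lies in $S\setminus T$, resp. in $(V\setminus S)\setminus T$), and the two families are disjoint because one lives inside $S$ and the other inside $V\setminus S$. Hence $E(T,V\setminus T)\ge\Phi(|T_S|+|T_{\bar S}|)=\Phi|T|$, i.e. $\Phi(T)\ge\Phi$. Since sparsity is always at least expansion, $\phi_{local}\ge\Phi$ for every set of size at most $2n/r$, which is precisely what step (ii) consumes.

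I expect the main obstacle to be making step (i) airtight in the right parameter regime: one must pick the constant in $r=\Theta(1/\rho)$ large enough that both $|T_S|$ and $|T_{\bar S}|$ stay below half of their respective sides (so that "expansion $\Phi$" of the induced graphs is genuinely applicable), and one must be careful that the expansion of the induced subgraphs is normalized the same way --- edges over vertices, using $G$'s degree-$1$-normalized capacities --- as the cut $E(T,V\setminus T)$ being lower-bounded. The remaining work is quantitative: honestly tracking the dependence of $C(\epsilon)$ in \Cref{thm:main1} and verifying that the extra $\epsilon^{-1.5}$ factor and the $\log\tfrac1{\rho\epsilon}$ under the square root in the hypothesis of \Cref{thm:semirandom} are exactly what is needed to rule out the small-set alternative of \Cref{thm:orth-sep-algorithmic} when $r=\Theta(1/\rho)$. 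Both of these are routine but should be carried out carefully.
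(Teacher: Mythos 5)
Your proposal is correct and follows essentially the same route as the paper: split any small set $T$ into $T\cap S$ and $T\cap(V\setminus S)$, lower-bound the cut $E(T,V\setminus T)$ by the two induced-subgraph expansion guarantees, and then verify that the resulting local-expansion bound (with $r=\Theta(1/\rho)$, so $\log r=O(\log\tfrac1\rho)$, and $\phi_{sparsest}\le\phi(S)\le 2\Phi(S)$) matches the hypothesis of \Cref{thm:main1}. Your treatment is, if anything, slightly more careful than the paper's about why the induced-expansion hypotheses are applicable (the half-side condition) and about the $\epsilon$-dependence bookkeeping.
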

\begin{proof}
  We only need to show the assumptions in \Cref{thm:main1} are
  satisfied: Sets of size $\rho n/2$ should have sparsity at least
  $\Omega(\phi_{sparsest}\sqrt{\log n \log \frac{1}{\rho \epsilon}}
  \epsilon^{-1.5})$. Since sparsity and expansion are within a
  constant factor, we will show the expansion of small sets are at
  least $\Delta\triangleq \Omega(\Phi(S)\sqrt{\log n \log\frac{1}{\rho \epsilon}
  }/\epsilon^{1.5})$.

  For any set $T$ of size at most $\rho n/2$, let $T_1$ be $T\cap S$
  and $T_2$ be $T\cap (V\backslash S)$. By assumption we know
  $E(T_1,S\backslash T_1) \gg |T_1|\cdot \Delta
  $ and
  $E(T_2,V\backslash(S\cup T_2))\gg |T_2| \Delta
  $. Hence
  \[
  \Phi(T) = \frac{E(T,V\backslash T)}{|T|} \ge \frac{E(T_1,S\backslash
    T_1) + E(T_2,V\backslash(S\cup T_2))}{|T_1|+|T_2|} \gg
  \Delta. 
  \tag*{\qedhere}
  \]
\end{proof}


\end{document}